\newtheorem{definition}{Definition}
\newtheorem{prop}{Proposition}
\newtheorem{example}{Example}
\newtheorem{rem}{Remark}
\begin{document}
\preprint{CHIBA-EP-274}
\title{
Quark confinement consistent with holography 
\\
due to hyperbolic magnetic monopoles and hyperbolic vortices 
\\
unifiedly reduced from symmetric instantons 
\\
}
\author{Kei-Ichi \surname{Kondo}}
\email{kondok@faculty.chiba-u.jp
\\
This is an extended version of a talk given at the workshop, Nonperturbative methods in QFTs--Kondo Condensation--, 25-26 March, 2025, Chiba University. 
}
\affiliation{Department of Physics, Graduate School of Science, Chiba University, \\
1-33 Yayoi-cho, Chiba, Chiba 263-8522, Japan
}
\affiliation{
Research and Education Center for Natural Sciences, Keio University, \\
4-1-1 Hiyoshi, Yokohama, Kanagawa 223-8521, Japan
}
\keywords{gauge-scalar model, gauge-independent BEH mechanism, confinement }
\pacs{PACS number}

\begin{abstract}
We give a review on hyperbolic magnetic monopoles and hyperbolic vortices obtained in the unified way through the conformal equivalence  by the dimensional reduction from the symmetric instantons with various spatial symmetries in the four-dimensional Euclidean Yang-Mills theory. 
They are used to understand quark confinement in the sense of the area law of the Wilson loop average in a semi-classical picture from a unified treatment of Atiyah's hyperbolic magnetic monopole and Witten-Manton's hyperbolic vortex. 
In this way quark confinement is shown to be realized by the the non-perturbative vacuum disordered by these topological defects. 

For this purpose we start from the 4-dim. Euclidean Yang-Mills theory and require the conformal equivalence between the 4-dim. Euclidean space and the possible curved spacetimes with some compact dimensions.
This requirement forces us to restrict the gauge configurations of 4-dim.Yang-Mills instantons to those with some spatial symmetries (called symmetric instantons) which are identified with magnetic monopoles and vortices living in the lower-dimensional curved hyperbolic spacetime with constant negative curvature through the dimensional reduction. 
At the same time, this scheme caused by the dimensional reduction give a holographic description of hyperbolic magnetic monopole dominance on AdS3 in the rigorous way without any further assumptions, which does not hold in the flat Euclidean case. 
This unified treatment of two topological defects is shown to give the semi-classical picture for quark confinement in the sense of Wilson. 
We give the understanding of the result from the viewpoint of the gauge-covariant  Cho-Duan-Ge-Faddeev-Niemi decomposition for the gauge field. 

\end{abstract}

\maketitle

\tableofcontents{}

\section{Introduction}
\label{intro}



\subsection{Dual superconductivity picture for quark confinement} 


Quarks are supposed to be the most fundamental building blocks of the matter according to \textit{quark model} \cite{quark_model}.  
Howeverquarksrs have never been observed and cannot be extracted in its isolated form, which is the fact established in experiments called \textit{quark confinement}.
Quark confinement is recognized as one of the most important problems to be solved in elementary particle physics since the proposal of the quark model. 

To explain quark confinement, the \textit{dual superconductivity picture} was proposed long ago by Nambu, `t Hooft, Mandelstam, Polyakov \cite{dualsuper}, which is now one of the most promising mechanism for quark confinement. 
The dual superconductor picture is based on the \textit{electric-magnetic duality} of the ordinary superconductivity where the electric objects (electric charge, electric current and the electric field) are exchanged by the magnetic objects (magnetic charge, magnetic current and the magnetic field) and vice versa. 
If this picture is true, the chromo-electric flux originated by color charges are squeezed by the dual Meissner effect caused by the vacuum condensation of the magnetic charges, which should be compared with the ordinary superconductor where the magnetic flux is squeezed by the Meissner effect caused by the vacuum condensationon of the Cooper pairs of the electric charges (e.g., two electrons). 
See Fig.~\ref{fig:dual-super}.

We apply the dual superconductivity picture to \textit{a meson} which is composed of a quark and an antiquark according to quark model. 
It is known mainly by numerical simulations that the static potential $V(r)$ between a pair of a quark and an antiquark has a piece of the \textit{linear potential} proportional to their mutual distance $r$ (which is dominant in the long distance) in addition to the potential of the Coulomb type (which is dominant in the short distance):
\begin{equation}
 V(r) = - \frac{C}{r} + \sigma r .
\end{equation}
Consequently, we will need infinite energy to separate either a quark or an antiquark as one of the constituent of a meson, while a meson is observed in isolated form in experiment. 
The same explanation works also for a \textit{baryon} composed of three quarks or three antiquarks. 
Therefore, quarks and antiquark are considered to be confined in  \textit{hadrons}, i.e., mesons and baryons as the composite particles, and they cannot be observed in its isolated form. 

Thus the dual superconductivity picture is able to explain the experimental fact: quark confinement. 
In some models \cite{Polyakov75,Polyakov77,SW94} of quantum field theory, the dual superconductivity was proved to be the correct mechanism for quark confinement. 
To be precise, in general, quark confinement must be extended to \textit{color confinement} to explain confinement of all the other particles with colors such as gluons and light quarks. 
In this article, we focus on the issue of quark confinement.


\begin{figure}[htb] 
\begin{center}
\includegraphics[height=2.6cm]{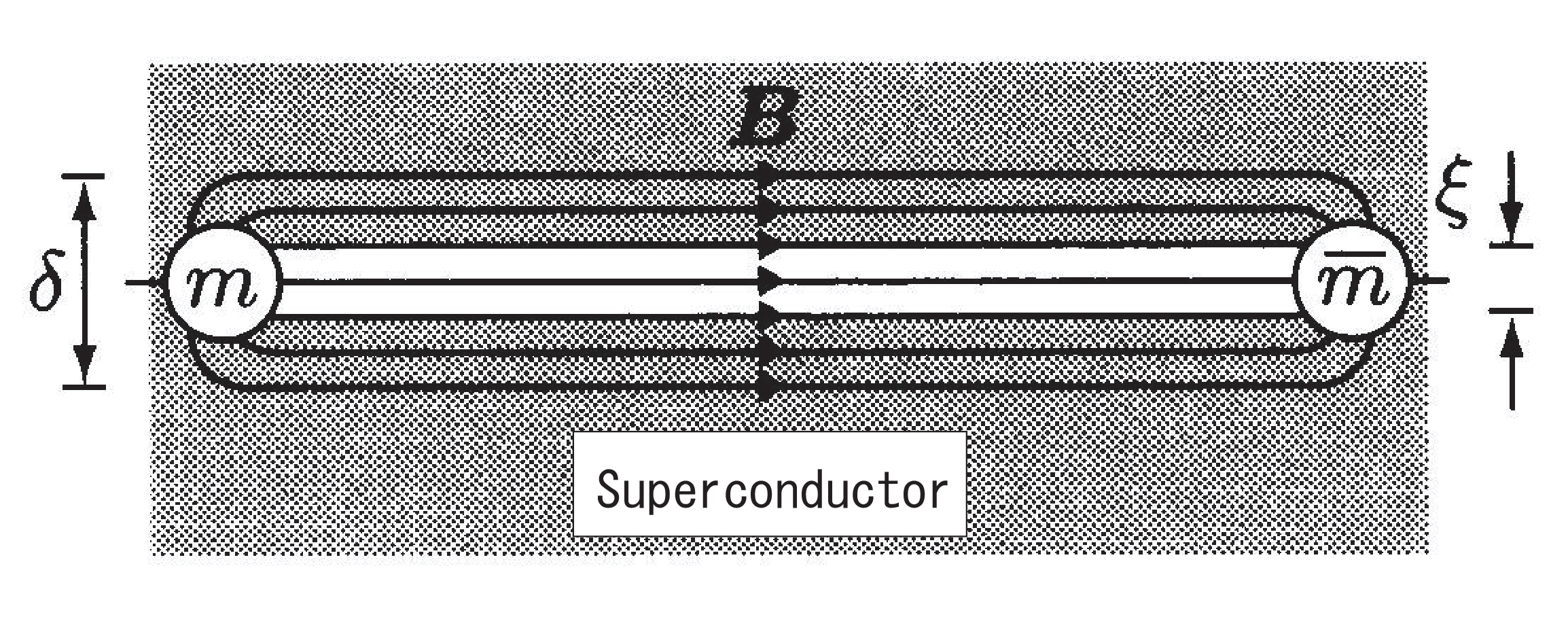}
\includegraphics[height=2.6cm]{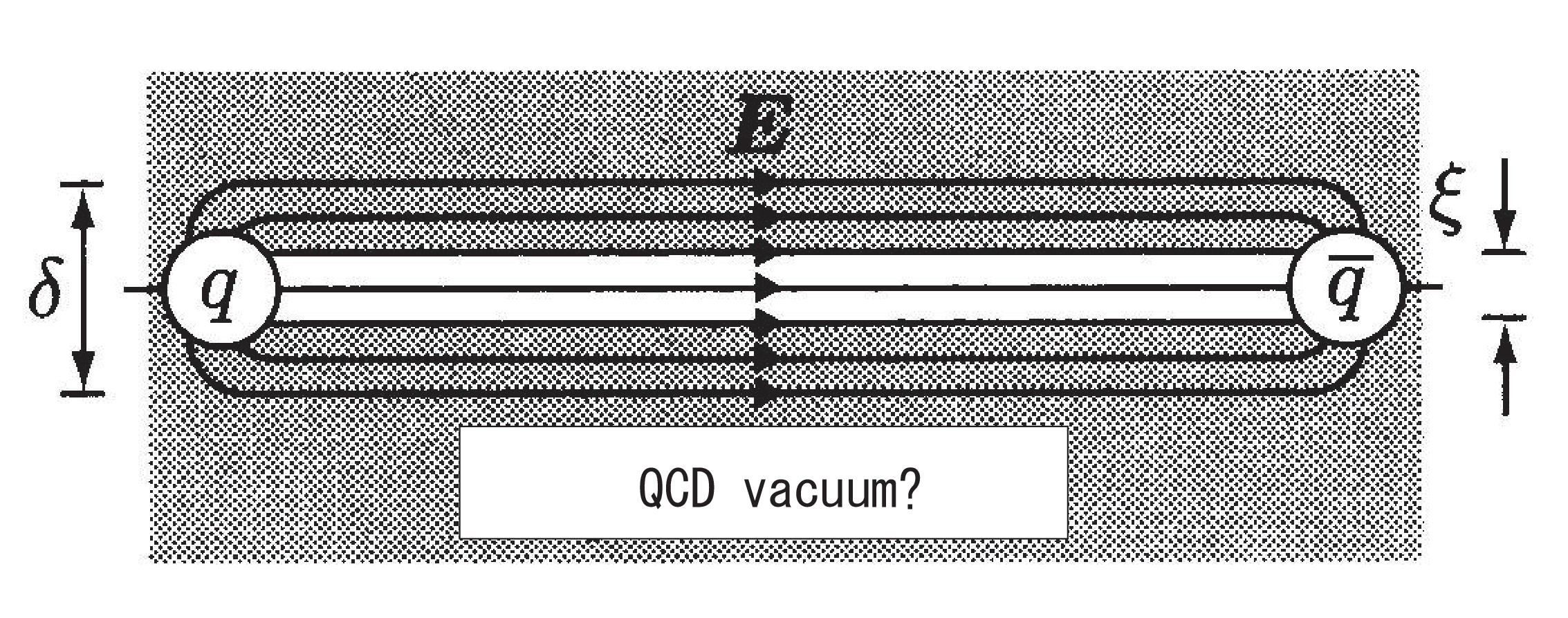}
\end{center}
\caption{
(Upper) Magnetic charges in a type II superconductor and the magnetic field they generate. 
(Lower) Electric charges in a dual superconductor and the electric field they generate.
}
\label{fig:dual-super}
\end{figure}



\subsection{Possible theories for ordinary and dual superconductors} 


To construct the field theory for the dual superconductivity, we recall the theoretical treatment of the ordinary superconductivity. 
The most fundamental theory for the ordinary superconductivity is the so-called BCS theory constructed by Bardeen, Cooper and Schrieffer \cite{BCS}. 
However, this theory is not written in the form of the field theory. 
The field theory for the superconductivity is given by the Ginzburg-Landau theory \cite{GL}, which is however still a non-relativisticic theory.  
The relativistic field theory model of the Ginzburg-Landau theory for the ordinary superconductivity is given by the Abelian U(1) gauge-scalar model, namely, the Maxwell-Higgs theory. 

The Maxwell-Higgs theory is written in terms of the Abelian U(1) gauge field (i.e., the Maxwell field) $A_\mu(x)$ and a complex scalar field $\phi(x)$. 
Here we need two kinds of fields to explain superconductivity, because one is the scalar field with a certain electric charge which is needed to cause the vacuum condensation and another is the gauge field which acquires the mass leading to the Meissner effect caused by the vacuum condensation. 
The squeezed magnetic flux in the superconductor (of the second type) is represented by the Nielsen-Olesen vortex \cite{NO}, a relativistic version of the Abrikosov vortex \cite{Abrikosov}, which is a topological  soliton characterized by the non-trivial first Homotopy class $\pi_1(S^1)=\mathbb{Z}$. 


$\bullet$ Superconductivity
\\
Magnetic charges generate 
the flux of magnetic field (understood as a vortex) which is  squeezed by the Meissner effect [Field affected: gauge field]
\\
This is due to the vacuum condensation of an electric object (namely, the Cooper pair 2e) [Field that creates condensation: scalar field]
\noindent
If electromagnetic duality holds (See Fig.~\ref{fig:dual-super})

$\bullet$ Dual superconductivity?
\\
Color charge generates 
the flux of color electric field (understood as a vortex) which is squeezed by the dual Meissner effect?
\\
This is due to the vacuum condensation of a magnetic object (magnetic monopole)?

Not only gauge field but also scalar field is required.
In fact, we know the Nielsen-Olesen vortex and the `tHooft-Polyakov magnetic monopole as such examples,
However, the Yang-Mills theory does not have scalar field.


If the dual theory describing the dual superconductivity is obtained by the electric-magnetic dual transformation, the theory must include the scalar field in addition to the gauge field. 
Such a dual scalar field with a certain color charge will have the vacuum condensation of magnetic objects to cause the dual Meissner effect and hence the color electric flux is squeezed to yield the linear potential between two color charges. 
In the non-Abelian gauge theory, the SU(2) gauge-scalar theory, the Yang-Mills-Higgs theory, there is a topological soliton called the `t Hooft-Polyakov magnetic monopole characterized by the non-trivial second Homotopy class $\pi_2(S^2)=\mathbb{Z}$. 
To explain the dual superconductivity, we need the scalar field.  

However, the ordinary formulation of the Yang-Mills theory for the strong interactions is given by the Yang-Mills field alone and has no scalar field, although the electro-weak interaction is described by the Yang-Mills-Higgs theory. 
Therefore, we have a question what plays the role of such a scalar field in the pure Yang-Mills theory.  
It is possible how to describe the color flux tube as a topological soliton in the Yang-Mills theory without the scalar field.  
An approach is the method based on change of variables, see e.g.  \cite{PR} for a review.


\subsection{Yang-Mills theory and topological solitons} 


We consider the Yang-Mills theory on the four-dimensional Euclidean space-time $\mathbb{E}^4(x^1,x^2,x^3,x^4=t)$, which is invariant under \textbf{conformal transformation}\index{conformal transformation}. 
In other words, it has \textbf{conformal invariance}\index{conformal invariance}.
The self-dual Yang-Mills equation on $\mathbb{E}^4$ is also conformally invariant.

It is known that the only topological solitons in Yang-Mills theory are instantons as the solutions of the self-dual Yang-Mills equation in four-dimensional Euclidean spacetime $\mathbb{E}^4$.
It is also known that various low-dimensional integrable equations can be obtained from the self-dual Yang-Mills equation $*\mathscr{F}=\pm \mathscr{F}$ in four-dimensional space $\mathbb{E}^4$ by various dimensional reductions.
Korteweg–De Vries (KdV) equation, Kadomtsev–Petviashvili (KP) equation, sine-Gordon equation, non-linear Schroedinger (NLS) equation, Liouville equation, etc. . .
See e.g., Mason and Woodhouse(1996)\cite{MW96}.

For example, if we perform the dimensional reduction to eliminate the Euclidean time $t$,
\begin{align}
 & \mathscr{A}_\mu(x^1,x^2,x^3,x^4=t) \nonumber\\
 \to & \mathscr{A}_\mu(x^1,x^2,x^3)=(\mathscr{A}_k(x^1,x^2,x^3),\Phi(x^1,x^2,x^3)) ,
\end{align}
the 4-dimensional self-dual equation reduces to a 3-dimensional magnetic monopole equation  where the time component of the original gauge field is identified with the scalar field. The solution is a 3-dimensional magnetic monopole. 

However, if we substitute the 3-dimensional static magnetic monopole solution obtained in this way into the action of the 4-dimensional Yang-Mills theory, the action diverges from the integration with respect to the time $t$ because this solution does not depend on the time. Therefore, the three-dimensional magnetic monopole configuration obtained in this way does not contribute to the path (functional) integral of the four-dimensional Yang-Mills theory, and therefore such magnetic monopoles cannot give the main contribution to quark confinement.
This is also the case for the other low-dimensional topological solitons such as vortices obtained by the dimensional reduction. 
One way to avoid this inconvenience is to appropriately compactify space-time ($\mathbb{R}^4 \to \mathbb{R}^3 \times S^1, \mathbb{R}^2 \times S^1 \times S^1=\mathbb{R}^2 \times T^2,\mathbb{R}^2 \times S^2,...$), but are there any criteria for choosing the appropriate compactification? 
See e.g., (Tanizaki and \"Unsal (2022)\cite{TU22}, Hayashi and Tanizaki (2024)\cite{HT24}, Hayashi, Misumi and Tanizaki (2024)\cite{HMT24}, Güvendik, Schaefer and \"Unsal (2024)\cite{GSU24}). 


\subsection{Our proposal for deriving dual superconductor picture} 


In this article we obey the following guiding principles:

\noindent
$\bullet$ conformal equivalence:

The four-dimensional Yang-Mills theory is conformally invariant. The self-dual equations are also conformally invariant. By choosing a spacetime that can be obtained by a conformal transformation, such kind of arbitrariness (associated with the choice of spacetimes to be compactified) is prevented.

\noindent
$\bullet$ spatially symmetric instantons:

By carefully selecting the spatial symmetry respected by instantons, it is possible to prevent the divergence of the action in the four-dimensional Yang-Mills theory even when using low-dimensional topological configurations, and to make it contribute to the four-dimensional path (functional) integral.

\noindent
$\bullet$ dimensional reductions:

When a certain spatial symmetry is required for the instanton configuration, the associated dimensional reduction occurs, which determines the low-dimensional topological configuration obtained by the dimensional reduction.

By obeying these three principles, starting from the symmetric instanton of the four-dimensional Yang-Mills theory, we can write down a dimensionally reduced effective theory that unifies the magnetic monopole and vortex in a conformally equivalent manner.

The magnetic monopole and vortex obtained in this way are equivalent and can be considered as configurations that provide a quasi-classical mechanism of confinement.

\noindent
$\bullet$ singular instantons:
\\
By constructing an instanton solution with a nontrivial holonomy around the singularity, we can construct an instanton with fractional topological charge. Starting from this, the above procedure can be carried out to obtain quark confinement in the sense of the area law of the vacuum expectation value of the Wilson loop operator.

This paper is organized as follows.

In section II, we introduce the conformal equivalence to specify the spacetime to decompose the 4-dimensional space into the hyperbolic space $\mathbb{H}^{4-\nu} $ times a compact space $S^{\nu}$: $\mathbb{H}^{4-\nu} \times S^{\nu}$. 

In section III, we explain the symmetric instanton with a spatial symmetry is  associated with the dimensional reduction of the spacetime given in the previous section.

In section IV, by using the conformal equivalence and the symmetric instanton, we can obtain the unified magnetic monopole and the vortex which are equivalent to each other.

In section V, we show that the Bogomolnyi equation for the 
 magnetic monopole in the 3-dimensional hyperbolic space $\mathbb{H}^3$ is equivalent to the vortex equation in the 2-dimensional hyperbolic plane $\mathbb{H}^2$. 

In section VI, we give some models which describe the hyperbolic space $\mathbb{H}^3$.

In section VII, we give explicit analytical solutions of the hyperbolic vortex equation and they indeed the topological soliton with non-trivial winding numbers. 

In section VIII, we give explicit analytical solutions of the Bogomolnyi equation for the hyperbolic magnetic monopole and show that they indeed the topological soliton with non-trivial topological charge, namely, the winding numbers. 

In section IX,  we explain the hyperbolic magnetic monopole obeys the holographic principle. 

In section X, we give the solution based on the superpotential based on the Ansatz for the 4-dimensional instantons due to the dimensional reduction. The solutions give the same results as those given by directly solving the monopole equation and vortex equation.

In section XI,  we give a symmetric but singular instantons which have non-integral topological charge. 

In section XII, we calculate the expectation value of the Wilson loop operator to show the area law, which is a criterion for quark confinement due to Wilson. 
Using the holography principle, the calculation of the non-Abelian Wilson loop operator in the original Yang-Mills theory reduces to the calculation of the average of the Abelian Wilson loop in the 2-dimensional conformal boundary.

In section XII, we give another explanation how the Witten transformation works from the viewpoint of the gauge covariant decomposition of the gauge field called the Cho-Duan-Ge-Faddeev-Niemi decomposition. 

This final section is devoted to conclusions and discussions. 
In Appendix A, we give a summary of the Cho-Duan-Ge-Faddeev-Niemi decomposition. 
In Appendix B, we give a summary of the ADHM construction. 
In Appendix C, we give a derivation of the circle-symmetric instanton, namely, the hyperbolic magnetic monopole based on the ADHM construction.

\section{Conformal equivalence}

We start from the flat 4-dimensional Euclidean space $\mathbb{E}^4$ with the Cartesian coordinates $(x^1,x^2,x^3,x^4)=(x,y,z, t)$.
The metric of $\mathbb{E}^4$ is given by using the Cartesian coordinates $(x^4,x^1,x^2,x^3)$:
\begin{align}
(ds)^2(\mathbb{E}^4) = (dx^4)^2 + (dx^1)^2 +(dx^2)^2 +(dx^3)^2 .
\end{align}

\noindent
(I) 
We introduce the coordinates $(\rho,\varphi)$ in the 2-dimensional space $(x^1,x^2)$ to rewrite the metric  in the cylindrical coordinates $(\rho,\varphi,x^3,x^4)$:
\begin{align}
(ds)^2(\mathbb{E}^4) 
=  (d\rho)^2 + \rho^2(d\varphi)^2 + (dx^3)^2 + (dx^4)^2 .
\label{R4toH3}
\end{align}
We adopt $\rho^2$ as a \textbf{conformal factor}\index{conformal factor} to further rewrite the metric:
\begin{align}
(ds)^2(\mathbb{E}^4) = \rho^2\left[\frac{(dx^3)^2 + (dx^4)^2 + (d\rho)^2}{\rho^2} + (d\varphi)^2\right].
\label{R4toH3b}
\end{align}
Therefore, we obtain a conformal equivalence:
\begin{align}
&\mathbb{R}^4 = \mathbb{R}^3 \times \mathbb{R}^1
\nonumber\\
&\begin{array}{cccccccc}
 \rightarrow&\mathbb{R}^4  &\backslash& \mathbb{R}^2 &\simeq&  \mathbb{H}^3& \times &S^1\\
&\rotatebox{90}{$\in$}&&\rotatebox{90}{$\in$}&&\rotatebox{90}{$\in$}&&\rotatebox{90}{$\in$}\\
&(x^1,x^2,x^3,x^4)&&(x^3,x^4)&&(\rho,x^3,x^4)&&\varphi
\end{array}
\end{align}
$\mathbb{H}^3(\rho,x^3,x^4)$ is a \textbf{hyperbolic 3-space} with coordinates $(\rho, x^3,x^4)$: $x^3,x^4 \in (-\infty,+\infty)$ and $\rho \in (0,\infty)$, and has the metric $g_{\mu\nu}=\rho^{-2}\delta_{\mu\nu}$ and the negative constant curvature $-\frac{1}{2}$. 
This is the \textbf{upper half space model}\index{upper half space model} with $\rho>0$. 
Here $\rho = 0$ is a singularity, therefore the corresponding two-dimensional space, i.e., the $(x^3,x^4)$ plane with $\rho = 0$ must be excluded from $\mathbb{R}^4$.

 $\mathbb{R}^4 \backslash \mathbb{R}^2$ means excluding $(x^3,x^4) \in \mathbb{R}^2$ from $\mathbb{R}^4$.
See Fig.\ref{conformal_equiv1}.
$S^1(\varphi)$ is a 1-dimensional unit sphere, i.e., a unit circle with the coordinate $\varphi \in [0,2\pi)$. 
The special orthogonal group $SO(2)$ acts on $S^1(\varphi)$ in the standard way.

\begin{figure}[htb]
\begin{center}
\includegraphics[scale=0.50]{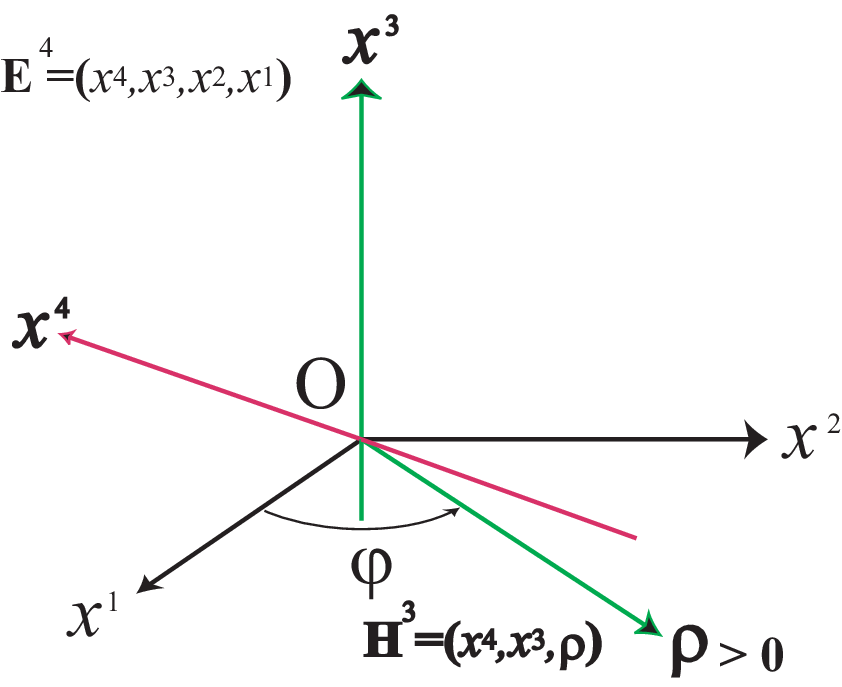}
\end{center}
\caption{
4-dim. Euclidean space $\mathbb{E}^4(x^1,x^2,x^3,x^4)$ versus 3-dim. hyperbolic space $\mathbb{H}^3(\rho,x^3,x^4)$.
}
\label{conformal_equiv1}
\end{figure}

It is also possible to arbitrarily change the curvature of $\mathbb{H}^3(\rho,x^3,x^4)$ by using the conformal factor $\frac{\rho^2}{\rho_0^2}$:
\begin{equation}
(ds)^2(\mathbb{E}^4) = \frac{\rho^2}{\rho_0^2}\left[\rho_0^2\frac{(dx^4)^2 + (dx^3)^2 + (d\rho)^2}{\rho^2} + \rho_0^2(d\varphi)^2\right] .
\label{R4toH3b}
\end{equation}
That is, the conformal equivalence is rewritten as
\begin{align}
\mathbb{R}^4 \backslash \mathbb{R}^2 \simeq \mathbb{H}^3(\rho_0) \times S^1(\rho_0) .
\end{align}
Here, $\mathbb{H}^3(\rho_0)$ denotes a three-dimensional hyperbolic space with negative constant curvature $-\frac{1}{2\rho_0^2}$, and $S^1(\rho_0)$ denotes a circle with radius $\rho_0$.
In the limit of infinite radius $\rho_0 \to \infty$ of $S^1(\rho_0)$, $S^1(\rho_0)$ approaches $\mathbb{R}$, and the curvature of $\mathbb{H}^3(\rho_0)$ approaches zero, and $\mathbb{H}^3(\rho_0)$ reduces to the flat Euclidean space $\mathbb{E}^3$.


Let us consider another example of conformal equivalence.

\noindent
(II) 
We introduce the polar coordinates $(r,\theta,\varphi)$ for the 3-dimensional space $(x^1,x^2,x^3)$ to rewrite the metric as 
\begin{align}
(ds)^2(\mathbb{E}^4) =& (dx^4)^2 + (dr)^2 +r^2((d\theta)^2 +\sin^2\theta(d\varphi)^2) \nonumber\\
& (r := \sqrt{x_1^2 + x_2^2 + x_3^2}) .
\end{align}
Then, we adopt $r^2$ as a \textbf{conformal factor}\index{conformal factor} to rewrite the metric 
\begin{equation}
(ds)^2(\mathbb{E}^4) =r^2\left[ \frac{(dx^4)^2 + (dr)^2}{r^2} +(d\theta)^2 +\sin^2\theta(d\varphi)^2\right] .
\label{R4toH2}
\end{equation}
Therefore, we obtain the \textbf{conformal equivalence}\index{conformal equivalence}:
\begin{align}
&\mathbb{R}^4 = \mathbb{R}^2 \times \mathbb{R}^2 \nonumber\\
&\begin{array}{cccccccc}
\rightarrow&\mathbb{R}^4  &\backslash& \mathbb{R}^1 &\simeq&  \mathbb{H}^2&\times &S^2\\
&\rotatebox{90}{$\in$}&&\rotatebox{90}{$\in$}&&\rotatebox{90}{$\in$}&&\rotatebox{90}{$\in$}\\
&(x^4,x^1,x^2,x^3)&&x^4&&(x^4,r)&&(\theta,\varphi)
\end{array}
\end{align}

\begin{figure}[htb]
\begin{center}
\includegraphics[scale=0.50]{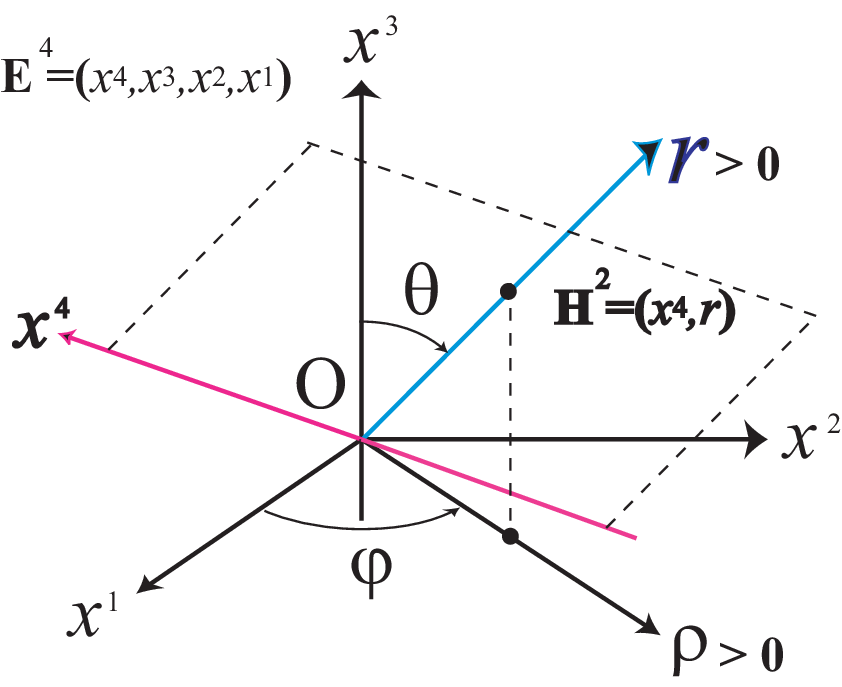}
\end{center}
\caption{
4-dim. Euclidean space $\mathbb{E}^4(x^1,x^2,x^3,x^4)$ versus 2-dim. hyperbolic space $\mathbb{H}^2(r,x^4)$.
}
\label{conformal_equiv2}
\end{figure}

$\mathbb{H}^2(x^4,r)$ is a \textbf{hyperbolic plane}\index{hyperbolic plane} with coordinates $(x^4,r)$: $x^4 \in (-\infty,\infty)$, $r \in(0,\infty)$, and has the metric $g_{\mu\nu}=r^{-2}\delta_{\mu\nu}$ and the negative constant curvature $(-\frac{1}{2})$. 
This is the \textbf{upper half plane model}\index{upper half plane model} with $r >0$. 
Here $r=0$ is a singularity, therefore the one-dimensional space $\mathbb{R}^1$ with $r = 0$, i.e., the $x^4$-axis must be excluded from $\mathbb{R}^4$.

$S^2(\theta,\varphi)$ is a two-dimensional unit sphere with coordinates $(\theta, \varphi)$: $\theta \in [0,\pi)$, $\varphi \in [0,2\pi)$ and has a positive constant curvature (2). 
The special orthogonal $SO(3)$ acts on $S^2(\theta,\varphi)$ in the standard way.

It is also possible to arbitrarily change the curvature of $\mathbb{H}^2(x^4,r)$ using the conformal factor $\frac{r^2}{r_0^2}$:
\begin{equation}
(ds)^2(\mathbb{E}^4) =\frac{r^2}{r_0^2}\left\{\frac{(dx^4)^2 + (dr)^2}{r^2/r_0^2} +r_0^2[(d\theta)^2 +\sin^2\theta(d\varphi)^2]\right\} .
\label{R4toH2b}
\end{equation}
That is, we have the conformal equivalence:
\begin{align}
\mathbb{R}^4 \backslash \mathbb{R}^1 \simeq \mathbb{H}^2(r_0)\times S^2(r_0) .
\end{align}
Here, $r_0$ is the radius of the 2-sphere $S^2$, $S^2(r_0)$ is the 2-sphere of radius $r_0$, and $\mathbb{H}^2(r_0)$ is the hyperbolic plane with negative constant curvature $(-\frac{1}{2r_0^2})$.
In particular, in the limit of $r_0 \to \infty$, the curvature approaches zero, $S^2(r_0)$ approaches $\mathbb{R}^2$, and $\mathbb{H}^2(r_0)$ reduces to the flat Euclidean $\mathbb{E}^2$.

\begin{rem}

The standard metric on $\mathbb{H}^3\times S^1$ is conformal to the standard metric on $S^4 \ \backslash \ S^2$, i.e., we have \textbf{conformal equivalence}\index{conformal equivalence}:
\begin{align}
S^4 \ \backslash \ S^2 \equiv \mathbb{H}^3\times S^1
\end{align}
$\mathbb{H}^3$ is obtained by quotienting with the $SO(2)$ action.
(To avoid singularities, $SO(2)$ must act freely. Therefore, we need to remove the special $S^2$ orbitals.)
The curvature of $\mathbb{H}^3$ is related to the length of the circle ($S^1$), and when the circumference is normalized to $2\pi$, $\mathbb{H}^3$ has curvature $(-1)$.

Consider $S^4$ as a compactified $\mathbb{R}^4$ with an additional point at infinity. Since the self-dual Yang-Mills equations are conformally invariant, an instanton on $S^4$ is equivalent to an instanton on $\mathbb{R}^4$ with the appropriate boundary conditions. If the Cartesian coordinates of $\mathbb{R}^4$ are $x^\mu(\mu = 1, \cdots, 4)$, then the metric is
\begin{equation}
(ds)^2(\mathbb{E}^4) = (dx^1)^2 + (dx^2)^2 + (dx^3)^2 + (dx^4)^2
\end{equation}
where
\begin{align}
x^1 + ix^2 = \rho e^{i\varphi} \ \ (\rho \geq 0 \ , \ 0 \leq \varphi <2\pi)
\end{align}
then the circular action on $\mathbb{R}^4$ can be defined by the standard rotation of $\varphi$. 
The set of fixed points is the $x^3- x^4$ plane ($x^1 = x^2 = 0$, i.e., $\rho = 0$), which by compactification becomes the two-dimensional sphere $S^2$. By removing this plane $\mathbb{R}^2$ from $\mathbb{R}^4$ and taking the quotient by the circular action, we obtain $\mathbb{H}^3$:
\begin{equation}
(ds)^2(\mathbb{E}^4) = (d\rho)^2 + \rho^2(d\varphi)^2 + (dx^3)^2 + (dx^4)^2 \ \ (\rho>0)
\end{equation}
This is then conformally equivalent:
\begin{equation}
(ds)^2(\mathbb{H}^3\times S^1) = \frac{1}{\rho^2}((dx^3)^2 + (dx^4)^2 + (d\rho)^2) + (d\varphi)^2
\end{equation}
This is a product metric on $\mathbb{H}^3\times S^1$. By taking the quotient by $SO(2)$, we obtain the metric $(ds)^2(\mathbb{H}^3)$ of the upper half-space model $\mathbb{H}^3$. The removed plane $\mathbb{R}^2$, together with the point at infinity, can be interpreted as the boundary $\partial\mathbb{H}^3 \cong S^2$ of $\mathbb{H}^3$.

\end{rem}

\section{Symmetric instantons and dimensional reductions}

We start from the $D=4$ Euclidean Yang-Mills theory on $\mathbb{E}^4(x^1,x^2,x^3,x^4)$ with the Cartesian coordinates $x^1,x^2,x^3,x^4$:
\begin{align}
&\mathscr{L} = \frac12 {\rm tr}(\mathscr{F}_{\mu\nu}(x) \mathscr{F}_{\mu\nu}(x)) , \ x = (x^1,x^2,x^3,x^4) ,
\nonumber\\
&\mathscr{F}_{\mu\nu}(x) := \partial_\mu \mathscr{A}_\nu(x) - \partial_\nu \mathscr{A}_\mu(x)  - ig [\mathscr{A}_\mu(x), \mathscr{A}_\nu(x)] , \nonumber\\
&\mathscr{A}_\mu(x) := \mathscr{A}_\mu^A(x) \frac{\sigma_A}{2} ,
\end{align}
with the metric 
\begin{equation}
(ds)^2(\mathbb{E}^4) =  (dx^1)^2 + (dx^2)^2 + (dx^3)^2 + (dx^4)^2 .
\end{equation}

We reconsider the solution of the self-dual Yang-Mills equation in the four-dimensional Euclidean space $\mathbb{E}^4(x^1,x^2,x^3,x^4)$ from the viewpoint of symmetry.
The self-dual Yang-Mills equation is given by
\begin{align}
&  *\mathscr{F}_{\mu\nu}(x^1,x^2,x^3,x^4) =\mathscr{F}_{\mu\nu}(x^1,x^2,x^3,x^4) \nonumber\\
 \Leftrightarrow & 
  \frac{1}{2} \epsilon_{\rho\sigma\mu\nu} \mathscr{F}_{\rho\sigma}(x^1,x^2,x^3,x^4) =\mathscr{F}_{\mu\nu}(x^1,x^2,x^3,x^4) .
\end{align}

\noindent
(0)
First, we consider a solution  for the gauge field that has the translational invariance. For example, a solution that has \textbf{translational invariance in the time $t=x^4$} is equivalent to considering a solution that does not depend on the Euclidean time $x^4$.
(Recall cyclic coordinates in analytical mechanics.)
\begin{align}
 &(\mathscr{A}_{1}(\bm{x},t), \mathscr{A}_{2}(\bm{x},t), \mathscr{A}_{3}(\bm{x},t), \mathscr{A}_{4}(\bm{x},t)) \nonumber\\
 \to 
 &(\mathscr{A}_{1}(\bm{x}), \mathscr{A}_{2}(\bm{x}), \mathscr{A}_{3}(\bm{x}), \Phi(\bm{x})) .
\end{align}
The $x^4$-independent solution of the self-dual equation 
 reduces to the solution of \textbf{Bogomolny equation}\index{Bogomolny equation} on $\mathbb{E}^3$:
\begin{equation}
 (*\mathscr{F})_{\ell 4}(x^1,x^2,x^3) = \mathscr{D}_\ell \Phi(x^1,x^2,x^3) , \ (\ell=1,2,3). 
\end{equation}
In fact, for the self-dual equation for $\mu, \nu=\ell , 4$
\begin{align}
 \frac{1}{2} \epsilon_{jk\ell 4} \mathscr{F}_{jk}(x^1,x^2,x^3) = \mathscr{F}_{\ell 4}(x^1,x^2,x^3) ,
\end{align}
the right-hand side reads 
\begin{align}
 & \mathscr{F}_{\ell 4}(x^1,x^2,x^3) \nonumber\\
 =& \partial_\ell \mathscr{A}_4(x^1,x^2,x^3) - \partial_4 \mathscr{A}_\ell (x^1,x^2,x^3)   \nonumber\\
& -ig[ \mathscr{A}_\ell (x^1,x^2,x^3), \mathscr{A}_4(x^1,x^2,x^3)] 
\nonumber\\
 =& \partial_\ell \mathscr{A}_4(x^1,x^2,x^3) -ig[ \mathscr{A}_\ell (x^1,x^2,x^3), \mathscr{A}_4(x^1,x^2,x^3)] \nonumber\\
 =& \mathscr{D}_\ell \Phi(x^1,x^2,x^3), \ \Phi(x^1,x^2,x^3) := \mathscr{A}_4(x^1,x^2,x^3) ,
\end{align}
where we have used $\partial_4 \mathscr{A}_\ell (x^1,x^2,x^3)=0$.

The solution of the Bogomolny equation is called the \textbf{Prasad-Sommerfield magnetic monopole}.
However, this solution leads to a divergent four-dimensional action:
\begin{equation}
S=\int_{-\infty}^{\infty}dx^4 \left[ \int dx^1 dx^2 dx^3  \mathscr{L}(x^1,x^2,x^3) \right] =\infty ,
\end{equation} 
even if $\int dx^1 dx^2 dx^3  \mathscr{L}(x^1,x^2,x^3)<\infty$ 
because of the $t$-independence.
Therefore, the PS magnetic monopole does not contribute to the path integral, because  
\begin{equation}
\exp( -S/\hbar) =0 .
\end{equation} 
Thus, the PS magnetic monopole does not play any role in the mechanism for quark confinement.

\noindent
(I)
Next, we consider solutions with \textbf{spatial rotation  symmetry} $S^1\simeq SO(2)$.


The $S^1$ symmetric instanton solution on $\mathbb{R}^4 \backslash \mathbb{R}^2$ that does not depend on the rotation angle $\varphi$ reduces to a magnetic monopole solution on $\mathbb{H}^3$.
Any solution of the Bogomolny equation 
 on $\mathbb{H}^3$ is a $\varphi$-independent solution of the self-dual equation on $\mathbb{R}^4 \backslash \mathbb{R}^2$: 
\begin{align}
 (*\mathscr{F})_{\ell \varphi}(\rho,x^3,x^4) =& \frac{1}{\rho} \mathscr{D}_\ell \Phi(\rho,x^3,x^4), \ (\rho,x^3,x^4) \in \mathbb{H}^3 \nonumber\\
 \Phi(\rho,x^3,x^4) :=& \mathscr{A}_\varphi(\rho,x^3,x^4) . 
\end{align}
Since $S^1$ is compact (unlike $\mathbb{R}^1$), any solution of the Bogomolny equation giving a finite three-dimensional action on $\mathbb{H}^3$ gives a configuration with a finite four-dimensional action if $\int_{0}^{\infty}d\rho \ \rho \int_{-\infty}^{\infty}dx^3 \int_{-\infty}^{\infty}dx^4 \mathscr{L}(\rho,x^3,x^4) <\infty$:
\begin{equation}
S
= \int_{0}^{2\pi}d\varphi \left[ \int_{0}^{\infty}d\rho \ \rho \int_{-\infty}^{\infty}dx^3 \int_{-\infty}^{\infty}dx^4 \mathscr{L}(\rho,x^3,x^4) \right] <\infty .
\end{equation}
Therefore, \textbf{$S^1\simeq SO(2)$ symmetric (or circle symmetric) instantons on $\mathbb{E}^4$ can be reinterpreted as hyperbolic magnetic monopoles on $\mathbb{H}^3$, giving a configuration with a finite four-dimensional action} (Atiyah(1984)\cite{Atiyah84}).
This means that the hyperbolic magnetic monopoles can contribute to the path integral, because  
\begin{equation}
\exp( -S/\hbar) \not= 0 .
\end{equation} 
Thus, \textbf{the hyperbolic magnetic monopoles can play the important role for quark confinement}.

\noindent
(II)
Furthermore, we consider another solution with \textbf{spatial rotation symmetry} $SO(3)$.
The $SO(3)$ (spherically) symmetric instanton on $\mathbb{R}^4 \backslash \mathbb{R}^1$ that does not depend on the rotation angles $\theta,\varphi$ reduce to a vortex on $\mathbb{H}^2(r,x^4)$.
Any solution of the vortex equation on $\mathbb{H}^2(r,x^4)$ is a $\theta,\varphi$-independent solution of the self-dual equation on $\mathbb{R}^4 \backslash \mathbb{R}^1$: 
\begin{align}
\left\{\,
\begin{aligned}
& \partial_4a_r - \partial_ra_4 = \frac{1}{r^2}(1 - \phi_1^2 - \phi_2^2), \\
&\partial_4\phi_1 + a_4\phi_2 = \partial_r\phi_2 - a_r\phi_1, \\
&\partial_4\phi_2 - a_4\phi_1 = -(\partial_r\phi_1 + a_r\phi_2).
\end{aligned} 
\right.
\label{vortex_eq1}
\end{align}
Since $S^2(\theta,\varphi)$ is compact (unlike $\mathbb{R}^2$), any solution of the vortex equation giving a finite two-dimensional  action on $\mathbb{H}^2(r,x^4)$ gives a configuration with a four-dimensional action if $\int_{0}^{\infty}dr \ r^2 \int_{-\infty}^{\infty}dx^4 \mathscr{L}(r,x^4) <\infty$:
\begin{equation}
S
= \int_{0}^{\pi}d\theta \sin \theta \int_{0}^{2\pi}d\varphi \left[ \int_{0}^{\infty}dr \ r^2 \int_{-\infty}^{\infty}dx^4 \mathscr{L}(r,x^4) \right] <\infty .
\end{equation}
Therefore, $SO(3)$ \textbf{spherically symmetric instantons on $\mathbb{E}^4$ can be reinterpreted as vortices on $\mathbb{H}^2$, giving a configuration with a finite four-dimensional action}.
Therefore,  the hyperbolic vortices can contribute to the path integral $\exp( -S/\hbar) \not= 0$ and \textbf{the hyperbolic vortices can play the important role for quark confinement}.

\noindent
The case (I) was first pointed out by Atiyah (1984)\cite{Atiyah84}, (1987)\cite{Atiyah84b}. 
Concrete magnetic monopole solutions on $\mathbb{H}^3$ were explicitly constructed by Chakrabarti (1986)\cite{Chakrabarti86} and Nash (1986)\cite{Nash86}.

\noindent
The case (II) was first used by Witten (1977)\cite{Witten} 
 to find multi-instanton solutions of four-dimensional Yang-Mills theory, which is confirmed by Manton(1978)\cite{Manton78}.
This view was established by Forgacs and Manton (1980)\cite{FM80}. 
It is also written in Manton and Sutcliffe, p.424, section 4.3 (2004)\cite{MS04}.

The Witten ansatz is the most general form for fields that are invariant under transformations that combine space rotations (spacetime symmetries) and gauge transformations (internal symmetries).
This Ansatz has four-dimensional \textbf{cylindrical symmetry}\index{cylindrical symmetry}, i.e., SO(3) rotational symmetry around the t-axis.
The hyperbolic vortex on $\mathbb{H}^2$ consists of a complex scalar field $\phi = \phi_1(r,x^4) + i\phi_2(r,x^4)$ and a gauge potential $a = a_r(r,x^4)dr + a_4(r,x^4)dx^4$. These hyperbolic vortex fields are functions of $x^4$ and $r$.

This \textbf{symmetry reduction}\index{symmetry reduction} reduces the cylindrically symmetric sector of the four-dimensional Yang-Mills theory on $\mathbb{E}^4$ to a $U(1)$ gauge scalar theory in two dimensional $\mathbb{H}^2$. 



\section{Unifying magnetic monopole and vortices}

The metric of the $D=4$ Euclidean space $\mathbb{E}^4$ with the coordinate $(x^1,x^2,x^3,x^4)$ is given by
\begin{equation}
(ds)^2(\mathbb{E}^4) = (dx^4)^2 + (dx^1)^2 + (dx^2)^2 + (dx^3)^2 .
\end{equation}
If the cylindrical coordinates $(\rho,\varphi,x^3,x^4)$: $\rho \in (0,\infty), \varphi \in [0,2\pi)$ are introduced with $x^1 = \rho\cos\varphi, x^2 = \rho\sin\varphi$, the metric can be rewritten as
\begin{equation}
(ds)^2(\mathbb{E}^4) = (dx^4)^2 + (dx^3)^2 + (d\rho)^2 + \rho^2(d\varphi)^2.
\end{equation}
If we require that the metric is independent of the coordinate $\varphi$, then the reduced metric is conformally equivalent to $\mathbb{H}^3$:
\begin{align}
(ds)^2(\mathbb{H}^3) = \frac{1}{\rho^2}((dx^4)^2 + (dx^3)^2 + (d\rho)^2). 
\label{eq:2c}
\end{align}
See Fig.~\ref{conformal_equiv_all}.

\begin{figure}[htb]
\begin{center}
\includegraphics[scale=0.45]{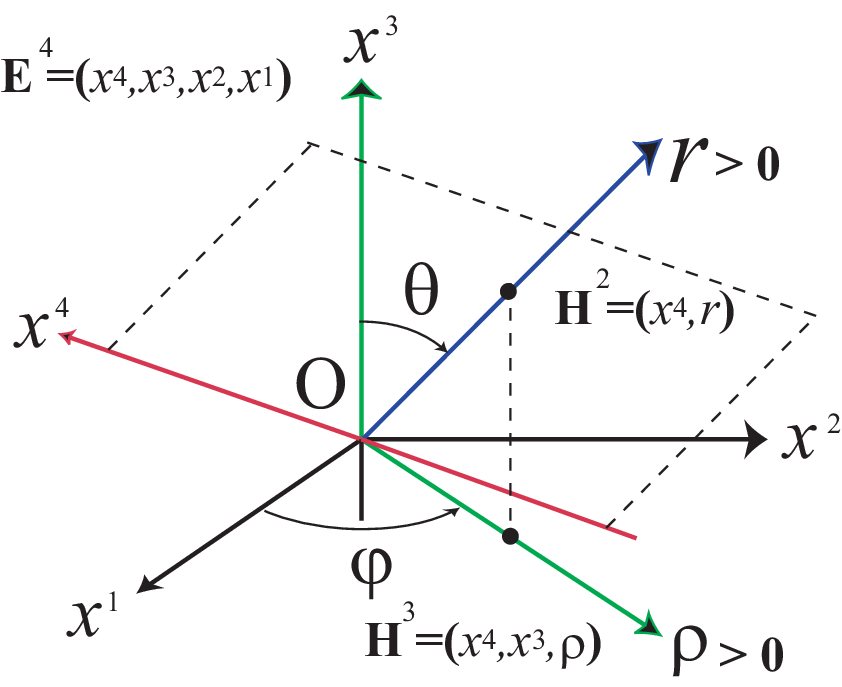}
\end{center}
\caption{
$D=3$ Euclidean space $\mathbb{E}^4(x^1,x^2,x^3,x^4)$ versus $D=3$  hyperbolic space $\mathbb{H}^3(\rho,x^3,x^4)$ with $\rho:=\sqrt{(x^1)^2+(x^2)^2} > 0$
and $D=2$ hyperbolic plane $\mathbb{H}^2(r,x^4)$ with $r:=\sqrt{(x^1)^2+(x^2)^2+(x^3)^2}> 0$.
}
\label{conformal_equiv_all}
\end{figure}

Next, if the polar coordinates $(r,\theta, \varphi,x^4)$: $r \in (0,\infty)$, $\theta \in (0,\pi)$, $\varphi \in [0,2\pi)$ with $x^3 = r\cos\theta, \rho = r\sin\theta$ is introduced, the metric is rewritten into
\begin{equation}
(ds)^2(\mathbb{E}^4) = (dx^4)^2 + (dr)^2 + r^2((d\theta)^2 + \sin^2\theta(d\varphi)^2).
\end{equation}
Here $r := \sqrt{\rho^2 + (x^3)^2}= \sqrt{(x^1)^2+ (x^2)^2+ (x^3)^2}$. 
If we require that the metric is independent of the coordinates $\theta$ and $\varphi$, then the reduced metric is conformally equivalent to $\mathbb{H}^2$:
\begin{align}
(ds)^2(\mathbb{H}^2) = \frac{1}{r^2}((dx^4)^2 + (dr)^2) .
\label{eq:3c}
\end{align}
Looking at the relationship between the two metrics \eqref{eq:2c} and \eqref{eq:3c} (see Figure \ref{conformal_equiv_all}), if we restrict to $\theta = \frac{\pi}{2}$, i.e., $x^3 = 0$, $r = \rho$,
\begin{align}
\left.(ds)^2(\mathbb{H}^2)\right|_{\theta = \frac{\pi}{2}} = \frac{1}{\rho^2}((dx^4)^2 + (d\rho)^2) .
\end{align}
This is the slice $(x^3 = 0)$ of $\mathbb{H}^3$. (The equatorial slice of the unit sphere model of $\mathbb{H}^3$ is the unit disk carrying the hyperbolic metric.)

On the other hand, if we restrict to $\theta = 0$, i.e., $\rho = 0$, then $x^3 = r$ and 
\begin{align}
\left.(ds)^2(\mathbb{H}^2)\right|_{\theta = 0} = \frac{1}{(x^3)^2}((dx^4)^2 + (dx^3)^2)
\end{align}
This is the boundary $\partial\mathbb{H}^3$ of $\mathbb{H}^3$ (known as the hemispherical model of $\mathbb{H}^2$). Since $x^3$ can take both signs, $\partial\mathbb{H}^3$ is two copies of the hyperbolic plane glued together along the $x^4$ axis.

Since \textbf{the self-dual instanton equation is conformally invariant}, the solution is invariant under conformal scaling of the background metric as described above.
In this paper, it is shown that magnetic monopoles in $\mathbb{H}^3$ and vortices in $\mathbb{H}^2$ can be constructed from instantons on $\mathbb{E}^4$ by dimensional reduction. 

\noindent
Case (I): Instantons with $S^1\simeq SO(2) \simeq U(1)$ symmetry are dimensionally reduced to magnetic monopoles on $\mathbb{H}^3$.
On the other hand,
\noindent
Case (II): Instantons with $SO(3) \simeq SU(2)$ symmetry are dimensionally reduced to vortices on $\mathbb{H}^2$.
See Table~\ref{Table:reduction}. 

Case (II): We can consider the instantons with $SO(4)$ symmetry, which is however the spacetime symmetry beyond the spatial symmetry for the 4-dimensional spacetime.  
It should be remarked that $SO(4)$ is larger than the gauge group $SU(2)$ and hence it cannot be compensated by $SU(2)$ gauge transformation. Therefore, this case does not correspond to the symmetric instantons defined in the above. 
See Remark~\ref{rem:caseIII}.


In this way, hyperbolic magnetic monopoles and hyperbolic vortexes are bidirectionally related under conformal equivalence.
A hyperbolic magnetic monopole can be obtained by lifting a hyperbolic vortex to a four-dimensional instanton, then performing dimensional reduction that preserves $SO(2)$ symmetry.
Conversely, a hyperbolic vortex can be obtained by lifting a hyperbolic magnetic monopole to a four-dimensional instanton, then performing dimensional reduction that preserves $SO(3)$ symmetry.
See Fig.~\ref{conformal_unification}. 

\begin{figure}[tb]
\begin{center}
\includegraphics[scale=0.3]{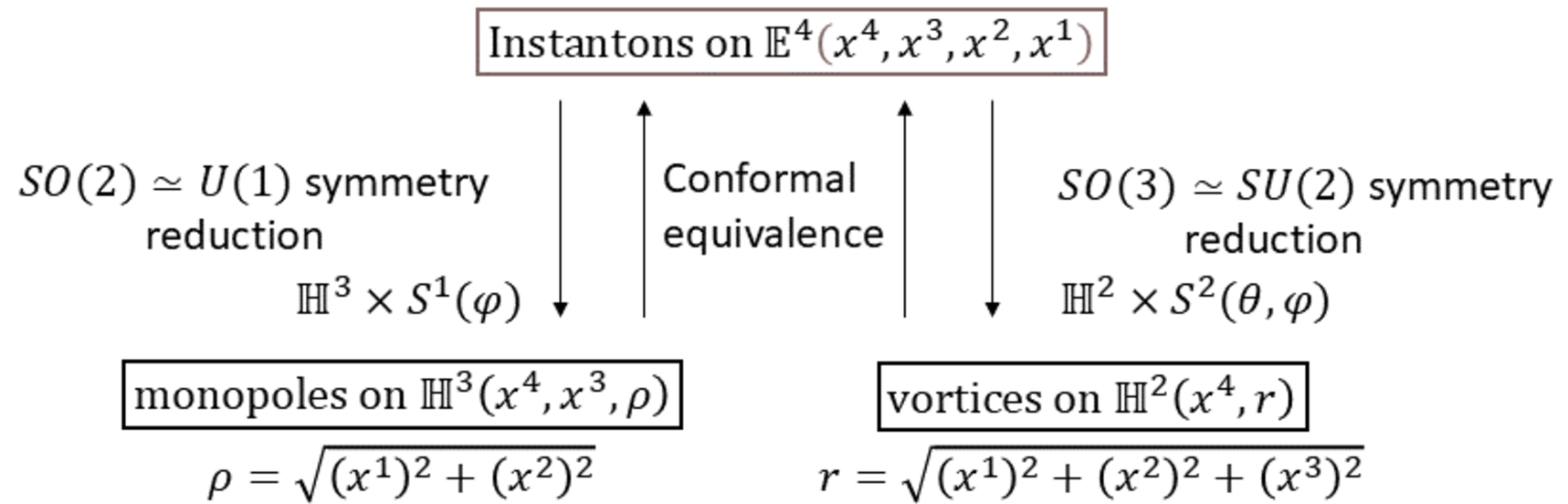}
\end{center}
\caption{
Conformal equivalence, symmetric instanton, dimensional reduction.
}
\label{conformal_unification}
\end{figure}

\begin{table}[h]
\begin{center}
\begin{tabular}{c|c|c|c|c}
\noalign{\hrule height 0.03cm}
reduction & original & conformal & symmetry & topology  \\
 & spacetime & equivalence & group & objects  \\
\hline
\hline
case (I) & $\mathbb{R}^4-\mathbb{R}^2$ &  $\mathbb{H}^3 \times S^1$ & $SO(2)$ & $magnetic$ 
\\
 & $(x^1,x^2,x^3,x^4)$ & $(\rho,x^3,x^4) \times (\varphi)$ & $~$ & $monopole$ \\
\hline
case (II) & $\mathbb{R}^4-\mathbb{R}^1$ &  $\mathbb{H}^2 \times S^2$ & $SO(3)$ & $vortex$ 
\\
 & $(x^1,x^2,x^3,x^4)$ & $(r,x^4) \times (\theta,\varphi)$ & $~$ & $~$  \\
\hline
\hline
case (III) & $\mathbb{R}^4-\mathbb{R}^0$ &  $\mathbb{H}^1 \times S^3$ & $SO(4)$ & $kink$ 
\\
 & $(x^1,x^2,x^3,x^4)$ & $(|x|) \times (\omega,\theta,\varphi)$ & $~$ & $~$  
\\
\noalign{\hrule height 0.03cm}
\end{tabular}
\end{center}
\caption{
Dimensional reductions according to conformal equivalence with compact spaces with the symmetry groups and the associated with the topological objects. 
}
\label{Table:reduction}
\end{table}

In what follows, we will construct a specific one based on the former.

\begin{definition}[Rotationally symmetric gauge field]
When a gauge field $\mathscr{A}(x)$ with a non-Abelian gauge group $G$ undergoes a spatial rotation in $\mathbb{R}^d(d>2)$, and the gauge field can be made invariant by simultaneously performing an appropriate gauge transformation, the gauge field $\mathscr{A}(x)$ is called \textbf{rotationally symmetric}\index{rotationally symmetric}. 
We say that the rotation keeps the field invariant, apart from the gauge transformation. 
When a space rotation $R$ is performed on the spatial component $\mathscr{A}_j(\bm{x})$ of the gauge field, if the corresponding gauge transformation $U_R(\bm{x})$ satisfies
\begin{align}
R_{kj}\mathscr{A}_k(R\bm{x}) =& U_R(\bm{x})\mathscr{A}_j(\bm{x})U_R^{-1}(\bm{x}) \nonumber\\
&+ iU_R(\bm{x})\partial_jU_R^{-1}(\bm{x}) ,
\label{eq:5}
\end{align}
then $\mathscr{A}_j$ is invariant under the space rotation. 
In other words, the space rotation $R$ has the same effect on the gauge field as the gauge transformation $U_R$. 
Or equivalently, if we combine $R$ and $U_R^{-1}$, the gauge field remains invariant.

If the gauge field $\mathscr{A}$ is combined with a scalar field $\Phi$ that transforms as a fundamental representation of $G$ under a gauge transformation of the group $G$, if the relation holds:
\begin{equation}
\Phi(R\bm{x}) = U_R(\bm{x})\Phi(\bm{x}) , 
\label{eq:6}
\end{equation}
$\Phi$ is invariant under rotations.

\end{definition}

\begin{prop}[Witten transformation (Witten Ansatz)]
The transformation with the $SO(3)$ spatial rotation symmetry from the $D=4$ $SU(2)$ Yang-Mills field to the dimensionally reduced $D=2$ field is given by the Witten transformation (which was originally called the Witten Ansatz):
\begin{align}
\mathscr{A}_4(x) &= \frac{\sigma_A}{2}\frac{x^A}{r}a_t(r,t) , 
\nonumber\\
\mathscr{A}_j(x) &=\frac{\sigma_A}{2}  \left\{\frac{x^A}{r}\frac{x^j}{r}a_r(r,t) 
+ \frac{\delta_j^Ar^2 - x^Ax^j}{r^3}\phi_1(r,t) 
\right. \nonumber\\
&\left. \quad\quad\quad + \epsilon_{jAk}\frac{x^k}{r^2}[1 + \phi_2(r,t)]\right\},
\nonumber\\ & r:=\sqrt{(x^1)^2+(x^2)^2+(x^3)^2} , \ (r,t) \in \mathbb{H}^2 .
\end{align}
The Witten transformation is the most general form (which is unique up to the gauge transformation) for the field that is invariant under a combination of $SO(3)$ spatial rotation (spacetime symmetry) and $SU(2)$ gauge transformation (internal symmetry).
The dimensionally reduced field has four-dimensional cylindrical symmetry, i.e., SO(3) rotational symmetry around the t-axis.

The hyperbolic vortex on $\mathbb{H}^2$ consists of a complex scalar field $\phi = \phi_1(r,x^4) + i\phi_2(r,x^4)$ and a gauge potential $a = a_r(r,x^4)dr + a_4(r,x^4)dx^4$. These hyperbolic vortex fields are functions of $x^4$ and $r$.

\end{prop}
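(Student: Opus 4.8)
The plan is to derive the Witten Ansatz from scratch rather than merely verify it, since the proposition claims it is \emph{the most general} invariant configuration. First I would fix the geometric setup: the group $SO(3)$ acts on $\mathbb{R}^3(x^1,x^2,x^3)$ by rotations, leaving $x^4=t$ untouched, and we seek gauge fields $\mathscr{A}_\mu$ on $\mathbb{E}^4$ satisfying the invariance condition \eqref{eq:5} for every $R\in SO(3)$ with some compensating $SU(2)$-valued $U_R(\bm{x})$. The compensators must satisfy the cocycle relation $U_{R_1R_2}(\bm{x})=U_{R_1}(R_2\bm{x})U_{R_2}(\bm{x})$, so they furnish a (projective) homomorphism $SO(3)\to$ Map$(\mathbb{R}^3,SU(2))$; using the double cover $SU(2)\to SO(3)$, the natural choice is $U_R(\bm{x})$ built from the $SU(2)$ element covering $R$, and one shows every other choice differs by a genuine gauge transformation. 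This reduces the problem to finding the $SO(3)$-equivariant maps, i.e.\ decomposing the ${\mathfrak{su}}(2)$-valued one-form $\mathscr{A}$ into isotypic components under the combined action.

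The key computational step is the following. Restrict attention to a ray, say the positive $x^3$-axis, whose stabilizer in $SO(3)$ is the $SO(2)$ of rotations about the $3$-axis; the compensator along this axis is correspondingly an $SO(2)\subset SU(2)$ generated by $\sigma_3/2$. Invariance then forces each component of $\mathscr{A}$ evaluated on the axis to lie in a definite weight space of this $SO(2)$: the radial component $\mathscr{A}_r$ and the time component $\mathscr{A}_t$ must be proportional to $\sigma_3/2$ (weight zero), giving the functions $a_r(r,t)$ and $a_t(r,t)$, while the two transverse components combine into a weight-$\pm1$ doublet, parametrized by the two real functions $\phi_1(r,t),\phi_2(r,t)$ (the shift by $1$ in $1+\phi_2$ being precisely the contribution of the inhomogeneous term $iU_R\partial_jU_R^{-1}$, equivalently the flat ``monopole'' background forced by the topology of the compensator). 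Spreading this axis data over all of $\mathbb{R}^3\setminus\{0\}$ by $SO(3)$-equivariance reconstructs the three tensor structures $\tfrac{x^A}{r}\tfrac{x^j}{r}$, $\tfrac{\delta^A_j r^2-x^Ax^j}{r^3}$, and $\epsilon_{jAk}\tfrac{x^k}{r^2}$ appearing in the statement, because these are exactly the three independent $SO(3)$-covariant tensors with one internal index $A$ and one spatial index $j$ built from $x$ and $\delta,\epsilon$. Counting confirms completeness: $2+2=4$ real functions, matching a $U(1)$ gauge-scalar theory on $\mathbb{H}^2$ with a complex scalar $\phi=\phi_1+i\phi_2$ and a two-component abelian potential $a=a_r\,dr+a_t\,dt$.

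Finally I would record why the target space is $\mathbb{H}^2$: writing the reduced Yang-Mills action with the ansatz inserted, the four-dimensional volume element and the $r^{-2}$ factors conspire, via the conformal equivalence \eqref{eq:3c} established earlier, so that the effective two-dimensional theory lives on the upper half-plane $(r,t)$ with metric $r^{-2}((dx^4)^2+(dr)^2)$; the residual $U(1)$ gauge freedom is the remnant of the $SU(2)$ gauge transformations commuting with the compensators, and under it $\phi$ carries charge one, confirming ``transforms in the fundamental'' as in \eqref{eq:6}. Uniqueness up to gauge transformation then follows from the two facts already used: any two compensator systems differ by an honest gauge transformation, and within a fixed compensator system the isotypic decomposition is canonical.

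The main obstacle I anticipate is the \emph{uniqueness} / ``most general'' clause rather than the mere exhibition of the ansatz: one must argue carefully that no invariant configurations are missed, which requires controlling the freedom in choosing the compensators $U_R(\bm{x})$ (the cocycle data) and showing the only topologically distinct choice is absorbed by the $1$ in $1+\phi_2$, plus handling the behavior at the fixed-point locus $r=0$ where $SO(3)$ does not act freely — this is the point that ties back to the excision $\mathbb{R}^4\setminus\mathbb{R}^1$ and to the boundary-component discussion around the slices $\theta=0,\pi/2$. The equivariant-tensor bookkeeping and the action reduction are routine by comparison.
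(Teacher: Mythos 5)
Your proposal is correct in outline, but it should be said plainly that the paper itself does not prove this proposition: its ``proof'' is a pure literature citation (Witten 1977 for the discovery, Manton 1978 for confirmation, Forgacs--Manton 1980 for the derivation from symmetric gauge fields, Manton--Sutcliffe 2004 for a textbook account). What you have sketched is precisely the Forgacs--Manton argument that the paper points to: impose the invariance condition with compensating gauge transformations, note the cocycle constraint on the compensators, restrict to a ray where the stabilizer $SO(2)\subset SO(3)$ acts, decompose the Lie-algebra-valued components into weight spaces of the compensator's image of that stabilizer (weight $0$ giving $a_t,a_r$ along $\sigma_3$, weight $\pm1$ giving the doublet $\phi_1,\phi_2$), identify the inhomogeneous term $iU_R\partial_j U_R^{-1}$ with the Wu--Yang-type background responsible for the $1$ in $1+\phi_2$, and reassemble over the orbit using the three independent $SO(3)$-covariant tensors in $j,A$. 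This is the right decomposition and the function count ($4$ real functions matching a $U(1)$ gauge-scalar pair on the upper half-plane) is correct. What you buy over the paper is an actual argument; what remains to be executed, as you yourself flag, are the two genuinely nontrivial points that a full proof must settle: (i) the classification of the compensator cocycle up to genuine gauge transformations --- concretely, that the homomorphism from the stabilizer $SO(2)$ into $SU(2)$ is fixed up to conjugacy by the requirement of a smooth extension across $r=0$ (or by the topology of the excised locus $\mathbb{R}^4\setminus\mathbb{R}^1$), which is what pins down the weight-one embedding and hence the specific $\epsilon_{jAk}x^k/r^2$ background; and (ii) the regularity analysis at the fixed-point set. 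Neither is done in your sketch, but both are correctly identified as the crux of the ``most general, unique up to gauge'' clause, which is more than the paper offers.
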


\begin{proof}
This result was first discovered by Witten (1977)\cite{Witten} in the course for finding multi-instanton solutions of four-dimensional Yang-Mills theory. Therefore, it was called the Witten Ansatz.
This result was confirmed by Manton(1978)\cite{Manton78}.
It was proved that the Witten transformation is exactly derived from the viewpoint of the symmetric gauge field  by Forgacs and Manton (1980) \cite{FM80}.
The form is unique up to the degrees of freedom of gauge transformation.
It is also described in Manton and Sutcliffe, p.424, section 4.3 (2004) \cite{MS04}.


\end{proof}

\begin{rem} 
The argument for the magnetic monopole similar to Forgacs and Manton for the vortex was done in Manton(1978)\cite{Manton78b}. 
\end{rem} 

\begin{definition}[polar component of the Yang-Mills field]
In the cylindrical coordinate $(\rho, \varphi, x^3, x^4)$,
we define the $\rho$-component $\mathscr{A}_\rho$ and the $\varphi$-component $\mathscr{A}_\varphi$ of the Yang-Mills field $\mathscr{A}$ by
\begin{align}
& \mathscr{A}_\rho(x):=\frac{1}{\rho}(x^1\mathscr{A}_1(x) + x^2\mathscr{A}_2(x)) , \nonumber\\
& \mathscr{A}_{\varphi}(x):= -x^2\mathscr{A}_{1}(x) +x^1\mathscr{A}_{2}(x) .
\end{align}

\end{definition}

From the above scenario according to the conformal equivalence shown in Fig.~\ref{conformal_unification}, we can obtain a direct relationship between the hyperbolic magnetic monopole field on $\mathbb{H}^3$ and the hyperbolic vortex field on $\mathbb{H}^2$.

\begin{prop}[hyperbolic magnetic monopole field on $\mathbb{H}^3$ and hyperbolic vortex field on $\mathbb{H}^2$ ]\label{prop:monopole-vortex}
By applying the gauge transformation 
\begin{equation}
U_\varphi=\exp \left( i\varphi\frac{\sigma_3}{2} \right) \in SU(2)  \ \left( \varphi:=\arctan \frac{x^2}{x^1} \in [0, 2\pi) \right)
\end{equation} 
to the instanton gauge potential $\mathscr{A}_\mu(x^1,x^2,x^3,x^4)$, which corresponds to a rotation of angle $\varphi$ around the $x_3$ axis,
\begin{align}
& \mathscr{A}_\mu(x^1,x^2, x^3,x^4) &\nonumber\\
\rightarrow   \quad 
& U_\varphi \mathscr{A}_\mu(x^1,x^2, x^3,x^4) U_\varphi^\dagger + iU_\varphi\partial_\mu U_\varphi^\dagger
\nonumber\\ 
 =:& \mathscr{A}_\mu^G(\rho, x^3,x^4) .
\end{align}
We can make $\mathscr{A}_\mu(x^1,x^2,x^3,x^4)$ independent of $\varphi$, and obtain an $S^1$-symmetric instanton $\mathscr{A}_\mu^G(\rho,x^3,x^4)$ ($\rho := \sqrt{(x^1)^2+(x^2)^2}$).
In this case, $\mathscr{A}_\varphi^G(\rho,x^3,x^4)$ is identified with the hyperbolic magnetic monopole field $\Phi(\rho,x^3,x^4)$ on $\mathbb{H}^3$: 
\begin{align}
\mathscr{A}_\varphi^G(\rho,x^3,x^4) = \Phi(\rho,x^3,x^4).
\end{align}
Then the gauge field  $\mathscr{A}_\rho^G(\rho,x^3,x^4)$, $\mathscr{A}_3^G(\rho,x^3,x^4)$, $\mathscr{A}_4^G(\rho,x^3,x^4)$ and the scalar field $\Phi(\rho,x^3,x^4)$ of a hyperbolic magnetic monopole on $\mathbb{H}^3$ are written in terms of the gauge field $a_t=a_t(t,r)$, $a_r=a_r(t,r)$ and the scalar field $\phi_1=\phi_1(t,r)$, $\phi_2=\phi_2(t,r)$ of a hyperbolic vortex on $\mathbb{H}^2$: 
\begin{align}
&\mathscr{A}_4^G(\rho, x^3,x^4) =\frac{1}{2}\left\{\frac{1}{r}(\sigma_1\rho + \sigma_3x_3)\right\}a_t, 
\label{eq:G4}
\\
 &\mathscr{A}_3^G(\rho, x^3,x^4) =\frac{1}{2}\left\{\frac{x_3}{r^2}(\sigma_1\rho + \sigma_3x_3)a_r \right. 
 \nonumber\\
 &\left. \quad\quad\quad\quad\quad\quad+ \frac{\rho}{r^3}(-\sigma_1x_3 +\sigma_3\rho)\phi_1 - \frac{\rho}{r^2}\sigma_2(1 + \phi_2)\right\}, 
\label{eq:G5}
\\
 &\mathscr{A}_\rho^G(\rho, x^3,x^4) =\frac{1}{2}\left\{\frac{\rho}{r^2}(\sigma_1\rho + \sigma_3x_3)a_r \right. \nonumber\\
 &\left. \quad\quad\quad\quad\quad\quad+ \frac{x_3}{r^3}(\sigma_1x_3 -\sigma_3\rho)\phi_1 + \frac{x_3}{r^2}\sigma_2(1 + \phi_2)\right\}, 
\label{eq:G6}
\\
&\Phi(\rho, x^3,x^4) \equiv \mathscr{A}_\varphi^G(\rho, x^3,x^4) 
 \nonumber\\
 &= \frac{1}{2}\left\{\frac{\rho}{r}\sigma_2\phi_1 + \frac{\rho}{r^2}(-\sigma_1x_3 +\sigma_3\rho)(1 + \phi_2) -\sigma_3\right\} , 
\label{eq:G7}
\end{align}
where $\sigma_A (A=1,2,3)$ are the Pauli matrices and 
\begin{align}
 & r:= \sqrt{(x^1)^2+(x^2)^2+(x^3)^2}=\sqrt{\rho^2+(x^3)^2}, \nonumber\\ 
& \rho := \sqrt{(x^1)^2+(x^2)^2} .
\end{align}

\end{prop}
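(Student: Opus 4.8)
The plan is to derive these identities directly from the Witten Ansatz of the preceding Proposition by carrying out the compensating gauge rotation $U_\varphi=\exp(i\varphi\sigma_3/2)$ explicitly, rather than re-deriving the reduction from scratch. First I would record the two ingredients that make the computation go through: the cylindrical form of the ``hedgehog'' vector, $x^A=(\rho\cos\varphi,\rho\sin\varphi,x^3)$ so that $\sigma_Ax^A=\sigma_1\rho\cos\varphi+\sigma_2\rho\sin\varphi+\sigma_3x^3$, and the adjoint action of $U_\varphi$ on the Pauli matrices,
\begin{align}
U_\varphi\sigma_1U_\varphi^\dagger&=\cos\varphi\,\sigma_1-\sin\varphi\,\sigma_2 ,\nonumber\\
U_\varphi\sigma_2U_\varphi^\dagger&=\sin\varphi\,\sigma_1+\cos\varphi\,\sigma_2 ,\nonumber\\
U_\varphi\sigma_3U_\varphi^\dagger&=\sigma_3 .
\end{align}
Combining these gives the key cancellation $U_\varphi(\sigma_Ax^A)U_\varphi^\dagger=\sigma_1\rho+\sigma_3x^3$, and, more generally, the adjoint rotation of the $\sigma$'s exactly undoes the $SO(2)$ rotation carried by the coordinate tensors $x^Ax^j$, $\delta^A_jr^2-x^Ax^j$, $\epsilon_{jAk}x^k$ in the spatial part of the Ansatz once one also passes to the rotating frame $(\hat\rho,\hat\varphi)$. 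Hence $\mathscr{A}^G_\mu$ is $\varphi$-independent, as asserted, and one may evaluate it at $\varphi=0$ (where $U_\varphi=1$, $x^1=\rho$, $x^2=0$) without losing information.

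Next I would handle the inhomogeneous part. Since $U_\varphi$ depends on $x$ only through $\varphi=\arctan(x^2/x^1)$, with $\partial_1\varphi=-x^2/\rho^2$, $\partial_2\varphi=x^1/\rho^2$, $\partial_3\varphi=\partial_4\varphi=0$, the term $iU_\varphi\partial_\mu U_\varphi^\dagger$ is proportional to $\sigma_3\,\partial_\mu\varphi$. Therefore it affects neither $\mathscr{A}^G_3$ nor $\mathscr{A}^G_4$, it contributes nothing to $\mathscr{A}^G_\rho$ (because $x^1\partial_1\varphi+x^2\partial_2\varphi=0$), and its only effect is a constant shift $\mp\tfrac12\sigma_3$ of the $\varphi$-component, since $-x^2\partial_1\varphi+x^1\partial_2\varphi=1$. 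This constant term is exactly the asymptotic ($\phi_1\to0$, $\phi_2\to-1$) value of the Higgs field of a unit-charge hyperbolic monopole, and it is the origin of the $-\tfrac12\sigma_3$ in \eqref{eq:G7}; the overall sign has to be tracked against the orientation conventions chosen for $\varphi$ and for the inhomogeneous term in the gauge transformation.

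Then the remaining work is pure substitution: insert $x^1=\rho$, $x^2=0$, $x^3=x^3$, $r=\sqrt{\rho^2+(x^3)^2}$ into the Witten Ansatz, use $\mathscr{A}^G_3=\mathscr{A}_3$, $\mathscr{A}^G_\rho=\tfrac1\rho(x^1\mathscr{A}_1+x^2\mathscr{A}_2)=\mathscr{A}_1$, and $\mathscr{A}^G_\varphi=(-x^2\mathscr{A}_1+x^1\mathscr{A}_2)+(\text{shift})=\rho\,\mathscr{A}_2+(\text{shift})$, and collect the coefficients of $\sigma_1,\sigma_2,\sigma_3$. The three tensor structures of the Ansatz produce respectively the $a_r$-terms, the $\phi_1$-terms, and the $(1+\phi_2)$-terms of \eqref{eq:G5}--\eqref{eq:G7}, while $\mathscr{A}_4$ reproduces \eqref{eq:G4} at once; the identification $\Phi\equiv\mathscr{A}^G_\varphi$ is precisely the one already dictated by the dimensional reduction of Sec.~III, where the $\varphi$-component of the gauge field plays the role of the adjoint Higgs field on $\mathbb{H}^3$.

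The conceptual content — that a cylindrically symmetric instanton, after the rotation $U_\varphi$, descends to an $SO(2)$-invariant configuration on $\mathbb{H}^3\times S^1$ whose $S^1$-component is the monopole Higgs field — is already in place from the preceding sections, so the real effort here is bookkeeping: checking that every trace of $\varphi$-dependence genuinely cancels (in particular in the $\epsilon_{jAk}$ and $\delta^A_jr^2-x^Ax^j$ pieces when passing to the $(\hat\rho,\hat\varphi)$ frame), and pinning down the factors $\rho/r$, $x^3/r$, $\rho/r^2$, $x^3/r^2$ exactly as written. The one genuinely delicate point is the sign of the constant $-\tfrac12\sigma_3$ in \eqref{eq:G7}, which is sensitive to the precise convention adopted for the inhomogeneous term ($iU_\varphi\partial_\mu U_\varphi^\dagger$ versus $i(\partial_\mu U_\varphi)U_\varphi^\dagger$); everything else follows by direct, if somewhat tedious, algebra.
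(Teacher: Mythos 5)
Your outline is correct and is precisely the ``straightforward but a bit tedious'' computation that the paper's own proof does not perform but merely attributes to Maldonado (2017): rotating the Witten Ansatz by $U_\varphi$, exploiting $U_\varphi(\sigma_Ax^A)U_\varphi^\dagger=\sigma_1\rho+\sigma_3x^3$ to kill the $\varphi$-dependence, evaluating at $\varphi=0$, and adding the inhomogeneous piece $\propto\sigma_3\,\partial_\mu\varphi$ reproduces \eqref{eq:G4}--\eqref{eq:G7} term by term. Your caution about the sign of the constant $-\tfrac12\sigma_3$ is well placed: with the conventions as literally written ($U_\varphi=e^{+i\varphi\sigma_3/2}$ and $+iU_\varphi\partial_\mu U_\varphi^\dagger$) the shift comes out as $+\tfrac12\sigma_3$, whereas the $-\tfrac12\sigma_3$ appearing in \eqref{eq:G7} is the choice that is actually consistent with the norm formula $||\Phi||^2=(\rho^2|\phi|^2+(x^3)^2)/(4r^2)$ established in the following proposition, so the convention for the inhomogeneous term must be fixed accordingly.
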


\begin{proof}
This result was obtained by Maldonado (2017) \cite{Maldonado17} by straightforward but a bit tedious calculations. 
Here the assignment of the components are different from his result for our  convenience. 
\end{proof}

\begin{figure}[htb]
\begin{center}
\includegraphics[scale=0.5]{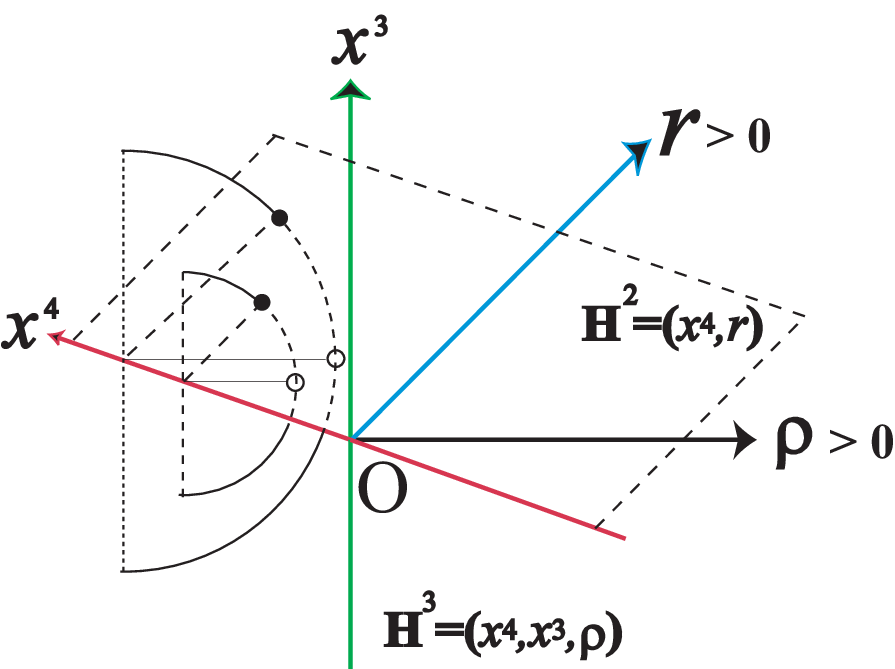}
\end{center}
\caption{
The relationship between hyperbolic vortices (black circles) on $\mathbb{H}^2$ and hyperbolic magnetic monopoles (white circles) on $\mathbb{H}^3$.
}
\label{vortex_monopole_relation}
\end{figure}

From the above formula, we can obtain a direct relationship between hyperbolic vortices and hyperbolic magnetic monopoles.

\begin{prop}[Direct relationship between hyperbolic vortices and hyperbolic magnetic monopoles]
The relationship for the norm between the $su(2)$-valued hyperbolic magnetic monopole field $\Phi(\rho, x^3,x^4)=\mathscr{A}_\varphi^G(\rho, x^3,x^4)$ and the complex-valued hyperbolic vortex field $\phi(x^4,r)=\phi_1(x^4,r)+i\phi_2(x^4,r)$ is given as 
\begin{align}
& ||\Phi(x^4,x^3,\rho)||^2 = \frac{\rho^2|\phi(x^4,r)|^2 + (x^3)^2}{4r^2} , \nonumber\\ & \ ( r:=\sqrt{\rho^2 + (x^3)^2}) .
\end{align}
$||\Phi||$ has the correct boundary value: $||\Phi|| \rightarrow v = \frac{1}{2}$ $(\rho\rightarrow0)$.

The zero point of $\Phi$ exists at the position where $x^3 = 0$ and $\phi(x^4,r)=0$. At the equatorial plane $x^3=0$, $||\Phi||$ and $|\phi|$ are proportional:
\begin{align}
  ||\Phi(x^4,x^3=0,\rho)|| = \frac12  |\phi(x^4,r=\rho)| .
\end{align}
See Fig.~\ref{vortex_monopole_relation}. 
From this, we can interpret the hyperbolic magnetic monopole as an embedded hyperbolic vortex.

\end{prop}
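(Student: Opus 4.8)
The plan is to feed the explicit monopole--vortex dictionary of Proposition~\ref{prop:monopole-vortex}, and specifically the expression \eqref{eq:G7} for $\Phi=\mathscr{A}_\varphi^G$, into a direct computation of the amplitude $\|\Phi\|$. For the $su(2)$-valued field $\Phi$ I use the normalization $\Phi^2=\|\Phi\|^2 I$ (equivalently $\|\Phi\|^2=\tfrac12\tr\Phi^2$), which is the one making the Prasad--Sommerfield vacuum sit at $\|\Phi\|=v=\tfrac12$. First I would expand \eqref{eq:G7} in the Pauli basis, $\Phi=c_1\sigma_1+c_2\sigma_2+c_3\sigma_3$, reading off
\[
c_1=-\frac{\rho\,x^3(1+\phi_2)}{2r^2},\qquad c_2=\frac{\rho\,\phi_1}{2r},\qquad c_3=\frac{\rho^2(1+\phi_2)-r^2}{2r^2},
\]
so that, since $\sigma_A\sigma_B+\sigma_B\sigma_A=2\delta_{AB}I$, one has $\Phi^2=(c_1^2+c_2^2+c_3^2)I$ and hence $\|\Phi\|^2=c_1^2+c_2^2+c_3^2$.

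The core step is the algebraic simplification of $c_1^2+c_2^2+c_3^2$. Clearing the common denominator $4r^4$, the numerator is $\rho^2(x^3)^2(1+\phi_2)^2+\rho^2 r^2\phi_1^2+(\rho^2(1+\phi_2)-r^2)^2$; substituting $(x^3)^2=r^2-\rho^2$ makes the quartic pieces $\rho^4(1+\phi_2)^2$ cancel, the remaining $(1+\phi_2)$-dependence collapses through $(1+\phi_2)^2-2(1+\phi_2)=\phi_2^2-1$, and combining with the $\phi_1^2$ term leaves $r^2(\rho^2|\phi|^2+(x^3)^2)$. Dividing by $4r^4$ then gives exactly the claimed identity $\|\Phi(x^4,x^3,\rho)\|^2=(\rho^2|\phi(x^4,r)|^2+(x^3)^2)/(4r^2)$ with $r=\sqrt{\rho^2+(x^3)^2}$. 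This is elementary, but the cancellations must be tracked carefully, and one should keep in mind that the identity lives on the open space $\rho>0$, the locus $\rho=0$ being the conformal boundary $\partial\mathbb{H}^3$.

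The remaining assertions follow by inspecting this formula. Taking $\rho\to0$ forces $r\to|x^3|$ and $\rho^2|\phi|^2\to0$, so $\|\Phi\|^2\to(x^3)^2/(4(x^3)^2)=\tfrac14$ and $\|\Phi\|\to v=\tfrac12$. Because the numerator is a sum of squares, $\|\Phi\|=0$ iff both $x^3=0$ and $\rho|\phi(x^4,r)|=0$; on $x^3=0$ one has $r=\rho>0$, so this reduces to $\phi(x^4,\rho)=0$, i.e. the monopole zeros lie exactly over the vortex zeros. Setting $x^3=0$ (hence $r=\rho$) in the identity gives $\|\Phi(x^4,0,\rho)\|^2=\tfrac14|\phi(x^4,\rho)|^2$, that is $\|\Phi(x^4,0,\rho)\|=\tfrac12|\phi(x^4,\rho)|$. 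Finally, combining this with the identification recorded earlier in the discussion of the metrics \eqref{eq:2c}--\eqref{eq:3c}, namely that the equatorial slice $\{x^3=0\}$ of $\mathbb{H}^3$ is the hyperbolic plane $\mathbb{H}^2$ carrying the vortex, one reads off that $\|\Phi\|$ restricted to that totally geodesic $\mathbb{H}^2\subset\mathbb{H}^3$ is (up to the factor $\tfrac12$) the vortex modulus; this is the precise sense in which the hyperbolic monopole is the hyperbolic vortex embedded into $\mathbb{H}^3$. The only obstacle is the bookkeeping in the squared-norm computation; no new input beyond \eqref{eq:G7} is required.
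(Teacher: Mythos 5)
Your proposal is correct and follows essentially the same route as the paper: both simply insert the expression \eqref{eq:G7} for $\Phi=\mathscr{A}_\varphi^G$ and evaluate $\|\Phi\|^2=\tfrac12\operatorname{tr}(\Phi^2)$, the only cosmetic difference being that you sum the squares of the Pauli coefficients $c_A$ while the paper writes out the explicit $2\times2$ matrix before taking the trace. The algebraic simplification using $(x^3)^2=r^2-\rho^2$ and the subsequent readings of the boundary value, zero locus, and equatorial restriction all check out.
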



\begin{proof}
This result was obtained by Maldonado (2017) \cite{Maldonado17}.
By using the representation of the Pauli matrices $\sigma_1 = \left(\begin{smallmatrix}0&1\\1&0\end{smallmatrix}\right)$, $\sigma_2 = \left(\begin{smallmatrix}0&-i\\i&0\end{smallmatrix}\right)$, $\sigma_3 = \left(\begin{smallmatrix}1&0\\0&-1\end{smallmatrix}\right)$, the scalar field $\Phi$ is written as
\begin{equation}
\Phi = \frac{1}{2}
\begin{pmatrix}
-1+ \frac{\rho^2}{r^2}(1 + \phi_2)&-i\frac{\rho}{r}\phi_1 - \frac{\rho}{r^2}x_3(1 + \phi_2) \\
i\frac{\rho}{r}\phi_1 - \frac{\rho}{r^2}x_3(1 + \phi_2)&1- \frac{\rho^2}{r^2}(1 + \phi_2) .
\end{pmatrix}
\end{equation}
Then, the norm of $\Phi$ can be calculated as follows:
\begin{align}
||\Phi^2|| :=& \frac{1}{2}\operatorname{tr}(\Phi \Phi) 
=\frac{1}{4}\left\{1-\frac{\rho^2}{r^2} +\frac{\rho^2}{r^2}(\phi_1^2 +\phi_2^2)\right\} 
\nonumber\\
=&\frac{x_3^2 + \rho^2|\phi|^2}{4r^2} .
\end{align}

\end{proof}

\begin{rem} 
If the zero point of the hyperbolic vortex field is $(x_0^4,r_0)$, then $r = r_0$ defines a geodesic in the upper half-space. That is, the geodesic is a semicircle with end points $(x^4,x^3)=(x_0^4,\pm r_0)$ on the boundary.

If we define the hyperbolic magnetic monopole field as a function of the hyperbolic distance $d_H$ from the zero point of the hyperbolic magnetic monopole field measured along this geodesic line,
\begin{align}
\left. ||\Phi|\right|^2_{\phi = 0} = \frac{(x^3)^2}{4r^2} = \frac{1}{4}\tanh^2(d_H) .
\end{align}
This is consistent with the radial profile function of one hyperbolic magnetic monopole.

\end{rem}

\section{Magnetic monopole equation and vortex equation}

The Bogomolnyi equation on $\mathbb{H}^3$ implies a vortex equation on $\mathbb{H}^2$.

\begin{prop}[Magnetic monopoles on $\mathbb{H}^3$ and vortices on $\mathbb{H}^2$]
Bogomolny equation on $\mathbb{H}^3(x^4,x^3,\rho)$ obtained by the dimensional reduction from the $S^1$ symmetric Yang-Mills field on $\mathbb{E}^4$
\begin{align}
\mathscr{F}_{43} = \frac{1}{\rho}\epsilon_{43\rho}\mathscr{D}_\ell\Phi
\end{align}
implies the vortex equation on $\mathbb{H}^2(x^4,r)$:
\begin{align}
\left\{\,
\begin{aligned}
&F_{4r} := \partial_4a_r - \partial_ra_4 = \frac{1}{r^2}(1 - \phi_1^2 - \phi_2^2), \\
&\partial_4\phi_1 + a_4\phi_2 = \partial_r\phi_2 - a_r\phi_1, \\
&\partial_4\phi_2 - a_4\phi_1 = -(\partial_r\phi_1 + a_r\phi_2),
\end{aligned} 
\right.
\label{vortex_eq1}
\end{align}
which is written using the complex number as
\begin{align}
\left\{\,
\begin{aligned}
&F_{4r}-\frac{1}{r^2}(1 - |\phi|^2)=0, \\
&D_4\phi + iD_r\phi = 0.
\end{aligned}
\right.
\end{align}
Here the covariant derivative are defined by
\begin{align}
D_4:=\partial_4 - ia_4, D_r:=\partial_r - ia_r.
\end{align}

\end{prop}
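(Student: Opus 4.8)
The plan is to substitute the explicit $SO(3)$-symmetric parametrization of the hyperbolic magnetic monopole field from Proposition~\ref{prop:monopole-vortex} into the Bogomolny equation on $\mathbb{H}^3$ and to read off the resulting constraints on the vortex fields $a_r,a_4,\phi_1,\phi_2$. The conceptual point that makes this work is the nesting $SO(2)\subset SO(3)$: an instanton on $\mathbb{E}^4$ invariant under the full $SO(3)$ rotation about the $x^4$-axis is a fortiori $SO(2)$-symmetric, so by case (I) it descends to a magnetic monopole on $\mathbb{H}^3$ obeying $\mathscr{F}_{43}=\tfrac{1}{\rho}\mathscr{D}_\rho\Phi$ and cyclic, while by case (II) the very same instanton descends to a hyperbolic vortex on $\mathbb{H}^2$. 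The monopole obtained this way is the spherically symmetric one inside $\mathbb{H}^3$, hence is completely determined by its restriction to the equatorial copy of $\mathbb{H}^2\subset\mathbb{H}^3$; therefore the Bogomolny equation, evaluated on this symmetric class, can only be equivalent to a system for $(a_r,a_4,\phi_1,\phi_2)$ on $\mathbb{H}^2$, and one just has to identify it.

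Concretely, I would insert the formulas \eqref{eq:G4}--\eqref{eq:G7} for $\mathscr{A}_4^G,\mathscr{A}_3^G,\mathscr{A}_\rho^G,\Phi$ into $\mathscr{F}_{43}=\partial_4\mathscr{A}_3^G-\partial_3\mathscr{A}_4^G-ig[\mathscr{A}_4^G,\mathscr{A}_3^G]$ and into $\mathscr{D}_\rho\Phi=\partial_\rho\Phi-ig[\mathscr{A}_\rho^G,\Phi]$ (and the analogous $\mathscr{D}_3\Phi$, $\mathscr{D}_4\Phi$), using $\partial_\varphi(\cdot)=0$. All derivatives of the kinematical prefactors are evaluated through $r=\sqrt{\rho^2+(x^3)^2}$ together with $x^3=r\cos\theta$, $\rho=r\sin\theta$, so that $\partial_\rho$ and $\partial_3$ act both on the $(r,\theta)$-dependent geometric coefficients $\rho/r$, $x^3/r$, $1/r^2$, $1/r^3$ and on $a_r(r,t),\phi_{1,2}(r,t)$ through $r$ only. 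After expanding in the Pauli basis $\{\sigma_1,\sigma_2,\sigma_3\}$ and collecting the three independent $su(2)$ components, the geometric coefficients factor out and the residual identities among $\partial_4a_r-\partial_ra_4$, $D_4\phi$ and $D_r\phi$ reproduce exactly the three scalar equations in \eqref{vortex_eq1}; the apparent mismatch between the $1/\rho$ weight in the $\mathbb{H}^3$ Bogomolny equation and the $1/r^2$ weight in the $\mathbb{H}^2$ vortex equation is absorbed by the $\sin\theta=\rho/r$ factors sitting in the prefactors. Finally, writing $\phi=\phi_1+i\phi_2$, $a=a_r\,dr+a_4\,dx^4$, $D_4=\partial_4-ia_4$, $D_r=\partial_r-ia_r$, the first line of \eqref{vortex_eq1} reads $F_{4r}=\tfrac{1}{r^2}(1-|\phi|^2)$, while the last two lines combine into the single complex holomorphicity condition $D_4\phi+iD_r\phi=0$, which is the asserted form.

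A shorter, essentially computation-free route is to invoke conformal invariance of the self-dual Yang-Mills equation directly: impose the Witten transformation (Witten ansatz) stated above on the instanton on $\mathbb{E}^4$, substitute into $*\mathscr{F}=\mathscr{F}$, and verify, as in Witten (1977)\cite{Witten} and Forgacs--Manton (1980)\cite{FM80}, that it collapses to \eqref{vortex_eq1}; since on the $SO(2)$-symmetric sector the self-dual equation is equivalent to the $\mathbb{H}^3$ Bogomolny equation, the two derivations necessarily agree. I expect the main obstacle to be purely organizational rather than conceptual: the direct substitution produces a large number of terms with $\rho/r$, $x^3/r$, $1/r^2$, $1/r^3$ coefficients multiplying Pauli matrices, and one must track the $su(2)$ index structure carefully — in particular singling out the combination that yields the abelian curvature equation versus the two that yield the holomorphicity condition — while staying in the gauge $U_\varphi=\exp(i\varphi\sigma_3/2)$ of Proposition~\ref{prop:monopole-vortex} so that the residual $U(1)\subset SU(2)$ whose connection is $a$ is correctly identified.
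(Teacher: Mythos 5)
Your proposal is correct and follows essentially the same route as the paper's own proof: substitute the explicit expressions \eqref{eq:G4}--\eqref{eq:G7} into $\mathscr{F}_{43}$ and $\mathscr{D}_\rho\Phi$, use $\partial_3=\tfrac{x^3}{r}\partial_r$ and $\partial_\rho=\tfrac{\rho}{r}\partial_r$, expand in the three independent $su(2)$ tensor structures $(\rho\sigma_1+x_3\sigma_3)$, $(-x_3\sigma_1+\rho\sigma_3)$, $\sigma_2$, and match coefficients to obtain the three scalar vortex equations and then their complex form. Your observation that the $1/\rho$ versus $1/r^2$ weights are reconciled by the $\rho/r$ factors in the prefactors is exactly what happens in the paper's coefficient matching.
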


\begin{proof}
\noindent
(1) 
We calculate $\mathscr{F}_{34} = \partial_3\mathscr{A}_{4}^G - \partial_4\mathscr{A}_{3}^G + i[\mathscr{A}_{3}^G, \mathscr{A}_{4}^G]$ using (\ref{eq:G4}) and (\ref{eq:G5}).
First, we calculate the derivative. Since the fields $a_4, a_r, \phi_1, \phi_2$ are all functions of $x_4$ and $r$, we have from $\rho := \sqrt{x_1^2 + x_2^2}, r := \sqrt{x_1^2+x_2^2+x_3^2} = \sqrt{\rho^2 + x_3^2}$:
\begin{align}
\partial_3 = \frac{\partial}{\partial x_3} = \frac{\partial r}{\partial x_3}\frac{\partial}{\partial r} = \frac{x_3}{r}\frac{\partial}{\partial r} = \frac{x_3}{r}\partial_r ,
\end{align}
which is used to show 
\begin{align}
\partial_3\mathscr{A}_{4}^G 
=&\left(\frac{1}{2}\frac{-x_3}{r^3}a_4 + \frac{1}{2}\frac{x_3}{r^2}\partial_ra_4\right)(\rho\sigma_1 + x_3\sigma_3) 
\nonumber\\& + \frac{1}{2}\frac{1}{r}a_4\sigma_3 ,
\nonumber\\ 
\partial_4\mathscr{A}_{3}^G =&\frac{1}{2}\frac{x_3}{r^2}\partial_4a_r(\rho\sigma_1 + x_3\sigma_3) \nonumber\\& + \frac{1}{2}\frac{\rho}{r^3}\partial_4\phi_1(-x_3\sigma_1 + \rho\sigma_3) 
\nonumber\\& -\frac{1}{2}\frac{\rho}{r^2}\partial_4\phi_2\sigma_2. 
\label{eq:11}
\end{align}

Next, we compute the commutator $[\mathscr{A}_{3}^G, \mathscr{A}_{4}^G]$.
\begin{align}
&[\mathscr{A}_{3}^G, \mathscr{A}_{4}^G] \nonumber\\
=&\frac{i}{2}\left\{\frac{\rho}{r^2}a_4\phi_1\sigma_2+\frac{\rho}{r^3}a_4(1 + \phi_2)(-x_3\sigma_1 + \rho\sigma_3)\right\}.\label{eq:12}
\end{align}
From \eqref{eq:11} and \eqref{eq:12},
we have
\begin{align}
\mathscr{F}_{34} =&\frac{1}{2}\left\{\frac{x_3}{r^2}(\partial_r a_4 -\partial_4 a_r)(\rho\sigma_1 + x_3\sigma_3)\right. \nonumber\\
&\left.+\frac{\rho}{r^3}(\partial_4 \phi_1 +a_4\phi_2)(x_3\sigma_1 - \rho\sigma_3) \right. \nonumber\\
&\left.+\frac{\rho}{r^2}(\partial_4 \phi_2 -a_4\phi_1)\sigma_2\right\}.
\label{F1}
\end{align}

\noindent
(2) Furthermore, we calculate the covariant derivative $\mathscr{D}_\rho\Phi = \partial_\rho\Phi + i[\mathscr{A}_\rho^G, \Phi]$ using (\ref{eq:G6}) and (\ref{eq:G7}).

First, we calculate the derivative $\partial_\rho \Phi$ using 
\begin{align}
\partial_\rho \equiv \frac{\partial}{\partial \rho} = \frac{\partial r}{\partial \rho}\frac{\partial}{\partial r} = \frac{\rho}{r}\frac{\partial}{\partial r} = \frac{\rho}{r}\partial_r .
\end{align}
as
\begin{align}
&\partial_\rho \Phi \nonumber\\
=& \frac{1}{2}\left\{\left(\frac{1}{r}\phi_1 + \rho\frac{\rho}{r}\frac{-1}{r^2}\phi_1 + \frac{\rho}{r}\frac{\rho}{r}\partial_r\phi_1\right)\sigma_2\right. \nonumber\\
&\left. +\left[\frac{1}{r^2}(1 + \phi_2) + \rho \frac{\rho}{r}\frac{-2}{r^3}(1 + \phi_2) + \frac{\rho}{r^2}\frac{\rho}{r}\partial_r\phi_2\right]
\right. \nonumber\\
&\left.  \quad \times (-x_3\sigma_1 + \rho\sigma_3) \right. 
\nonumber\\
&\left. +\frac{\rho}{r^2}(1 + \phi_2)\sigma_3 \right\} . 
\label{eq:13}
\end{align}
Next, we compute the commutator $[\mathscr{A}_{\rho}^G, \Phi]$.
\begin{align}
&[\mathscr{A}_{\rho}^G, \Phi] \nonumber\\
=&\frac{2}{4}\left\{\frac{\rho^2}{r^3}a_r\phi_1(-x_3\sigma_1 + \rho\sigma_3)- \frac{\rho^2}{r^4}a_r(1 + \phi_2)r^2\sigma_2\right. \nonumber\\
&\left. +\frac{\rho}{r^2}a_r\rho\sigma_2 +\frac{x_3\rho}{r^4}\phi_1^2(\rho\sigma_1 + x_3\sigma_3) 
-\frac{x_3}{r^3}\phi_1x_3\sigma_2\right. \nonumber\\
&\left. 
+ \frac{\rho x_3}{r^4}(1 + \phi_2)^2(\rho\sigma_1 + x_3\sigma_3) - \frac{x_3}{r^2}(1 + \phi_2)\sigma_1\right\}. \label{eq:14}
\end{align}
From \eqref{eq:13} and \eqref{eq:14}
we have
\begin{align}
\mathscr{D}_\rho\Phi =& \frac{1}{2}\left\{\frac{\rho^2}{r^3}(\partial_r\phi_2 - a_r\phi_1)(-x_3\sigma_1 + \rho\sigma_3)
\right. \nonumber\\
&\left. +\frac{\rho^2}{r^2}(\partial_r\phi_1 + a_r\phi_2)\sigma_2
\right. \nonumber\\
&\left.+\frac{\rho x_3}{r^4}(1 - \phi_1^2 - \phi_2^2)(\rho\sigma_1 + x_3\sigma_3)
\right\}.
\label{DF1}\end{align}

\noindent
(3) If the results of (\ref{F1}) and (\ref{DF1}) are substituted into 
\begin{align}
\mathscr{F}_{43} = \frac{1}{\rho}\epsilon_{43\rho}\mathscr{D}_\rho\Phi \  (\epsilon_{43\rho} = -1),
\end{align}
we have (\ref{vortex_eq1}).
The two equations can be combined into one using complex numbers:
\begin{align}
0 =& (D_4 + iD_r)\phi \nonumber\\
=&(\partial_4 - ia_4)\phi + i(\partial_r - ia_r)\phi \nonumber\\
=&(\partial_4 - ia_4)(\phi_1 + i\phi_2) + i(\partial_r - ia_r)(\phi_1 + i\phi_2) \nonumber\\
=&(\partial_4 \phi_1 + a_4\phi_2) + i(\partial_4\phi_2 - a_4\phi_1) \nonumber\\
&+(-\partial_r\phi_2 + a_r\phi_1) + i(\partial_r\phi_1 + a_r\phi_2) \nonumber\\
\Leftrightarrow&
\left\{\,
\begin{aligned}
&\partial_4\phi_1 + a_4\phi_2 = \partial_r\phi_2 - a_r\phi_1\\
&\partial_4\phi_2 - a_4\phi_1 = -(\partial_r\phi_1 + a_r\phi_2).
\end{aligned}
\right.
\end{align}

\end{proof}

\section{Hyperbolic Space}

We can use different coordinates to express the hyperbolic space depending on the purpose.

\begin{definition}[Hyperbolic space: upper half space model, ball model]
The hyperbolic space $\mathbb{H}^3$ is expressed by the following metric using coordinates $(x^3, x^4, \rho)(\rho>0)$ which is called the \textbf{upper half space model}\index{upper half space model}:
\begin{align}
(ds)^2(\mathbb{H}^3) = \frac{(dx^3)^2 + (dx^4)^2 +(d\rho)^2}{\rho^2}.
\end{align}
Alternatively, the unit \textbf{ball model}\index{ball model} of the hyperbolic space $\mathbb{H}^3$ uses the metric of $\mathbb{H}^3$ given by 
\begin{align}
(ds)^2(\mathbb{H}^3) =& 4\frac{(dX^1)^2 + (dX^2)^2 +(dX^3)^2}{(1 - R^2)^2} \ , \nonumber\\
R:=&\sqrt{(X^1)^2 + (X^2)^2 +(X^3)^2} < 1 .
\end{align}
which is modified to a more general case
\begin{align}
(ds)^2(\mathbb{H}^3) = 4R_0^{-2} \frac{(dX^1)^2 + (dX^2)^2 + (dX^3)^2}{(1-R^2/R_0^2)^2} , \ R<R_0 .
\end{align}
In this coordinate system, the metric is rotational.
The coordinates $(X^1, X^2, X^3)$ of the ball model in the hyperbolic space $\mathbb{H}^3$ are given from the coordinates $(x^4, x^3, \rho)$ of the upper half-space model:
\begin{align}
& X^1 + iX^2 = 2\frac{x^4 + ix^3}{(x^3)^2 + (x^4)^2 +(\rho + 1)^2} , \nonumber\\
& X^3 = \frac{(x^3)^2 + (x^4)^2 + (\rho^2 - 1)}{(x^3)^2 + (x^4)^2 +(\rho + 1)^2}, \\
& R := \sqrt{(X^1)^2 + (X^2)^2 + (X^3)^2} \nonumber\\
&= \sqrt{ \frac{(x^3)^2 +(x^4)^2+ (\rho - 1)^2}{(x^3)^2 + (x^4)^2 +(\rho + 1)^2} } \to 1 \ (\rho \to 0). 
\end{align}
The inverse transformation is given by
\begin{equation}
x^4 + ix^3 = 2 \frac{X^1 + iX^2}{1 + R^2 - 2X^3} , \ \rho = \frac{1 - R^2}{1 + R^2 -2X^3}
\end{equation}

\end{definition}

\begin{definition}[Another model of hyperbolic space]
As another model, if the geodesic distance $\xi$ from the origin is introduced:
\begin{align}
R = \tanh \frac{\xi}{2},
\end{align}
then using the 3D polar coordinates $(R,\tilde{\theta},\tilde{\varphi})$ or $(\xi,\tilde{\theta},\tilde{\varphi})$:$X^1 = R\sin\tilde{\theta}\cos\tilde{\varphi}$, $X^2 = R\sin\tilde{\theta}\sin\tilde{\varphi}$, $X^3= R\cos\tilde{\theta}$, the metric reads (Manton and Sutcliffe(2014)\cite{MS14})
\begin{align}
(ds)^2(\mathbb{H}^3) &= 4\frac{(dR)^2 + R^2[(d\tilde\theta)^2 +\sin^2\theta(d\tilde\varphi)^2]}{(1 - R^2)^2} \nonumber\\
&=(d\xi)^2 + \sinh^2\xi((d\tilde\theta)^2 +\sin^2\tilde\theta(d\tilde\varphi)^2).
\end{align}

\end{definition}

\begin{rem}
The metric is written using the complex number $z$ as 
\begin{align}
(ds)^2(\mathbb{H}^3) &= \frac{4}{\kappa^2(1 - R^2)^2}\left((dR)^2 + R^2\frac{4dzdz^*}{(1 + |z|^2)^2}\right) \nonumber\\
&=(d\ell)^2 + \frac{\sinh^2(\kappa \ell)}{\kappa^2}\frac{4dzdz^*}{(1+|z|^2)^2}
\end{align}
Here $\ell$ is the hyperbolic distance defined by
\begin{align}
R = \tanh \left(\frac{\kappa}{2}\ell \right)
\end{align}
In the zero curvature limit $\kappa \to 0$, the metric becomes that of flat Euclidean space $\mathbb{R}^3$.

\end{rem}

\begin{rem}
Introducing toroidal coordinates $(\xi, \tilde{\theta}, \tilde{\varphi}, \chi)$ on $\mathbb{R}^4$:
\begin{equation}
x^\mu = \frac{(\sinh\xi\sin\tilde{\theta}\cos\tilde{\varphi}, \sinh\xi\sin\tilde{\theta}\sin\tilde{\varphi}, \sinh\xi\cos\tilde{\theta}, \sin\chi)}{\cosh\xi + \cos\chi} .
\end{equation}
The flat metric reads
\begin{align}
(ds)^2(\mathbb{E}^4) &= \frac{(ds)^2(\mathbb{H}^3) + (d\chi)^2}{(\cosh\xi + \cos\chi)^2} \ , \nonumber\\ (ds)^2(\mathbb{H}^3) &= (d\xi)^2 + \sinh^2\xi((d\tilde{\theta})^2 + \sin^2\theta(d\tilde{\varphi})^2),
\end{align}
which is conformal equivalent to
\begin{align}
(ds)^2(\mathbb{E}^4) \cong (ds)^2(\mathbb{H}^3)+(d\chi)^2
\end{align}
And the $SO(2)$ orbit is a two-dimensional sphere $(\tilde{\theta}, \tilde{\varphi})$ with constant $\xi$.
Since the quotient of $\mathbb{R}^4$ by the circular action is conformally equivalent to $\mathbb{H}^3$, the $S^1$-invariant gauge field of $\mathbb{R}^4$ gives rise to the gauge field on $\mathbb{H}^3$ and the adjoint Higgs field (the component of the gauge field along the circle) according to the standard idea of the dimensional reduction. 
The self-dual Yang-Mills equation reduces to the Bogomolny equation on $\mathbb{H}^3$.

\end{rem}

\section{Hyperbolic magnetic monopoles solutions}

The Lagrangian density and the action for the theory obtained by the dimensional reduction describing the hyperbolic magnetic monopole on  $\mathbb{H}^3$ is given by 
\begin{align}
S_3 =& \int_{\mathbb{H}^3}d^3x \sqrt{g} \mathscr{L}_3 \ (d^3x := dx^3dx^4d\rho) , \nonumber\\
\mathscr{L}_3 =& \frac{1}{2}g^{\mu\nu}g^{\nu\beta}\operatorname{tr}(\mathscr{F}_{\mu\nu}\mathscr{F}_{\alpha\beta}) + g^{\mu\nu}\operatorname{tr}(\mathscr{D}_\mu\Phi\mathscr{D}_\nu\Phi) ,
\end{align}
where the metric $g_{\mu\nu}$, its inverse $g^{\mu\nu}$ and determinant $ g$ are given by
\begin{align}
g_{\mu\nu} = \frac{1}{\rho^2}\delta_{\mu\nu} , \ g^{\mu\nu} = \rho^2\delta^{\mu\nu} , \ g:=\det(g_{\mu\nu}) = \frac{1}{\rho^6} .
\end{align}
Therefore, the Lagrangian density reads 
\begin{align}
\mathscr{L}_3 =& \rho^3 \left[\rho\frac{1}{2}\operatorname{tr}(\mathscr{F}_{\mu\nu} \mathscr{F}_{\mu\nu} ) + \frac{1}{\rho}\operatorname{tr}\{(\mathscr{D}_\mu\Phi) (\mathscr{D}_\mu\Phi) \}\right] 
 .
\end{align}

For the reviews of magnetic monopoles in gauge theories, see e.g., Goddard, Nuyts and Olive(1977)\cite{GNO77}, Goddard and Olive(1978)\cite{GO78}.
For a recent view on the hyperbolic magnetic monopole, see Lang(2025)\cite{Lang25} and also Lang(2024)\cite{Lang24}.

\subsection{The Bogomol'nyi equation and the hyperbolic magnetic monopole}

The Bogomolnyi equation on $\mathbb{H}^3$ implies the vortex equation on $\mathbb{H}^2$.

\begin{definition}[Bogomol'nyi equation]

The hyperbolic magnetic monopole is a solution of \textbf{Bogomol'nyi equation}\index{Bogomol'nyi equation}(Bogomol'nyi(1976)\cite{Bogomolnyi}):
\begin{align}
*_{\mathbb{H}^3} \mathscr{F}=\mathscr{D}[\mathscr{A}]\Phi  \Leftrightarrow \mathscr{F}=*_{\mathbb{H}^3} \mathscr{D}[\mathscr{A}]\Phi .
\end{align}
Here, $\mathscr{F}$ is the field strength of the $su(2)$-valued gauge field $\mathscr{A}$, $\mathscr{D}\Phi$ is the covariant derivative of the $su(2)$-valued adjoint scalar field $\Phi$, and $*$ is the Hodge duality defined using the metric of hyperbolic space $\mathbb{H}^3$. 

As a boundary condition, let us assume that the norm (magnitude) of the $su(2)$-valued scalar field $\Phi=\Phi^AT_A$ defined by
\begin{align}
||\Phi||:=\sqrt{\frac{1}{2}\operatorname{Tr}(\Phi^2)} = \frac12 \sqrt{ \Phi^A \Phi^A}
\end{align}
takes a constant positive value $v$ on the boundary $\partial \mathbb{H}^3$, i.e, at infinity $\infty$:
\begin{align}
||\Phi||_\infty = v \ (>0) .
\end{align}

\end{definition}

\begin{prop}[Solution of the Bogomol'nyi equation]
Using the local coordinates $(X_1,X_2,X_3)$ in the ball model $(R:=\sqrt{X_1^2+X_2^2+X_3^2})$, the Bogomol'nyi equation $\mathscr{F} = *d_A\Phi$ on $\mathbb{H}^3$ can be written :
\begin{align}
\mathscr{D}_\ell[\mathscr{A}]\Phi^A = \frac{1}{2\sqrt{f}}\epsilon_{jk\ell}\mathscr{F}_{jk}^A , \ f :=\left (1 - \frac{R^2}{4}\right)^{-2} ,  
\end{align}
where $R \in [0, R_0=2)$. 
We adopt the Ansatz for the solution  with unknown functions $P$ and $Q$:
\begin{align}
\mathscr{A}_{j}^A = \epsilon_{jAk}\frac{X^k}{R^2}[P(R) - 1] , \ \mathscr{A}_{\varphi}^A = \Phi^A = \frac{X^A}{R}Q(R) ,
\end{align}
where $P$ and $Q$ depend only on the radial coordinate $R$ of the ball model.
Then the Bogomol'nyi equation is reduced to a pair of first order differential equations:
\begin{align}
& \frac{dP(R)}{dR} = \sqrt{f}P(R)Q(R) , \nonumber\\
& \sqrt{f}R^2\frac{dQ(R)}{dR} = P(R)^2 - 1 .
\end{align}
The regular solution at the origin [and infinity (the boundary)] is given by
\begin{align}
&P = \frac{C\sinh \xi}{\sinh(C\xi )} , \quad Q  =  \coth \xi - C \coth(C\xi) , \ \nonumber\\
&\xi := 2\tanh^{-1}\frac{R}{2} , \ C = 2v + 1=2||\Phi||_\infty+1 ,
\end{align}
where $C$ is determined by the boundary condition $C = 2v + 1$ for $v:=||\Phi||_\infty$.
Therefore, the norm $||\Phi||$ of the scalar field $\Phi$ can be written with its asymptotic value $v$ with $v = \frac{1}{2}(C - 1)$:
\begin{align}
 ||\Phi|| =\frac12 |Q| =\frac{1}{2}\left[C \coth (C\xi) - \coth \xi \right] > 0 .
\end{align}
In the vicinity of the origin $\xi = 0$, the terms with the poles cancel, and $||\Phi||$ is proportional to $\xi$.

\end{prop}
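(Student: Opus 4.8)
The plan is to work entirely in the ball model, where the metric is conformally flat, $g_{jk}=f\,\delta_{jk}$ with $f=(1-R^2/4)^{-2}$, and to reduce the Bogomol'nyi equation in two stages: first to the pair of first-order radial ODEs quoted in the statement, then to a genuinely one-dimensional, exactly solvable problem. \textbf{Step 1 (the equation in the ball model).} Because $g_{jk}=f\,\delta_{jk}$ one has $\sqrt{g}=f^{3/2}$ and $g^{jk}=f^{-1}\delta^{jk}$, so the Hodge dual of the field-strength two-form acts componentwise as $(*_{\mathbb{H}^3}\mathscr{F})_\ell^A=\tfrac12\sqrt{g}\,\epsilon_{\ell jk}\,g^{jm}g^{kn}\mathscr{F}^A_{mn}=\tfrac{1}{2\sqrt{f}}\,\epsilon_{jk\ell}\mathscr{F}^A_{jk}$. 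Equating this with $\mathscr{D}_\ell\Phi^A$ gives exactly the form $\mathscr{D}_\ell[\mathscr{A}]\Phi^A=\tfrac{1}{2\sqrt{f}}\,\epsilon_{jk\ell}\mathscr{F}^A_{jk}$ of the Bogomol'nyi equation in the $(X^1,X^2,X^3)$ coordinates.

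\textbf{Step 2 (hedgehog reduction).} Insert the Ansatz $\mathscr{A}^A_j=\epsilon_{jAk}\tfrac{X^k}{R^2}[P(R)-1]$, $\Phi^A=\tfrac{X^A}{R}Q(R)$ and compute $\mathscr{F}^A_{jk}$ and $\mathscr{D}_\ell\Phi^A$. As in the classical 't~Hooft--Polyakov / flat BPS calculation, each side decomposes into the independent $SO(3)$-covariant tensor structures built from $X^A$, $X^\ell$, $\delta^A_\ell$ and $\epsilon_{A\ell k}X^k$; matching the two independent scalar coefficients produces $dP/dR=\sqrt{f}\,P\,Q$ and $\sqrt{f}\,R^2\,dQ/dR=P^2-1$, the only new feature relative to flat space being the single factor $\sqrt{f}$ coming from the curvature (equivalently, from the Hodge star in Step~1). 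This index bookkeeping is the most laborious point of the proof; everything afterwards is a one-dimensional ODE exercise.

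\textbf{Step 3 (geodesic variable and integration).} Change to the geodesic radius $\xi=2\tanh^{-1}(R/2)$, i.e. $R=2\tanh(\xi/2)$, for which $d\xi=\sqrt{f}\,dR$ and $f R^2=\sinh^2\xi$. The system becomes $dP/d\xi=P\,Q$ and $\sinh^2\xi\,dQ/d\xi=P^2-1$. From the first equation $Q=(\ln P)'$; putting $P=\sinh\xi/h(\xi)$, the second collapses to $(h')^2-h\,h''=1$, whose first integral (treating $h'$ as a function of $h$) is $(h')^2=1+A\,h^2$. Regularity at $\xi=0$ forces $h(0)=0,\ h'(0)=1$, so $A\ge 0$; writing $A=C^2$ one gets $h=C^{-1}\sinh(C\xi)$ (the branch $A<0$ gives $h\propto\sin(C\xi)$, which vanishes periodically and makes $P$ singular, so it is discarded). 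Hence $P=\dfrac{C\sinh\xi}{\sinh(C\xi)}$ and $Q=(\ln P)'=\coth\xi-C\coth(C\xi)$.

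\textbf{Step 4 (boundary conditions).} As $\xi\to 0$, $P\to 1$ (regularity of the gauge field) and $\coth\xi-C\coth(C\xi)=\tfrac{1-C^2}{3}\xi+O(\xi^3)\to 0$, so the poles cancel and $||\Phi||=\tfrac12|Q|$ vanishes linearly at the origin. As $\xi\to\infty$, $Q\to 1-C$, while $P\sim 2C\,e^{(1-C)\xi}$ tends to $0$ only for $C>1$; demanding the gauge field also be regular at the conformal boundary therefore fixes the sign, and $||\Phi||_\infty=\tfrac12(C-1)=v$ gives $C=2v+1$. Collecting the results yields $||\Phi||=\tfrac12|Q|=\tfrac12\!\left[C\coth(C\xi)-\coth\xi\right]>0$, as claimed. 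The main obstacle is entirely in Step~2 — carrying out the curved-space hedgehog substitution cleanly enough to isolate the two coefficient equations — while Steps~3--4 are the standard quadrature of a BPS-type profile system.
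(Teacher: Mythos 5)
Your proposal is correct and, for the core reduction, follows the same route as the paper: insert the hedgehog Ansatz, decompose $\epsilon_{jk\ell}\mathscr{F}_{jk}^A$ and $\mathscr{D}_\ell\Phi^A$ into the independent tensor structures $\delta^{\ell A}$ and $X^\ell X^A$, and match coefficients to get $P'=\sqrt{f}PQ$ and $\sqrt{f}R^2Q'=P^2-1$; the paper carries out this index computation in full, whereas you only outline it (correctly identifying it as the laborious step). Where you genuinely diverge is in solving the resulting system. The paper sets $T=\ln P$, passes to $\xi=2\tanh^{-1}(R/2)$ to obtain $T''=\sinh^{-2}\xi\,(e^{2T}-1)$, and then simply writes down $P=C\sinh\xi/\sinh(C\xi+\alpha)$ and verifies it by differentiation, fixing $\alpha=0$ by regularity. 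You instead substitute $P=\sinh\xi/h(\xi)$, reduce the system to $(h')^2-h\,h''=1$, and integrate it once to $(h')^2=1+Ah^2$, which actually \emph{derives} the solution $h=C^{-1}\sinh(C\xi)$ from the regularity data $h(0)=0$, $h'(0)=1$, and explains why the oscillatory branch $A<0$ must be discarded — a more constructive argument than the paper's verification. Your boundary analysis ($Q=\tfrac{1-C^2}{3}\xi+O(\xi^3)$ near the origin, $Q\to 1-C$ at infinity, hence $C=2v+1$) matches the paper's. One harmless slip: asymptotically $P\sim C\,e^{(1-C)\xi}$, not $2C\,e^{(1-C)\xi}$, since the factors of $\tfrac12$ in $\sinh\xi$ and $\sinh(C\xi)$ cancel; this does not affect the conclusion that $C>1$ is needed for $P\to 0$.
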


\begin{proof}
This result was obtained by Nash (1986)\cite{Nash86}. 

\noindent
(1) 
The derivative reads
\begin{align}
\partial_j\mathscr{A}_k^A 
=&\epsilon_{kAj}\frac{P - 1}{R^2} - 2\epsilon_{kA\ell}\frac{X^\ell X^j}{R^4}(P - 1) 
\nonumber\\
&+ \epsilon_{kA\ell}\frac{X^\ell X^j}{R^3}P' ,
\end{align}
where we have used
$
\partial_j :=\frac{\partial}{\partial X^j}, 
$
and
$
\partial_j R =\frac{X^j}{R} .
$
Therefore, we have
\begin{align}
&\partial_j \mathscr{A}_k^A-\partial_k \mathscr{A}_j^A \nonumber\\
= &2\epsilon_{jkA}\frac{P - 1}{R^2} -2(\epsilon_{kA\ell}X^j - \epsilon_{jA\ell}X^k)\frac{X^\ell}{R^4}(P - 1) \nonumber\\
&+(\epsilon_{kA\ell}X^j - \epsilon_{jA\ell}X^k)\frac{X^\ell}{R^3}P' ,
\end{align}
which yields
\begin{align}
& \epsilon^{jkn}(\partial_j\mathscr{A}_k^A - \partial_k\mathscr{A}_j^A)  \nonumber\\
=&4\delta^n_A\frac{P - 1}{R^2} + 2(\delta^{nA}R^2 - X^AX^n)\left(\frac{P'}{R^3} - 2\frac{P - 1}{R^4}\right),
\end{align}
where we have used
$
\epsilon^{ABC}\epsilon_{ADE} = \delta^{BD}\delta^{CE} - \delta^{BE}\delta^{CD}, 
$
$
\epsilon^{ABC}\epsilon_{ABE} 
= 2\delta^{CE}.
$
Next, we have
\begin{align}
\epsilon^{jkn}\epsilon^{ABC}\mathscr{A}_j^B\mathscr{A}_k^C 
= 2X^nX^A\frac{(P - 1)^2}{R^4}.
\end{align}
Thus we obtain
\begin{align}
&\epsilon^{jkn}\mathscr{F}_{jk}^A 
= \epsilon^{jkn}(\partial_j\mathscr{A}_k^A -\partial_k\mathscr{A}_j^A + g\epsilon^{ABC}\mathscr{A}_j^B\mathscr{A}_k^C) \nonumber\\
=&2\delta^{nA}\frac{P'}{R} + 2X^nX^A\left(2\frac{P - 1}{R^4} - \frac{P'}{R^3} + \frac{(P - 1)^2}{R^4}\right). \label{eq:1}
\end{align}
On the other hand,
for $\mathscr{D}_n\Phi^A = \partial_n\Phi^A + g\epsilon^{ABC}\mathscr{A}_n^B\Phi^C$, 
\begin{align}
\partial_n\Phi^A =& \partial_n\left(\frac{X^A}{R}Q\right) \nonumber\\
=& \frac{\delta^{nA}}{R}Q + X^A\frac{-1}{R^2}\frac{X^n}{R}Q + \frac{X^A}{R}\frac{X^n}{R}Q' , 
\\
\epsilon^{ABC}\mathscr{A}_n^B\Phi^C 
=&(\delta^{An}R^2 - X^AX^n)\frac{(P - 1)Q}{R^3} .
\end{align}
Thus we obtain
\begin{align}
\mathscr{D}_n\Phi^A =& \delta^{nA}\left(\frac{P-1}{R^1} + \frac{1}{R}\right)Q 
\nonumber\\ &+ X^nX^A\left[-\frac{Q}{R^3} + \frac{Q'}{R^2} - \frac{(P - 1)Q}{R^3}\right]. \label{eq:2}
\end{align}
Finally, for the equation
\begin{align}
\mathscr{D}_n\Phi^A = \frac{1}{2\sqrt{f}}\epsilon^{jkn}\mathscr{F}_{jk}^A \Leftrightarrow \frac{1}{2}\epsilon^{jkn}\mathscr{F}_{jk}^A = \sqrt{f}\mathscr{D}_n\Phi^A
\end{align}
to hold, 
the coefficient functions for the tensor $\delta^{nA}$ and $X^nX^A$ must agree on both sides of the equation.  
The coefficients of $\delta^{An}$ lead to 
\begin{align}
\frac{P'}{R} = \sqrt{f}\left(\frac{P - 1}{R} + \frac{1}{R}\right)Q \Rightarrow 
P' = \sqrt{f}PQ .
\label{eq:3}
\end{align}
The coefficients of $X^AX^n$ lead to
\begin{align}
&2\frac{P - 1}{R^4} - \frac{P'}{R^3} + \frac{(P-1)^2}{R^4} \nonumber\\
=& \sqrt{f}\left(-\frac{Q}{R^3} + \frac{Q'}{R^2} -\frac{(P-1)Q}{R^3}\right) \nonumber\\
\Rightarrow&  \quad \frac{P^2 - 1}{R^4} - \frac{P'}{R^3} = \sqrt{f}\left(\frac{Q'}{R^2} - \frac{PQ}{R^3}\right) \nonumber\\
\Rightarrow& \quad 
P^2 - 1 = \sqrt{f}R^2Q' .
\label{eq:4}
\end{align}

\noindent
(2) 
Furthermore, by introducing new variables $\xi$ and $T$:
\begin{align}
& T := \ln P(R), \ \xi := 2\tanh^{-1}\frac{R}{2} \nonumber\\
\Leftrightarrow &  \quad P(R)=e^{T}, \ R=2\tanh \frac{\xi}{2},
\end{align}
the equations are rewritten into 
\begin{align}
\frac{dT}{d\xi}=Q, \ \frac{d^2T}{d\xi^2} = \sinh^{-2}\xi(e^{2T} - 1) .
\end{align}
In fact, they are confirmed as follows.
\begin{align}
\sqrt{f}:=\frac{1}{1-\frac{R^2}{4}}=\frac{1}{1 - \tanh^2\frac{\xi}{2}} = \cosh^2\frac{\xi}{2} .
\end{align}
This can be used to rewrite the differentiation:
\begin{align}
 \frac{d}{dR} =& \frac{d\xi}{dR}\frac{d}{d\xi} = \frac{1}{1 - \frac{R^2}{4}}\frac{d}{d\xi} = \frac{1}{1 - \tanh^2\frac{\xi}{2}}\frac{d}{d\xi} \nonumber\\
=& \cosh^2\frac{\xi}{2}\frac{d}{d\xi} 
= \sqrt{f}\frac{d}{d\xi} .
\end{align}
where we have used
\begin{align}
\cosh^2\frac{\xi}{2} - \sinh^2\frac{\xi}{2} = 1 \Rightarrow 1 - \tanh^2\frac{\xi}{2} = \frac{1}{\cosh^2\frac{\xi}{2}} .
\end{align}
If we use the variable $\xi$ instead of $R$, then the equations are rewritten as
\begin{align}
 & \frac{dP}{dR} = \sqrt{f}PQ \nonumber\\
 \Rightarrow  & \quad Q = \frac{1}{\sqrt{f}}\frac{1}{P}\frac{dP}{dR} = \frac{1}{\sqrt{f}}\frac{d\ln P}{dR} =\frac{d}{d\xi}\ln P, 
\\
& \sqrt{f}R^2\frac{dQ}{dR} = P^2 - 1  
\Rightarrow fR^2\frac{dQ}{d\xi} = P^2 - 1 \nonumber\\
\Rightarrow&\quad \frac{dQ}{d\xi} = \frac{P^2-1}{fR^2} = \sinh^{-2}\xi(P^2 - 1) ,
\end{align}
where we have used
\begin{align}
fR^2 = \cosh^4\frac{\xi}{2}\cdot 4\tanh^2\frac{\xi}{2} = 4\cosh^2\frac{\xi}{2}\sinh^2\frac{\xi}{2} = \sinh^2 \xi .
\end{align}

\noindent
(3) 
The general solution with two constants $C$ and $\alpha$ to this second order differential equation is given by:
\begin{align}
P = \frac{C\sinh \xi}{\sinh(C\xi + \alpha)} .
\end{align}
In fact, we can check that it is indeed the solution:
\begin{align}
Q &= \frac{dT}{d\xi} = \frac{d}{d\xi}\ln P = \frac{d}{d\xi}[\ln \sinh \xi - \ln \sinh(C\xi + \alpha)] \nonumber\\
&= \frac{\cosh \xi}{\sinh \xi} - \frac{C \cosh(C\xi + \alpha)}{\sinh(C\xi + \alpha)} \nonumber\\
&=  \coth \xi - C \coth(C\xi + \alpha).
\end{align}
By differentiating this function, we can verify that, using $(\coth x)^\prime=-1/\sinh^2 x$, $Q$ is indeed a solution:
\begin{align}
\frac{dQ}{d\xi}=\frac{d^2T}{d\xi^2}
=&  \frac{C^2}{\sinh^2(C\xi+ \alpha)} - \frac{1}{\sinh^2\xi} \nonumber\\
 =& \sinh^{-2}\xi(P^2 - 1)
\end{align}
To obtain a regular solution at $R=0$, we need $P \to 1$ as $R \to 0$, therefore we set $\alpha = 0$.
It is easy to see that the boundary condition $R \to 1$ as $\rho \to 1$ is also satisfied by this solution.

\end{proof}

\begin{example}[Manton and Sutcliffe (2014)\cite{MS14}]
A hyperbolic 1-monopole can be obtained from an $S^1$ symmetric instanton if and only if $v =||\Phi||_\infty= \frac{1}{2}$ is a half integer.
In this case, $P$ and $Q$ are rational functions of $R$.

When $C = 2v +1$ is an integer, $||\Phi||$ is a rational function of the radial coordinate $R$ of the rational metric of the ball model:
\begin{align}
||\Phi|| &= \frac12 \left[ C\frac{e^{C\xi} +e^{-C\xi}}{e^{C\xi} -e^{-C\xi}} - \frac{e^{\xi} +e^{-\xi}}{e^{\xi} -e^{-\xi}} \right] \nonumber\\
&= \frac12 \left[ C\frac{(e^{\xi})^C +(e^{-\xi})^C}{(e^{\xi})^C -(e^{-\xi})^C} - \frac{e^{\xi} +e^{-\xi}}{e^{\xi} -e^{-\xi}} \right] .
\end{align}
In fact, using $e^\xi = \frac{1 + R}{1 - R}$ and $e^{-\xi} = \frac{1 - R}{1 + R}$ following from $R=\tanh\frac{\xi}{2}$, we can write $||\Phi||$ as a rational function:
\begin{align}
& C =2 , \ v = \frac{1}{2} \ (I =N): \nonumber\\
&\quad\quad\quad ||\Phi|| = \frac{R}{1 + R^2}  \to \frac12 \ ( R \to 1) , \nonumber\\
& C =3 , \  v = 1 \ (I =2N): \nonumber\\
&\quad\quad\quad ||\Phi|| = \frac{8R(1 + R^2)}{(3 + R^2)(1 + 3R^2)}  \to 1 \ ( R \to 1) , \nonumber\\
& C =4 ,\ v = \frac{3}{2} \ (I =3N) : \nonumber\\
&\quad\quad ||\Phi|| = \frac{R(5 + 14R^2 + 5R^4)}{(1 + R^2)(1 + 6R^2 + R^4)} \to \frac32 \ ( R \to 1)  .
\end{align}
We can confirm the linear behavior $|\Phi|\propto R$ of $||\Phi||$ near $R = 0$ and the boundary value $||\Phi||_\infty = v$ at $R = 1$ ($\xi=\infty$).

\end{example} 

\begin{figure}[htb]
\begin{center}
\includegraphics[scale=0.6]{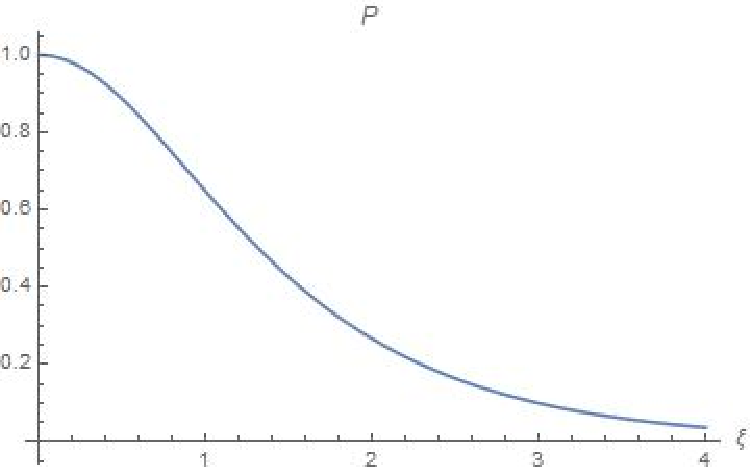}
\quad\quad
\includegraphics[scale=0.6]{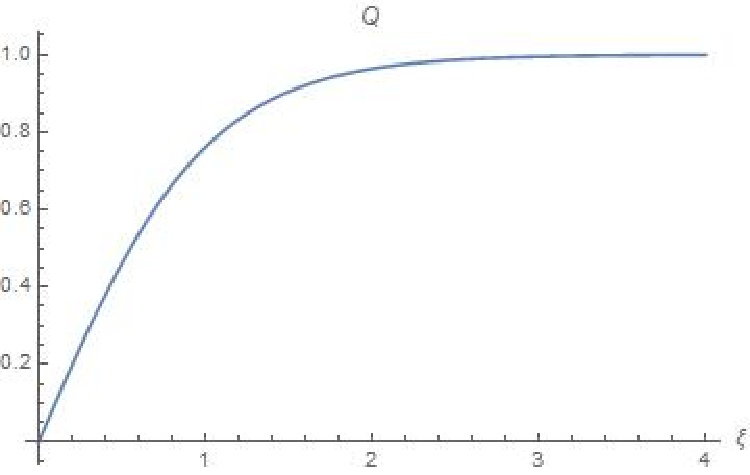}
\end{center}
\caption{
The profile functions $P$ and $Q$ of a hyperbolic magnetic monopole as a function of $\xi$ for $v=\frac12$ and $C=2$.
}
\label{H3_monopole_P_Q}
\end{figure}

\begin{figure}[htb]
\begin{center}
\includegraphics[scale=0.7]{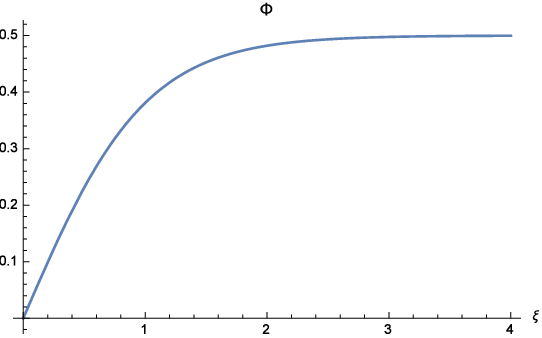}
\end{center}
\caption{
The norm of the scalar field $||\Phi||$ of a hyperbolic magnetic monopole as a function of $\xi$ for $v=\frac12$ and $C=2$.
}
\label{H3_monopole}
\end{figure}

\begin{rem}
The Bogomolny equation for a magnetic monopole in flat space has the same form as (1), but the Hodge dual is that of the 3-dimensional Euclidean space $\mathbb{R}^3$. In flat space, the boundary value $v = ||\Phi||_\infty$ introduces a length (inverse) scale, and replacing $v$ by $\tilde{v}(>v)$ gives a magnetic monopole scaled down by a factor $\tilde{v}/v$. On the other hand, hyperbolic space has a built-in length scale, and the value of $v$ affects the magnetic monopole solution in a nontrivial way.

Hyperbolic magnetic monopoles exist for any positive value of $v(>0)$, but they can be interpreted as instantons of the $SU(2)$ Yang-Mills theory in $\mathbb{R}^4$ that are invariant under the $S^1$ action only when $2v \equiv 2\rho$ is an integer. Recall that instantons are solutions of the conformally invariant self-dual Yang-Mills equation $*\mathscr{F} = \mathscr{F}$ in $\mathbb{R}^4$. 
\end{rem}

\begin{rem}
There are two length scales for monopoles in hyperbolic space: the scale of the curvature of hyperbolic space ($1/\kappa$) and the core size of a monopole ($1/v$). The ratio $v/\kappa$ of these length scales is important. As first pointed out by Atiyah, when $2v/\kappa\in\mathbb{Z}$, a hyperbolic magnetic monopole of charge $N$ is equivalent to an $S^1$-invariant self-dual Yang-Mills instanton on $\mathbb{R}^4$ (with instanton number $2vN/\kappa$). As a result, the study of hyperbolic magnetic monopoles is simplified when the asymptotic value $v$ of the length of the scalar field $\Phi$  is discrete relative to the curvature of hyperbolic space. Since only the ratio matters, we can choose to fix either $\kappa$ or $v$ without loss of generality. In what follows, we choose to fix $\kappa$ to $\kappa = 1$. The flat space limit is then equivalent to the $v \to \infty$ limit.

\end{rem}

\begin{prop}[Euclidean magnetic monopoles as the zero curvature limit of the the hyperbolic magnetic monopole]
In the zero curvature limit $\kappa \to 0$ of the hyperbolic space $\mathbb{H}^3$, the hyperbolic magnetic monopole converges to the corresponding Euclidean magnetic monopole.
\end{prop}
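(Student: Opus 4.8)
The plan is to exploit the fact that the charge-one hyperbolic monopole is available in closed form. Starting from the solution $P=C\sinh\xi/\sinh(C\xi)$, $Q=\coth\xi-C\coth(C\xi)$, $||\Phi||=\tfrac12|Q|$ with $\xi=2\tanh^{-1}(R/2)$ obtained above, I would first restore the curvature scale. Writing the metric of $\mathbb H^3$ with curvature $-\kappa^2$ and letting $\ell$ denote the geodesic distance from the core, the conformal/dimensional-reduction bookkeeping of Section~II shows that the same solution applies with $\xi=\kappa\ell$ and $C=2v/\kappa+1$, the latter chosen so that $||\Phi||_\infty=v$ is held fixed while $\kappa$ varies (only the ratio $2v/\kappa$ is dynamically relevant, as recorded in the Remark). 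Concretely, the hyperbolic monopole of curvature $-\kappa^2$ and asymptotic value $v$ then has
\begin{align}
P=\frac{C\sinh(\kappa\ell)}{\sinh(C\kappa\ell)},\qquad ||\Phi||=\frac{\kappa}{2}\bigl[C\coth(C\kappa\ell)-\coth(\kappa\ell)\bigr],\qquad C=\frac{2v}{\kappa}+1 ,
\end{align}
with the connection given by the ball-model Ansatz $\mathscr A_j^A=\epsilon_{jAk}X^kR^{-2}[P-1]$. I would also record, following the Remark preceding the statement, that the metric $(d\ell)^2+\kappa^{-2}\sinh^2(\kappa\ell)\,d\Omega^2$ converges in $C^\infty_{\mathrm{loc}}$ on $\mathbb R^3\setminus\{0\}$ to the flat metric $(d\ell)^2+\ell^2\,d\Omega^2$ as $\kappa\to0$, so that the background space degenerates to $\mathbb E^3$.

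The second step is the limit itself. With $v$ fixed we have $C\kappa=2v+\kappa\to2v$, and using $\sinh(\kappa\ell)=\kappa\ell+O(\kappa^3)$ and $\tfrac{\kappa}{2}\coth(\kappa\ell)=\tfrac{1}{2\ell}+O(\kappa^2)$ one obtains, pointwise on $\ell\in(0,\infty)$ and — after differentiating under the limit — in $C^\infty_{\mathrm{loc}}$,
\begin{align}
P\ \longrightarrow\ \frac{2v\ell}{\sinh(2v\ell)},\qquad ||\Phi||\ \longrightarrow\ v\coth(2v\ell)-\frac{1}{2\ell}.
\end{align}
These are exactly the gauge and Higgs radial profiles of the Prasad--Sommerfield BPS monopole of charge one on $\mathbb R^3$ with asymptotic value $v$: the limiting $||\Phi||$ vanishes linearly at $\ell=0$ and tends to $v$ at infinity. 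For the connection one must first pass from the ball-model radial coordinate $R=\tanh(\kappa\ell/2)$, which collapses onto the origin as $\kappa\to0$, to the flat Cartesian (geodesic normal) coordinate $x=\ell\hat x$; the Jacobian factor $\tfrac{\kappa}{2}$ cancels the apparent $1/\kappa$ in $\mathscr A_j^A$, so the rescaled connection converges to $\epsilon_{jAk}\hat x^k\ell^{-1}[P_\infty-1]$ with $P_\infty=2v\ell/\sinh(2v\ell)$, which is the BPS connection in the same index/sign convention as the Witten Ansatz used here. Since the flat Bogomol'nyi equation $\mathscr F=*_{\mathbb R^3}\mathscr D\Phi$ is itself the $\kappa\to0$ limit of the hyperbolic one (the Hodge star depends only on the metric, and the metric converges), the limiting configuration is a genuine Euclidean monopole carrying the same unit magnetic charge, and one concludes that the hyperbolic monopole converges to the corresponding Euclidean one.

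The main obstacle I expect is not analytic depth but correct bookkeeping of the two scales. A naive substitution $\kappa=0$ into the dimensionless formulas is degenerate: every term of $P=C\sinh\xi/\sinh(C\xi)$ and $||\Phi||=\tfrac12[C\coth(C\xi)-\coth\xi]$ tends to a trivial value as $\xi\to0$, so the nontrivial content sits entirely in the joint scaling $\kappa\to0$ with $2v/\kappa\to\infty$ (equivalently the $v\to\infty$ limit at fixed curvature) and in re-expressing the solution in the geodesic distance before passing to the limit. The one genuine technical point is the coordinate change for the connection just described, where the collapse of the ball model must be compensated by the conformal factor; once that is handled, promoting the elementary pointwise limits of $P$, $Q$ and the metric to $C^\infty_{\mathrm{loc}}$ convergence of the full pair $(\mathscr A,\Phi)$ on $\mathbb R^3\setminus\{0\}$, together with the matching of the asymptotic data $||\Phi||_\infty=v$, is routine.
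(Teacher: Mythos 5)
Your proposal is correct and follows essentially the same route as the paper's proof (which is based on Nash's presentation): take the explicit spherically symmetric profile functions $P$ and $Q$, restore the curvature scale, and pass to the joint limit in which the curvature radius diverges while the asymptotic Higgs value is held fixed (equivalently $C\to\infty$ with $C\kappa$ fixed), recovering the Prasad--Sommerfield profiles $2v\ell/\sinh(2v\ell)$ and $v\coth(2v\ell)-\tfrac{1}{2\ell}$. Your version merely parametrizes the limit by the geodesic distance $\ell$ at fixed $v$ rather than by the ball coordinate $R$ at fixed $R$ with $a\to\infty$, and is somewhat more explicit than the paper about the convergence of the connection and of the background metric, which the paper dispatches with ``similar rigorous results hold for $\mathscr{A}_j^A$.''
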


\begin{proof}
This was first predicted by Atiyah (1984)\cite{Atiyah84,Atiyah84b} and later rigorously proven by Jarvis and Norbury (1997)\cite{JN97}.
The following presentation is based on Nash (1986)\cite{Nash86}.

If $\mathbb{H}^3(a)$ is a hyperbolic space with scalar curvature $-\frac{6}{a^2}$ and $S^1(a)$ is a circle with radius $a$, the Bogomolnyi equation on $\mathbb{H}^3(a)\times S^1(a)$ is given by
\begin{align}
\mathscr{D}_\ell\Phi^A = \frac{a}{2\sqrt{f_a}}\epsilon_{jk\ell}\mathscr{F}_{jk}^A \ , \ f_a = \left(1 - \frac{R^2}{4a^2}\right)^{-2} .
\end{align}
This equation can be solved exactly using a similar Ansatz. The Higgs field $\Phi$ is written using $\xi = 2\tanh^{-1}\left(\frac{R}{2a}\right)$
\begin{align}
\Phi^A = \frac{1}{a}\frac{dT}{d\xi}\frac{X^A}{R} = \frac{1}{a}\left[C \coth(C\xi)-\coth \xi \right]\frac{X^A}{R} .
\end{align}
If we fix $R$ and consider the $a \to \infty$ limit, $R = 2a\tanh(\frac{\xi}{2}) \to 0$ $(a \to \infty)$, so if we replace $\xi$ with $\xi_a$, we get $\frac{R}{2a} = \frac{s_a}{2}$, 
\begin{align}
& \lim_{a \to \infty}\Phi^A \nonumber\\
=& \lim_{a \to \infty}\frac{1}{a}\left[ (a + 1)\coth((1+a)\frac{R}{2a}) -\coth (\frac{R}{2a}) \right]\frac{X^A}{R} \nonumber\\
=&\left( \frac{1}{\tanh R} -\frac{1}{R} \right)\frac{X^A}{R} .
\end{align}
This corresponds to the Higgs field of the Prasad-Sommerfield monopole (Prasad-Sommerfield(1975)\cite{PS75}). Similar rigorous results hold for $\mathscr{A}_j^A$. The $a \to \infty$ limit of such a hyperbolic magnetic monopole of $\mathbb{H}^3(a)$ certainly reproduces the usual Euclidean monopole.

\end{proof}

\begin{figure}[htb]
\begin{center}
\includegraphics[scale=0.7]{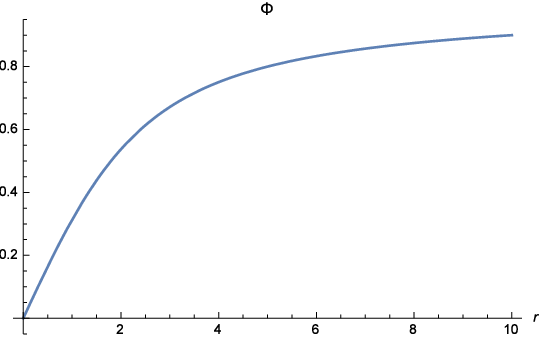}
\end{center}
\vskip -0.5cm
\caption{
The norm of the scalar field $||\Phi||$ of a Prasad-Sommerfield magnetic monopole as a function of $r$.
}
\label{PS_monopole}
\end{figure}

\begin{rem}
Let's look at the difference between the hyperbolic magnetic monopole and the normal Euclidean monopole. In a hyperbolic space with curvature $-\kappa^2$,
\begin{align}
||\Phi|| = \frac12 \left[C\kappa\coth(C\kappa\xi) -\kappa\coth(\kappa\xi) \right]
\end{align}
For the Euclidean monopole with $v = 1/2$, if the limit $\kappa \to 0$ and $C \to \infty$ is taken so that $C\kappa = 2$ is fixed, $\xi$ becomes the usual radial coordinate $r$ of $\mathbb{R}^3$,
\begin{align}
||\Phi|| = \coth2\xi - \frac{1}{2\xi} \ \ (\xi = r)
\end{align}
However, this is not a rational function either as a function of $r=\xi$ or as a function of $e^\xi$. It is also known that the Euclidean monopole with a large magnetic charge is not a rational function.

\end{rem}

\begin{rem}[Jarvis and Norbury (1997)\cite{JN97}]
It has been shown that in the infinite curvature limit $\kappa \to \infty$, a rational map associated with the hyperbolic magnetic monopole appears.
\end{rem}

\begin{rem}[Manton and Sutcliffe(2014)\cite{MS14}]
The hyperbolic magnetic monopoles considered here are $S^1$-invariant instantons, and most of them are also invariant under some subgroup $K$ of $SO(3)$. This leads us to investigate the geometry of the $SO(2)$ action that commutes with $SO(3)$ and the relationship between the $S^1$-invariant instantons and hyperbolic magnetic monopoles.

Examples of large magnetic charges have been constructed that also have the \textbf{platonic}\index{platonic}($K\subset SO(3)$ symmetry of $\mathbb{H}^3$. For example, there are monopoles with tetrahedral symmetry of magnetic charge 3, cubic symmetry of magnetic charge 4, octahedral symmetry of magnetic charge 5, dodecahedral symmetry of magnetic charge 7, and icosahedral symmetry of magnetic charge 17.
For details, see  
Manton and Sutcliffe (2014)\cite{MS14}.

\end{rem}

\subsection{Energy and magnetic charge of hyperbolic magnetic monopoles}

\begin{definition}[Energy of a hyperbolic magnetic monopole]
The ``energy'' (three-dimensional action) $E_3$ of a hyperbolic magnetic monopole is given by:
\begin{align}
E_3 =& \int_{\mathbb{H}^3}d^3x \sqrt{g}\epsilon \nonumber\\
=& \int_{\mathbb{H}^3}d^3x \sqrt{g}\left[\frac{1}{2}\operatorname{tr}(\mathscr{F}_{jk}\mathscr{F}^{jk}) + \operatorname{tr}(\mathscr{D}_j\Phi\mathscr{D}^j\Phi)\right] .
\end{align}
Using the Bogomolnyi equation, the energy density $\epsilon$ of a monopole can be obtained by applying the Laplace-Beltrami operator to the square of the magnitude of the scalar field $\Phi$:
[R. S. Ward (1981)\cite{Ward81}]
\begin{align}
\epsilon = \frac{1}{\sqrt{g}}\partial_\mu(\sqrt{g}g^{\mu\nu}\partial_\nu|\Phi|^2) .
\end{align}
Here, the metric $g_{\mu\nu}$ is that of the ball model.

For a spherically symmetric 1-magnetic monopole, $\epsilon$ is obtained as
\begin{align}
\epsilon = \frac{3}{2}\left(\frac{1 - R^2}{1 + R^2}\right)^4 .
\end{align}

\end{definition}

\begin{rem} 
It is known that a monopole with a large magnetic charge $Q_m>1$ cannot be written in terms of $R$ alone, and is therefore not spherically symmetric.
\end{rem}

\begin{prop}[Bogomol'nyi bound on the hyperbolic magnetic monopole]
The standard Bogomol'nyi argument gives us the lower bound on the ``energy'' (3-dimensional action):
\begin{align}
E_3 \geqq 
  \int_{\partial\mathbb{H}^3} \operatorname{tr}(\Phi\mathscr{F})
:=  \int_{\partial\mathbb{H}^3} d^2S_j \ 2\operatorname{tr}(\Phi\mathscr{B}_j) 
= 4\pi vQ_m ,
\end{align}
where $\mathscr{B}_j$ is the magnetic field $\mathscr{B}_j = \frac{1}{2}\epsilon_{jk\ell} \mathscr{F}_{k\ell}$. 
Here $v$ is the norm of the scalar field at the infinity:
\footnote{
In the literatures in mathematical physics $v$ is also called the mass.  However, it is not the mass of the magnetic monopole, rather it is the mass for the gauge field resulting from the Brout-Englert-Higgs mechanism in the tree level.  
}
\begin{align}
v:=||\Phi(x)||_\infty = \sqrt{\frac{1}{2}\operatorname{Tr}(\Phi(x)^2)} , \ (x \in \partial\mathbb{H}^3) .
\end{align}
The magnetic charge $Q_m$ of a magnetic monopole is defined using the flux of the magnetic field $\mathscr{B}$ through the boundary $\partial\mathbb{H}^3$ at infinity:
\begin{align}
Q_m := \frac{1}{4\pi} \int_{\partial\mathbb{H}^3}\frac{\operatorname{tr}(\Phi\mathscr{F})}{||\Phi||}
= \frac{1}{4\pi v }\int_{\partial\mathbb{H}^3} \operatorname{tr}(\Phi\mathscr{F}) \in \mathbb{Z} .
\end{align}
The moduli space of solutions to the Bogomol'nyi equation is 4$N$-dimensional, which corresponds to the positions and $U(1)$-phases of each of the $N$ monopoles: $(3+1)\times N=4N$.

\end{prop}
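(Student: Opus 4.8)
The plan is to run the classical Bogomol'nyi completion-of-the-square argument, here on the curved background $\mathbb{H}^3$, and then to obtain the dimension of the moduli space from a linearization/index computation (equivalently, from Atiyah's instanton correspondence). Since in three dimensions $\tfrac12\operatorname{tr}(\mathscr{F}_{jk}\mathscr{F}^{jk})=\operatorname{tr}(\mathscr{B}_j\mathscr{B}^j)$ with $\mathscr{B}_j=\tfrac12\epsilon_{jk\ell}\mathscr{F}^{k\ell}$, the first step is to write
\begin{align}
E_3 =& \int_{\mathbb{H}^3} d^3x\sqrt{g}\,\operatorname{tr}\big[(\mathscr{B}_j-\mathscr{D}_j\Phi)(\mathscr{B}^j-\mathscr{D}^j\Phi)\big] \nonumber\\ & +2\int_{\mathbb{H}^3} d^3x\sqrt{g}\,\operatorname{tr}(\mathscr{B}_j\mathscr{D}^j\Phi) .
\end{align}
Because $\mathscr{B}_j-\mathscr{D}_j\Phi$ is $\mathfrak{su}(2)$-valued and hermitian in the $\sigma_A/2$ basis, one has $\operatorname{tr}(XX)=\tfrac12 X^AX^A\ge 0$ pointwise, so the first term is non-negative and vanishes exactly when the Bogomol'nyi equation $\mathscr{F}=*_{\mathbb{H}^3}\mathscr{D}[\mathscr{A}]\Phi$ holds; this gives both the inequality and the characterization of its saturation.

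For the cross term I would use the Bianchi identity $\mathscr{D}_j\mathscr{B}^j=0$, valid on any Riemannian $3$-manifold, together with the fact that the trace annihilates commutators, so that $\sqrt{g}\,\operatorname{tr}(\mathscr{B}_j\mathscr{D}^j\Phi)\,d^3x = d\!\left(\operatorname{tr}(\mathscr{F}\Phi)\right)$ as a $3$-form. Stokes' theorem then turns it into the boundary integral, and collecting the factor $2$ from completing the square reproduces exactly the surface term $\int_{\partial\mathbb{H}^3}d^2S_j\,2\operatorname{tr}(\Phi\mathscr{B}_j)$ in the statement. It is worth emphasizing that the curvature of $\mathbb{H}^3$ never enters this manipulation, since $E_3$ carries no explicit curvature coupling; what one does need is the assumed fall-off of $\mathscr{F}$ and $\mathscr{D}\Phi$ at $\partial\mathbb{H}^3$, which guarantees both $E_3<\infty$ and convergence of the boundary term. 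A small bookkeeping point to be checked carefully is that, with the normalization $\mathscr{A}=\mathscr{A}^A\sigma_A/2$, the various factors of $2$ combine so that the final constant is $4\pi vQ_m$ and not $2\pi vQ_m$ or $8\pi vQ_m$.

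Next I would identify the boundary term with $4\pi vQ_m$. On $\partial\mathbb{H}^3$ the boundary condition forces $\Phi\to v\,\hat\Phi$ with $\hat\Phi\in\mathfrak{su}(2)$ of unit norm, so $2\operatorname{tr}(\Phi\mathscr{B}_j)=\Phi^A\mathscr{B}_j^A\to v\,\hat\Phi^A\mathscr{B}_j^A$, the flux density of the unbroken $U(1)$ field along the monopole direction. By the very definition of the magnetic charge, $\int_{\partial\mathbb{H}^3}d^2S_j\,\hat\Phi^A\mathscr{B}_j^A=4\pi Q_m$, and $Q_m\in\mathbb{Z}$ expresses that this is the degree of the map $\hat\Phi\colon\partial\mathbb{H}^3\cong S^2\to S^2$ (equivalently, the first Chern number of the $U(1)$ bundle at infinity). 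Hence $E_3\ge 4\pi vQ_m$ with equality precisely on solutions of the Bogomol'nyi equation, which is the claimed bound.

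Finally, for the dimension of the moduli space I would linearize the Bogomol'nyi equation about a charge-$N$ solution, adjoin the Coulomb gauge condition, and compute the index of the resulting elliptic deformation complex. The route most consistent with the rest of this paper is to pass through Atiyah's correspondence: when $2v$ is an integer (with $\kappa=1$), a charge-$N$ hyperbolic monopole is an $S^1$-invariant self-dual instanton on $\mathbb{R}^4$, so its moduli space is the $S^1$-fixed locus inside the instanton moduli space, and the count $4N=(3+1)N$ follows from the $S^1$-equivariant version of the ADHM/index calculation; alternatively one computes the index directly on $\mathbb{H}^3$ with Atiyah--Patodi--Singer boundary contributions. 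The genuinely delicate step, and the main obstacle, is showing that the obstruction space (the cokernel of the linearized operator, equivalently the $L^2$ solutions of the adjoint linearized equation) vanishes, so that the index equals the true dimension and the moduli space is a smooth $4N$-manifold; this requires a Weitzenböck/vanishing argument that actually uses the BPS structure (and is aided by the negative curvature), rather than the purely formal manipulations of the earlier steps. The identification of the $4N$ parameters with the $N$ positions in $\mathbb{H}^3$ and the $N$ $U(1)$ phases then follows from the large-separation asymptotics of multi-monopole configurations.
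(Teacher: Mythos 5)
Your energy-bound argument is essentially the paper's own proof: complete the square in $\mathscr{B}_j\mp\mathscr{D}_j\Phi$, use the Bianchi identity $\mathscr{D}_j\mathscr{B}_j=0$ to convert the cross term into $d\operatorname{tr}(\mathscr{F}\Phi)$, and apply Stokes to land on the boundary flux $4\pi vQ_m$ (your covariant writing with $g^{jk}$ is cleaner than the paper's explicit $\rho$-factors, and you are right that the only real care needed is the factor-of-two bookkeeping in the $\sigma_A/2$ normalization, where the paper itself is loose). For the $4N$-dimensionality of the moduli space the paper offers no proof at all, so your index-theoretic/equivariant-ADHM sketch, with the cokernel-vanishing step correctly flagged as the genuinely hard point, goes beyond what the paper establishes rather than conflicting with it.
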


\begin{proof}
The ``energy'' can be transformed as follows:
\begin{align}
E_3 =& \int d^3x \ \sqrt{g}\mathscr{L} \ (d^3x := dx^3dx^4d\rho)\nonumber\\
=& \int d^3x \ \left[\rho\frac{1}{2}\operatorname{tr}(\mathscr{F}_{\mu\nu}^2) + \frac{1}{\rho}\operatorname{tr}\{(\mathscr{D}_\mu\Phi)^2\}\right] \nonumber\\
=&\int d^3x \ \left[\rho\operatorname{tr}(\mathscr{E}_j^2) + \rho\operatorname{tr}(\mathscr{B}_j^2) + \frac{1}{\rho}\operatorname{tr}\{(\mathscr{D}_0\Phi)^2\} \right. \nonumber\\
&\left. \quad\quad\quad\quad + \frac{1}{\rho}\operatorname{tr}\{(\mathscr{D}_j\Phi)^2\}\right] \nonumber\\
=&\int d^3x \ \left[\rho\operatorname{tr}\left\{\left(\mathscr{B}_j \mp \frac{1}{\rho}\mathscr{D}_j\Phi\right)^2\right\} + \rho\operatorname{tr}(\mathscr{E}_j^2) \right. \nonumber\\
& \left.  \quad\quad\quad\quad + \frac{1}{\rho}\operatorname{tr}\{(\mathscr{D}_0\Phi)^2\}\right] \nonumber\\
&\pm 2\int d^3x \ \operatorname{tr}(\mathscr{B}_j\mathscr{D}_j\Phi) .
\end{align}
By using the Bianchi identity: 
\begin{align}
\mathscr{D}_j\mathscr{B}_j = \frac{1}{2}\epsilon_{jk\ell}\mathscr{D}_j\mathscr{F}_{k\ell} = \mathscr{D}_j^*\mathscr{F}_j = 0 ,
\end{align}
the last term reads
\begin{align}
& \int d^3x \ \operatorname{tr}(\mathscr{B}_j\mathscr{D}_j\Phi) \nonumber\\ 
=& \int d^3x \ [\partial_j\operatorname{tr}(\mathscr{B}_j\Phi) - \operatorname{tr}(\mathscr{D}_j\mathscr{B}_j\Phi)] \nonumber\\
=&\int d^2S_j \ \operatorname{tr}(\mathscr{B}_j\Phi):=\frac{v}{2}Q_m .
\end{align}
Therefore, we obtain the following inequality
\begin{align}
E_3 =& \int d^3x \ \left[\rho\operatorname{tr}\left\{ \left(\mathscr{B}_j \mp \frac{1}{\rho}\mathscr{D}_j\Phi\right)^2 \right\} + \rho\operatorname{tr}(\mathscr{E}_j^2) 
\right. \nonumber\\
&\left.  \quad\quad\quad\quad + \frac{1}{\rho}\operatorname{tr}\{(\mathscr{D}_0\Phi)^2\}\right] \pm vQ_m  
 \geqq \pm  vQ_m.
\end{align}
where we defined 
\begin{align}
Q_m = \frac{2}{v}\int_{\partial\mathbb{H}^3} d^2S_j \ \operatorname{tr}(\Phi\mathscr{B}_j) := 2 \int_{\partial\mathbb{H}^3} \ \frac{\operatorname{tr}(\Phi\mathscr{F})}{||\Phi||}.
\end{align}
The equality is true if and only if the Bogomolnyi equation holds
\begin{align}
\mathscr{B}_j = \pm \frac{1}{\rho}\mathscr{D}_j\Phi,
\end{align}
and $E_3$ has a minimum value:
\begin{align}
E_3 = vQ_m =  \int_{\partial\mathbb{H}^3} d^2S_j \ 2\operatorname{tr}(\Phi\mathscr{B}_j) .
\end{align}

\end{proof}

\begin{prop}[Relationship between instanton number, magnetic charge, and asymptotic value of scalar field]
The instanton number $I\in\mathbb{Z}$, the magnetic charge $Q_m\in\mathbb{Z}$ of the hyperbolic magnetic monopole, and the asymptotic value $v:=||\Phi||_\infty \in \mathbb{Z}$ of the norm $||\Phi||$ of the scalar field $\Phi$ are related as
\begin{align}
I = 2vQ_m  \Leftrightarrow c_2 =2 ||\Phi||_\infty \ c_1 \ (I, v, Q_m\in\mathbb{Z})
\end{align}

For $v=\frac{1}{2}$, in particular, the instanton number $I$ and the magnetic charge $Q_m$ coincide:
\begin{align}
I = Q_m .
\end{align}
For $v \neq \frac{1}{2}$, then the instanton number $I$ and the magnetic charge $Q_m$ do not coincide:
\begin{align}
& I = 2vQ_m \Leftrightarrow Q_m = \frac{1}{2v}I < I,
\nonumber\\
& v=1 \Rightarrow Q_m=\frac{I}{2}, \ v=\frac32 \Rightarrow Q_m=\frac{I}{3}, ...
\end{align}

\end{prop}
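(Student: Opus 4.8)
The plan is to deduce the identity from the topology of the four-dimensional instanton that lifts the hyperbolic monopole. By Atiyah's construction recalled above (Case~(I) and the Remarks above), a hyperbolic monopole of charge $Q_m$ with $2v=2||\Phi||_\infty\in\mathbb{Z}$ is the same object as an $S^1$-invariant self-dual Yang--Mills instanton on the conformal compactification $S^4$ of $\mathbb{R}^4$, the circle acting by rotating the $(x^1,x^2)$-plane; its instanton number is the second Chern number $I=c_2(E)[S^4]$. The fixed locus of this circle action is exactly $\partial\mathbb{H}^3\cong S^2\subset S^4$ (the compactified $(x^3,x^4)$-plane), whose normal bundle in $S^4$ is trivial, with the circle rotating the normal directions with weight one. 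I would therefore compute $c_2(E)[S^4]$ by Atiyah--Bott localization to this $S^2$.

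Over the fixed $S^2$ the bundle $E$ splits as $L\oplus L^{-1}$, the eigenbundles of the asymptotic Higgs field $\Phi$; since $||\Phi||=v$ there its eigenvalues are $\pm v$, and $c_1(L)[S^2]=Q_m$ by the definition of the magnetic charge in the Bogomol'nyi proposition above. The generator of the $S^1$-action on $E$, restricted to the fixed locus, is $\Phi$ itself---it is the component $\mathscr{A}_\varphi^G$ of the connection along the circle, which Proposition~\ref{prop:monopole-vortex} identifies with the Higgs field---so $S^1$ acts on $L$ with weight $v$ and on $L^{-1}$ with weight $-v$. Feeding $E|_{S^2}=L\oplus L^{-1}$ with weights $(v,-v)$ and equivariant Euler class $e^{S^1}(N)=u$ into the localization formula, $c_2(E)[S^4]$ equals the integral over $S^2$ of the degree-two part of $c_2^{S^1}(E)/e^{S^1}(N)=-\bigl(c_1(L)+vu\bigr)^2/u$; on a two-manifold the $c_1(L)^2$ term vanishes and only the cross term $-2v\,c_1(L)$ survives, so $I=c_2(E)[S^4]=2v\int_{S^2}c_1(L)=2vQ_m$, i.e.\ $c_2=2||\Phi||_\infty c_1$. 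A physical cross-check is the action route: $I=\tfrac{1}{8\pi^2}S_4$ for a self-dual configuration, and the dimensional reduction above writes $S_4$ as the length of $S^1$ times the hyperbolic ``energy'' $E_3$, which the Bogomol'nyi proposition evaluates on a monopole to a multiple of $vQ_m$; reconciling the normalizations reproduces $I=2vQ_m$.

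The two displayed corollaries are then immediate substitutions: $v=\tfrac12$ gives $I=Q_m$, and for $v\neq\tfrac12$ one has $Q_m=I/(2v)$, so $v=1$ gives $Q_m=I/2$ and $v=\tfrac32$ gives $Q_m=I/3$; conversely, since $Q_m$ ranges over all of $\mathbb{Z}$ and $I=2vQ_m$ must lie in $\mathbb{Z}$, the identity re-derives the hypothesis $2v\in\mathbb{Z}$ that made the lift to an instanton possible. I expect the main obstacle to be the bookkeeping that yields the factor $2$ rather than $1$: one must confirm that the $S^1$-weight on the $\Phi$-eigenbundle equals $v=\sqrt{\tfrac12\operatorname{tr}\Phi^2}$ (so that the interval of eigenvalues has length $2v$), which is exactly where the conventions for the curvature of $\mathbb{H}^3$, the circumference of the reducing circle, and the Dirac quantization unit for $Q_m$ must be fixed coherently; a secondary technical point is to check that the instanton extends as a genuinely smooth $S^1$-equivariant bundle across the fixed $S^2$ precisely when $2v\in\mathbb{Z}$, so that the localization is legitimate rather than merely formal.
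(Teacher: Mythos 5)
Your proposal is correct in substance but follows a genuinely different route from the paper. The paper's own ``proof'' is essentially an appeal to authority plus a consistency check: it cites Atiyah (1984) for the statement $I=2vQ_m$, remarks that the factor arises from ``the way we lift the circular $S^1$ action onto a bundle carrying $SU(2)$ instantons on a fixed $S^2\cong\partial\mathbb{H}^3$,'' and then cross-checks by equating $S_4=\int_0^{2\pi}d\varphi\,E_3=2\pi E_3$ with $S_4=8\pi^2c_2(P)$, using $E_3=4\pi\|\Phi\|_\infty Q_m$ from the Bogomol'nyi bound. Your cross-check is exactly this action computation; your main argument, however, is an actual equivariant-localization derivation (splitting $E|_{S^2}=L\oplus L^{-1}$ with $S^1$-weights $\pm v$ read off from $\Phi=\mathscr{A}_\varphi^G$, extracting the cross term $2v\,c_1(L)$), which is more self-contained than anything the paper offers and correctly identifies \emph{where} the factor of $2$ comes from. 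That extra care is warranted: if you run the paper's own numbers, $c_2=S_4/(8\pi^2)=E_3/(4\pi)=\|\Phi\|_\infty Q_m=vQ_m$, i.e.\ the paper's displayed cross-check actually lands a factor of $2$ short of the stated $I=2vQ_m$ and leaves the discrepancy unresolved (presumably a normalization of $E_3$ or of the trace convention). So your flagged ``main obstacle'' --- fixing coherently the weight of the $S^1$-action on the eigenbundles, the circumference of the reducing circle, and the Dirac unit for $Q_m$ --- is precisely the point the paper glosses over, and your localization argument is the right instrument to settle it.
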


\begin{proof}
As shown by Atiyah(1984)\cite{Atiyah84,Atiyah84b}, the instanton number $I$ is proportional to the magnetic charge $Q_m$:
\begin{align}
I = 2vQ_m .
\end{align}
Here the proportionality factor $v$ is equal to the boundary value $||\Phi||_\infty$ of the scalar field $\Phi$. This arises from the way we lift the circular $S^1$ action onto a bundle carrying $SU(2)$ instantons on a fixed $S^2 \cong \partial\mathbb{H}^3$ of $\mathbb{R}^4$ under the circular $S^1$ action.

The ``energy'' $E_3$ of a hyperbolic magnetic monopole can be written in terms of the asymptotic value $||\Phi||_\infty$ of the norm $||\Phi||$ of the scalar field $\Phi$ and the magnetic charge $Q_m$:
\begin{align}
E_3 = 4\pi||\Phi||_\infty Q_m, \ Q_m=c_1(L) .
\end{align}
The four-dimensional action integral $S_4$ is obtained by integrating with respect to $\varphi$ as follows:
\begin{align}
S_4 = \int_0^{2\pi} d\varphi E_3 = 2\pi E_3.
\label{S4a}
\end{align}
For instantons, on the other hand, we know that
\begin{align}
S_4 = 8\pi^2c_2(P)
\label{S4b}
\end{align}
holds.
Setting (\ref{S4a}) and (\ref{S4b}) equal, we obtain
\begin{align}
& c_2(P) = \frac{S_4}{8\pi^2} = \frac{E_3}{4\pi} = ||\Phi||_\infty Q_m = ||\Phi||_\infty c_1 , \nonumber\\ 
& c_1(L) = Q_m .
\end{align}

\end{proof}

\begin{rem} 
Here, $I= c_2(P)$ is the second Chern class of the $SU(2)$ principal bundle $P$ on $S^4$, and $Q_m = c_1(L)$ is the first Chern class of the line bundle $L$ on the axis $\mathbb{R}^2$.

\end{rem}

\begin{rem}
Even if $2v$ is not an integer: $2v \notin \mathbb{Z}$, hyperbolic magnetic monopoles exist, but they do not correspond to $S^1$-invariant symmetric instantons.
They correspond to singular instantons with singularities of branch type on $S^2$ (with nontrivial holonomy around $S^2$) as shown in Appendix B and C.

\end{rem}

\begin{example}[Hyperbolic magnetic monopole with $Q_m = 1$]
The magnetic charge $Q_m$ of a 1-hyperbolic magnetic monopole is given by
\begin{align}
Q_m &= \frac{1}{2\pi}\int_{\partial\mathbb{H}^3}\operatorname{tr}(\hat{\Phi}d\hat{\Phi}\wedge d\hat{\Phi}) \ , \ \hat{\Phi}:= \frac{\Phi}{||\Phi||} \nonumber\\
&= \frac{1}{16\pi}\int_{\partial\mathbb{H}^3}\operatorname{tr}(\mathbf{n}d\mathbf{n} \wedge d\mathbf{n}) = 1 = c_1.
\label{eq:1}
\end{align}
Here, $\mathbf{n}$ is written with the angular coordinates $\alpha, \beta$ on $\partial\mathbb{H}^3 \simeq S^2$ as
\begin{align}
\mathbf{n} = \sin\alpha\cos\beta\sigma_x +\sin\alpha\sin\beta\sigma_y +\cos\alpha\sigma_z.
\end{align}
The ``energy'' $E_3$ is rewritten into
\begin{align}
E_3
= 4\pi||\Phi||_\infty = 4\pi c_1 .
\label{eq:2a}
\end{align}
The asymptotic value $||\Phi||_\infty$ is equal to the value of $||\Phi||$ on the boundary $\partial\mathbb{H}^3 \simeq S^2$:
\begin{align}
||\Phi||_\infty := \lim_{\substack{s \to \infty \\ (R \to 1)}}||\Phi|| = ||\Phi(x \in \partial\mathbb{H}^3)|| = C -1=1.
\end{align}

\end{example}


\begin{example}[Vortex embedded in a hyperbolic magnetic monopole]
The first Chern number $c_1$ can be found by performing the integral over the boundary $\partial\mathbb{H}^3 \simeq S^2$:
\begin{align}
c_1 = \frac{1}{4\pi}\int_{\partial\mathbb{H}^3}\frac{\operatorname{tr}(\Phi\mathscr{F})}{||\Phi||} .
\end{align}
Hence, the first Chern number $c_1$ reduces to the boundary integral of the flux of the gauge field:
\begin{align}
c_1 = \frac{1}{4\pi}\int_{\partial\mathbb{H}^3}dx^4 \wedge dx^3 \ F_{4r} \ , \ F_{4r}:=\partial_4a_r - \partial_ra_t.
\end{align}
For coordinate $r>0$, since $x^3\in\mathbb{R}$, the integration must be done over two copies of $\mathbb{H}^2 \ni (x^4, r>0)$, i.e. $\partial\mathbb{H}^3$ is a double copy of $\mathbb{H}^2$:
\begin{align}
\partial\mathbb{H}^3 \ni (x^4, x^3, \rho=0)\simeq 2(x^4, r) \ \ (r>0)
\end{align}
Hence, we have
\begin{align}
c_1 = \frac{1}{2\pi}\int_{\mathbb{H}^2}dx^4 \wedge dr \ 
F_{4r} = N_v
\end{align}
This is equal to the vortex charge $N_v$.

\end{example}

\begin{rem} 
For the magnetic monopole with the charge 2, see Ward(1981)\cite{Ward81} and Atiyah and Ward(1977)\cite{AW77}. 
In the very large limit of the magnetic charge $Q_m \to \infty$, the hyperbolic magnetic monopole  is simplified, which is called the \textbf{magnetic bag}.  
See S. Bolognesi(2006)\cite{Bolognesi06}, Bolognesi, Cockburn and Sutcliffe(2014)\cite{BCS14}, Bolognesi, Harland and Sutcliffe(2015)\cite{BHS15}.

\end{rem}

\subsection{JNR data for instantons and dimensional reduction}

The multi-instanton, i.e., $N$-instantons are found via the \textbf{Jackiw-Nohl-Rebb (JNR) Ansatz}\index{Jackiw-Nohl-Rebb (JNR) Ansatz} (Jackiw-Nohl-Rebb(1977)\cite{JNR77}:
\begin{align}
\mathscr{A}_\mu^A(x) = \bar{\eta}_{\mu\nu}^A\partial_\nu\ln\Xi ,
\end{align}
with the superpotential $\Xi$:
\begin{align}
\Xi(x) = \sum_{j = 0}^N\frac{\lambda_j^2}{|x - a_j|^2} .
\end{align}

The dimensional reduction can be performed at the level of the JNR data. The $S^1$-invariant JNR data gives a subset of hyperbolic magnetic monopoles. The poles $a_j$ $(j=0,...,N)$ of $\Xi$ must lie on a plane $a_j \in (x^3,x^4)$ (a fixed set of the circular $S^1$ action) in $\mathbb{E}^4$. This plane is the boundary $\partial \mathbb{H}^3$ of $\mathbb{H}^3$. By counting parameters, it can be seen that all hyperbolic  magnetic monopoles in $N\leq 3$ can be generated in this way because $(2+1)(N+1) \ge (3+1)N$. 

To reduce a  magnetic monopole to a vortex, the poles must lie on a fixed set of $SO(3)$ actions, i.e., on a line in $\mathbb{E}^4$. It was shown by Manton (1978)\cite{Manton78} that the JNR Ansatz generates all hyperbolic vortices.

The scalar field $\Phi$ of the magnetic monopole is expressed by the JNR superpotential
\begin{equation}
\Xi = \sum_{j =0}^N\frac{\lambda_j^2}{|x^4 + ix^3 - \gamma_j|^2 + \rho^2} , \ \gamma_j \in \mathbb{C}, \ (\rho, x^3, x^4) \in \mathbb{H}^3
\label{eq:3a}
\end{equation}
in the form
\begin{align}
||\Phi||^2 = \frac{\rho^2}{4\Xi^2}\left[\left(\frac{\partial \Xi}{\partial x^4}\right)^2 + \left(\frac{\partial \Xi}{\partial x^3}\right)^2 + \left(\frac{\Xi}{\rho} + \frac{\partial \Xi}{\partial \rho}\right)^2\right] .
\label{eq:4}
\end{align}
If all poles are placed on the real $x^4$ axis, $\Xi$ becomes:
\begin{align}
\Xi = \sum_{j = 0}^N\frac{\lambda_j^2}{(x^4 - \gamma_j)^2 + r^2} , \ \gamma_j \in \mathbb{R}.
\label{eq:5}
\end{align}

On the other hand, the scalar field $\phi$ of the vortex obeys the following equation:
\begin{align}
|\phi|^2 =& \frac{r^2}{\Xi^2}\left[\left(\frac{\partial \Xi}{\partial x^4}\right)^2 +\left(\frac{\Xi}{r} + \frac{\partial \Xi}{\partial r}\right)^2\right]
\nonumber\\
=& -r^2(\partial_4^2 + \partial_r^2)\log(r\Xi) .
\end{align}
We can choose $\mathscr{A}_j^G$ to satisfy the Coulomb gauge $\partial_j\mathscr{A}_j^G = 0$:
\begin{align}
a_4 = \frac{\phi_2 + 1}{r} \ , \ a_r = \frac{\phi_1}{r} .
\end{align}
This fixes the phase of $\phi$. Then
\begin{align}
a_4 = -\partial_r\log\Xi \ , \ a_r = \partial_4\log\Xi .
\end{align}
Using \eqref{eq:5} in \eqref{eq:4} and changing the variables, we reproduce the following relations.
\begin{align}
||\Phi||^2 = \frac{1}{4r^2}[\rho^2|\phi|^2 + (x^3)^2] .
\end{align}

\begin{rem}[Dimensional reduction: case (III)]\label{rem:caseIII}
Further dimensional reduction can be performed.
$SO(4)$ invariant instantons reduce to one-dimensional field theory.
Using the new radial coordinate $R := \sqrt{r^2 +(x^4)^2}$,
\begin{align}
\varphi^2 = \frac{R^2}{\Xi^2}\left(\frac{\Xi}{R} + \frac{d\Xi}{dR}\right)^2 \ , \ \Xi = \Xi(R) .
\end{align}
Here, $\Xi$ is a function of $R$ only. Combining this with previous results, the corresponding vortex scalar field reads
\begin{align}
|\phi|^2 = \frac{r^2\varphi^2 + (x^4)^2}{r^2 + (x^4)^2} .
\end{align}
Substituting this into the Taubes equation:
\begin{align}
\Delta\log|\phi|^2 + 2(1 - |\phi|^2) = 0,
\end{align}
we obtain the Bogomolny equation for the $\varphi^4$ kink:
\begin{align}
\frac{d\varphi}{d\log R} = 1 - \varphi^2 .
\end{align}
In other words, by embedding the $\varphi^4$ kink in $\mathbb{H}^2$, we obtain a hyperbolic vortex of charge 1, see 
Maldonado (2017)\cite{Maldonado17}.
\end{rem}

\section{Hyperbolic vortex solutions}

\begin{prop}[The dimesionally reduced $D=2$ $U(1)$ gauge-scalar model]

By this spatial $SO(3)$ \textbf{symmetry reduction}\index{symmetry reduction}, the cylindrical (spherically symmetric) sector of the $SU(2)$ Yang-Mills theory on the four-dimensional Euclidean space $\mathbb{E}^4$ 
\begin{align}
 S_{\rm{YM}} &= \int d^4 x \ \mathscr{L}_{\rm{YM}} , \nonumber\\
  \mathscr{L}_{\rm{YM}} &= \frac{1}{4} \mathscr{F}_{\mu \nu}^A \mathscr{F}_{\mu \nu}^A + \frac{\vartheta}{32 \pi^2} \mathscr{F}_{\mu \nu}^A {}^* \mathscr{F}_{\mu \nu}^A,
\end{align}
is reduced to the $U(1)$ gauge scalar theory on the two-dimensional hyperbolic space $\mathbb{H}^2$:
\begin{align}
\Rightarrow 
S_{\rm{YM}} = &4 \pi \int_{- \infty}^\infty dt \int_{0}^\infty d r \ \mathscr{L}_{\rm{GS}} , \nonumber\\
 \mathscr{L}_{\rm{GS}} 
 = &\frac{1}{4} r^2 F_{\mu \nu} F_{\mu \nu}+ (D_\mu \phi)^* D_\mu \phi \nonumber\\
 &+ \frac{1}{2 r^2}(|\phi|^2 - 1)^2 + \frac{\vartheta}{16 \pi^2} \varepsilon_{\mu \nu} F_{\mu \nu}.
 \label{YM-L}
\end{align}
Here we defined the covariant derivative $D_\mu=\partial_\mu - ia_\mu$ and introduced the complex scalar field $\phi = \phi_1 + i \phi_2$ and used $D_\mu \varphi_a D_\mu \varphi_a = (D_\mu \phi)^* D_\mu \phi$.
We have ignored total differential terms.
In the theory with dimensional reduction, the gauge group becomes $U(1)$.

This is indeed a theory defined on $\mathbb{H}^2$ with the metric $g_{\mu\nu}= r^{- 2} \delta_{\mu\nu}$ ($g^{\mu\nu}= r^{2} \delta^{\mu\nu}$):
\begin{align}
 S_{\rm{YM}} = &\int_{- \infty}^\infty dt \int_{0}^\infty d r \sqrt{g} \ \mathscr{L}_{\rm{GS}} , \nonumber\\
 \mathscr{L}_{\rm{GS}} = & \frac14  g^{\mu\alpha}g^{\nu\beta} F_{\mu \nu} F_{\alpha \beta} +  g^{\mu\nu} (D_\mu \phi)^* D_\nu \phi  
\nonumber\\
 & + \frac{1}{2}(|\phi|^2 - 1)^2 + \frac{\vartheta}{16 \pi^2} \varepsilon_{\mu \nu} F_{\mu \nu},
\label{YM-L2}
\end{align}
where $g:=\det (g_{\mu\nu})=r^{-2}$.
\end{prop}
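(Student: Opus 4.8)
The plan is to insert the Witten ansatz of the Proposition above directly into the four-dimensional Yang--Mills Lagrangian $\mathscr{L}_{\rm YM}$ written in polar coordinates $(t,r,\theta,\varphi)$, and then to integrate out the angular two-sphere $S^2(\theta,\varphi)$. The first step is to compute the non-abelian field strength $\mathscr{F}_{\mu\nu}^{A}$ on the ansatz. Setting $\hat x^A:=x^A/r$, its components organise according to how the internal $SU(2)$ index pairs with $\hat x^A$, with the transverse projector $\delta_j^A-\hat x^A\hat x^j$, or with $\epsilon_{jAk}\hat x^k$: the $(t,r)$ block reproduces, up to the orthonormal frame, the abelian field strength $F_{tr}=\partial_t a_r-\partial_r a_t$; the four mixed blocks $\mathscr{F}_{r\theta},\mathscr{F}_{r\varphi},\mathscr{F}_{t\theta},\mathscr{F}_{t\varphi}$ assemble into the gauge-covariant derivatives $D_\mu\phi$ of the complex scalar $\phi=\phi_1+i\phi_2$, with $D_\mu=\partial_\mu-ia_\mu$; and the purely transverse block $\mathscr{F}_{\theta\varphi}$ reduces to $1-|\phi|^2$, which is the seed of the potential.

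The second step is to form $\tfrac14\mathscr{F}_{\mu\nu}^{A}\mathscr{F}_{\mu\nu}^{A}$ and integrate over $(\theta,\varphi)$. Because the ansatz is $SO(3)$-invariant up to a gauge transformation, once the internal index $A$ is summed the integrand no longer depends on the angles, so $\int_0^\pi\sin\theta\,d\theta\int_0^{2\pi}d\varphi=4\pi$ factors out, which is precisely the prefactor in \eqref{YM-L}. Tracking the powers of $r$ carried by the flat metric $(ds)^2(\mathbb{E}^4)=(dt)^2+(dr)^2+r^2((d\theta)^2+\sin^2\theta(d\varphi)^2)$ --- namely the inverse-metric factors $r^{-2}$ and $r^{-2}\sin^{-2}\theta$ on the angular legs together with the volume factor $r^2\sin\theta$ --- then produces exactly the relative weights $\tfrac14 r^2F_{\mu\nu}F_{\mu\nu}$, $(D_\mu\phi)^{*}D_\mu\phi$ and $\tfrac1{2r^2}(|\phi|^2-1)^2$ of \eqref{YM-L}, modulo total-derivative terms in the $(t,r)$ plane which are discarded as stated.

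For the $\vartheta$-term the same ansatz is substituted into $\mathscr{F}_{\mu\nu}^{A}\,{}^{*}\mathscr{F}_{\mu\nu}^{A}$, which is the second-Chern density of the $SU(2)$ bundle; integrated over the angular $S^2$ it collapses to the first-Chern density $\varepsilon_{\mu\nu}F_{\mu\nu}$ of the reduced $U(1)$ bundle on $\mathbb{H}^2$, the angular integral of $\mathbf{n}\cdot(\partial_\theta\mathbf{n}\times\partial_\varphi\mathbf{n})$ supplying the numerical factor that turns $1/32\pi^2$ into the $1/16\pi^2$ of \eqref{YM-L}. Finally, reading off $\sqrt g=r^{-2}$ and $g^{\mu\nu}=r^2\delta^{\mu\nu}$ for the upper-half-plane metric $g_{\mu\nu}=r^{-2}\delta_{\mu\nu}$ of \eqref{eq:3c}, every explicit power of $r$ in \eqref{YM-L} is absorbed into $\sqrt g$ and $g^{\mu\nu}$, giving the manifestly covariant form \eqref{YM-L2} on $\mathbb{H}^2$.

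The one genuine obstacle is the bookkeeping in the first step: the three tensor structures $\hat x^A\hat x^j$, $\delta_j^A-\hat x^A\hat x^j$ and $\epsilon_{jAk}\hat x^k$ are mixed both by differentiation and by the non-abelian commutator, so one must repeatedly use $\partial_j\hat x^k=(\delta^{jk}-\hat x^j\hat x^k)/r$ and the contraction identity $\epsilon_{jAk}\epsilon_{jBl}=\delta_{AB}\delta_{kl}-\delta_{Al}\delta_{kB}$ to collapse every term back onto these same three structures before the angular integral can be carried out, and the combination $1+\phi_2$ must be kept intact (rather than $\phi_2$) so that the vacuum sits at $|\phi|=1$. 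Beyond that this is exactly the Forgacs--Manton symmetric-reduction computation (Witten \cite{Witten}, Manton \cite{Manton78}, Forgacs--Manton \cite{FM80}, Manton--Sutcliffe \cite{MS04}); the appearance of $\mathbb{H}^2$ rather than $\mathbb{E}^2$ is forced by the conformal rewriting \eqref{eq:3c} of Section IV.
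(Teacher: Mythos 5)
Your proposal is correct and follows exactly the Witten--Forgacs--Manton symmetric-reduction computation that the paper itself relies on: the paper's own proof of this proposition consists solely of a citation to Witten (1977), so your sketch in fact supplies more of the actual calculation than the paper does. The bookkeeping you describe --- organizing $\mathscr{F}_{\mu\nu}^A$ into the three tensor structures, extracting the $4\pi$ angular factor, tracking the powers of $r$ from the polar-coordinate metric, and collapsing the second-Chern density to the first-Chern density $\varepsilon_{\mu\nu}F_{\mu\nu}$ with the correct $1/16\pi^2$ normalization --- is the standard and correct route to both the explicit and the manifestly $\mathbb{H}^2$-covariant forms of the reduced action.
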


\begin{proof}
This result was first discovered by Witten (1977)\cite{Witten} in the course to find multi-instanton solutions of the four-dimensional $SU(2)$ Yang-Mills theory.

\end{proof}

\begin{prop}[Bogomolnyi bound]
Consider the U(1) gauge-scalar model with the parameters $q,\lambda,v$ on $\mathbb{H}^2$:
\begin{align}
\frac{S_{GS}}{4\pi} =& \int^\infty_{-\infty}dt\int^\infty_{0}dr \ \left[\frac{r^2}{4}F_{\mu\nu}^2 + |D_\mu\phi|^2 \right. \nonumber\\
& \left. + \frac{\lambda^2}{2r^2}(|\phi|^2 - v^2)^2\right] ,
\end{align}
where we defined the covariant derivative $D_\mu=\partial_\mu - iqa_\mu$.
When $\lambda = q$ ($v$ is arbitrary),  the action integral has the lower bound:
\begin{align}
\frac{S_{GS}}{4\pi} =& \int dt \int dr \ \left\{ |D_0\phi \mp iD_1\phi|^2 
\right. \nonumber\\
& \left. \quad\quad\quad\quad\quad + \frac{r^2}{2}(F_{01} \mp \frac{q}{r^2}(| \phi|^2 - v^2))^2\right\} \nonumber\\
&\mp qv^2\int dt \int dr \ F_{01} \nonumber\\
\geqq& \mp qv^2\int dt \int dr \ F_{01}.
\end{align}
Here the equality holds, namely, the action is given by
\begin{align}
\frac{S_{GS}}{4\pi} = qv^2\int dt \int dr \ F_{01} = 2\pi qv^2 N_v,
\end{align}
if and only the following pair of  vortex equations are satisfied:
\begin{align}
\left\{\,
\begin{aligned}
&D_0\phi \mp iD_1\phi = 0,\\
&F_{01}=\pm\frac{q}{r^2}(|\phi|^2 - v^2).
\end{aligned}
\right.
\label{vortex_Bogolnyi_eq}
\end{align}
Here $N_v$ is the vortex number defined by
\begin{align}
N_v := \frac{1}{2\pi}\int^{+\infty}_{-\infty}dt \ \int^{\infty}_{0}dr \ F_{01} .
\end{align}

\end{prop}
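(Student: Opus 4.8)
The plan is to run the standard Bogomol'nyi completion-of-the-square argument, being careful to keep track of the curved measure $\mathrm{d}t\,\mathrm{d}r$ and the $r$-dependent metric factors, and to use the hypothesis $\lambda=q$ at exactly the point where it is forced.

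\emph{Step 1 (gauge and potential terms).} In two dimensions $F_{\mu\nu}F_{\mu\nu}=2F_{01}^2$, so $\frac{r^2}{4}F_{\mu\nu}^2=\frac{r^2}{2}F_{01}^2$, and when $\lambda=q$ the gauge kinetic and potential pieces combine as
\[
\frac{r^2}{2}F_{01}^2+\frac{q^2}{2r^2}(|\phi|^2-v^2)^2
=\frac{r^2}{2}\Bigl(F_{01}\mp\frac{q}{r^2}(|\phi|^2-v^2)\Bigr)^2\pm q\,F_{01}(|\phi|^2-v^2).
\]
This identity uses $\lambda=q$ essentially: only then do the two squared quantities carry matching coefficients, so that the cross term is exactly $\pm q\,F_{01}(|\phi|^2-v^2)$ (for a general $\lambda$ one would instead get $\pm\lambda F_{01}(|\phi|^2-v^2)$, which does not recombine below).

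\emph{Step 2 (matter term).} Writing $D_\mu=\partial_\mu-iqa_\mu$ and using $[D_0,D_1]\phi=-iqF_{01}\phi$, a short computation — expand $|D_0\phi\mp iD_1\phi|^2$, identify the cross term as $\mp 2\,\mathrm{Im}(D_0\phi\,\overline{D_1\phi})$, and rewrite it by differentiating $\bar\phi D_\mu\phi$ — yields the pointwise identity
\[
|D_0\phi|^2+|D_1\phi|^2=|D_0\phi\mp iD_1\phi|^2\mp q\,F_{01}|\phi|^2\mp\bigl(\partial_0 J_1-\partial_1 J_0\bigr),\qquad J_\mu:=\mathrm{Im}(\bar\phi D_\mu\phi),
\]
the last term being a genuine total derivative (the $\partial_\mu|\phi|^2$ contributions cancel in the antisymmetric combination).

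\emph{Step 3 (assembly and bound).} Adding the two cross terms from Steps 1 and 2 gives $\mp q\,F_{01}|\phi|^2\pm q\,F_{01}(|\phi|^2-v^2)=\mp q v^2 F_{01}$, hence
\[
\frac{S_{GS}}{4\pi}=\int\mathrm{d}t\,\mathrm{d}r\Bigl[|D_0\phi\mp iD_1\phi|^2+\frac{r^2}{2}\Bigl(F_{01}\mp\frac{q}{r^2}(|\phi|^2-v^2)\Bigr)^2\Bigr]\mp q v^2\!\int\mathrm{d}t\,\mathrm{d}r\,F_{01}+(\text{boundary term}).
\]
Discarding the boundary term and using that the bracket is pointwise nonnegative gives $\frac{S_{GS}}{4\pi}\ge\mp q v^2\!\int F_{01}=\mp 2\pi q v^2 N_v$, the sign being chosen so that the right-hand side is nonnegative (the lower sign when $N_v>0$). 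Equality holds iff both squares vanish identically, i.e.\ iff $D_0\phi\mp iD_1\phi=0$ and $F_{01}=\pm\frac{q}{r^2}(|\phi|^2-v^2)$ — precisely the stated pair of vortex equations — in which case $\frac{S_{GS}}{4\pi}=\mp q v^2\!\int F_{01}=2\pi q v^2 N_v$. Integrality $N_v\in\mathbb{Z}$ follows because $2\pi N_v=\int F_{01}$ is the total magnetic flux, equivalently the winding number of the boundary value of $\phi$, consistent with the identification $N_v=c_1$ used earlier.

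\emph{Main obstacle.} The delicate points are the sign bookkeeping in Step 2 (landing the cross term on exactly $\mp q\,F_{01}|\phi|^2$ plus an honest total derivative) and, more substantively, justifying that the boundary term actually drops on the \emph{noncompact} hyperbolic plane: one must control the behaviour of $\phi$ and $D_\mu\phi$ both as $r\to 0$ (the conformal boundary $\partial\mathbb{H}^2$, where the reduction from the $SO(3)$-symmetric instanton pins $|\phi|\to v$) and as $|x|\to\infty$, with the correct measure, so that $\int(\partial_0 J_1-\partial_1 J_0)\,\mathrm{d}t\,\mathrm{d}r=0$. This is exactly the finite-action requirement, so it holds for the configurations of interest, but it should be stated explicitly rather than assumed.
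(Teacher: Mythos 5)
Your proof is correct and follows essentially the same route as the paper's: the standard Bogomol'nyi completion of squares, with the cross terms $\mp qF_{01}|\phi|^2$ and $\pm qF_{01}(|\phi|^2-v^2)$ cancelling precisely when $\lambda=q$ to leave $\mp qv^2F_{01}$ plus a total derivative. Your explicit flagging of the boundary-term justification on the noncompact $\mathbb{H}^2$ is a point the paper handles only by assertion, but the substance of the argument is identical.
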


\begin{proof}
The action is rewritten as 
\begin{align}
& \frac{S_{GS}}{4\pi}  
= \int dt dr \ \left[\frac{r^2}{4}F_{\mu\nu}^2 + |D_\mu\phi|^2 \right. \nonumber\\
& \left. \quad\quad\quad\quad\quad\quad\quad + \frac{\lambda^2}{2r^2}(|\phi|^2 - v^2)^2\right] \nonumber\\
=&\int dt dr \ \left[\frac{r^2}{2}F_{01}^2 + |D_0\phi|^2 +|D_1\phi|^2 \right. \nonumber\\
& \left. \quad\quad\quad\quad+ \frac{\lambda^2}{2r^2}(|\phi|^2 - v^2)^2\right] \nonumber\\
=&\int dt dr \ \left[|D_0\phi \mp iD_1\phi|^2 \right. \nonumber\\
& \left. \quad\quad\quad\quad + \frac{r^2}{2}\left[F_{01} \mp \frac{\lambda}{r^2}(|\phi|^2 - v^2)\right]^2\right] \nonumber\\
&\pm i\int dt dr \ [(D_0\phi)^*(D_1\phi) - (D_1\phi)^*(D_0\phi) \nonumber\\
&\quad\quad\quad\quad\quad\quad - i\lambda F_{01}(|\phi|^2 - v^2)].
\end{align}
The last term reads
\begin{align}
&(D_0\phi)^*(D_1\phi) - (D_1\phi)^*(D_0\phi) \nonumber\\
=&(\partial_0\phi^* + iqa_0\phi^*)(\partial_1\phi - iqa_1\phi) \nonumber\\
&- (\partial_1\phi^* + iqa_1\phi^*)(\partial_0\phi - iqa_0\phi) \nonumber\\
=&\partial_0\phi^*\partial_1\phi -\partial_1\phi^*\partial_0\phi +iqa_0\phi^*\partial_1\phi +  iqa_0\partial_1\phi^*\phi \nonumber\\
&-iqa_1\partial_0\phi^*\phi - iqa_1\phi^*\partial_0\phi \nonumber\\
=&\partial_0(\phi^*\partial_1\phi) - \partial_1(\phi^*\partial_0\phi) + \partial_1(iqa_0\phi^*\phi) \nonumber\\
&- iq\partial_1a_0\phi^*\phi - \partial_0(iqa_1\phi^*\phi) + iq\partial_0a_1\phi^*\phi \nonumber\\
=&iq(\partial_0a_1 - \partial_1a_0)|\phi|^2 + \text{boundary term},
\end{align}
which means the cancellation occurs when $\lambda = q$:
\begin{align}
&\pm i\int dt \int dr \ [(D_0\phi)^*(D_1\phi) - (D_0\phi)(D_1\phi)^*   \nonumber\\
&\quad\quad\quad\quad - i\lambda F_{01}(|\phi|^2 - v^2)] \nonumber\\
=&\pm i\int dt dr \ iqF_{01}v^2 \ (\Leftarrow \lambda = q).
\end{align}
Here we neglected the boundary term as giving the vanishing contribution:
\begin{align}
& \int dt dr \ [\partial_0(\phi^*\partial_1\phi) - \partial_1(\phi^*\partial_0\phi) 
\nonumber\\
& \quad\quad\quad\quad- \partial_0(iqa_1|\phi|^2) + \partial_1(iqa_0|\phi|^2)] = 0.
\end{align}

\end{proof}

The theory (\ref{YM-L2}) corresponds to the case of $\lambda = q=v=1$. 
These equations (\ref{vortex_Bogolnyi_eq}) are the Bogomolnyi equation for the vortex:
\begin{align}
&D_0\phi + iD_1\phi = 0,  
\label{eq:15a}\\
&F_{01}+ \frac{\Omega}{2}(|\phi|^2-1) = 0, \ \Omega := \frac{1}{r^2} .
\label{eq:16a}
\end{align}
Equations \eqref{eq:15a}, \eqref{eq:16a} are self-dual equations written using Witten Ansatz $\mathscr{F}_{\mu\nu} = {}^*\mathscr{F}_{\mu\nu} \Leftrightarrow \mathscr{F}_{0j}^A =  \frac{1}{2}\epsilon_{jk\ell}\mathscr{F}_{k\ell}^A$:
\begin{align}
&\partial_0(\phi_1 + i\phi_2) - ia_0(\phi_1 + i\phi_2)
\nonumber\\
&+ i\partial_1(\phi_1 + i\phi_2) + a_1(\phi_1 + i\phi_2) = 0 \nonumber\\
\Leftrightarrow & \quad
\left\{\,
\begin{aligned}
&\partial_0\phi_1 + a_0\phi_2 = \partial_1\phi_2 - a_1\phi_1 \\
&\partial_1\phi_1 + a_1\phi_2 = -(\partial_0\phi_2 -a_0\phi_1).
\end{aligned}
\right. \\
&F_{01} - \frac{\Omega}{2}(1 - |\phi|^2) = 0 \nonumber\\
\Leftrightarrow &\partial_0a_1 - \partial_1a_0 = \frac{1}{r^2}(1 - \phi_1^2 - \phi_2^2).
\end{align}

\begin{definition}[vortex equation]
The complex scalar field $\phi$ is rewritten in terms of $h$ and $\chi$,
\begin{align}
\phi =&\phi_1+ i \phi_2 = e^{\frac{1}{2}h+i\chi} = e^{\frac{1}{2}h}e^{i\chi} \nonumber\\
=& |\phi|e^{i\chi} , \ e^{\frac{1}{2}h} = |\phi|.
\end{align}
In other words, if $h$ is defined by
\begin{align}
h:=\log |\phi|^2 ,
\end{align}
then $h$ is gauge invariant and finite except for the zero point where $\phi = 0$. Also, $h$ is $h=0$ on the circle at infinity (corresponding to $|\phi| = 1$).

Rewriting \eqref{eq:15} gives us
\begin{align}
&\phi^{-1}\partial_0\phi - ia_0 + i\phi^{-1}\partial_1\phi + a_1 = 0 \nonumber\\
\Leftrightarrow & \partial_0\left(\frac{1}{2}h + i\chi \right) - ia_0 + i\partial_1\left(\frac{1}{2}h + i\chi\right) + a_1 = 0 \label{eq:17}
\end{align}
Therefore, the gauge fields $a_0, a_1$ can be written in terms of $h$ and $\chi$:
\begin{align}
a_0 = \frac{1}{2}\partial_1 h + \partial_0 \chi \ , \ a_1 = -\frac{1}{2}\partial_0 h + \partial_1\chi .
\end{align}
Therefore, $F_{01}$ can be written as
\begin{align}
F_{01}:=\partial_0a_1 - \partial_1a_0 = -\frac{1}{2}(\partial_0^2 + \partial_1^2)h = -\frac{1}{2}\nabla^2h.
\end{align}
and \eqref{eq:16a} reduces to an equation involving only $h$:
\begin{align}
\nabla^2h + \Omega(1 - e^h) = 0 \Leftrightarrow \nabla^2\log|\phi|^2 + \Omega(1-|\phi|^2) = 0.
\end{align}
This is gauge invariant because it does not depend on $\chi$ and is an equation involving only $h$. where $\nabla^2$ is the standard Laplacian (Laplace-Beltrami operator). This equation is called the \textbf{Taubes equation}(Taubes equation)\index{Taubes equation@Taubes equation}. 

These equations are valid except for the zeros of $\phi$, i.e., the logarithmic singularities of $h$. This singularity is captured by including a source for the Dirac delta function:
\begin{align}
\nabla^2h + \Omega - \Omega e^h = 4\pi\sum_{r=1}^N \delta^2(\bm{X} - \bm{a}_r) .
\label{eq:17}
\end{align}
Here $\{\bm{a}_r\}$ is the location vector of the zeros of $\phi$.

\end{definition}

\begin{prop}[Liouville equation]

To solve this equation, we introduce a new unknown function $\rho$ as \begin{align}
& h = 2\rho + 2\log\frac{1}{2}(1 - |z|^2) \nonumber\\
&
\Leftrightarrow \rho = \frac{1}{2}h - \log\frac{1}{2}(1 - |z|^2) \ (|z|<1)
\end{align}
The equation \eqref{eq:17} for $h$ is rewritten into the \textbf{Liouville equation}\index{Liouville equation}:
\begin{align}
\nabla^2\rho - e^{2\rho} = 2\pi\sum_{r=1}^N\delta^2(z-\rho_r) .
\label{eq:18}
\end{align}
This is \textbf{integrable}\index{integrable}, i.e. it can be solved exactly, and the solution is found:
\begin{align}
\rho(z) = -\log\frac{1}{2}(1 - |f|^2) + \frac{1}{2}\log\left|\frac{df}{dz}\right|^2.
\label{eq:19}
\end{align}
Here, $f(z)$ is any complex analytic function.

\end{prop}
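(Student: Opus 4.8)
The plan is to treat the proposition as two independent claims and verify each by direct computation: (a) the substitution $h = 2\rho + 2\log\frac12(1-|z|^2)$ turns the singular Taubes equation $\nabla^2 h + \Omega - \Omega e^{h} = 4\pi\sum_{r}\delta^2(\bm{X}-\bm{a}_r)$ into the Liouville equation \eqref{eq:18}; and (b) \eqref{eq:19} solves \eqref{eq:18} for an arbitrary holomorphic $f$. Both parts rest on two classical identities for the two-dimensional Laplacian written as $\nabla^2 = 4\partial_z\partial_{\bar z}$, so I would organise the proof around establishing those first.

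For (a), I would first record the elementary identity $\partial_z\partial_{\bar z}\log(1-z\bar z) = -(1-z\bar z)^{-2}$, which gives $\nabla^2\bigl(2\log\frac12(1-|z|^2)\bigr) = -8/(1-|z|^2)^2$ away from $|z|=1$. This is exactly $-\Omega$ for the hyperbolic conformal factor $\Omega$ of the ball model in the paper's normalization; equivalently it is the Gauss/Liouville identity $\nabla^2\log\Omega = \Omega$, which holds precisely because $\mathbb{H}^2$ here has constant curvature $-\frac12$. Substituting $h = 2\rho + 2\log\frac12(1-|z|^2)$ and $e^{h} = \frac14(1-|z|^2)^2 e^{2\rho}$ into the Taubes equation, the $\nabla^2$ of the log term cancels the inhomogeneous $+\Omega$, the term $-\Omega e^{h}$ collapses to $-2e^{2\rho}$, and the delta sources pass through unchanged; dividing through by $2$ gives $\nabla^2\rho - e^{2\rho} = 2\pi\sum_r\delta^2(z-\rho_r)$, which is \eqref{eq:18} (the $\rho_r$ being the points $\bm a_r$). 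The only delicate point here is the bookkeeping of numerical constants: one must use the paper's normalization of $\Omega$ in the ball model consistently, and the Dirac-delta normalization $\nabla^2\log|z| = 2\pi\delta^2(z)$ must be the same one used when the sources were inserted in the Taubes equation, so that the coefficient on the right of \eqref{eq:18} comes out as $2\pi$ rather than $4\pi$.

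For (b), I would verify \eqref{eq:19} by direct substitution using (i) conformal covariance of the Laplacian under a holomorphic change of variable, $\nabla^2_z(u\circ f) = |f'(z)|^2\,(\nabla^2_w u)(f(z))$ for holomorphic $f$ and smooth $u$, and (ii) the fact that $\log|f'|$ is harmonic off the zero set of $f'$ with $\nabla^2\bigl(\frac12\log|f'|^2\bigr) = 2\pi\sum_r m_r\,\delta^2(z-z_r)$, where the $z_r$ are the zeros of $f'$ with multiplicities $m_r$. Applying (i) to $u(w)=\log(1-|w|^2)$ together with the curvature identity of part (a) gives $\nabla^2\log(1-|f|^2) = -4|f'|^2/(1-|f|^2)^2$ wherever $|f|<1$. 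Since $e^{2\rho} = 4|f'|^2/(1-|f|^2)^2$ directly from \eqref{eq:19}, combining the two pieces yields $\nabla^2\rho - e^{2\rho} = 2\pi\sum_r m_r\,\delta^2(z-z_r)$, i.e.\ \eqref{eq:18} with the vortex positions identified as the ramification points of $f$. I would also append the caveat that ``any complex analytic function'' must be read as any holomorphic map with $|f|<1$ on the disk $|z|<1$, so that $\rho$ is real-valued and $1-|f|^2>0$; the remaining freedom in choosing $f$ is exactly what the word ``integrable'' refers to.

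The conceptual content is therefore entirely contained in the two classical facts (conformal covariance of $\nabla^2$ and the Gauss equation for a constant-curvature conformal factor), and the verification of \eqref{eq:19} in part (b) is essentially a one-line computation once those are in place. The main obstacle is the constant- and $\delta$-normalization bookkeeping in part (a): the paper's curvature-$(-\frac12)$ convention and the coefficients $2$ and $\frac12$ appearing in $h = 2\rho + 2\log\frac12(1-|z|^2)$ all have to be tracked carefully so that the inhomogeneous $+\Omega$ cancels cleanly and the source strength converts from $4\pi$ to exactly $2\pi$.
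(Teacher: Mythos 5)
Your derivation is correct, and it supplies an actual proof where the paper offers only a pointer to Manton--Sutcliffe; the argument you give is the standard one from that reference, so there is no methodological divergence to report. The one point you flagged as delicate --- the normalization of $\Omega$ --- does work out, but only with the conformal factor $\Omega = 8/(1-|z|^2)^2$ on the disk, i.e.\ the curvature-$(-\tfrac12)$ normalization for which $\nabla^2\log\Omega=\Omega$, exactly as you assumed. This is worth making explicit because the paper is internally ambiguous by a factor of $2$: the Poincar\'e-disk metric is quoted as $4\,dw\,d\bar w/(1-|w|^2)^2$, while the Taubes equation that the explicit solutions actually satisfy is $\nabla^2 h + \tfrac{2}{r^2}(1-e^h)=0$ (equivalently $F_{01}=\tfrac{1}{r^2}(1-|\phi|^2)$, not $\tfrac{1}{2r^2}$), and its image under the Cayley map has $\Omega_{\mathrm{eff}}=8/(1-|w|^2)^2$. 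With that choice one gets $\nabla^2h = 2\nabla^2\rho-\Omega$ and $\Omega e^h = 2e^{2\rho}$, so the inhomogeneous $+\Omega$ cancels and the equation collapses to $\nabla^2\rho - e^{2\rho} = 2\pi\sum_r\delta^2(z-\rho_r)$ with no leftover curvature term; with the literal $4/(1-|w|^2)^2$ it would not. Part (b) is likewise fine: your two identities give $\nabla^2\rho = 4|f'|^2/(1-|f|^2)^2 + 2\pi\sum_r m_r\,\delta^2(z-z_r) = e^{2\rho} + 2\pi\sum_r m_r\,\delta^2(z-z_r)$, placing the vortices at the ramification points of $f$, consistent with the paper's later identification of the vortex centres with the zeros of $df/dw$; and your caveat that $f$ must be a holomorphic self-map of the unit disk (so that $|f|<1$ and $\rho$ is real) is a genuine and needed sharpening of the phrase ``any complex analytic function.''
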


\begin{proof}
See section 7.8 of Manton and Sutcliffe (2004)\cite{MS04}.

\end{proof}



\begin{figure}[htb]
\begin{center}
\includegraphics[scale=0.2]{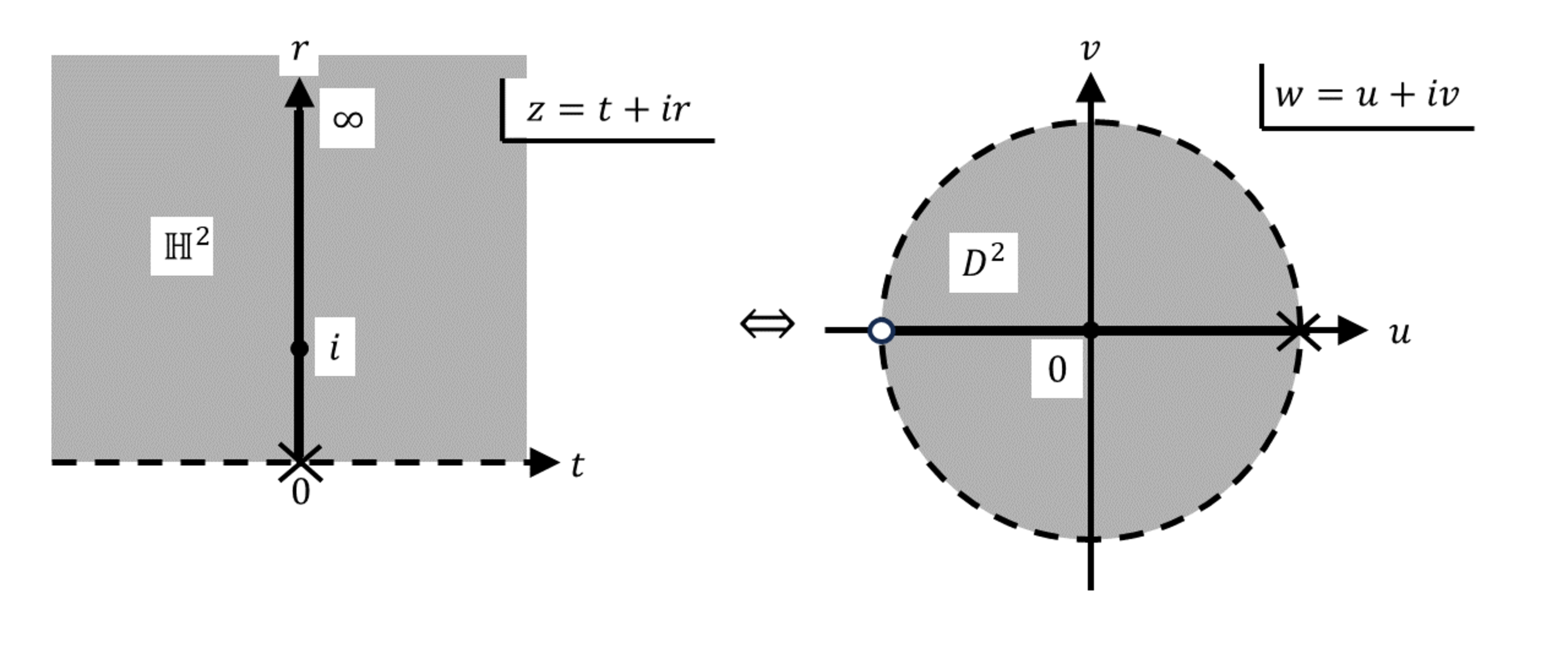}
\end{center}
\vskip -0.5cm
\caption{
The relationship between the complex number $z=t+ir \in \mathbb{H}^2$ in the upper half plane and the complex number $w=u+iv \in \mathbb{D}^2$ in the unit disk.
}
\label{H2_conf_transf}
\end{figure}

In what follows, we move on to another complex variable $w$. 
Let $z$ be the complex coordinate of the upper half-plane $\mathbb{H}^2$, and $w$ be the complex coordinate of the unit disk $D^2$. $z$ and $w$ are related by a conformal transformation or \textbf{Cayley transformation}\index{Cayley transformation}:
\begin{align}
w = \frac{i -z}{i + z} \Leftrightarrow z = i\frac{1 -w}{1 + w} .
\label{eq:9ccc}
\end{align}
See Figure\ref{H2_conf_transf}.  
This mapping maps the upper half-plane $\mathbb{H}^2$ to the interior of the unit disk $D^2$. It maps the real axis of $\mathbb{H}^2$ to the unit circle of $D^2$. In particular, it maps the point $z = i$ of $\mathbb{H}^2$ to the origin $w = 0$ of $D^2$, and the origin $z = 0$ of $\mathbb{H}^2$ to $w = 1$ of $D^2$. It maps the point at infinity $z = \infty$ to $w = -1$. It maps the positive imaginary axis of $z$ to the interval $(-1,1)$ on the real axis of $w$. This disk $D^2$ is called the \textbf{Poincar\'{e} disk}\index{Poincar\'{e} disk}.
The metric on $D^2$ is given by:
\begin{align}
(ds)^2 = \frac{4dwd\bar{w}}{(1 - |w|^2)^2}.
\end{align}
The vortex equation is invariant under this coordinate transformation \eqref{eq:9ccc}. The equation $\bar{\partial}\phi = 0$ holds because the transformation \eqref{eq:9ccc} is regular. The equation $*_hF_A = 1 - |\phi|^2$ is invariant because it is a relation between scalar quantities that are invariant under coordinate transformations.

\begin{prop}[Solutions of the vortex equation]

Therefore, the scalar field $\phi$ is found as follows:
\begin{align}
|\phi| &= e^{\frac{1}{2}h} = e^{\rho + \log\frac{1}{2}(1-|w|^2)} = \frac{1}{2}(1 - |w|^2)e^\rho \nonumber\\
&=\frac{1}{2}(1 - |w|^2)e^{-\log\frac{1}{2}(1 - |f|^2) + \log\left|\frac{df}{dw}\right|} \\
\Rightarrow& |\phi(w)| = \frac{1 - |w|^2}{1 - |f(w)|^2}\left|\frac{df(w)}{dw}\right|.
\end{align}
Note that $\phi$ is not gauge invariant, while $|\phi|$ is gauge-invariant. 
If the phase of $\phi$ is chosen appropriately,
\begin{align}
\boxed{\phi(w) = \frac{1 - |w|^2}{1 - |f(w)|^2}\frac{df(w)}{dw}}.
\end{align}
\eqref{eq:15} can be written as $D_{\bar{w}}\phi = 0$, therefore the gauge field $a_\mu$ can be found as 
\begin{align}
&D_{\bar{w}}\phi = 0 \Leftrightarrow \partial_{\bar{w}}\phi = ia_{\bar{w}}\phi \Leftrightarrow a_{\bar{w}} = -i\phi^{-1} \partial_{\bar{w}}\phi \\
\Leftrightarrow&\boxed{a_{\bar{w}} = -i\partial_{\bar{w}}\log \phi = -i\partial_{\bar{w}}\log\left(\frac{1 - |w|^2}{1 - |f(w)|^2}\right)}.
\end{align}
Here $\partial_{\bar{w}}f(w) = 0$ is used since $f(w)$ is an analytic function. The center of the vortex $\{\bm{a}_r\}$ is the point $w = a_r$, where $\frac{df(w)}{dw}=0$, or $\phi(w) = 0$.

Next, we need to ensure that $|\phi(w)| = 1$ on the boundary of the disk when $|w| = 1$, and that $\phi$ has no singularities inside the disk when $|w|< 1$. This requires that $|f(w)| = 1$ on the boundary and $|f(w)| < 1$ inside. These conditions are satisfied by choosing $f$ to have the form \textbf{Blaschke product}\index{Blaschke@Blaschke product}:
\begin{align}
\boxed{f(w) = \prod_{j=1}^{N+1}\left(\frac{w - c_j}{1 - c_j^*w}\right) \quad (|c_j|<1, \forall j)}.
\end{align}
Each factor of this product has absolute value less than 1 inside the unit disk and 1 on the boundary, and its phase is strictly increasing along the boundary. The same holds for $f(w)$. Therefore, $|\phi(w)|$ has no singularities in the unit disk, and since $\frac{df}{dw}$ is nonzero on the boundary, we see that $|\phi(w)| = 1$ there, and the radial derivative of $|\phi(w)|$ is also zero. The solution does not depend on $(N + 1)$ complex parameters, but only on $N$, because there exists a 1-parameter family of Möbius transformations of $f$ that only generate gauge transformations of $\phi$, thus keeping the zeros of $\phi$ fixed.

\end{prop}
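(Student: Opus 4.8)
The plan is to bootstrap everything from the Liouville solution already in hand, $\rho = -\log\tfrac12(1-|f|^2) + \tfrac12\log\bigl|\tfrac{df}{dz}\bigr|^2$ with $f$ an arbitrary holomorphic function, together with the dictionary $h = \log|\phi|^2$ and $h = 2\rho + 2\log\tfrac12(1-|w|^2)$ on the Poincar\'e disk. First I would simply exponentiate: $|\phi| = e^{h/2} = \tfrac12(1-|w|^2)\,e^{\rho}$, and substituting $\rho$ yields the closed form $|\phi(w)| = \frac{1-|w|^2}{1-|f(w)|^2}\bigl|\tfrac{df}{dw}\bigr|$. Passing from the upper half-plane coordinate $z$ to the disk coordinate $w$ costs nothing because the vortex equations are conformally invariant under the Cayley transformation \eqref{eq:9ccc}, as already observed, so the Liouville solution transports verbatim. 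Since only $|\phi|$ is gauge invariant while the phase of $\phi$ is pure $U(1)$ gauge, I would then \emph{fix the phase} by declaring $\phi(w) = \frac{1-|w|^2}{1-|f(w)|^2}\tfrac{df}{dw}$, i.e.\ keeping the holomorphic derivative rather than its modulus; this is legitimate precisely because every phase is reachable by a gauge transformation.

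Next I would reconstruct the connection. In the complex disk coordinate the first Bogomol'nyi equation \eqref{eq:15a} reads $D_{\bar w}\phi = 0$, hence $a_{\bar w} = -i\phi^{-1}\partial_{\bar w}\phi = -i\partial_{\bar w}\log\phi$; since $\tfrac{df}{dw}$ is holomorphic its $\partial_{\bar w}$-derivative drops out and $a_{\bar w} = -i\partial_{\bar w}\log\frac{1-|w|^2}{1-|f(w)|^2}$. The second vortex equation \eqref{eq:16a} then needs no separate check: $h=\log|\phi|^2$ was built to satisfy the Taubes equation through the Liouville ansatz, and $F_{01}=-\tfrac12\nabla^2 h$ as shown in the definition of the vortex equation, so $F_{01}+\tfrac12\Omega(|\phi|^2-1)=0$ holds automatically. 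What remains, and this is the real content, is to identify \emph{which} holomorphic $f$ are admissible.

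The admissibility analysis is where I expect the main obstacle to lie. I would impose the two physical demands: (i) $|\phi(w)|\to 1$ with vanishing normal derivative as $|w|\to 1$, so the configuration approaches the vacuum on $\partial D^2$ with finite energy; and (ii) $\phi$ is smooth on $|w|<1$ with zeros only at the $N$ vortex cores. Demand (ii) forces $|f(w)|<1$ throughout the disk (else $\tfrac{1}{1-|f|^2}$ would blow up), while demand (i) forces $|f(w)|=1$ on $\partial D^2$; by the standard fact that a holomorphic self-map of $D^2$ that sends the unit circle to itself is, up to an irrelevant unimodular constant, a finite Blaschke product, one obtains $f(w)=\prod_{j=1}^{n}\frac{w-c_j}{1-c_j^* w}$ with $|c_j|<1$. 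One then verifies, factor by factor, that $|f|<1$ inside and $|f|=1$ with strictly increasing phase on the boundary, so $|\phi|$ approaches $1$ there with vanishing radial derivative, and that $\tfrac{df}{dw}\ne 0$ on $\partial D^2$, so $|\phi|=1$ and not $0$ on the boundary. The delicate bookkeeping is that the vortices are the zeros of $\phi$, i.e.\ the critical points of $f$: by the classical fact that a degree-$n$ Blaschke product has exactly $n-1$ critical points in $D^2$ and that every such configuration of $n-1$ points is realized, uniquely up to post-composition by a disk automorphism, the choice $n=N+1$ reproduces precisely the $N$-vortex configurations. Finally I would carry out the parameter count: the $N+1$ zeros $c_j$ are the naive parameters, but post-composition $f\mapsto M\circ f$ with a M\"obius disk automorphism leaves both $|\phi|$ and every vortex position invariant (by the Schwarz--Pick identity $|M'(\zeta)|(1-|\zeta|^2)=1-|M(\zeta)|^2$) and acts on $\phi$ only by a gauge transformation, so quotienting by this redundancy leaves an $N$-dimensional moduli space in bijection with the $N$ vortex positions, completing the proof; the routine algebra can be found in Manton and Sutcliffe \cite{MS04}.
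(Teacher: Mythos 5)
Your proposal is correct and follows essentially the same route the paper intends: the paper's own ``proof'' is merely a citation to section 7.14.3 of Manton and Sutcliffe \cite{MS04}, and your argument is exactly the Liouville--Blaschke construction set up in the preceding propositions (exponentiating $h=2\rho+2\log\tfrac12(1-|w|^2)$, fixing the phase, reading off $a_{\bar w}$ from $D_{\bar w}\phi=0$, and classifying admissible $f$). The details you supply beyond the paper --- the identification of vortices with critical points of a degree-$(N+1)$ Blaschke product, and the Schwarz--Pick identity showing that post-composition by a disk automorphism leaves $|\phi|$ and the zeros of $\phi$ invariant, which yields the count of $N$ complex moduli --- are sound and consistent with the parameter count asserted in the statement.
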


\begin{proof}
See section 7.14.3 of Manton and Sutcliffe (2004)\cite{MS04}.

\end{proof}

\begin{example}[Manton and Sutcliffe (2004)\cite{MS04} section 7.14.3]
A simple example of a solution is the case where $w = 0$ has $N$ coincident zeroes, and with $c_j = 0(\forall j)$: $f(w) = w^{N+1}$. Therefore,
\begin{align}
\phi(w) =& \frac{1 - |w|^2}{1 - |w|^{2N + 2}}(N + 1)w^N \nonumber\\
=& \frac{(N+1)w^N}{|w|^{2N} +|w|^{2N-2} + \cdots +|w|^{2} + 1}.
\end{align}
This $\phi(w)$ has $N$ multiple zeros at $w = 0$, satisfies the boundary conditions, and has the winding number $N$ along the boundary. Any transformation of this solution will also have the winding number $N$ as long as all parameters $c_j$ do not cross the unit circle, i.e., it will be $N$ vortexes. 

In particular, for $N = 1$, we have 
\begin{align}
 f(w) = w^2, \ \phi(w)  = \frac{2w}{1+|w|^2} , 
\end{align}
The gauge field is obtained as
\begin{align}
 a_{\bar{w}}  =& a_4+ia_r= -i(\partial_0+i\partial_r) \log \phi \nonumber\\
 =& -i\partial_{\bar{w}}\log \phi = i\partial_{\bar{w}}\log(1+w\bar{w}) = i\frac{w}{1+|w|^2}.
\end{align}
\end{example}

The more details of the hyperbolic 1-vortex solution (note that $w$ is dimensionless) are as follows. 
The complex scalar field of the hyperbolic 1-vortex solution is written as
\begin{align}
\phi = \frac{2w}{1 + |w|^2} = \phi_1 + i\phi_2 .
\end{align}
The complex number $w$ is obtained from $z = t + ir$ by the Cayley  transformation:
\begin{align}
w = \frac{-i\lambda + z}{i\lambda + z} 
= \frac{-i\lambda + t + ir}{i\lambda + t + ir} 
= \frac{t^2 + (r^2 - \lambda^2)-i2\lambda t}{t^2 + (r + \lambda)^2},
\end{align}
which implies
\begin{align}
|w|^2 
= \frac{t^2 + (r - \lambda)^2}{t^2 + (r + \lambda)^2} 
\Rightarrow 1+|w|^2 
=  2\frac{t^2 + r^2 + \lambda^2}{t^2 + (r + \lambda)^2} .
\end{align}
Therefore, $\phi_1,\phi_2$ can be written as functions of $t,r$ as
\begin{align}
\phi_1 = \frac{t^2 + r^2 - \lambda^2}{t^2 + r^2 + \lambda^2} , \
 \phi_2 = -\frac{2\lambda t}{t^2 + r^2 + \lambda^2} .
\end{align}
Then, the magnitude $|\phi|$ of the gauge-invariant scalar field can be found as 
\begin{align}
|\phi|^2 =& \phi_1^2 + \phi_2^2 
= \frac{r^4 + 2r^2(t^2 - \lambda^2) + (t^2 + \lambda^2)^2}{(t^2 + r^2 + \lambda^2)^2} \nonumber\\
=& \frac{[t^2 + (r - \lambda)^2][t^2 + (r +\lambda)^2]}{(t^2 + r^2 + \lambda^2)^2}, 
\nonumber\\
\Rightarrow |\phi| =& \sqrt{\phi_1^2 + \phi_2^2} 
= \frac{\sqrt{[t^2 + (r - \lambda)^2][t^2 + (r +\lambda)^2]}}{t^2 + r^2 + \lambda^2}.
\end{align}
When we introduce $h$ and $\chi$ by
\begin{align}
\phi = e^{\frac{1}{2}h + i\chi} \Rightarrow e^{\frac{1}{2}h} = |\phi| \ , \ e^{i\chi} = \frac{\phi}{|\phi|},
\end{align}
$h$ is written as 
\begin{align}
\frac{h}{2} =& \ln |\phi| =  \ln \frac{\sqrt{(t^2 + \lambda^2)^2 + 2(t^2 - \lambda^2)r^2 + r^4}}{t^2 + r^2 + \lambda^2} \nonumber\\
=&\frac{1}{2}\ln [(t^2 + \lambda^2)^2 + 2(t^2 - \lambda^2)r^2 + r^4] 
\nonumber\\
&- \ln(t^2 + r^2 + \lambda^2).
\end{align}
This leads to the result:
\begin{align}
(\partial_t^2 + \partial_r^2)\frac{h}{2} = -\frac{4\lambda^2}{(t^2 + r^2 + \lambda^2)^2}.
\end{align}
On the other hand, $\chi$ is written as 
\begin{align}
e^{i\chi} =& \frac{t^2 + r^2 - \lambda^2}{\sqrt{(t^2 + \lambda^2)^2 + 2(t^2 - \lambda^2)r^2 + r^4}} 
\nonumber\\
&+ i\frac{-2\lambda t}{\sqrt{(t^2 + \lambda^2)^2 + 2(t^2 - \lambda^2)r^2 + r^4}}
\nonumber\\
\Rightarrow \chi=&\arctan \frac{-2\lambda t}{t^2 + r^2 - \lambda^2} .
\end{align}
Therefore, we calculate the $U(1)$ gauge field
\begin{align}
a_0 =& \partial_1\frac{h}{2} + \partial_0 \chi = \frac{2\lambda(t^2 - r^2 + \lambda^2)}{(t^2 + r^2 + \lambda^2)[t^2 + (r + \lambda)^2]}, \\
a_1 =& -\partial_0\frac{h}{2} + \partial_1 \chi = \frac{4\lambda tr}{(t^2 + r^2 + \lambda^2)[t^2 + (r + \lambda)^2]}.
\end{align}
Therefore, the strength of the $U(1)$ gauge field is
\begin{align}
F_{01}:=&\partial_0a_1 - \partial_1a_0 = \frac{4\lambda^2}{(t^2 + r^2 + \lambda^2)^2} \nonumber\\
=& -(\partial_t^2 + \partial_r^2)\frac{h}{2}.
\end{align}
On the other hand, we find
\begin{align}
1 - |\phi|^2 = \frac{4\lambda^2 r^2}{(t^2 + r^2 + \lambda^2)^2}.
\end{align}
Therefore, it indeed satisfies the vortex equation:
\begin{align}
F_{01} = \frac{1}{r^2}(1 - |\phi|^2).
\end{align}
Here, $\chi$ corresponds to the degree of freedom of the $U(1)$ gauge transformation. Therefore, the gauge-invariant $F_{01}$ depends on $h$, but not on $\chi$.
See Fig.~\ref{F_vortex}. 

The functions $a_0, a_1, \phi_1, \phi_2$ are obtained by directly solving the two-dimensional vortex equations and they are also obtained by dimensional reduction from the four-dimensional superpotential $\Xi_4$. 
Therefore, the need to agree with each other and they may differ by a certain gauge transformation, since they are not gauge invariant. However, since $F_{01}$ and $|\phi|$ are gauge invariants, the two must be the same.

\begin{figure}[htb]
\begin{center}
\includegraphics[scale=0.6]{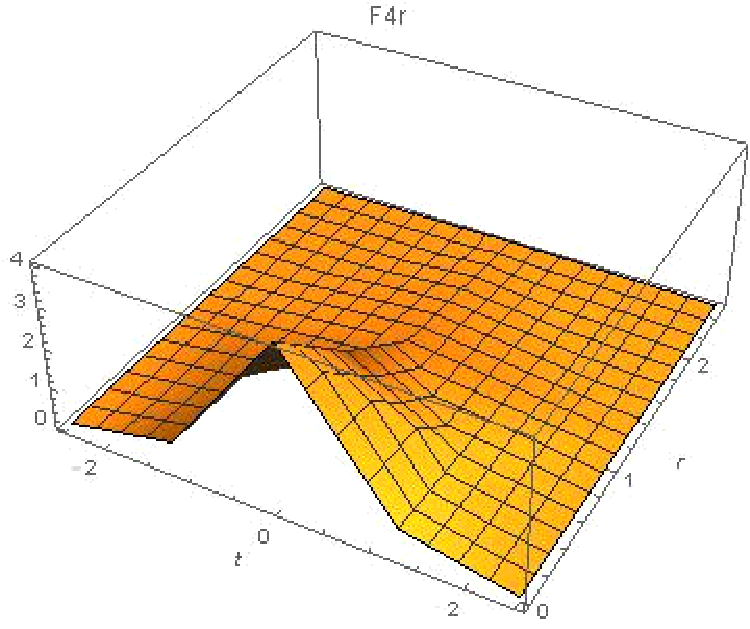}
\includegraphics[scale=0.6]{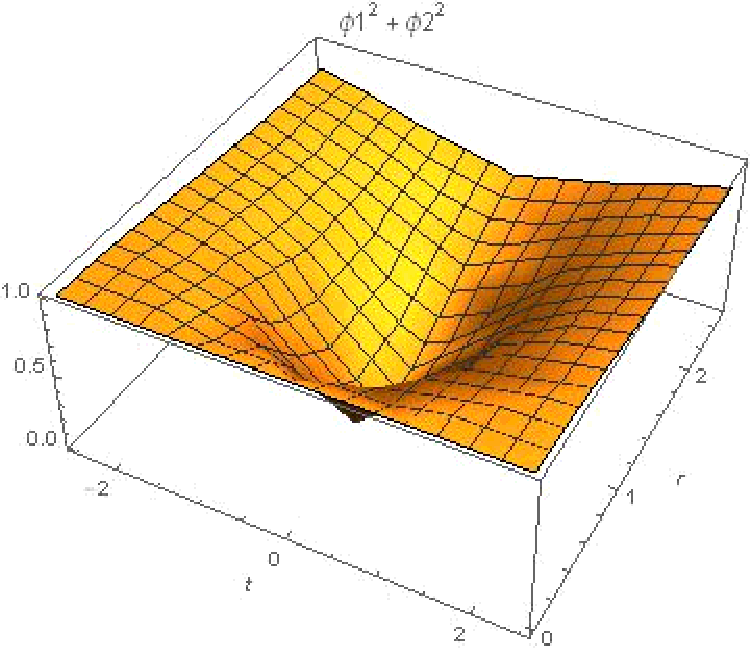}
\end{center}
\caption{
The 1-vortex solution with the center at $(t,r)=(0,1)$ and the size $\lambda=1$. The distribution of gauge-invariant quantities . 
(Left) field strength $F_{01}(t,r)$, (Right) $|\phi(t,r)|^2$. 
}
\label{F_vortex}
\end{figure}

We see that the remaining vortex equations are also satisfied:
\begin{align}
\partial_0\phi_1 + a_0\phi_2 =& \frac{8\lambda^2 tr(r + \lambda)}{(t^2 + r^2 + \lambda^2)^2[t^2 + (r + \lambda)^2]} 
\nonumber\\
=& \partial_1\phi_2 - a_1\phi_1 , 
\\
\partial_1\phi_1 + a_1\phi_2 =& -\frac{4\lambda^2[t^2 - (r + \lambda)^2]r}{(t^2 + r^2 + \lambda^2)^2[t^2 + (r + \lambda)^2]}
\nonumber\\
=& -\partial_0\phi_2 + a_0\phi_1 .
\end{align}
Substituting the vortex solution $\phi$ into the  relationship between the scalar field $\phi(x^4,r)$ of the hyperbolic vortex and the scalar field $\Phi(x^4,x^3,\rho)$ of the hyperbolic magnetic monopole ($r^2 = \rho^2 + (x^3)^2$)
\begin{align}
  \| \Phi(x^4,x^3,\rho) \|^2 = \frac{\rho^2|\phi(x^4,r)|^2 + (x^3)^2}{4r^2} ,
\label{eq:15}
\end{align}
we obtain the equation obtained by directly solving the hyperbolic magnetic monopole equation:
\begin{align}
\| \Phi(x^4,x^3,\rho) \|^2 =& \frac{R^2}{(1 + R^2)^2} , \nonumber\\
 R^2 =& \frac{(x^3)^2 + t^2 + (\rho - \lambda)^2}{(x^3)^2 + t^2 + (\rho + \lambda)^2}.
\end{align}
From \eqref{eq:15}, the zero point of $\Phi$ occurs when $x^3 = 0$ and $\phi(x^4,r) = 0$ $(r = \rho)$. At $x^3 = 0$ (equatorial plane), $\| \Phi\|$ and $|\phi|$ are proportional:
\begin{align}
\| \Phi(t,x^3,\rho) \| = \frac{1}{2}|\phi(t,r)| \ (\text{on} \ x^3 = 0) .
\end{align}
By using another expression
\begin{align}
\| \Phi \|^2 = \frac{1}{4} - \frac{\lambda^2\rho^2}{[t^2 + (x^3)^2 + \rho^2 + \lambda^2]^2}.
\end{align}
it is easy to see that the boundary value of $\Phi$ is given by
\begin{align}
\| \Phi\| \to v =\frac{1}{2} \quad (\rho \to 0).
\end{align}


\section{Holographic principle: bulk and boundary correspondence
}

Magnetic monopoles first emerged from the Dirac work (Dirac(1931)\cite{Dirac}) on quantum electromagnetism, which is a singular solution of the Maxwell equation. The Bogomolny-Prasad-Sommerfield (BPS) magnetic monopole (Prasad-Sommerfield(1975)\cite{PS75}) is a generalization of the Dirac monopole to non-Abelian gauge theories, and is described by a smooth field with no singularities. How does the BPS monopole look at long distances, especially on the sphere at infinity? It is not necessarily true that it should look like the Dirac monopole.

When the metric of $\mathbb{R}^3$ is Euclidean, monopoles look exactly like Dirac monopoles on the sphere at infinity.
Similar to Dirac monopoles, all Euclidean monopoles look the same at infinity, except for their charge. However, the hyperbolic BPS monopoles, in contrast to Euclidean monopoles, take many different values on the sphere at infinity.


\begin{definition}[Anti de-Sitter space]
Consider a hypersurface in $(\nu + 3)$-dimensional space-time $\mathbb{R}^{2,\nu+3}$ with two times 
\begin{equation}
X_0^2 + X_{\nu + 2}^2 = X_1^2 + X_2^2 + \cdots + X_{\nu + 1}^2 - R^2 \label{eq:8c}
\end{equation}
given by the metric
\begin{equation}
(ds)^2 = (dX_0)^2 + (dX_{\nu + 2})^2 - (dX_1)^2 - \cdots -(dX_{\nu+1})^2
\end{equation}
which we call the $(\nu + 2)$-dimensional \textbf{anti de-Sitter space} and write $AdS_{\nu + 2}$, where the constant $R$ represents the radius (size of the space). 

To solve the constraints in \eqref{eq:8c}, we introduce new coordinates $(t,\rho,\Omega_1,\cdots,\Omega_{\nu + 1})$:
\begin{align}
X_0 =&R\cosh\rho\cos t , \ 
X_{\nu + 2} =  R\cosh\rho\sin t , \nonumber\\
X_k =& R\sinh\rho\Omega_k \ (k = 1,2,\cdots,\nu + 1).
\end{align}
Here, $(\Omega_1,\Omega_2,\cdots,\Omega_{\nu + 1})$ represent the coordinates of the $\nu$-dimensional unit sphere $S^\nu$. In this case, the metric is expressed as
\begin{align}
(ds)^2 = R^2(\cosh^2\rho(dt)^2 - (d\rho)^2 - \sinh^2\rho d\Omega_\nu^2)
\end{align}
Here, $d\Omega_\nu^2$ is the metric of the $\nu$-dimensional unit sphere. When time $x_0$ is Euclideanized, it becomes the hyperboloid $\mathbb{H}_{\nu + 2}$.
\end{definition}

The \textbf{Poincar\'{e} coordinate}\index{Poincar\'{e} coordinate}  $(x_0,z,x_1,\cdots,x_\nu)$ defined by
\begin{align}
X_0 &= \frac{z}{2}\left(1 + \frac{R^2 + \bm{x}^2 - x_0^2}{z^2}\right) , \nonumber\\
 X_j &=  R\frac{x_j}{z}  \ (j = 1,2,\cdots,\nu), 
\nonumber\\
X_{\nu + 1} &= \frac{z}{2}\left(1 - \frac{R^2 - \bm{x}^2 + x_0^2}{z^2}\right) , \nonumber\\
 X_{\nu + 2} &=  R\frac{x_0}{z}.
\end{align}
gives the metric:
\begin{align}
(ds)^2 = R^2\frac{(dz)^2 - (dx_0)^2 + \sum_{j = 1}^\nu(dx_j)^2}{z^2} .
\end{align}
Euclideanization is formally possible by replacing $x_0 \to ix_4$ to give the metric of the hyperbolic space $\mathbb{H}^{\nu+1}$:
\begin{align}
(ds)^2 = R^2\frac{(dz)^2 + (dx_4)^2 + \sum_{j = 1}^\nu(dx_j)^2}{z^2}.
\end{align}

\begin{example}[$AdS_3$]
Fore $\nu = 1$, $AdS_3$ is characterized by the metric with the constraint:
\begin{align}
&(ds)^2 = (dX_0)^2 + (dX_3)^2 - (dX_1)^2 - (dX_2)^2, \nonumber\\
&X_0^2 + X_3^2 = X_1^2 + X_2^2 - R^2.
\end{align}
The constraint is solved by introducing the new coordinates:
\begin{align}
&X_0 = R\cosh\rho \ \cos t , X_3 = R\cosh\rho  \ \sin t, \nonumber\\
&X_1 = R\sinh\rho \ \Omega_1 , X_2 = R\sinh\rho \ \Omega_2, 
\end{align}
and the metric is written as
\begin{align}
&(ds)^2 = R^2[\cosh^2\rho (dt)^2 - (d\rho)^2 - \sinh^2\rho \ d\Omega^2]. 
\end{align}
The Poincar\'{e} coordinate 
\begin{align}
&X_0 = \frac{z}{2}\left(1 + \frac{R^2 + x_1^2 -x_0^2}{z^2}\right) , X_1 = R\frac{x_1}{z}, \nonumber\\
&X_2 = \frac{z}{2}\left(1 - \frac{R^2 - x_1^2 +x_0^2}{z^2}\right) , X_3 = R\frac{x_0}{z}, 
\end{align}
leads to the metric:
\begin{align}
(ds)^2=R^2\frac{(dz)^2 - (dx_0)^2 + (dx_1)^2}{z^2}.
\end{align}
The metric of $\mathbb{H}^{3}$ is obtained by the Euclidean rotation:
\begin{align}
(ds)^2=R^2\frac{(dz)^2 + (dx_0)^2 + (dx_1)^2}{z^2}.
\end{align}

\end{example}%

\begin{example}[$AdS_2$]
Fore $\nu = 0$, $AdS_2$ is characterized by the metric with the constraint:
\begin{align}
&(ds)^2 = (dX_0)^2 + (dX_2)^2 - (dX_1)^2, \nonumber\\
&X_0^2 + X_2^2 = X_1^2 - R^2.
\end{align}
The constraint is solved by introducing the new coordinates:
\begin{align}
&X_0 = R\cosh\rho \cos t , X_1 = R\sinh\rho \Omega_1, 
\nonumber\\
&X_2 = R\cosh\rho \sin t,
\end{align}
and the metric is written as
\begin{align}
&(ds)^2 = R^2[\cosh^2\rho(dt)^2 - (d\rho)^2].
\end{align}
The Poincar\'{e} coordinate 
\begin{align}
&X_0 = \frac{z}{2}\left(1 + \frac{R^2 -x_0^2}{z^2}\right) , 
X_1 = \frac{z}{2}\left(1 - \frac{R^2 +x_0^2}{z^2}\right) , 
\nonumber\\
&X_2 = R\frac{x_0}{z}
\end{align}
leads to the metric:
\begin{align}
(ds)^2=R^2\frac{(dz)^2 - (dx_0)^2}{z^2}.
\end{align}
The metric of $\mathbb{H}^{2}$ is obtained by the Euclidean rotation:
\begin{align}
(ds)^2 = R^2\frac{(dz)^2 + (dx_4)^2}{z^2}.
\end{align}

\end{example}

Before presenting the general result, it is useful to see the difference between the hyperbolic and the Euclidean Dirac magnetic monopoles. 

\begin{example}[Hyperbolic Dirac monopoles]
Euclidean BPS monopoles all look like Dirac monopoles at infinity. Hyperbolic monopoles, on the other hand, detect the difference between BPS monopoles and Dirac monopoles. 
For the Euclidean metric, the magnetic field $\bm{B}(\bm{r})$ at $\bm{r}\in\mathbb{R}^3$ induced by a magnetic charge $Q_m$ at the origin  is given by
\begin{align}
\bm{B}(\bm{r}) = \frac{Q_m}{4\pi}\frac{\hat{r}}{r^2} = \frac{Q_m}{4\pi}\frac{\bm{r}}{r^3}.
\end{align}
More generally, the magnetic field $\bm{B}(\bm{r})$ at $\bm{r}$ generated  by a magnetic charge $Q_m$ at a point $\bm{a}\in\mathbb{R}^3$ is given by
\begin{align}
\bm{B}(\bm{r}) = \frac{Q_m}{4\pi}\frac{\bm{r} - \bm{a}}{|\bm{r}-\bm{a}|^3} .
\end{align}
Considering the point at infinity $r = |\bm{r}| \to \infty$, we have
\begin{align}
r^2\bm{B}(\bm{r}) = \frac{Q_m}{4\pi}\frac{r^2}{|\bm{r}-\bm{a}|^2}\frac{\bm{r} - \bm{a}}{|\bm{r}-\bm{a}|} \xrightarrow[r\to\infty]{} \frac{Q_m}{4\pi}\hat{\bm{r}} .
\end{align}
This implies that
\begin{align}
\bm{B}(\bm{r})\xrightarrow[r\to\infty]{} \frac{Q_m}{4\pi}\frac{\hat{\bm{r}}}{r^2} ,
\end{align}
 does not depend on $\bm{a}$. Therefore, monopoles appear to be symmetrically distributed on the infinitely distant sphere $S_\infty^2$. A collection of $N$ monopoles gives a magnetic field
\begin{align}
\bm{B}(\bm{r}) \to N\frac{Q_m}{4\pi}\frac{\hat{\bm{r}}}{r^2} .
\end{align}
From the above, Euclidean Dirac monopoles cannot be distinguished from a distance.
\end{example}

\begin{prop}[Hyperbolic Dirac monopole]
Hyperbolic Dirac monopoles are determined by their asymptotic values.
\end{prop}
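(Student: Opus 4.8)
The plan is to reduce the statement to an explicit computation of the boundary behaviour of the hyperbolic Green's function, running exactly parallel to the Euclidean Example above. A hyperbolic Dirac monopole of charge $Q_m$ located at a point $a\in\mathbb{H}^3$ is the magnetostatic field $\mathscr{B}=-\operatorname{grad}\psi_a$, equivalently the closed $2$-form $\mathscr{F}=*_{\mathbb{H}^3}d\psi_a$ with $\oint\mathscr{F}=Q_m$ on a small geodesic sphere around $a$, where the magnetic scalar potential obeys the hyperbolic Poisson equation $\Delta_{\mathbb{H}^3}\psi_a=Q_m\,\delta_a$. First I would solve this in geodesic polar coordinates centred at $a$: the radial harmonic equation $\sinh^{-2}d\,\partial_d(\sinh^2 d\,\partial_d\psi)=0$, together with the source normalisation $\psi\sim-Q_m/(4\pi d)$ as $d\to0$ and the boundary condition $\psi\to0$ at $\partial\mathbb{H}^3$, forces
\begin{equation}
\psi_a(x)=\frac{Q_m}{4\pi}\bigl(1-\coth d_{\mathbb{H}^3}(x,a)\bigr)=Q_m\,G_{\mathbb{H}^3}(x,a),
\end{equation}
the hyperbolic Green's function. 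Conceptually, an isometry of $\mathbb{H}^3$ carrying $a$ to the centre turns this into the standard $SO(3)$-symmetric Dirac monopole with uniform boundary density, while acting on $\partial\mathbb{H}^3\cong S^2$ as a M\"obius transformation that is not an $S^2$-isometry unless it fixes the centre; it is this M\"obius distortion of the boundary that makes the asymptotic data position-dependent.

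Next I would extract the boundary datum on $\partial\mathbb{H}^3\cong S^2$ using the ball model, where $\cosh d_{\mathbb{H}^3}(x,a)=1+2|x-a|^2/\bigl((1-|x|^2)(1-|a|^2)\bigr)$. Letting $x\to\xi\in S^2$ gives $e^{-2d_{\mathbb{H}^3}(x,a)}\sim\tfrac1{16}(1-|x|^2)^2\bigl((1-|a|^2)/|\xi-a|^2\bigr)^2$, so that with $\coth d-1\sim2e^{-2d}$,
\begin{equation}
\psi_a(x)\;\sim\;-\frac{Q_m}{32\pi}\,(1-|x|^2)^2\,P(\xi;a),\qquad P(\xi;a):=\Bigl(\frac{1-|a|^2}{|\xi-a|^2}\Bigr)^{2},
\end{equation}
where $P(\cdot;a)$ is precisely the hyperbolic Poisson kernel of $\mathbb{H}^3$. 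Stripping the universal conformal factor $(1-|x|^2)^2$ identifies the boundary flux density of the monopole, in the natural conformal frame on $S^2$, with $P(\xi;a)\,d\Omega(\xi)$, normalised so that $\int_{S^2}P(\xi;a)\,d\Omega(\xi)$ is the constant giving total flux $Q_m$ independently of $a$. This is the precise meaning of the "asymptotic value'' of the monopole: the function $\xi\mapsto P(\xi;a)$ on $S^2$.

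Finally I would prove injectivity of the map $a\mapsto P(\cdot;a)$, which is the whole content of the proposition. For $a\neq0$ the kernel $P(\xi;a)$ attains its maximum over $\xi\in S^2$ at the endpoint $\xi_*=a/|a|$ of the geodesic ray from the centre through $a$, with maximal value $\bigl((1+|a|)/(1-|a|)\bigr)^{2}$, which is strictly increasing in $|a|\in[0,1)$; hence both the direction $a/|a|$ and the modulus $|a|$ (equivalently, $a$ itself, e.g. via the first moment of the density) are recovered from the asymptotic value, while $a=0$ is the unique source with uniform density $P\equiv1$. Therefore distinct hyperbolic Dirac monopoles have distinct boundary data. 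This is in sharp contrast with the Euclidean Example above, where $r^2\mathscr{B}\to(Q_m/4\pi)\hat{\bm{r}}$ independently of the source position, so that Euclidean Dirac monopoles of equal charge are indistinguishable at infinity. The main obstacle is the middle step: fixing unambiguously what "asymptotic value'' means — the choice of conformal representative on $\partial\mathbb{H}^3$ and which component of $\mathscr{F}$ survives the boundary limit — and justifying the distance asymptotics cleanly enough that the boundary datum is identified with the Poisson kernel $P(\cdot;a)$ rather than some other power of $(1-|x|^2)/|\xi-a|^2$; once that identification is secure, the injectivity argument is elementary.
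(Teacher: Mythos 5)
Your proposal is correct and follows essentially the same route as the paper's proof (which cites Norbury): your boundary density $P(\xi;a)=\bigl((1-|a|^2)/|\xi-a|^2\bigr)^2$ is exactly the paper's scaling factor $\lim_{r\to\infty}\sinh^2 r/\sinh^2 d(x,a)$, i.e.\ the symmetric measure pushed forward by the conformal boundary action of the isometry of $\mathbb{H}^3$ taking $0$ to $a$. The only substantive addition is that you make the injectivity of $a\mapsto P(\cdot;a)$ explicit (via the location and value of its maximum on $S^2$), a step the paper asserts without computation.
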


\begin{proof}
This result was obtained by Norbury(1999)\cite{Norbury99}.
Let $d(x,a)$ be the hyperbolic distance between point $x\in\mathbb{H}^3$ and a given point $a$.
The magnetic field $\bm{B}$ at point $r = d(x,0)$ of a hyperbolic Dirac monopole generated by a magnetic charge $Q_m$ at the origin  is given by
\begin{align}
\bm{B}(x) = \frac{Q_m}{4\pi}\frac{\hat{\bm{r}}}{\sinh^2(r)} .
\end{align}
When a magnetic charge $Q_m$ is put at the point $x = a$, the generated magnetic field $\bm{B}$ at point $x$ of  the hyperbolic Dirac monopole  is given by
\begin{align}
\bm{B}(x) = \frac{Q_m}{4\pi}\frac{\hat{\nu}}{\sinh^2(d(x,a))}.
\end{align}
Here  $\hat{\nu}$ is the unit vector pointing away from $a$ along the geodesics connecting $x$ and $a$. The asymptotic value of each of these monopoles is given by the outward normal vector of $S_\infty^2$ scaled by 
\begin{align}
\lim_{r \to \infty}\frac{\sinh^2 r}{\sinh^2 d(x,a)}.
\end{align}
It uniquely determines the monopole, since it simply gives the symmetric measure transformed by a conformal transformation that brings $0$ of $S_\infty^2$ to $a$, induced by the isometry of $\mathbb{H}^3$.
\end{proof}

It was rigorously shown that \textbf{the holographic principle applies to hyperbolic magnetic monopoles in the hyperbolic space $\mathbb{H}^3$}. In contrast, it should be noted that \textbf{the holographic principle does not apply to magnetic monopoles in the flat Euclidean space $\mathbb{E}^3$, even though they are regarded as the infinite mass limit of magnetic monopoles in $\mathbb{H}^3$}. In flat space, the holographic images of any two magnetic monopoles with the same magnetic charge on $S^2$ are identical and indistinguishable.
(`t Hooft (1993)\cite{Hooft93}, Susskind (1995)\cite{Susskind95})

\begin{prop}[Bulk/boundary correspondence of $\mathbb{H}^3$]
A magnetic monopole on hyperbolic space $\mathbb{H}^3$ is completely determined by its asymptotic boundary value (the value of the boundary at infinity $\partial\mathbb{H}^3$), apart from the gauge equivalence. 
This situation is in sharp contrast with the Euclidean case in which all monopole have the same boundary values. 
\end{prop}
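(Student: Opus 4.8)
The plan is to reduce the statement to a finite-dimensional injectivity claim about ADHM/JNR data, using the conformal correspondence $\mathbb{R}^4\backslash\mathbb{R}^2\simeq\mathbb{H}^3\times S^1$ established above, and then to contrast with the explicit Euclidean asymptotics already computed. First I would treat the integer-mass case $2v\in\mathbb{Z}$: a hyperbolic magnetic monopole of charge $Q_m=N$ on $\mathbb{H}^3$ lifts, by the conformal equivalence and Atiyah's relation $I=2vQ_m$, to an $S^1$-invariant self-dual Yang--Mills instanton on $S^4=\mathbb{R}^4\cup\{\infty\}$ whose circle action has fixed-point set exactly $S^2\cong\partial\mathbb{H}^3$. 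Such an instanton is completely determined, up to gauge, by its ADHM data, and the essential point is that $S^1$-invariance constrains this data so that it becomes equivalent to holomorphic data supported on the fixed sphere $S^2=\partial\mathbb{H}^3$. For the JNR subfamily this is transparent: the superpotential is $\Xi=\sum_{j=0}^{N}\lambda_j^{2}/(|x^4+ix^3-\gamma_j|^2+\rho^2)$ with all poles $\gamma_j$ forced onto the boundary plane $\partial\mathbb{H}^3$, so $\{(\gamma_j,\lambda_j)\}$ is literally boundary data, determined up to the overall rescaling $\lambda_j\mapsto c\,\lambda_j$ which affects $\mathscr{A}_\mu^A=\bar{\eta}_{\mu\nu}^A\partial_\nu\ln\Xi$ only through a gauge transformation. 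Since the JNR Ansatz exhausts all hyperbolic monopoles for $N\le 3$ (parameter count $(2+1)(N+1)\ge(3+1)N$), and the general case is reached by the same boundary-scattering picture, it suffices to recover $\{(\gamma_j,\lambda_j)\}$ modulo rescaling from the asymptotics of the fields near $\partial\mathbb{H}^3$.

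Next I would extract the boundary value explicitly. As $\rho\to 0$ the gauge-invariant quantities — $\|\Phi\|^2$ in the form given in terms of $\Xi$, and the abelian magnetic flux $F_{4r}$ on the two boundary copies of $\mathbb{H}^2$ — possess a well-defined limit on $\partial\mathbb{H}^3\setminus\{\gamma_j\}$ with prescribed singularities precisely at the points $\gamma_j$, the strengths of these singularities fixing the ratios $\lambda_j^2$. Equivalently one may characterize the boundary value by the holonomy of the monopole connection along the geodesic semicircles of $\mathbb{H}^3$ that terminate on $\partial\mathbb{H}^3$ (as in the Remark above). Reading off the singular set recovers the positions $\gamma_j$, and the residues recover the weights $\lambda_j$ up to the common scale; hence the boundary value determines the monopole up to gauge. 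The non-integer case $2v\notin\mathbb{Z}$ is then handled by applying the same argument to the symmetric but singular instantons with branch-type holonomy around $S^2$ constructed in Appendices B and C, or alternatively by continuity in $v$. Finally, the sharp contrast with the Euclidean case is exactly the computation in the Example on hyperbolic Dirac monopoles: there $r^2\bm{B}(\bm{r})\to(Q_m/4\pi)\hat{\bm{r}}$ independently of the core locations $\bm{a}$, so every Euclidean monopole of a given charge $Q_m$ has the same boundary value and therefore cannot be reconstructed from it.

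The hard part is making the second step rigorous: proving that the $S^1$-symmetric ADHM data genuinely reduces to data on $\partial\mathbb{H}^3$ and that the assignment (monopole)$\mapsto$(boundary asymptotics) is injective. For integer $v$ this is in essence the known boundary-scattering correspondence for hyperbolic monopoles, and for general $v$ it relies on the rigidity of hyperbolic Dirac monopoles (Norbury's result quoted above) together with unique continuation for the elliptic, real-analytic Bogomol'nyi equation. Within the scope of the present paper one may either invoke these facts or content oneself with the explicit JNR verification for small $N$ given above.
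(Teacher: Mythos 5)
Your overall strategy coincides with the paper's: the paper's proof of this proposition is essentially a citation to Braam--Austin (1990) and Murray--Singer (1996), resting on exactly the reduction you describe --- regard the hyperbolic monopole as an $S^1$-invariant instanton on $S^4$ and apply the $S^1$-equivariant ADHM construction (leading to discrete Nahm data), with Norbury's work covering non-integer $2v$. So to the extent that you, like the paper, ultimately defer the injectivity of the map (monopole) $\mapsto$ (boundary value) to those references, the proposal is acceptable, and your closing contrast with the Euclidean case via $r^2\bm{B}\to (Q_m/4\pi)\hat{\bm{r}}$ is the same observation the paper makes in its hyperbolic-Dirac-monopole example.

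The one piece of genuinely independent content you add --- recovering the JNR data $\{(\gamma_j,\lambda_j)\}$ from ``prescribed singularities'' of the boundary values at the points $\gamma_j$ --- does not work as stated. From the paper's formula
\begin{align}
||\Phi||^2 = \frac{\rho^2}{4\Xi^2}\left[\left(\frac{\partial \Xi}{\partial x^4}\right)^2 + \left(\frac{\partial \Xi}{\partial x^3}\right)^2 + \left(\frac{\Xi}{\rho} + \frac{\partial \Xi}{\partial \rho}\right)^2\right]
\end{align}
one checks that $||\Phi||\to v=\tfrac12$ as $\rho\to 0$ \emph{uniformly}, including along approaches to the poles $\gamma_j$ (the would-be divergences from $\Xi\sim\lambda_j^2/s^2$ cancel exactly, leaving $||\Phi||^2\to\frac{(|u|^2+\rho^2)^2}{4s^4}=\frac14$). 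So the boundary value of $||\Phi||$ is constant and carries no singular set from which to read off $\gamma_j$, let alone residues fixing $\lambda_j^2$. The information distinguishing monopoles sits in the smooth profile of the abelianized boundary connection $F_{4r}$ (equivalently in the holomorphic scattering data along geodesics), as the paper's 1-vortex example $F_{01}=4\lambda^2/(t^2+r^2+\lambda^2)^2$ illustrates; showing that this smooth profile determines the data injectively is precisely the nontrivial content of Braam--Austin, and it cannot be replaced by a residue-reading argument. If you keep the JNR discussion, recast it as recovering the data from the boundary abelian field or the $n$-point scattering functions rather than from singularities of $||\Phi||$, and be explicit that the injectivity is being imported from the cited theorems rather than established by the parameter count $(2+1)(N+1)\ge(3+1)N$, which only bounds dimensions and proves nothing about injectivity.
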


\begin{proof}
This result was proved by Peter Braam and David Austin (Braam and Austin (1990)\cite{BA90}).
A magnetic monopole on hyperbolic space can be regarded as an $S^1$-invariant instanton on $S^4$, therefore the ADHM construction of instantons can be used (Atiyah-Drinfeld-Hitchin-Manin(1978)\cite{ADHM78}). In fact, this proposition was proved using the ADHM construction in the presence of group actions, i.e., the $S^1$-equivariant ADHM construction.
See Appendix B and C for the details on the general ADHM construction and $S^1$-equivariant ADHM construction.

This proposition was first shown in the $SU(2)$ case by introducing \textit{discrete Nahm data} using algebraic geometry techniques when the ``mass'' $v := ||\Phi||_\infty$ takes half-integer values. 
The ADHM matrices satisfy certain equations and in the presence of the circle action these equations break up into difference equations for matrices of size specified by the monopole charge and labelled by an index specified by the mass. 
These equations are discretization of the Nahm equation, discrete Nahm equation (Nahm(1982)\cite{Nahm82}). 

Later, this approach was generalized to $SU(N)$ case by Michael Murray and Michael Singer (Murray and Singer (1996))\cite{MS96}.

See also Donaldson(1984)\cite{Donaldson84},
Hitchin(1983)\cite{Hitchin83}, 
Ward(1998)\cite{Ward98},
Jaffe and Taubes(1980)\cite{JT80} for the background and a recent review  
(Chan(2017))\cite{Chan17}.

\end{proof}

\begin{prop}[Holography of hyperbolic magnetic monopoles]
On the conformal boundary 2-sphere $S_\infty^2$ in the hyperbolic space $\mathbb{H}^3$, BPS hyperbolic magnetic monopoles take many different values.  
In other words, they can be distinguished from each other by their asymptotic values. 
A hyperbolic monopole in $\mathbb{H}^3$ are completely determined by its holographic image on the conformal boundary two-sphere $S_\infty^2$. 
This is different from the case of Euclidean monopoles.

\end{prop}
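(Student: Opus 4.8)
The plan is to assemble the statement out of the two ingredients already established above: the bulk/boundary correspondence of Braam--Austin (with its $SU(N)$ extension by Murray--Singer), which supplies injectivity of the holographic map, and the hyperbolic Dirac-monopole computation following Norbury, which shows that this map is genuinely non-constant, so that the holographic images of distinct monopoles are distinct and ``many different values'' are actually realised. Concretely I must prove (i) that the assignment ``hyperbolic BPS monopole $\mapsto$ boundary data on $S_\infty^2$'' is well defined and injective modulo gauge, and (ii) that it is not constant on the moduli space, the two together giving the asserted complete determination and distinguishability, in contrast with the Euclidean collapse.

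First I would make precise what the holographic image on $S_\infty^2=\partial\mathbb{H}^3$ is. Near the boundary $\|\Phi\|\to v$, so $\hat\Phi=\Phi/\|\Phi\|$ splits the $SU(2)$ bundle into eigen-line-bundles; restricting to $S_\infty^2$ this yields an abelian connection on a line bundle $L\to S_\infty^2$ with $c_1(L)=Q_m$, equivalently the framing at infinity together with the leading hyperbolic asymptotics of the abelian magnetic field $\mathrm{tr}(\hat\Phi\,\mathscr{F})$ used above in the definition of $Q_m$. I would check that this restriction is well defined, i.e.\ independent of the way one approaches the boundary, because the curvature and $\mathscr{D}\Phi$ decay fast enough in the hyperbolic metric (this is exactly what makes $Q_m=\tfrac{1}{4\pi v}\int_{\partial\mathbb{H}^3}\mathrm{tr}(\Phi\mathscr{F})$ convergent), and that it is invariant under bulk gauge transformations up to a gauge transformation of $L$ on the boundary.

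For injectivity I would invoke the realisation, valid when $2v/\kappa\in\mathbb{Z}$, of the hyperbolic monopole as an $S^1$-invariant self-dual instanton on $S^4$ with $S_\infty^2$ the fixed-point sphere (Atiyah). Applying the $S^1$-equivariant ADHM construction (Appendix C), the instanton is encoded in ADHM matrices which, under the circle action, decompose into blocks interlocked by the \emph{discrete Nahm equations}, the block sizes being fixed by $Q_m$ and the labelling index fixed by $v$. The boundary data of the monopole is precisely the asymptotic holonomy data read off at the fixed sphere, and the content of Braam--Austin is that this data determines the discrete Nahm data uniquely; running the ADHM construction backwards then recovers the instanton, hence the monopole, up to gauge. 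For $2v/\kappa\notin\mathbb{Z}$ the same conclusion follows from the singular-instanton picture (Appendices B, C) together with the Murray--Singer extension. This is the step I expect to be the real obstacle: the equivariant ADHM / discrete-Nahm analysis is technically heavy, and the passage from half-integer to arbitrary $v$ requires the full apparatus of instantons with branch-type singularities on $S^2$; I would quote these results rather than reprove them.

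Finally, to show the boundary-value map is non-constant I would exhibit this already at charge one. A charge-$1$ hyperbolic monopole has moduli space $\mathbb{H}^3\times S^1$ (centre and $U(1)$ phase), and by the explicit profile $\|\Phi\|^2\big|_{\phi=0}=\tfrac14\tanh^2 d_H$ recorded above, together with Norbury's computation, moving the centre from $0$ to $a$ multiplies the asymptotic abelian field by the conformal factor $\lim_{r\to\infty}\sinh^2 r/\sinh^2 d(x,a)$, a non-constant function on $S_\infty^2$ whenever $a\neq 0$; hence distinct centres give genuinely distinct holographic images, while the phase only acts by a boundary gauge transformation. The contrast with the Euclidean case is then transparent: there the analogous limit is $\lim_{r\to\infty} r^2/|x-a|^2=1$, independent of $a$, so every charge-$N$ Euclidean monopole looks like the single Dirac field $N\hat r/(4\pi r^2)$ at infinity and the boundary map collapses the moduli space to a point in each charge sector. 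Combining the injectivity of step (i) with the non-triviality of step (ii) yields the proposition.
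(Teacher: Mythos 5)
Your proposal is correct and takes essentially the same route as the paper: the substantive injectivity statement is deferred to Braam--Austin (equivariant ADHM and the discrete Nahm equations for $2v\in\mathbb{Z}$), Murray--Singer for $SU(N)$, and Norbury / Murray--Norbury--Singer for non-integral $2v$, which is exactly how the paper's proof proceeds. Your added scaffolding --- the identification of the boundary data with an abelian connection on a line bundle $L\to S_\infty^2$ with $c_1(L)=Q_m$, and the charge-one non-constancy check via the asymptotic factor $\lim_{r\to\infty}\sinh^2 r/\sinh^2 d(x,a)$ contrasted with the Euclidean limit $\lim_{r\to\infty} r^2/|x-a|^2=1$ --- reproduces material the paper places in the neighbouring propositions on the magnetic charge and on hyperbolic Dirac monopoles, so no correction is needed.
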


\begin{proof}
First, this result was shown by Austin and Braam (1990)\cite{BA90} for $SU(2)$ hyperbolic magnetic monopoles when the ``mass'' ($v$ is the norm of the scalar scalar field on $S_\infty^2$) $2v$ is an integer $(2v \in  \mathbb{Z})$. They used methods of algebraic geometry and introduced the discrete Nahm equation. 
Next, in the same case $(2v \in \mathbb{Z})$, it was generalized by Murray and Singer (1999)\cite{MS96} to $SU(N)$ hyperbolic magnetic monopoles. These monopoles are determined by their asymptotic values. 

Then, for any gauge group $G$ and $v$ being non-integer, Norbury (1999)\cite{Norbury99} showed it by using the method of holomorphic maps. When the  $2v$ is not necessarily an integer, Murray, Norbury and Singer(2001)\cite{MNS01} investigated it by using twistor theory and showed that the $SU(2)$ hyperbolic magnetic monopole is determined by its asymptotic value plus some extra information.
When $2v$ is any positive real number, the holography was proved by Norbury (2001)\cite{Norbury01}. 


\end{proof}

A proof of the holography for hyperbolic magnetic monopole is given based on the following framework. 

\begin{definition}[Scattering equation and spectral curve]
Consider the scattering equation involving monopoles $(A,\Phi)$ defined for a local section $s$ of a vector bundle $E$ along a geodesic with parameter $t$ in $\mathbb{R}^3$:
\begin{align}
(\partial_t^A-i\Phi)s = 0 \Leftrightarrow (D_t[\mathscr{A}]-i\Phi)s = 0.
\end{align}
This scattering equation is called the Hitchin equation (Hitchin(1982)\cite{Hitchin82}). 
Among the geodesics, a geodesic for which there exists an $L^2$ solution of this equation is called a \textbf{spectral curve}\index{spectral curve}. It forms a compact algebraic curve in $T\bm{CP}^1$.
\end{definition}

\begin{rem} 
The magnetic monopole and the spectral curve were investigated by e.g., Murry(1983)\cite{Murry83}, Hitchin and Murry(1988)\cite{HM88}, Norbury and Romao(2007)\cite{NR07}.
\end{rem}

\begin{definition}[$n$-point functions defined for a sequence of points on the boundary]
Consider an ordered set of points $\{z_1,z_2,\cdots,z_n\}$ on the conformal boundary 2-sphere $S_\infty^2$ of $\mathbb{H}^3$.
Then consider the set of geodesics in $\mathbb{H}^3$ that run from $z_1$ to $z_2$, those that run from $z_2$ to $z_3$, and so on, those that run from $z_n$ to $z_1$.
By solving the scattering equation involving monopoles $(A,\Phi)$ along the geodesics, we can measure the interaction between the $n$ points $\{z_1,\cdots,z_n\}$ on the conformal boundary $S_\infty^2$.

The $n$-point function $\braket{P_{z_1}\cdots P_{z_n}}$ defined for a given monopole and $n$-points $\{z_1,\cdots,z_n\}$ is a complex number corresponding to a sequence of geodesics, and is continuously differentiable with respect to its variables $(z_1,\cdots,z_n)$.
The $n$-point function $\braket{P_{z_1}\cdots P_{z_n}}$ is a notation that takes into account the construction of an algebra with an expectation value given by $\braket{P_{z_1}\cdots P_{z_n}}$.
Note that the three-dimensional hyperbolic space $\mathbb{H}^3$ with negative constant curvature is equivalent to the three-dimensional Euclidean anti-de Sitter space $AdS_3$ ($\mathbb{H}^3 = AdS_3^E)$.

\end{definition}

\begin{definition}[$n$-point function on the boundary]
Given two points $z_1, z_2$ on $S_\infty^2$, define $s_+(t)$ as the solution of the scattering equation that decays at $t \to \infty$ along the geodesic connecting them, and $r_+(t)$ as the solution that decays along the same geodesic in the opposite direction, i.e., the solution that decays at $t \to -\infty$:
\begin{align}
(\partial_t^A-i\Phi)s_+(t) = 0 \ , \ (\partial_t^A+i\Phi)r_+(t) = 0.
\end{align}
The inner product $(r(t),s(t))$ of any two solutions of these scattering equations does not depend on $t$: $\partial_t(r(t),s(t))=0.$
In fact, it is as follows:
\begin{align}
\partial_t(r(t),s(t))=& ((\partial_t^A+i\Phi)r(t),s(t)) \nonumber\\
&+(r(t),(\partial_t^A-i\Phi)s(t))= 0.
\end{align}

If we normalize $r_+(t)$ and $s_+(t)$ by
\begin{align}
\lim_{t \to \infty}e^{mt}||s_+|| = 1 \ , \ \lim_{t \to -\infty}e^{-mt}||r_+|| = 1.
\end{align}
The decay solution is well defined except for the phase, and $|(r_+,s_+)|^2$ depends only on the geodesic and the monopole $(A,\Phi)$. Therefore, we define the two-point function $\braket{P_{z_1}P_{z_2}}$ as \begin{align}
\braket{P_{z_1}P_{z_2}}:=|(r_+,s_+)|^2
\end{align}

\end{definition}

Similarly, the n-point function $\braket{P_{z_1}\cdots P_{z_n}}$ is defined using the decay solution of the scattering equation along the geodesic that runs between the ordered $n$-points on $S_\infty^2$:
\begin{align}
\braket{P_{z_1}\cdots P_{z_n}}:=(r_{12},s_{12})(r_{23},s_{23})\cdots(r_{n1},s_{n1}).
\end{align}
This depends only on $(A,\Phi)$ and the oriented geodesic that passes through $z_1,z_2,\cdots,z_n,z_1$ in order. In fact, if we set $n = 2$,
\begin{align}
\braket{P_{z_1}P_{z_2}} = (r_{12},s_{12})(r_{21},s_{21}) 
= |(r_{12},s_{12})|^2,
\end{align}
which is consistent with the two-point function already mentioned. 
Here, $r_{jk},s_{jk}$ represent the solutions $r_+,s_+$ along the geodesic that runs from $z_j$ to $z_k$. The phase is for each $r_{j,j+1},s_{j-1,j}$
\begin{align}
\lim_{t \to \infty}e^{mt}s_{j-1,j} =c \lim_{t \to -\infty}e^{-mt}r_{j,j+1} \ (c \in \mathbb{C}).
\end{align}


The calculation of $n$-point functions using solutions to scattering equations along geodesics in $\mathbb{H}^3$ is similar to the approximation to the calculation of correlation functions using path integrals that appear in the AdS/CFT correspondence. 
(Maldacena(1998)\cite{Maldacena98}, Witten (1998)\cite{Witten98}, Aharony et al. (2000)\cite{Aharony00})

However, the proof is too complicated to give a review here, although we hope to give the supplementary materials in near future. 
Rather, we use this result to calculate the Wilson loop average to show quark confinement in the following form: 

\begin{prop}[Abelian dominance  and magnetic monopole dominance on the conformal boundary $\partial\mathbb{H}^3$]\label{prop:monopole-dominance}
On the conformal boundary $\partial\mathbb{H}^3 \simeq S^2$ of the upper half-space coordinates of  $\mathbb{H}^3(\rho,x^3,x^4)$, that is, $\rho = 0$: $x^4$-$x^3$ plane, the $SU(2)$ Yang-Mills field and the $SU(2)$ scalar field converges as $\rho \to 0$ to
\begin{align}
\mathscr{A}_4^G(\rho,x^3,x^4) &= \frac{\sigma_3}{2}\frac{x^3}{r}a_t \to \frac{\sigma_3}{2}a_t(x^4,x^3) , \nonumber\\
\mathscr{A}_3^G(\rho,x^3,x^4) &= \frac{\sigma_3}{2}\frac{(x^3)^2}{r^2}a_r \to \frac{\sigma_3}{2}a_r(x^4,x^3) , \nonumber\\
\mathscr{A}_\rho^G(\rho,x^3,x^4) &= \frac{\sigma_1}{2}\frac{(x^3)^2}{r^3}\phi_1 + \frac{\sigma_2}{2}\frac{x^3}{r^2}(1 + \phi_2) \nonumber\\
&\to \frac{\sigma_1}{2}\frac{1}{r}\phi_1(x^4,x^3) + \frac{\sigma_2}{2}\frac{1}{r} [1 + \phi_2(x^4,x^3)], \nonumber\\
\Phi(\rho,x^3,x^4) &\to \frac{\sigma_3}{2}(-1) \ \left( ||\Phi || \to v=\frac12 \right) ,
\end{align}
where $r = \sqrt{\rho^2 + (x^3)^2} \to |x^3|$ for $\rho \to 0$. 
Therefore, the $SU(2)$ gauge field $\mathscr{A}_4^G(\rho,x^3,x^4), \mathscr{A}_3^G(\rho,x^3,x^4)$ on the boundary $\rho=0$ has only the diagonal components $a_t(x^4,x^3), a_r(x^4,x^3)$, while the gauge field $\mathscr{A}_\rho^G(\rho,x^3,x^4)$ is dominated by the off-diagonal components $\frac{\sigma_1}{2}\frac{1}{r}\phi_1(x^4,x^3) + \frac{\sigma_2}{2}\frac{1}{r} [1 + \phi_2(x^4,x^3)]$.
Thus, the $SU(2)$ field strength on the boundary has only the maximal torus $U(1)$ component: 
\begin{align}
\mathscr{F}_{43}^G(\rho,x^3,x^4) :=&\partial_4\mathscr{A}_3^G - \partial_3\mathscr{A}_4^G - ig[\mathscr{A}_4^G, \mathscr{A}_3^G] \nonumber\\
\to&\frac{\sigma_3}{2}\frac{(x^3)^2}{r^2}\partial_4a_r - \frac{\sigma_3}{2}\partial_3\left(\frac{x^3}{r}a_t\right) \nonumber\\
=&\frac{\sigma_3}{2}(\partial_4a_r - \partial_ra_t) 
=\frac{\sigma_3}{2} F_{4r}(x^4,x^3).
\end{align}
This fact is regarded as the (infrared) Abelian dominance and the the magnetic monopole dominance.  

\end{prop}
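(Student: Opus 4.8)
The plan is to reduce the statement to a termwise computation of the $\rho\to 0$ limit of the explicit closed forms (\ref{eq:G4})--(\ref{eq:G7}) for $\mathscr{A}_4^G,\mathscr{A}_3^G,\mathscr{A}_\rho^G,\Phi$ established in Proposition~\ref{prop:monopole-vortex}, combined with a Lie-algebra argument for the field strength. First I would recall $r=\sqrt{\rho^2+(x^3)^2}\to|x^3|$ and that $a_t,a_r,\phi_1,\phi_2$ depend only on $(x^4,r)$ and stay smooth and bounded near $\rho=0$ away from $x^3=0$. Classifying the terms in each of (\ref{eq:G4})--(\ref{eq:G7}) by their power of $\rho$: every term carrying an explicit prefactor $\rho/r$ or $\rho/r^k$, as well as the $\sigma_1\rho$ pieces inside each factor $(\sigma_1\rho+\sigma_3x^3)$, is $O(\rho)$ and drops out. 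What survives is $\mathscr{A}_4^G\to\tfrac{\sigma_3}{2}\tfrac{x^3}{r}a_t$, $\mathscr{A}_3^G\to\tfrac{\sigma_3}{2}\tfrac{(x^3)^2}{r^2}a_r$, $\mathscr{A}_\rho^G\to\tfrac{\sigma_1}{2}\tfrac{(x^3)^2}{r^3}\phi_1+\tfrac{\sigma_2}{2}\tfrac{x^3}{r^2}(1+\phi_2)$, and $\Phi\to-\tfrac{\sigma_3}{2}$; using $r\to|x^3|$ these collapse to the displayed boundary limits, and $\|\Phi\|=\sqrt{\tfrac12\operatorname{tr}(\tfrac{\sigma_3^2}{4})}\to\tfrac12=v$ as claimed.

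Next I would read off the Cartan structure of the limit: the surviving $\mathscr{A}_4^G$ and $\mathscr{A}_3^G$ lie entirely along $\sigma_3$ (the diagonal $U(1)$), while $\mathscr{A}_\rho^G$ is purely off-diagonal ($\sigma_1,\sigma_2$). Since $\mathscr{F}_{43}^G=\partial_4\mathscr{A}_3^G-\partial_3\mathscr{A}_4^G-ig[\mathscr{A}_4^G,\mathscr{A}_3^G]$ contains only $\mathscr{A}_4^G$ and $\mathscr{A}_3^G$, the off-diagonal $\mathscr{A}_\rho^G$ never enters, and in the boundary limit $[\mathscr{A}_4^G,\mathscr{A}_3^G]\propto[\sigma_3,\sigma_3]=0$; moreover the $\rho$-suppressed pieces of both potentials, and their derivatives, still carry $\rho$ and drop out. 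Hence $\mathscr{F}_{43}^G\to\tfrac{\sigma_3}{2}\big(\partial_4(\tfrac{(x^3)^2}{r^2}a_r)-\partial_3(\tfrac{x^3}{r}a_t)\big)$. Identifying the boundary coordinate with the $\mathbb{H}^2$ radial coordinate ($r=|x^3|$, so $\tfrac{(x^3)^2}{r^2}\to1$, $\tfrac{x^3}{r}$ is locally constant, and $\partial_3$ acts as $\mathrm{sgn}(x^3)\,\partial_r$), this equals $\tfrac{\sigma_3}{2}(\partial_4a_r-\partial_ra_t)=\tfrac{\sigma_3}{2}F_{4r}(x^4,x^3)$, which is Abelian (maximal-torus valued); since $F_{4r}$ is precisely the $U(1)$ vortex field strength, whose integral is the vortex/magnetic winding by the earlier propositions, this is the asserted infrared Abelian and magnetic-monopole dominance on $\partial\mathbb{H}^3$.

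Two points require care and I expect the bookkeeping there, rather than any conceptual difficulty, to be the main obstacle. First, $\partial\mathbb{H}^3$ consists of two copies of $\mathbb{H}^2$ glued along the $x^4$-axis ($x^3$ of both signs), so the factors $\mathrm{sgn}(x^3)=x^3/|x^3|$ appearing in $\mathscr{A}_4^G$, in $\mathscr{A}_\rho^G$, and in $\partial_3\mapsto\mathrm{sgn}(x^3)\partial_r$ must be tracked on each sheet; they cancel pairwise in $\mathscr{F}_{43}^G$ so the limiting $F_{4r}$ agrees on both sheets, but this cancellation should be written out rather than assumed. Second, one must justify that the $\rho\to0$ limit commutes with $\partial_3,\partial_4$ in $\mathscr{F}_{43}^G$: since $a_t,a_r,\phi_i$ depend only on $(x^4,r)$ and the remaining prefactors are smooth on $\{\rho\ge0\}\setminus\{x^3=0\}$, differentiating then letting $\rho\to0$ gives the same result as differentiating the limit; this breaks down on $\{x^3=0\}$ (the vortex core, where $\Phi$ vanishes and $r\to0$), but that is a lower-dimensional set, irrelevant to the boundary field identification and to the subsequent Wilson-loop calculation.
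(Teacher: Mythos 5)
Your proposal is correct and follows essentially the same route as the paper, whose entire proof is the one-line remark that the statement follows immediately from the explicit formulas (\ref{eq:G4})--(\ref{eq:G7}) of Proposition~\ref{prop:monopole-vortex}; you simply carry out the termwise $\rho\to 0$ limit and the $[\sigma_3,\sigma_3]=0$ observation that the paper leaves implicit. Your two added caveats (the $\mathrm{sgn}(x^3)$ bookkeeping on the two sheets of $\partial\mathbb{H}^3$ and the interchange of the limit with $\partial_3,\partial_4$) are sensible refinements but do not change the argument.
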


\begin{proof}
This follows immediately from the result of Proposition~\ref{prop:monopole-vortex}.
\end{proof}

\section{CFtHW Ansatz for instantons and the superpotential}

\begin{definition}[self-dual equation]
In 4-dimensional Euclidean space, the self-dual equation for Yang-Mills field strength $\mathscr{F}_{\mu \nu} (x) = T_A \mathscr{F}_{\mu \nu}^A (x) \ (\mu , \nu = 1 , 2 , 3 , 4)$ is given by
\begin{align}
 \mathscr{F}_{\mu \nu} (x) = \pm {}^* \mathscr{F}_{\mu \nu} (x) , \quad
 {}^* \mathscr{F}_{\mu \nu} (x) := \frac{1}{2} \varepsilon_{\mu \nu \rho \sigma} \mathscr{F}_{\rho \sigma} (x) .
\label{eq:Tss-1}
\end{align}
Here ${}^* \mathscr{F}_{\mu \nu}$ is the Hodge dual  of $\mathscr{F}_{\mu \nu}$.
The right-hand side with a plus $+$ is called a \textbf{self-dual equation}\index{self-dual equation}, and the negative $-$ is called an \textbf{anti-self-dual equation}\index{anti-self-dual equation}.

\end{definition}

\begin{prop}[Specific form of self-dual solution]
In the following, we restrict the gauge group to $SU(2)$.
To find the solution to the self-dual equation, we adopt the Ansatz called the Corrigan-Fairlie-'t~Hooft-Wilczek (CFtHW)  Ansatz \cite{CFtHW}:
\begin{align}
\mathscr{A}_\mu (x) = T_A \mathscr{A}_\mu^A (x) =T_A \bar{\eta}^A_{\mu \nu} \partial_\nu \ln \Xi (x) .
\label{eq:Tss-2}
\end{align}
where $\Xi$ is called the \textbf{superpotential}. 
Here $\bar{\eta}^A_{\mu \nu} \ (A = 1 , 2 , 3)$ is the 't Hooft symbol defined by
\begin{align}
\bar{\eta}^A_{\mu \nu} = \varepsilon_{4 A \mu \nu} - \delta_{A\mu } \delta_{\nu 4} + \delta_{\mu 4} \delta_{A \nu} .
\end{align}

When $\Xi (x)$ is nonsingular, $\Xi (x)$ is a constant and the gauge field is trivial: $\mathscr{A}_\mu (x) \equiv 0$.
On the other hand, if $\Xi (x)$ has a singular point, the solution is given by 
\begin{align}
\Xi (x) = 1 + \sum_{n = 1}^N \frac{\lambda_n^2}{(x - a_n)^2} .
\label{eq:Tss-5}
\end{align}
This solution can be written in a more general form (called the JNR form):
\begin{align}
\Xi (x) = \sum_{n = 0}^{N} \frac{\lambda_n^2}{(x - a_n)^2} ,
\label{eq:Tss-6}
\end{align}
where $a_n = (a_{n \mu})$ is any 4-real vector, and $\lambda_n \ (n = 0, 1 , 2 , \cdots , N)$ is any real constant. (\ref{eq:Tss-6}) reduces to (\ref{eq:Tss-5}) by fixing the ratio and taking the limits $a_{0} \rightarrow \ \infty$ and $\lambda_{0} \rightarrow \ \infty$: $\lambda_{0}^2 / a_{0}^2 = 1$.
\end{prop}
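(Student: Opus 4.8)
The plan is to substitute the CFtHW ansatz \eqref{eq:Tss-2} into the self-dual equation \eqref{eq:Tss-1} and let the algebra of the 't~Hooft symbols do the work, then analyse separately the smooth and the singular cases. First I would assemble the quadratic identities for $\bar\eta^A_{\mu\nu}$ (in the conventions fixed by the definition in the statement): anti-symmetry and (anti-)self-duality in the spacetime indices, $\frac{1}{2}\varepsilon_{\mu\nu\rho\sigma}\bar\eta^A_{\rho\sigma}=\pm\bar\eta^A_{\mu\nu}$; the contraction $\bar\eta^A_{\mu\nu}\bar\eta^A_{\rho\sigma}=\delta_{\mu\rho}\delta_{\nu\sigma}-\delta_{\mu\sigma}\delta_{\nu\rho}\mp\varepsilon_{\mu\nu\rho\sigma}$; and the mixed relation $\bar\eta^A_{\mu\lambda}\bar\eta^B_{\nu\lambda}=\delta^{AB}\delta_{\mu\nu}+\varepsilon^{ABC}\bar\eta^C_{\mu\nu}$. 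Writing $f_\mu:=\partial_\mu\ln\Xi$ so that $\mathscr{A}_\mu^A=\bar\eta^A_{\mu\nu}f_\nu$, I would compute $\mathscr{F}_{\mu\nu}^A=\partial_\mu\mathscr{A}_\nu^A-\partial_\nu\mathscr{A}_\mu^A+g\varepsilon^{ABC}\mathscr{A}_\mu^B\mathscr{A}_\nu^C$ and collect terms: the cubic piece, rewritten with the mixed identity, combines with the derivative terms, and using $\partial_\mu f_\nu=\partial_\nu f_\mu$ together with $f_\lambda f_\lambda=\Xi^{-2}(\partial_\lambda\Xi)(\partial_\lambda\Xi)$ one finds that $\mathscr{F}_{\mu\nu}$ solves \eqref{eq:Tss-1} precisely when $\Xi^{-1}\partial_\lambda\partial_\lambda\Xi=0$, i.e. when $\Xi$ is harmonic on the complement of its singular set (equivalently $\partial_\lambda\partial_\lambda\ln\Xi+(\partial_\lambda\ln\Xi)(\partial_\lambda\ln\Xi)=0$).

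Given this reduction, the nonsingular case is a Liouville-type statement. If $\Xi$ is smooth and strictly positive on all of $\mathbb{R}^4$, so that $\ln\Xi$ is globally smooth, then harmonicity of $\Xi$ together with finiteness of the Yang-Mills action — which forces $\mathscr{A}_\mu$, hence $\partial_\mu\ln\Xi$, to decay at infinity and therefore $\Xi$ to be bounded — implies, by Liouville's theorem for harmonic functions on $\mathbb{R}^4$, that $\Xi$ is constant. Then $\mathscr{A}_\mu\equiv0$ and the configuration is trivial, as claimed.

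For the singular case I would allow $\Xi$ to be harmonic on $\mathbb{R}^4$ away from a finite set of points $\{a_n\}$ and bounded at infinity: the fundamental solution of the four-dimensional Laplacian is $|x-a_n|^{-2}$, and this is the only admissible local singularity, so the general such $\Xi$ is a constant plus a finite linear combination, $\Xi=c+\sum_{n=1}^N\lambda_n^2|x-a_n|^{-2}$. Since only $\partial_\mu\ln\Xi$ enters the ansatz, an overall rescaling of $\Xi$ leaves $\mathscr{A}_\mu$ unchanged, so one may normalize $c=1$, giving \eqref{eq:Tss-5}. One then checks that the apparent singularities at $x=a_n$ are removable: near each $a_n$ the connection $\mathscr{A}_\mu$ is related by a singular $SU(2)$ gauge transformation to a smooth one, so \eqref{eq:Tss-5} describes a genuine (anti-)instanton of topological charge $N$. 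Finally, the JNR form \eqref{eq:Tss-6} follows by observing that $\Xi=\sum_{n=0}^{N}\lambda_n^2|x-a_n|^{-2}$ is manifestly harmonic with $N+1$ poles, and that sending $a_0\to\infty$ with $\lambda_0^2/|a_0|^2\to1$ makes the $n=0$ term tend to $1$ uniformly on compacta while leaving the remaining terms unchanged, so \eqref{eq:Tss-6} degenerates to \eqref{eq:Tss-5}.

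I expect the main obstacle to be the 't~Hooft-symbol bookkeeping in the first step: organizing the signs and the quadratic $\bar\eta$ identities so that the cubic term $g\varepsilon^{ABC}\mathscr{A}_\mu^B\mathscr{A}_\nu^C$ combines exactly with the two derivative terms to leave $\Xi^{-1}\partial^2\Xi$ multiplying an (anti-)self-dual tensor, with no residual non-self-dual remainder. A secondary subtlety is making the Liouville argument precise — pinning down the decay/function-class hypotheses under which ``nonsingular'' forces constancy — and verifying rigorously that the $1/|x-a_n|^2$ poles are genuine gauge artifacts rather than true singularities of the connection (hence that the resulting configurations are honest finite-action instantons).
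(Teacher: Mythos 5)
Your proposal is correct and follows essentially the same route as the paper: the paper's proof consists precisely of the statement that the CFtHW ansatz is self-dual iff $\Xi^{-1}\Box\Xi=0$, deferring the 't~Hooft-symbol algebra to Rajaraman's textbook, and then reading off the harmonic-with-point-poles solutions. You simply fill in the details the paper outsources (the $\bar\eta$ identities, the Liouville argument for the nonsingular case, the fundamental-solution classification of admissible singularities, and the $a_0\to\infty$ degeneration of the JNR form), all of which are standard and correctly sketched.
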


\begin{proof}
For the gauge field of the form \eqref{eq:Tss-2} to be a solution of the self-dual equation (\ref{eq:Tss-1}), the superpotential $\Xi$ must satisfy the following equation:
\begin{align}
\Xi(x)^{-1} \Box \Xi(x) = 0 , \quad \Box := \partial_\mu \partial_\mu .
\label{eq:Tss-4}
\end{align}
Here, $\Box$ is the (four-dimensional) Laplacian defined by $\Box := \partial_\mu \partial_\mu$.
See e.g., Rajaraman(1989)\cite{Rajaraman89} for the details of the calculations.
\end{proof}

\begin{prop}[superpotential for the vortex equation]

\noindent
(1) 
Let $\Xi_4$ be the superpotential on the Euclidean space $\mathbb{E}^4$.
Let $\Xi_2$ be the superpotential on the hyperbolic upper half-plane $\mathbb{H}^2$.
The two superpotentials follow the relationship:
\begin{align}
\Xi_4=r^{-1}\Xi_2 . 
\end{align}

\noindent
(2) 
Using the Witten Ansatz, the $U(1)$ gauge field $a_\mu = (a_t,a_r)$ is obtained from the superpotential:
\begin{align}
a_t &= \frac{\partial}{\partial r}\log \Xi_4 = \frac{\partial}{\partial r}\log \Xi_2 - \frac{1}{r}, \nonumber\\
a_r &= -\frac{\partial}{\partial t}\log \Xi_4 = -\frac{\partial}{\partial t}\log \Xi_2.
\end{align}
Then the field strength of the $U(1)$ gauge field reads
\begin{align}
F_A 
=&(\partial_r a_t - \partial_t a_r) dt \wedge dr \nonumber\\
=& \left[\left(\frac{\partial^2}{\partial t^2} + \frac{\partial^2}{\partial r^2}\right)\log \Xi_2 + \frac{1}{r^2}\right]dt \wedge dr .
\label{eq:8}
\end{align}
The complex scalar field $\phi = \phi_1 + i\phi_2$ is obtained from the superpotential:
\begin{align}
\phi_1 &= -r \frac{\partial}{\partial t}\log \Xi_4 = -r \frac{\partial}{\partial t}\log \Xi_2 , \nonumber\\
\phi_2 &= -1-r \frac{\partial}{\partial r}\log \Xi_4 = -r \frac{\partial}{\partial r}\log \Xi_2 .
\end{align}

\noindent
(3) 
Introducing the complex number $z := t + ir$,  the $U(1)$ connection $D_\mu = \partial_\mu + ia_\mu$ is written as
\begin{align}
\bar{\partial}_A = \bar{\partial} + \bar{\partial}\log \Xi_2 +\frac{1}{z - \bar{z}} . 
\end{align}
The complex scalar field $\phi$ is written as
\begin{align}
\phi = i(z - \bar{z})\frac{\partial}{\partial z}\log \Xi_2.
\end{align}
A necessary and sufficient condition for these to be a solution of the \textbf{vortex equation}\index{vortex equation}
\begin{align}
&\bar{\partial}_A \phi =0 ,\\
& 
F_A = *_{\mathbb{H}^2}(1 - |\phi|^2) \ (*_{\mathbb{H}^2}\mathbf{1}:=r^{-2}dt  \wedge dr),
\end{align}
is that $\Xi_2$ is \textbf{harmonic function}\index{harmonic function}
:
\begin{align}
\Delta_h \Xi_2 \equiv -r^2\left(\frac{\partial^2}{\partial t^2} + \frac{\partial^2}{\partial r^2}\right) \Xi_2 = 0.
\end{align}

\end{prop}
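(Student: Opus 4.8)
The plan is to obtain all three assertions by restricting the CFtHW Ansatz of the preceding proposition to the $SO(3)$-symmetric sector and then carrying out a short complex-analytic reduction. For parts (1) and (2) I start from $\mathscr{A}_\mu = T_A\bar\eta^A_{\mu\nu}\partial_\nu\ln\Xi_4$, which solves the self-dual equation exactly when $\Box\Xi_4 = 0$. Demanding invariance under the $SO(3)$ rotation of $(x^1,x^2,x^3)$ combined with an $SU(2)$ gauge transformation forces the superpotential to depend only on $t = x^4$ and $r = \sqrt{(x^1)^2+(x^2)^2+(x^3)^2}$; since the 't~Hooft symbol intertwines spatial rotations with gauge rotations, the resulting $\mathscr{A}_\mu$ is automatically of Witten form. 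Using $\partial_k\ln\Xi_4 = (x^k/r)\,\partial_r\ln\Xi_4$ and $\partial_4\ln\Xi_4 = \partial_t\ln\Xi_4$ together with the explicit entries of $\bar\eta^A_{\mu\nu}$, I read off $\mathscr{A}_4^A$ and $\mathscr{A}_j^A$ and match them term by term against the Witten Ansatz of the Witten-transformation proposition; this gives $a_t = \partial_r\ln\Xi_4$, $a_r = -\partial_t\ln\Xi_4$, $\phi_1 = -r\,\partial_t\ln\Xi_4$, $\phi_2 = -1 - r\,\partial_r\ln\Xi_4$, so the CFtHW output sits in the particular gauge with $\phi_1 = r a_r$ and $1 + \phi_2 = -r a_t$. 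Setting $\Xi_2 := r\Xi_4$, i.e. $\ln\Xi_4 = \ln\Xi_2 - \ln r$, then reproduces the stated $\Xi_2$-formulas with their $\mp 1/r$ and $\mp 1$ shifts (this is assertion (1)), and $F_A = (\partial_r a_t - \partial_t a_r)\,dt\wedge dr$ with $\partial_r^2\ln r = -1/r^2$ yields \eqref{eq:8}.

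For part (3) I introduce $z = t + ir$, so $\partial_z = \tfrac12(\partial_t - i\partial_r)$, $\partial_{\bar z} = \tfrac12(\partial_t + i\partial_r)$, $z - \bar z = 2ir$, and rewrite $D_\mu$ and $\phi$ in these variables to obtain the stated forms of $\bar\partial_A$ and $\phi$. The key step is then one computation using the logarithmic identity $\partial_z\partial_{\bar z}\ln\Xi_2 = (\partial_z\partial_{\bar z}\Xi_2)/\Xi_2 - (\partial_z\ln\Xi_2)(\partial_{\bar z}\ln\Xi_2)$: it gives $\bar\partial_A\phi = i(z-\bar z)[\partial_z\partial_{\bar z}\ln\Xi_2 + (\partial_z\ln\Xi_2)(\partial_{\bar z}\ln\Xi_2)] = i(z-\bar z)\,\partial_z\partial_{\bar z}\Xi_2/\Xi_2 = \tfrac{i}{4}(z-\bar z)\,(\partial_t^2+\partial_r^2)\Xi_2/\Xi_2$, so $\bar\partial_A\phi = 0$ iff $(\partial_t^2+\partial_r^2)\Xi_2 = 0$. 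Likewise the magnetic equation $F_A = *_{\mathbb{H}^2}(1 - |\phi|^2)$ becomes, by \eqref{eq:8} and $|\phi|^2 = r^2[(\partial_t\ln\Xi_2)^2 + (\partial_r\ln\Xi_2)^2]$, the relation $(\partial_t^2+\partial_r^2)\ln\Xi_2 + (\partial_t\ln\Xi_2)^2 + (\partial_r\ln\Xi_2)^2 = 0$, which by the same identity collapses to $(\partial_t^2+\partial_r^2)\Xi_2/\Xi_2 = 0$ once more. Hence both vortex equations hold if and only if $(\partial_t^2+\partial_r^2)\Xi_2 = 0$, i.e. $\Delta_h\Xi_2 \equiv -r^2(\partial_t^2+\partial_r^2)\Xi_2 = 0$; since in two dimensions a conformal rescaling of the metric only rescales the Laplacian, this is precisely harmonicity on $\mathbb{H}^2$.

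As a cross-check I would record the direct identity $\Box\Xi_4 = r^{-1}(\partial_t^2+\partial_r^2)\Xi_2$ for $SO(3)$-symmetric $\Xi_4$, so that the CFtHW self-duality condition $\Box\Xi_4 = 0$, the harmonicity of $\Xi_2$, and the vortex equations are the same statement — consistent with the earlier proposition identifying $SO(3)$-symmetric self-dual instantons with hyperbolic vortices.

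The main obstacle is the second step: organizing the 't~Hooft-symbol algebra so that the CFtHW configuration is recognized as a Witten-Ansatz field and the quadruple $(a_t, a_r, \phi_1, \phi_2)$ is extracted in one consistent gauge. One must in particular hold fixed throughout a single convention for the orientation $\varepsilon_{1234}$, for the choice of self- versus anti-self-duality, for the sign in $D_\mu = \partial_\mu \pm i a_\mu$, and for the orientation $z = t \pm ir$; once any one such choice is fixed the remaining verifications in part (3) are immediate via the identity $(\partial_t^2 + \partial_r^2)\ln\Xi_2 = \Box_2\Xi_2/\Xi_2 - |\nabla\ln\Xi_2|^2$ with $\Box_2 := \partial_t^2 + \partial_r^2$.
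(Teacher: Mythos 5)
Your proposal is correct and follows essentially the same route as the paper: the paper likewise obtains $\Xi_2=r\Xi_4$ from the polar-coordinate form of the Laplacian, extracts $(a_t,a_r,\phi_1,\phi_2)$ by feeding the CFtHW Ansatz through the (inverse) Witten transformation, and then passes to $z=t+ir$. Your explicit use of the identity $\partial_z\partial_{\bar z}\log\Xi_2+(\partial_z\log\Xi_2)(\partial_{\bar z}\log\Xi_2)=\partial_z\partial_{\bar z}\Xi_2/\Xi_2$ to close the ``if and only if'' for both vortex equations is actually slightly more complete than the paper's write-up, which stops at the expression for $\tfrac12(D_t+iD_r)\phi$ and leaves the equivalence with harmonicity implicit.
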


\begin{proof}
This result was derived by Landweber(2005)\cite{Landweber05}. 

\noindent
(1) 
Suppose that the 4-dimensional superpotential $\Xi_4$ depends only on $t$ and $r$, and does not depend on $\theta$ or $\varphi$. 
Let the 3-dimensional Laplacian be written in polar coordinates $(r, \theta, \varphi)$. 
Then we can write 
\begin{align}
\Delta_3 =& \frac{\partial^2}{\partial x^2} + \frac{\partial^2}{\partial y^2} + \frac{\partial^2}{\partial z^2} = \frac{1}{r}\frac{\partial^2}{\partial r^2}r + \frac{\Lambda(\theta,\varphi)}{r^2}, 
\nonumber\\
\Lambda(\theta,\varphi):=&\frac{1}{\sin \theta}\frac{\partial}{\partial \theta}\left(\sin \theta\frac{\partial}{\partial \theta}\right) + \frac{1}{\sin^2 \theta}\frac{\partial^2}{\partial \varphi^2},
\end{align}
which means 
\begin{align}
& \Box\Xi_4(t,r) = \left(\frac{\partial^2}{\partial t^2}+\Delta_3\right)\Xi_4(t,r) \nonumber\\
=& \frac{1}{r}\left(\frac{\partial^2}{\partial t^2} + \frac{\partial^2}{\partial r^2}\right)r\Xi_4(t,r) 
 + \frac{1}{r^2}\Lambda(\theta,\varphi)\Xi_4(t,r) \nonumber\\
=&\frac{1}{r}\left(\frac{\partial^2}{\partial t^2} + \frac{\partial^2}{\partial r^2}\right)\Xi_2(t,r) .
\end{align}
Thus we obtain
\begin{align}
\Xi_2(t,r) = r\Xi_4(t,r), \quad
\left(\frac{\partial^2}{\partial t^2} + \frac{\partial^2}{\partial r^2}\right)\Xi_2(t,r) =& 0 .
\end{align}

\noindent
(2) 
The \textbf{inverse Witten transformation} is given by 
\begin{align}
a_0(r,t) =& \frac{x^A}{r}\mathscr{A}_4^A(x) \ , \ a_1(r,t) = \frac{x^j}{r}\frac{x^A}{r}\mathscr{A}_j^A(x), 
\nonumber\\
\phi_1(r,t) =& \frac{\delta^{jA}r^2 - x^jx^A}{2r}\mathscr{A}_j^A(x) , 
\nonumber\\
 \phi_2(r,t) =& -\epsilon_{Ajk}\frac{x^k}{2}\mathscr{A}_j^A(x) - 1.
 \label{field2}
\end{align}
The Ansatz for the instanton reads 
\begin{align}
& \mathscr{A}_\mu^A(x) = \bar{\eta}_{\mu\nu}^A\partial_\nu \ln \Xi_4(x) , 
\nonumber\\ & \bar{\eta}_{\mu\nu}^A = \epsilon_{4A\mu\nu} - \delta_{A\mu}\delta_{\nu 4} + \delta_{A\nu}\delta_{\mu 4} \nonumber\\
\Rightarrow &
\begin{cases}
 \mathscr{A}_4^A(x) = \bar{\eta}_{4\nu}^A\partial_\nu\ln\Xi_4(x) 
 = \partial_A\ln \Xi_4(x), \\
 \mathscr{A}_j^A(x) = \epsilon_{Ajk}\partial_k\ln\Xi_4(x) - \delta_{Aj}\partial_4 \ln \Xi_4(x).
\end{cases}
\label{field4}
\end{align}
Therefore, the dimensionally reduced fields are obtained 
by substituting (\ref{field4}) into (\ref{field2}):
\begin{align}
a_0(r,t) =& \frac{x^A}{r}\partial_A\ln \Xi_4(x) = \partial_r \ln \Xi_4(x), \nonumber\\
a_1(r,t) =& -\frac{x^j}{r}\frac{x^A}{r}\delta_{Aj}\partial_4\ln\Xi_4(x) = -\partial_4\ln\Xi_4(x), \nonumber\\
\phi_1(r,t) =& \frac{-\delta^{jA}\delta_{Aj}r^2 + x^jx^A\delta_{Aj}}{2r}\partial_4\ln\Xi_4(x) \nonumber\\
=& -r\partial_4\ln\Xi_4(x), 
\nonumber\\
\phi_2(r,t) =& -1-\epsilon_{Ajk}\frac{x^k}{2}\epsilon_{Aj\ell}\partial_\ell \ln\Xi_4(x) \nonumber\\
=& -1-x^k\partial_k \ln \Xi_4(x) 
= -1 -r\partial_r\ln\Xi_4(x).
\end{align}

\noindent
(3) 
To move to the complex representation, we find the relations $z,\bar{z}$ and $t,r$:
\begin{align}
z:=& t + ir \ , \ \bar{z} = t -ir, \nonumber\\
\Rightarrow t =& \frac{z + \bar{z}}{2} \ , \ r = \frac{z - \bar{z}}{2i} \ , \ \frac{1}{r} = \frac{2i}{z - \bar{z}}.
\end{align}
The differential operators are related as 
\begin{align}
\partial_z =& \frac{\partial}{\partial z} 
= \frac{\partial t}{\partial z}\frac{\partial}{\partial t} + \frac{\partial r}{\partial z}\frac{\partial}{\partial r} 
= \frac{1}{2}\left(\frac{\partial}{\partial t} - i\frac{\partial}{\partial r}\right), 
\nonumber\\
\partial_{\bar{z}} =& \frac{\partial}{\partial \bar{z}} 
= \frac{\partial t}{\partial \bar{z}}\frac{\partial}{\partial t} + \frac{\partial r}{\partial \bar{z}}\frac{\partial}{\partial r} 
= \frac{1}{2}\left(\frac{\partial}{\partial t} + i\frac{\partial}{\partial r}\right).
\end{align}
For the complex scalar field, 
\begin{align}
\phi_1 =& -r\frac{\partial}{\partial t}\ln \Xi_2 \ , \ \phi_2 = -r\frac{\partial}{\partial r}\ln \Xi_2 , \nonumber\\
\Rightarrow \phi =& \phi_1 + i\phi_2 = -r\left(\frac{\partial}{\partial t} + i\frac{\partial}{\partial r}\right)\ln \Xi_2 
\nonumber\\
=& i(z - \bar{z})\partial_{\bar{z}}\ln \Xi_2 .
\end{align}
For the $U(1)$ gauge field, 
\begin{align}
a_t \equiv a_0=& \frac{\partial}{\partial r}\ln \Xi_2 - \frac{1}{r} \ , \ a_r \equiv a_1
= -\frac{\partial}{\partial t}\ln \Xi_2, 
\nonumber\\
\Rightarrow a_r - ia_t 
=& -2\partial_{\bar{z}}\ln\Xi_2 + \frac{i}{r}.
\end{align}
For the covariant derivatives, 
\begin{align}
(D_t + iD_r)\phi =& (\partial_t \pm ia_t)\phi + i(\partial_r \pm ia_r)\phi \nonumber\\
=& (\partial_t + i\partial_r)\phi \mp (a_r - ia_t)\phi \nonumber\\
=&2\partial_{\bar{z}}\phi \pm \left(2\partial_{\bar{z}}\ln\Xi_2 - i\frac{1}{r}\right)\phi .
\end{align}
Therefore, we have 
\begin{align}
\frac{1}{2}(D_t + iD_r)\phi = \left(\partial_{\bar{z}} \pm \partial_{\bar{z}}\ln\Xi_2 \pm \frac{1}{z - \bar{z}}\right)\phi .
\end{align}

\end{proof}

\begin{prop}[1st Chern class]
The Chern class $c_1$ is given by
\begin{align}
c_1 = \frac{1}{2\pi}\int_{\mathbb{H}^2} F_A
= \frac{1}{2\pi}\int_{\mathbb{H}^2} *_{\mathbb{H}^2}(1 - |\phi|^2)
\end{align}
Using the superpotential $\Xi_2$, it is written as 
\begin{align}
c_1=& \frac{1}{2\pi}\int_{\mathbb{R}_+^2} (1 - \Delta_h\log \Xi_2)r^{-2}dt \wedge dr \nonumber\\
=& \frac{1}{2\pi}\int_{\mathbb{R}_+^2} \left(\frac{1}{r^2} - 4\left|\frac{\partial}{\partial z}\log \Xi_2\right|^2 \right)dt \wedge dr .
\end{align}

\end{prop}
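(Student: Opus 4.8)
The plan is to obtain both displayed expressions by chaining together results already established in this section, so the argument is essentially a bookkeeping exercise in changing differential operators. The first equality, $c_1 = \frac{1}{2\pi}\int_{\mathbb{H}^2}F_A$, is the standard identification of the first Chern number of the line bundle $L\to\mathbb{H}^2$ with the normalized magnetic flux; it requires only that the flux integral over the non-compact surface converge to an integer, which is guaranteed by the asymptotics of the vortex solutions ($|\phi|\to 1$ on $\partial\mathbb{H}^2$, $F_A$ integrable) exhibited earlier via the Blaschke-product solutions. The second equality $c_1 = \frac{1}{2\pi}\int_{\mathbb{H}^2}*_{\mathbb{H}^2}(1-|\phi|^2)$ is then immediate upon substituting the Bogomol'nyi/vortex equation $F_A = *_{\mathbb{H}^2}(1-|\phi|^2)$.

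Next I would pass to the superpotential representation. The proposition on the superpotential for the vortex equation gives $F_A = \bigl[(\partial_t^2+\partial_r^2)\log\Xi_2 + r^{-2}\bigr]\,dt\wedge dr$. Writing the Laplace--Beltrami operator as $\Delta_h = -r^2(\partial_t^2+\partial_r^2)$ (its definition in that proposition), one has $(\partial_t^2+\partial_r^2)\log\Xi_2 = -r^{-2}\Delta_h\log\Xi_2$, hence $F_A = r^{-2}\bigl(1 - \Delta_h\log\Xi_2\bigr)\,dt\wedge dr$, and integrating produces the first superpotential formula $c_1 = \frac{1}{2\pi}\int_{\mathbb{R}_+^2}(1-\Delta_h\log\Xi_2)\,r^{-2}\,dt\wedge dr$.

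The final step is the identity $r^{-2}\Delta_h\log\Xi_2 = 4\,\bigl|\tfrac{\partial}{\partial z}\log\Xi_2\bigr|^2$. Introducing the Wirtinger derivatives with $z=t+ir$ one has $\partial_z\partial_{\bar z} = \tfrac14(\partial_t^2+\partial_r^2)$, so $r^{-2}\Delta_h\log\Xi_2 = -(\partial_t^2+\partial_r^2)\log\Xi_2 = -4\,\partial_z\partial_{\bar z}\log\Xi_2$. Now I use the harmonicity of $\Xi_2$, i.e. $\partial_z\partial_{\bar z}\Xi_2 = \tfrac14(\partial_t^2+\partial_r^2)\Xi_2 = 0$, established in the same proposition: a one-line computation gives $\partial_z\partial_{\bar z}\log\Xi_2 = \partial_z\bigl(\partial_{\bar z}\Xi_2/\Xi_2\bigr) = -(\partial_z\Xi_2)(\partial_{\bar z}\Xi_2)/\Xi_2^2 = -\bigl|\partial_z\log\Xi_2\bigr|^2$, where the last step uses that $\Xi_2$ is real so $\partial_{\bar z}\Xi_2 = \overline{\partial_z\Xi_2}$. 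Substituting back yields $c_1 = \frac{1}{2\pi}\int_{\mathbb{R}_+^2}\bigl(r^{-2} - 4\,|\tfrac{\partial}{\partial z}\log\Xi_2|^2\bigr)\,dt\wedge dr$, as claimed.

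The main obstacle is not analytic: once the earlier propositions (the superpotential form of $F_A$ and the harmonicity of $\Xi_2$) are granted, everything reduces to keeping careful track of the factors $r^{-2}$ and $4$ when translating between the flat Laplacian $\partial_t^2+\partial_r^2$, the hyperbolic Laplacian $\Delta_h$, and $\partial_z\partial_{\bar z}$. The one point that genuinely deserves attention is the justification of the opening equality $c_1 = \frac{1}{2\pi}\int F_A$ on the non-compact $\mathbb{H}^2$: one should check that the boundary contributions at $r\to 0$ (the conformal boundary) and at infinity vanish, so that the flux is quantized and equals the vortex number $N_v$ --- which follows from the explicit form of the solutions and the boundary condition $|\phi|\to1$.
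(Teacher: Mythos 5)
Your derivation is correct, and every ingredient you invoke is indeed available earlier in the section: the superpotential form $F_A = \bigl[(\partial_t^2+\partial_r^2)\log\Xi_2 + r^{-2}\bigr]dt\wedge dr$, the vortex equation $F_A = *_{\mathbb{H}^2}(1-|\phi|^2)$, the definition $\Delta_h = -r^2(\partial_t^2+\partial_r^2)$, and the harmonicity $\Delta_h\Xi_2=0$. The factor bookkeeping checks out: $\partial_z\partial_{\bar z}=\tfrac14(\partial_t^2+\partial_r^2)$ with $z=t+ir$, and since $\Xi_2>0$ is real and harmonic in the interior $r>0$, $\partial_z\partial_{\bar z}\log\Xi_2=-|\partial_z\log\Xi_2|^2$ with no distributional corrections (the poles of $\Xi_4$ sit on the boundary $r=0$, so $\Xi_2=r\,\Xi_4$ is smooth and positive in the bulk). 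The paper itself does not prove this proposition at all --- its ``proof'' is a one-line citation to Landweber (2005) --- so your argument is not a variant of the paper's route but a self-contained reconstruction assembled from the paper's own preceding propositions; that is a genuine improvement in readability, and the explicit $1$-vortex example ($F_{01}=4\lambda^2/(t^2+r^2+\lambda^2)^2$, $N_v=1$) confirms the normalization. One small caution on your final remark: the integrality of $\tfrac{1}{2\pi}\int F_A$ should not be asserted as automatic, since the paper later exhibits singular symmetric instantons with $c_1=c-1\notin\mathbb{Z}$; for the purposes of this proposition, $c_1$ is best read as the normalized flux, with quantization holding only under the regularity and boundary conditions ($|\phi|\to1$, $\phi$ nonsingular) that you correctly identify as the hypotheses needing verification.
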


\begin{proof}
This result was derived by Landweber(2005)\cite{Landweber05}. 

\end{proof}

\begin{example}[1-vortex on the upper half model]
The superpotential for a 1-instanton in 4 dimensions is
\begin{align}
\Xi_4 = 1 + \frac{\lambda^2}{|x|^2} = 1 + \frac{\lambda^2}{t^2 + r^2}
\end{align}

In general, if we use translation $x\to x-a$ to indicate the position and dilation $x \to \lambda^{-1}$ to indicate the size, then $x \to \lambda^{-1}(x-a)$ gives us $\Xi_4 = 1 + \frac{\lambda^2}{|x-a|^2}$.

For the four-dimensional superpotential $\Xi_4$ for 1-instanton:
\begin{align}
\Xi_4 = 1 + \frac{\lambda^2}{x^2} = \frac{x^2 + \lambda^2}{x^2} .
\end{align}
the two-dimensional superpotential $\Xi_2$ is given by
\begin{align}
\Xi_2 =& r\Xi_4 = r + \frac{r\lambda^2}{t^2 + r^2} =r\frac{t^2 + r^2 + \lambda^2}{t^2 + r^2} 
\nonumber\\
\Rightarrow  \ln \Xi_2 =& \ln r + \ln(t^2 + r^2 + \lambda^2) -\ln(t^2 + r^2).
\end{align}
Therefore, the derivatives are given by
\begin{align}
&\frac{\partial}{\partial r}\ln\Xi_2 
= \frac{1}{r} - \frac{2r\lambda^2}{(t^2 + r^2 + \lambda^2)(t^2 + r^2)}, 
\nonumber\\
&\frac{\partial}{\partial t}\ln \Xi_2 
= \frac{-2t\lambda^2}{(t^2 + r^2)(t^2 + r^2 + \lambda^2)}.
\end{align}

The complex scalar field is written as
\begin{align}
\phi_1 =& -r\frac{\partial}{\partial t}\ln \Xi_2 = \frac{2\lambda^2 tr}{(t^2 + r^2)(t^2 + r^2 + \lambda^2)} , \nonumber\\
\phi_2 =& -r\frac{\partial}{\partial r}\ln \Xi_2 = -1 + \frac{2\lambda^2 r^2}{(t^2 + r^2)(t^2 + r^2 + \lambda^2)} , \nonumber\\
|\phi|^2 =& \phi_1^2 + \phi_2^2 = \frac{r^4 + 2r^2(t^2 - \lambda^2) + (t^2 + \lambda^2)^2}{(t^2 + r^2 + \lambda^2)^2} \nonumber\\
=& \frac{[t^2 + (r -\lambda)^2][t^2 + (r + \lambda)^2]}{(t^2 + r^2 + \lambda^2)^2} , \nonumber\\
1 - |\phi|^2 =& 1 - (\phi_1^2 + \phi_2^2) = \frac{4r^2\lambda^2}{(t^2 + r^2 + \lambda^2)^2} .
\end{align}
The center of the vortex is given by the zero of $\phi$: $\phi(r,t) = 0$ is on the $(t,r) = (0,\lambda)$.

The $U(1)$ gauge field is written as
\begin{align}
a_t =& \frac{\partial}{\partial r}\ln \Xi_2 - \frac{1}{r} = - \frac{2\lambda^2 r}{(t^2 + r^2)(t^2 + r^2 + \lambda^2)}, \\
a_r =& -\frac{\partial}{\partial t}\ln \Xi_2 = \frac{2\lambda^2 t}{(t^2 + r^2)(t^2 + r^2 + \lambda^2)}. 
\end{align}
Then the field strength of the $U(1)$ gauge field reads
\begin{align}
F_{r4} := \partial_ra_t - \partial_ta_r = \frac{4\lambda^2}{(t^2 + r^2 + \lambda^2)^2} = \frac{4\lambda^2}{(|z|^2 + \lambda^2)^2}.
\end{align}
Therefore, they indeed satisfy the vortex equation:
\begin{align}
F_{r4} = \frac{1}{r^2}(1 - |\phi|^2).
\end{align}

From this, the superpotential for the hyperbolic vortex on $\mathbb{H}^2$ is $z = t + ir$
\begin{align}
\Xi_2 =& r\Xi_4 = r + \frac{\lambda^2 r}{t^2 + r^2} \nonumber\\
=& \operatorname{Im}\left(z - \frac{\lambda^2}{z}\right) = \frac{(z - \bar{z})(\lambda^2 + z\bar{z})}{2iz\bar{z}} .
\end{align}
Using
\begin{align}
& \phi
= i\frac{\bar{z}(\lambda^2 + z^2)}{z(\lambda^2 + z\bar{z})} , \nonumber\\ \quad 
&|\phi|^2 = \frac{(\lambda^2 + z^2)(\lambda^2 + \bar{z}^2)}{(\lambda^2 + z\bar{z})(\lambda^2 + z\bar{z})} \to 1 \ (r \to 0),
\end{align}
$\phi$ has a zero $\phi = 0$ at $z = i$ $(t=0,r=1)$.
The complex scalar field $\phi$ satisfies the boundary condition $\phi \to i$ $(\phi_1 \to 0,\phi_2 \to 1)$ when $z$ approaches the real axis ($r \to 0$).

The field strength of the $U(1)$ gauge field is
\begin{align}
F_A = \frac{4\lambda^2}{(t^2+r^2+\lambda^2 )^2}dt \wedge dr = F_{01}dt \wedge dr ,
\end{align}
and the topological charge or the vortex number is obtained:
\begin{align}
N_v = \frac{1}{2\pi}\int^{+\infty}_{-\infty}dt \ \int^{\infty}_{0}dr \ F_{01} = c_1=1.
\label{Nv}
\end{align}

\end{example}

Next, we move on to another complex variable. Let $z$ be the  complex coordinate of the upper half-plane $\mathbb{H}^2$, and $w$ be the complex coordinate of the unit disk $D^2$. $z$ and $w$ are related by a conformal transformation or \textbf{Cayley transformation}\index{Cayley transformation}:
\begin{align}
w = \frac{i -z}{i + z} \Leftrightarrow z = i\frac{1 -w}{1 + w}
\label{eq:9cc}
\end{align}

See Figure\ref{H2_conf_transf}. 
This mapping maps the upper half-plane $\mathbb{H}^2$ to the interior of the unit disk $D^2$. 
This disk $D^2$ is called \textbf{Poincar\'{e} disk}\index{Poincar\'{e} disk}.
The vortex equation is invariant under this coordinate transformation \eqref{eq:9cc}. The equation $\bar{\partial}\phi = 0$ holds because the transformation \eqref{eq:9cc} is regular. The equation $*_hF_A = 1 - |\phi|^2$ is invariant because it is a relation between scalar quantities that are invariant under coordinate transformations.

In the Poincar\'{e} disk model, the first Chern number is written as 
\begin{align}
c_1(A) = \frac{1}{2\pi}\int_{D^2} dS \left(\frac{4}{(1 - |w|^2)^2} - 4\left|\frac{\partial}{\partial w}\log \Xi_2\right|^2 \right)
\end{align}
where $dS$ is a surface element on the disk.

\begin{example} [1-vortex on the disk model]

For the hyperbolic vortex obtained by dimensional reduction from the $c_2 = 1$ instanton, $\Xi_2$ can be written using $w$:
\begin{align}
\boxed{\Xi_2 = 2\frac{1 - |w|^4}{|1 - w^2|^2}}.
\end{align}
Using this, the $U(1)$ connection and the scalar field are written as
\begin{align}
\bar{\partial}_A = \bar{\partial} + \frac{d\bar{w}}{1 + |w|^2}\frac{1 + w}{1 - w} , \ \phi= -\frac{2iw}{1 + |w|^2}\frac{1 - \bar{w}}{1 - w}
\end{align}
The scalar field $\phi$ has a first-order zero only at the origin $w = 0$.

We obtain the first Chern number for the hyperbolic vortex obtained by dimensional reduction from an instanton with the second Chern number $c_2(A)=1$:
\begin{align}
c_1(A)=1
\end{align}

\end{example}

\section{Singular instantons}

\begin{prop}[Singular gauge transformation]
Consider the gauge transformation $\mathscr{A}^U_\mu(x)$ with $U(x)\in SU(2)$ of gauge field $\mathscr{A}_\mu(x)$:
\begin{align}
\mathscr{A}_\mu &\to \mathscr{A}^U_\mu = U\mathscr{A}_\mu U^\dagger + iU\partial_\mu U^\dagger , \nonumber\\
U &= \exp\left(i\frac{\bm{v}}{2}\varphi\right) = \exp\left(i\hat{\bm{v}}|\bm{v}|\frac{\varphi}{2}\right) \nonumber\\
&=\cos\frac{|\bm{v}|\varphi}{2}\hat{\bm{1}} + i\hat{\bm{v}}\sin\frac{|\bm{v}|\varphi}{2},
\end{align}
where $\bm{v}$ is $su(2)$-valued function, $|\bm{v}|$ denotes its magnitude, and $\hat{\bm{v}}$ denotes the direction:
\begin{align}
\bm{v} = |\bm{v}|\hat{\bm{v}} , \ |\bm{v}|:=\sqrt{\frac{1}{2}\operatorname{tr}(\bm{v}^2)}>0 \ , \ \hat{\bm{v}}:=\frac{\bm{v}}{|\bm{v}|} \in su(2).
\end{align}

Thus we obtain the result:
\begin{align}
\mathscr{A}_\mu^U =& \cos(|\bm{v}|\varphi)\mathscr{A}_\mu + [1 - \cos(|\bm{v}|\varphi)](\mathscr{A}_\mu\cdot\hat{\bm{v}})\hat{\bm{v}} 
\nonumber\\
&+ \frac{1}{2}\sin(|\bm{v}|\varphi)(\mathscr{A}_\mu\times\hat{\bm{v}}) + iU\partial_\mu U^\dagger.
\end{align}
with
\begin{align}
iU\partial_\mu U^\dagger = 
\begin{cases}
\bm{v} & (\mu = \varphi) \\
0 & (\mu \neq \varphi)
\end{cases} .
\end{align}

\end{prop}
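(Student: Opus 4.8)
The plan is to split the gauge-transformed field as $\mathscr{A}_\mu^U = U\mathscr{A}_\mu U^\dagger + iU\partial_\mu U^\dagger$ and to treat the conjugation (homogeneous) term and the pure-gauge (inhomogeneous) term separately.

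For the conjugation term I would first decompose $\mathscr{A}_\mu$ into a part parallel to the axis and a part perpendicular to it, $\mathscr{A}_\mu = (\mathscr{A}_\mu\cdot\hat{\bm v})\,\hat{\bm v} + \mathscr{A}_\mu^{\perp}$, with $\mathscr{A}_\mu^{\perp}\cdot\hat{\bm v}=0$. Using $\hat{\bm v}^2=\hat{\bm 1}$, the parallel part commutes with $U=\cos\tfrac{|\bm v|\varphi}{2}\,\hat{\bm 1}+i\hat{\bm v}\sin\tfrac{|\bm v|\varphi}{2}$ and is therefore left invariant, which already supplies the $(\mathscr{A}_\mu\cdot\hat{\bm v})\hat{\bm v}$ contributions; while for the perpendicular part one has $\hat{\bm v}\,\mathscr{A}_\mu^{\perp}=-\mathscr{A}_\mu^{\perp}\,\hat{\bm v}$ and hence $\hat{\bm v}\,\mathscr{A}_\mu^{\perp}\,\hat{\bm v}=-\mathscr{A}_\mu^{\perp}$. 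Expanding $U\mathscr{A}_\mu^{\perp}U^\dagger$ and collecting the half-angle products, the combinations $\cos^2\tfrac{|\bm v|\varphi}{2}-\sin^2\tfrac{|\bm v|\varphi}{2}=\cos(|\bm v|\varphi)$ and $2\sin\tfrac{|\bm v|\varphi}{2}\cos\tfrac{|\bm v|\varphi}{2}=\sin(|\bm v|\varphi)$ emerge, and the leftover term $\mathscr{A}_\mu^{\perp}\hat{\bm v}$ is rewritten, via the $su(2)$ product $\hat{\bm v}\,\bm a=(\hat{\bm v}\cdot\bm a)\hat{\bm 1}+i(\hat{\bm v}\times\bm a)$, as (a multiple of) $\mathscr{A}_\mu^{\perp}\times\hat{\bm v}=\mathscr{A}_\mu\times\hat{\bm v}$. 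Reassembling the parallel and perpendicular parts and substituting $\mathscr{A}_\mu^{\perp}=\mathscr{A}_\mu-(\mathscr{A}_\mu\cdot\hat{\bm v})\hat{\bm v}$ reproduces the first three terms of the asserted formula; equivalently one may simply invoke Rodrigues' rotation formula, since $\mathrm{Ad}_U$ acting on $su(2)\cong\mathbb{R}^3$ is the rotation about $\hat{\bm v}$ through the angle $|\bm v|\varphi$.

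For the inhomogeneous term I would use that $\bm v$ is a fixed $su(2)$ element, so it commutes with the whole exponent and the only coordinate dependence of $U^\dagger=\exp(-i\tfrac{\bm v}{2}\varphi)$ enters through $\varphi$; thus $\partial_\mu U^\dagger=-\tfrac{i}{2}\bm v\,(\partial_\mu\varphi)\,U^\dagger$ and $iU\partial_\mu U^\dagger=\tfrac12\bm v\,\partial_\mu\varphi$. In the cylindrical frame $(\rho,\varphi,x^3,x^4)$ one has $\partial_\mu\varphi=\delta_{\mu\varphi}$, so this term reduces to $\bm v$ (in the paper's normalisation) when $\mu=\varphi$ and vanishes for $\mu\in\{\rho,x^3,x^4\}$, which is the stated case distinction. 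Adding the two pieces gives the claimed expression for $\mathscr{A}_\mu^U$.

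The calculation is routine; the only care needed is in fixing the $su(2)$ "vector-calculus" conventions — the normalisations of $\mathscr{A}_\mu\cdot\hat{\bm v}$ and $\mathscr{A}_\mu\times\hat{\bm v}$ and the sign of the cross product — consistently, so that the half-angle identities deliver exactly the coefficients $\cos(|\bm v|\varphi)$, $1-\cos(|\bm v|\varphi)$ and $\tfrac12\sin(|\bm v|\varphi)$, and in making explicit the standing hypothesis that $\bm v$ does not depend on the non-angular coordinates (otherwise $\partial_\mu U^\dagger$ picks up further terms from the derivative-of-exponential expansion and the inhomogeneous term is no longer simply $\bm v$ or $0$). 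I expect no substantive obstacle beyond this bookkeeping.
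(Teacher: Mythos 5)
Your proposal is correct and is essentially the same computation as the paper's: the paper (in its worked-out version) also expands $U\mathscr{A}_\mu U^\dagger$ with the half-angle form of $U$, uses the Pauli-matrix product identities to identify the $(\mathscr{A}_\mu\cdot\hat{\bm v})\hat{\bm v}$ and $\mathscr{A}_\mu\times\hat{\bm v}$ pieces (your parallel/perpendicular split is just a tidier bookkeeping of the same algebra), and evaluates $iU\partial_\varphi U^\dagger=\bm v$ directly. Your closing caveats — that the factor $\tfrac12$ in front of $\sin(|\bm v|\varphi)(\mathscr{A}_\mu\times\hat{\bm v})$ hinges on the normalisation conventions for the $su(2)$ dot and cross products, and that $\bm v$ must depend on no coordinate other than $\varphi$ for the inhomogeneous term to collapse to $\bm v\,\delta_{\mu\varphi}$ — are exactly the points on which the paper's calculation implicitly relies.
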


\begin{proof}
This result was obtained by Nash (1986)\cite{Nash86}. 

\end{proof}

In the case of $|\bm{v}| \notin \mathbb{Z}$, At $\varphi =0$, the gauge field starts from a certain $\mathscr{A}_\mu$. After the gauge transformation $\varphi = 2\pi$ (going around the axis $x^3$), it does not return to the initial value, and is no longer a \textit{single-valued} function. 
Therefore, if we request the single-valuedness of rotating $2\pi$ around $\hat{\bm{v}}$ and returning back, it must be $|\bm{v}| \in \mathbb{Z}$.  For  $|\bm{v}| \notin \mathbb{Z}$, $\mathscr{A}_\mu$ has a \textit{branch singularity} on the axis $\mathbb{R}^2$. 
Even in the case of $|\bm{v}| \notin \mathbb{Z}$, the hyperbolic magnetic monopole $(A,\Phi)$ exists, but the corresponding ordinary instanton does not exist. Therefore, hyperbolic magnetic monopoles are wider than $S^1$ (axisymmetric) symmetric instantons.

If such an instanton exists, it would have a non-integer $c_2 \ (c_2 \notin \mathbb{Z})$. In fact, such an example was constructed by Forgacs-Horvath-Palla (1981)\cite{FHP81}.  
Instantons, which generally correspond to the hyperbolic magnetic monopoles of $|\bm{v}| \notin \mathbb{Z}$, have some kind of singularity. 
Along the ``axis''$\mathbb{R}^2$, $|\Phi| = |\Phi|_\infty = C - 1$. In order for instanton to have a finite action, there must be an element $g$ of the gauge group such that $\mathscr{A}_\mu \to g^{-1}\partial_\mu g$ at infinity. 
This suggests that there exists a gauge that can reach $\mathscr{A}_\varphi \to 0$ at infinity when approaching infinity along $\mathbb{R}^2$.

\begin{example}[singular 1-instantons]
To be concrete, furthermore, to investigate singular instantons, we consider a general superpotential:
\begin{align}
\boxed{\Xi_2 = 2\frac{1 - |w|^{2c}}{|1 - w^c|^2}}, \ (c \in \mathbb{R}, \ c \not= 0) .
\end{align}
This is a harmonic function. Therefore it gives an instanton solution. 
In fact, we find
\begin{align}
& \frac{\partial}{\partial \bar{w}} \Xi_2 
= \frac{\partial}{\partial \bar{w}} 2\frac{1 - w^c\bar{w}^c}{(1 - w^c)(1 - \bar{w}^c)}
= 2 \frac{c\bar{w}^{c-1}}{(1 - \bar{w}^c)^2} \nonumber\\
\Rightarrow & \frac{\partial}{\partial w} \frac{\partial}{\partial \bar{w}} \Xi_2 = 0 .
\end{align}
$c=1$ is the vacuum solution.
$c=2$ is the usual case of a one-instanton.
When $c$ is an integer $c \in \mathbb{Z}$, it is fine, but when $c$ is a non-integer $c \notin \mathbb{Z}$, $\Xi_2$ is not well defined over the entire disk. In this case, we restrict it to a simply connected disk with a  cut. If we go around the origin in the positive direction, we cross the cut and $\Xi_2$ is given by
\begin{align}
\boxed{\Xi_2^\prime = 2\frac{1 - |w|^{2c}}{|1 - \varepsilon w^c|^2}}, \ (c \in \mathbb{R}, \ c \not= 0) , \ \varepsilon = e^{2\pi i c} .
\end{align}
$\Xi_2$ and $\Xi_2^\prime$ are zero on the unit circle $|w|=1$, except at the roots of $w^c=1$ or $w^c=\bar{\varepsilon}$ (where they have simple poles).

\end{example}

If the hyperbolic vortex solutions obtained from $\Xi_2$ and $\Xi_2^\prime$ are $(a,\phi)$ and $(a^\prime,\phi^\prime)$, respectively, then to construct one hyperbolic vortex on the full disk, we need to find a gauge transformation $G$ that connects two vortices.
Therefore, we need to find $g\in \mathbb{C}$ that satisfies 
\begin{align}
&\frac{\partial}{\partial w} \log \Xi_2^\prime = g \frac{\partial}{\partial w} \log \Xi_2, \nonumber\\
&\frac{\partial}{\partial \bar{w}} \log \Xi_2^\prime=\frac{\partial}{\partial \bar{w}} \log \Xi_2- \frac{\partial}{\partial \bar{w}} \log g .
\end{align}
Indeed, the group element $g \in U(1)$ is given by:
\begin{equation}
g = \varepsilon \frac{1-\bar{\varepsilon}\bar{w}^{c}}{1-\varepsilon w^{c}} \frac{1-w^{c}}{1-\bar{w}^{c}} \in U(1) .
\end{equation}
Therefore, the two are gauge equivalent, and we obtain a well-defined vortex on the entire punctured disk.
Calculating the first Chern class of this vortex gives 
\begin{align}
c_1(a)=c-1.
\end{align}
Therefore, for $c=1$, $c_1(a)=0$, which is a vacuum solution.
For $c=2$, $c_1(a)=1$, which is a standard hyperbolic vortex.

For $c=5/2$, $c_1(a)=3/2$, which reproduces Forgacs-Horvath-Palla(1981)\cite{FHP81}.
It has been shown that this is a singular symmetric instanton with a singular point on the 2-sphere ($w=0 \Leftrightarrow z=i \Leftrightarrow t=0, \ r:=\sqrt{(x^1)^2+(x^2)^2+(x^3)^2}=1$).
The contribution of the connection $a$ to the holonomy around the circle $|w|=r$ centered at the origin is $\oint_{|w|=r}a= -2r^c \sin (2\pi ci)\to 0 \ (r \to 0, c>0)$. Hence, all holonomies around the origin arise from gauge transformations, and $g\cong \varepsilon=e^{2\pi ic}$.
Any value of $c \in \mathbb{R}$ is possible.
The description here follows (Landweber(2005)\cite{Landweber05}).

\section{Area law and quark confinement}

Finally, we define the Wilson loop operator for a closed loop $C$, and calculate its expectation value using the dilute gas approximation of hyperbolic magnetic monopoles and hyperbolic vortices to show the area law.  
Therefore, confinement is understood in the sense of a linear potential for quark-antiquark static potential.

\begin{definition}[Wilson loop operator]
Let $\mathscr{A}$ be a Lie algebra valued \textbf{connection 1-form}:
\begin{equation}
\mathscr{A}(x) := \mathscr{A}_\mu(x) dx^\mu = \mathscr{A}_\mu^A(x) T_A dx^\mu.
\end{equation}
For a given loop, i.e., a closed path $C$, the \textbf{Wilson loop operator} $W_{\rm C}[\mathscr{A}]$ in the representation $R$ is defined by 
\begin{equation}
W_C[\mathscr{A}] := \mathcal{N}^{-1} {\rm tr}_{R} \left\{ \mathscr{P} \exp \left[ ig_{{}_{\rm YM}} \oint_C \mathscr{A} \right] \right\} ,
\end{equation}
where $\mathscr{P}$ represents the \textbf{path ordered product}, and the normalization factor $\mathcal{N}$ is equal to the dimension $d_R$ of the representation $R$ to which the Wilson loop probe belongs, leading to $W_C[0]=1$.
The Yang-Mills coupling constant $g_{{}_{\rm YM}}$ 
can be eliminated by scaling the field: $\mathscr{A}  \to g_{{}_{\rm YM}}^{-1}\mathscr{A}$.

\end{definition}

\begin{prop}[non-Abelian Stokes theorem for the Wilson loop operator]

The $SU(2)$  Wilson loop operator in any representation characterized by a half-integer single index  $J=\frac12, 1, \frac32, 2, \frac52, \cdots$ obeys the \textbf{non-Abelian Stokes theorem}. 
We introduce the unit vector field $n^A(x)$ ($n^A(x)n^A(x)=1$) called the \textbf{color direction field} defined by
\begin{align}
 n^A(x) \sigma_A   =& U(x)  \sigma_3 U^\dagger(x) , \ U(x) \in SU(2)  
  ,
\end{align}
with the third Pauli matrix $\sigma_3$.  
Then the $SU(2)$ Wilson loop operator is rewritten in the form of the area integral over any surface $\Sigma$ bounded by the loop $C$:
\begin{equation}
  W_C[\mathscr{A}]  
 = \int  [d \mu(U)]_\Sigma \exp \left\{ i  g_{{}_{\rm YM}} J \int_{\Sigma: \partial \Sigma=C} dS^{\mu\nu} F_{\mu\nu}^U \right\} , 
\end{equation}
where $F_{\mu\nu}^U$ is the gauge-invariant field strength defined by
\begin{align}
  F_{\mu\nu}^U(x) :=& \partial_\mu [ {n}^A(x)   \mathscr{A}^A_\nu(x) ] -  \partial_\nu [{n}^A(x)  \mathscr{A}^A_\mu(x)  ]   
\nonumber\\ &
  - g_{{}_{\rm YM}}^{-1} \epsilon^{ABC}  {n}^A(x)  \partial_\mu  {n}^B(x)  \partial_\nu  {n}^C(x) 
 ,
\end{align}
and $ [d \mu(U)]_\Sigma$ is the product measure of an invariant measure on $SU(2)/U(1)$ over $\Sigma$: 
\begin{align}
 [d \mu(U)]_\Sigma :=&\prod_{x \in \Sigma}d\mu(\bm{n}(x)) ,
\nonumber\\  
  d\mu(\bm{n}(x)) =& \frac{2J+1}{4\pi} \delta(\bm{n}^A(x)   \bm{n}^A(x)-1) d^3 \bm{n}(x) .  
\end{align}
This version of the {non-Abelian Stokes theorem} was investigated by (Diakonov and Petrov (1989)\cite{DP89}, Diakonov and Petrov (1996)\cite{DP96}, Kondo (1998)\cite{KondoIV}, Kondo and Taira (2000)\cite{KT00}, Kondo (2008)\cite{Kondo08}).

\end{prop}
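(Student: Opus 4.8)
The plan is to establish this identity by the coherent-state (spin-network) derivation of the non-Abelian Stokes theorem, following Diakonov--Petrov and Kondo et al.~\cite{DP89,DP96,KondoIV,KT00,Kondo08}. \textbf{Step 1 (discretization of the holonomy).} Partition the loop $C$ into $N$ short arcs with vertices $x_0,x_1,\dots,x_N=x_0$, so that the path-ordered exponential factorizes as $\mathscr{P}\exp(ig_{{}_{\rm YM}}\oint_C\mathscr{A})=\prod_{k=N}^{1}V_k$ with $V_k=\mathbf{1}+ig_{{}_{\rm YM}}\,\mathscr{A}_\mu(x_k)\Delta x_k^\mu+O(\Delta x^2)$, each $V_k$ acting on the spin-$J$ representation space $R$. \textbf{Step 2 (resolution of unity in coherent states).} Introduce the highest-weight $SU(2)$ coherent states $|\bm{n}\rangle:=U\,|J,J\rangle$, $U\in SU(2)$, which depend only on the coset point $\bm{n}\in SU(2)/U(1)\simeq S^2$ (the residual $U(1)$ phase drops out because $|J,J\rangle$ is a weight vector), together with the completeness relation $\int d\mu(\bm{n})\,|\bm{n}\rangle\langle\bm{n}|=\mathbf{1}_R$, $d\mu(\bm{n})=\frac{2J+1}{4\pi}\,\delta(\bm{n}^A\bm{n}^A-1)\,d^3\bm{n}$. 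Inserting this identity between every pair of factors and closing the trace (the normalization $\mathcal{N}=2J+1$ guaranteeing $W_C[0]=1$) turns $W_C[\mathscr{A}]$ into $\int\prod_k d\mu(\bm{n}_k)\,\prod_k\langle\bm{n}_k|V_k|\bm{n}_{k-1}\rangle$.

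\textbf{Step 3 (continuum limit).} Expand each link factor using the two standard coherent-state facts: the diagonal matrix element $\langle\bm{n}|T_A|\bm{n}\rangle=J\,n^A$ (hence $\langle\bm{n}|\mathscr{A}_\mu|\bm{n}\rangle=J\,n^A\mathscr{A}_\mu^A$), and the neighbouring-state overlap $\langle\bm{n}_k|\bm{n}_{k-1}\rangle=\exp\!\big(-\langle\bm{n}|\partial_\mu\bm{n}\rangle\,\Delta x^\mu+\dots\big)$, whose (purely imaginary) exponent is the Berry--Wess--Zumino connection $a^{\rm WZ}_\mu(\bm{n})$ on $S^2$, normalized so that its integral around a closed loop on $S^2$ is $J$ times the enclosed solid angle. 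Multiplying the $N$ factors and letting $N\to\infty$ exponentiates everything into $W_C[\mathscr{A}]=\int\mathcal{D}\bm{n}(\cdot)\,\exp\!\big(ig_{{}_{\rm YM}}J\oint_C a^U_\mu\,dx^\mu\big)$, where $a^U_\mu:=n^A\mathscr{A}_\mu^A$ plus the pullback under the map $\bm{n}(x)$ of the $U(1)$ monopole potential on $S^2$ (so that $a^U$ is an ordinary abelian connection along $C$).

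\textbf{Step 4 (abelian Stokes and promotion to a surface integral).} Apply the ordinary Stokes theorem to the $1$-form $a^U$ over any surface $\Sigma$ with $\partial\Sigma=C$: since the curvature of the Berry connection is exactly the area $2$-form $\tfrac12\epsilon^{ABC}n^A\,dn^B\wedge dn^C$, one gets $da^U=\tfrac12 F^U_{\mu\nu}\,dx^\mu\wedge dx^\nu$ with $F^U_{\mu\nu}=\partial_\mu(n^A\mathscr{A}_\nu^A)-\partial_\nu(n^A\mathscr{A}_\mu^A)-g_{{}_{\rm YM}}^{-1}\epsilon^{ABC}n^A\partial_\mu n^B\partial_\nu n^C$, yielding the exponent $ig_{{}_{\rm YM}}J\int_\Sigma dS^{\mu\nu}F^U_{\mu\nu}$; correspondingly the one-dimensional functional integral over $\bm{n}(s)$ on $C$ is promoted to the surface functional integral over $\bm{n}(x)$ on $\Sigma$ with the product measure $[d\mu(U)]_\Sigma=\prod_{x\in\Sigma}d\mu(\bm{n}(x))$, which is the asserted formula. \textbf{The hard part} will be precisely this last step: justifying the lift from a loop path integral to a genuine surface path integral and proving $\Sigma$-independence of the right-hand side. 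That independence holds only modulo the monopole flux quantum, so one must invoke the Dirac-type quantization $g_{{}_{\rm YM}}J\cdot(\text{flux unit})\in 2\pi\mathbb{Z}$ — equivalently $2J\in\mathbb{Z}$, which is exactly the hypothesis that $J$ be a half-integer — to render $\exp\!\big(ig_{{}_{\rm YM}}J\int_\Sigma dS^{\mu\nu}F^U_{\mu\nu}\big)$ single-valued; controlling the continuum limit of the Wess--Zumino term also requires the usual care with the coherent-state discretization. The remaining manipulations are the routine coherent-state bookkeeping already carried out in the cited references.
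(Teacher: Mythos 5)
Your proposal is correct and follows essentially the same route as the paper, whose ``proof'' consists entirely of a citation to Kondo (2008) and Kondo et al.\ (2015): those references carry out precisely the coherent-state derivation you sketch (discretized holonomy, resolution of unity in spin-$J$ coherent states on $SU(2)/U(1)$, the diagonal matrix element $\langle\bm{n}|T_A|\bm{n}\rangle=Jn^A$ plus the Berry/Wess--Zumino overlap, then the abelian Stokes theorem). You also correctly identify the genuinely delicate point --- promoting the loop functional integral to the surface measure $[d\mu(U)]_\Sigma$ and the $2J\in\mathbb{Z}$ quantization needed for $\Sigma$-independence --- which is exactly where the cited works invest their care.
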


\begin{proof}
See Kondo (2008)\cite{Kondo08} or Kondo et al. (2015)\cite{PR}.
\end{proof}

\begin{figure}[htb]
\begin{center}
\includegraphics[scale=0.5]{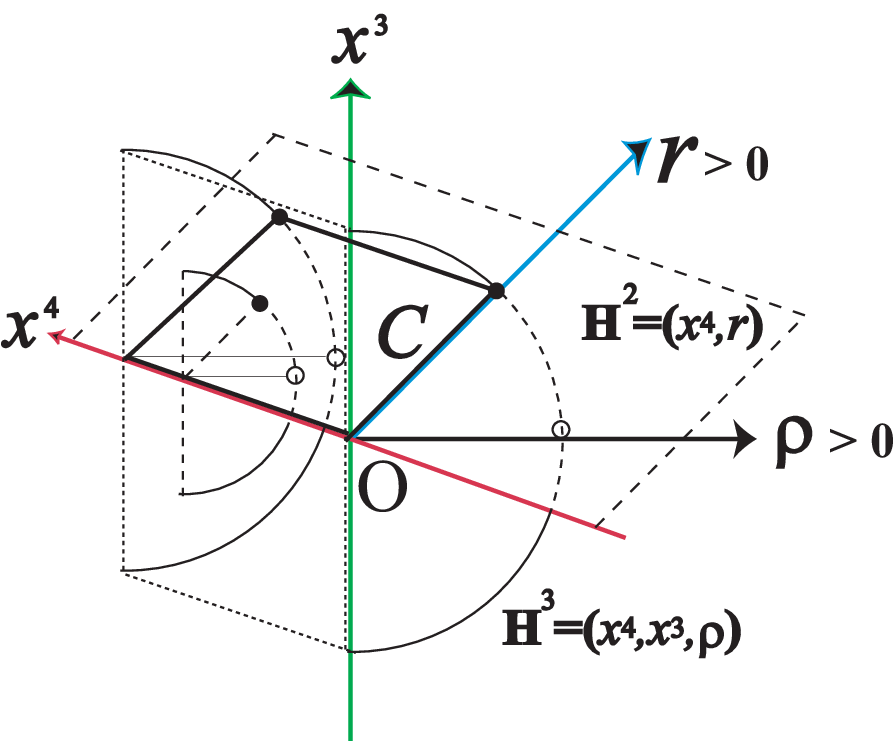}
\end{center}
\caption{
(Left) The relationship between Wilson loop $C$ and the hyperbolic vortex (black circle) on $\mathbb{H}^2$ and the hyperbolic magnetic monopole (white circle) on $\mathbb{H}^3$.
(Right) The dilute gas approximation. 
}
\label{Wilson_vortex_monopole}
\end{figure}

\noindent
(I) Quark confinement due to hyperbolic vortices on $\mathbb{H}^2$: 

The Witten transformation corresponds to choosing the color direction field  as (see Appendix A for more details.)
\begin{equation}
 n^A(x) = \frac{x^A}{r} \quad (r:=\sqrt{x^Ax^A}) .
\end{equation}
Then the Abelian-like field defined by
\begin{equation}
 c_\mu (x) := n^A(x) \mathscr{A}_\mu^A(x) 
\end{equation}
is rewritten by using the Witten transformation into 
\begin{equation}
 c_\mu (x) =
 \begin{cases}
  c_4 (x)=  \frac{x^A}{r} \mathscr{A}_4^A(x)  = a_0(r,t) & (\mu=4) 
  \\
  c_j (x)=  \frac{x^A}{r} \mathscr{A}_j^A(x)  = \frac{x^j}{r} a_1(r,t) & (\mu=j) 
  \end{cases}
  .
\end{equation}
If we consider the loop $C$ on the $(t,r)$ plane, i.e., $\mu=4, \nu=r$, 
the second term vanishes: $- g_{{}_{\rm YM}}^{-1} \epsilon^{ABC}  {n}^A(x)  \partial_\mu  {n}^B(x)  \partial_\nu  {n}^C(x)=0$. 
Therefore we find
\begin{align}
 F_{4r}^U(x) =& \partial_4 c_r(x) - \partial_r c_4(x) 
 = \partial_4 \left( \frac{x^j}{r} c_j(x) \right) - \partial_r c_4(x) \nonumber\\
 =& \partial_4 a_1(r,t) - \partial_r a_0(r,t) := F_{4r}(t,r)  .
\end{align}
In this setting, the Wilson loop operator for a rectangular loop $C$ with the size $T \times L$ is expressed as 
\begin{align}
  W_{C=T \times L}[\mathscr{A}]  
 = \exp \left\{ i  J \int_{-T/2}^{T/2} dt \int_{0}^{L} dr F_{4r}(t,r) \right\} . 
 \label{Wilson_loop_A}
\end{align}
If the rectangular loop $C$ is very large $L,T \to \infty$ so that a vortex is located inside of $C$, the integral becomes equal to the topological charge $N_v=c_1$ according to (\ref{Nv}): 
\begin{align}
& \int_{-T/2}^{T/2} dt \int_{0}^{L} dr F_{4r}(t,r)  (L,T \to \infty) \nonumber\\
& \to  \int_{-\infty}^{\infty} dt \int_{0}^{\infty} dr F_{4r}(t,r) = 2\pi N_v  .
\end{align}
Since $2J$ is an integer, we find 
\begin{align}
  & W_{C=T \times L}[\mathscr{A}]  
 \to   \exp \left\{ i  2 \pi Jc_1 \right\} 
 = \exp (i\pi)^{2Jc_1} \nonumber\\ 
 =& (-1)^{2Jc_1}  
 =
 \begin{cases}
  (-1)^{c_1} & (J=\frac12, \frac32, ...) \\
  (+1)^{c_1}  & (J=1, 2, ... ) 
 \end{cases} . 
\end{align}
For a 1-vortex with $c_1=1$,  we find $W_{C=T \times L} \to \pm \in Z(2)$. Therefore, this vortex is regarded as the \textbf{center vortex}, since the center of $SU(2)$ is $Z(2)$.

Now we evaluate the Wilson loop expectation value to obtain the
static potential for two widely separated color charges
 in a $\theta$ vacuum. 
Note that the integrand of the Wilson loop operator shown above is the density of the instanton number, which means that in this theory, the Wilson loop $W_C[\mathscr{A}]$ counts the number of instantons-antiinstantons (or vortices-antivortices) that exist within the region $\Sigma$ enclosed by the loop $C$. The expectation value of the Wilson loop, including the topological term $i\theta Q$, is expressed as 
\begin{equation}
\langle \theta | W_C[\mathscr{A}] | \theta \rangle_{\rm{GS}} =
{\int \mathcal{D}A \mathcal{D}\phi
e^{-S_{\rm{GS}}+i \theta Q} W_C[\mathscr{A}]
\over 
\int \mathcal{D}A \mathcal{D}\phi
e^{-S_{\rm{GS}}+i \theta Q} }
=: {I_2 \over I_1},
\label{wle}
\end{equation}
Note that a nonzero $\theta$ is not required to show the area law of the Wilson loop below. We can set $\theta=0$ in the final result.
Including the topological term $i \theta Q$ in the action is equivalent to defining the $\theta$ vacuum as follows:
\begin{align}
|\theta \rangle := \sum_{n=-\infty}^{+\infty} e^{in\theta}
|n \rangle .
\end{align}

\par
In the following, we calculate the Wilson loop expectation value (\ref{wle}) using the dilute instanton gas approximation.
This method is well known, see for example Chapter 11 of Rajaraman(1989)\cite{Rajaraman89} or Chapter 7 of Coleman(1985) \cite{Coleman85}.

\begin{prop}[area law of the Wilson loop average]
In the dilute instanton gas approximation, the expectation value of the Wilson loop operator (\ref{Wilson_loop_A}) obeys the area law: 
\begin{align}
& \langle \theta | W_C[\mathscr{A}] | \theta \rangle = e^{- \sigma A(C)} ,
\nonumber\\
&\sigma :=  2K e^{-S_1/\hbar} \left[ \cos (\theta c_2) - \cos
\left(\theta c_2 + 2\pi J c_1\right) \right] ,
\label{wler}
\end{align}
where $c_1$ and $c_2$ are the first and second Chern numbers respectively. 
Taking the rectangular Wilson loop leads to the static quark potential:
\begin{equation}
V(R) = \sigma R ,
\quad
\sigma = 2K e^{-S_1/\hbar} \left[ \cos (\theta c_2) - \cos
\left(\theta c_2 + 2\pi J c_1\right) \right] .
\label{pot}
\end{equation}
When $J$ is an integer, the vacuum is periodic with respect to $\theta$ with period $2\pi$, so the potential is zero.
This means that the integral charges are screened by the formation of neutral bound states.
When $J$ is not an integer,
the static quark potential $V(R)$ is given by a linear potential with string tension $\sigma$ as the proportionality coefficient.

\end{prop}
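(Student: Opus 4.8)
The plan is to evaluate the ratio $I_2/I_1$ in \eqref{wle} within the dilute instanton gas approximation, following the standard treatment (see Ch.~11 of Rajaraman\cite{Rajaraman89} or Ch.~7 of Coleman\cite{Coleman85}). First I would invoke the non-Abelian Stokes theorem together with Proposition~\ref{prop:monopole-dominance} to reduce the $SU(2)$ Wilson loop to the Abelian form \eqref{Wilson_loop_A}, whose exponent is $iJ$ times the flux $\int_\Sigma F_{4r}$. By \eqref{Nv} a single (anti)vortex---equivalently an (anti)instanton after the dimensional reduction---that is localized inside $\Sigma$ contributes $\pm 2\pi c_1$ to this flux, while one localized outside $\Sigma$ contributes $0$. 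In parallel, the topological charge of a configuration built from $n_+$ instantons and $n_-$ anti-instantons is $Q=c_2(n_+-n_-)$, and in the dilute limit the action is additive, $S\simeq(n_++n_-)S_1$, with $S_1$ the single-instanton action.

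Next, the denominator. Summing over all topological sectors, integrating over the collective coordinates of each lump (which produces a constant density factor $K$ per lump together with an overall factor of the area $V$ of the $\mathbb{H}^2$ region) and exponentiating gives
\begin{align}
I_1 &= \sum_{n_+,n_-\ge 0}\frac{\bigl(KV e^{-S_1/\hbar}\bigr)^{n_++n_-}}{n_+!\,n_-!}\, e^{i\theta c_2(n_+-n_-)} \nonumber\\
&= \exp\!\bigl[\,2KV e^{-S_1/\hbar}\cos(\theta c_2)\,\bigr].
\end{align}
For the numerator $I_2$ I would split the lumps into those lying inside $\Sigma$ (total area $A(C)$) and those outside (area $V-A(C)$); only the former carry the extra Wilson-loop phase $e^{\pm i 2\pi J c_1}$. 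The same exponentiation yields
\begin{align}
I_2 &= \exp\!\bigl[\,2K(V-A(C)) e^{-S_1/\hbar}\cos(\theta c_2)\,\bigr] \nonumber\\
&\quad\times \exp\!\bigl[\,2K A(C)\, e^{-S_1/\hbar}\cos(\theta c_2 + 2\pi J c_1)\,\bigr].
\end{align}
Dividing, the $V$-dependence cancels and
\begin{align}
\langle\theta|W_C|\theta\rangle &= \exp\!\Bigl\{-2K A(C)\, e^{-S_1/\hbar}\bigl[\cos(\theta c_2)-\cos(\theta c_2 + 2\pi J c_1)\bigr]\Bigr\} \nonumber\\
&= e^{-\sigma A(C)},
\end{align}
which is \eqref{wler}. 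Taking $C$ rectangular with $A(C)=R\,T$ and reading off $\langle W_C\rangle\sim e^{-V(R)T}$ gives $V(R)=\sigma R$, i.e.\ \eqref{pot}. For integer $J$ one has $2\pi J c_1\in 2\pi\mathbb{Z}$, so the bracket vanishes and $\sigma=0$ (the integer charge is screened by neutral bound states); for non-integer $J$ the bracket is generically nonzero and the linear potential survives.

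The main obstacle is the justification of the dilute-gas factorization itself: one must argue that the dominant contributions come from widely separated approximate superpositions of the explicit single-(anti)instanton solutions constructed in the preceding sections, that the Gaussian fluctuation determinant and the zero-mode (collective-coordinate) integration can be reorganized into a single universal density $K$, and that inter-lump interactions are exponentially suppressed; this is the usual, only heuristically controlled, step of the semiclassical instanton-gas picture. A secondary technical point is that the clean inside/outside separation of $\int_\Sigma F_{4r}$ is exact only for point-like lumps---for finite-size lumps it receives perimeter corrections that are subleading in the loop size and hence do not modify the area-law coefficient $\sigma$.
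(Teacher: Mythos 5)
Your proposal is correct and follows essentially the same route as the paper's own proof: the same dilute-gas grand-canonical summation over $n_\pm$ with collective-coordinate density $K$, the same inside/outside split of the loop with phases $e^{\pm 2\pi i J c_1}$ attached only to interior lumps, and the same cancellation of the volume $V$ in the ratio $I_2/I_1$ leading to \eqref{wler}. Your closing remarks on the heuristic status of the dilute-gas factorization and the neglected perimeter corrections are consistent with (and slightly more explicit than) the caveats the paper itself states.
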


\begin{proof}
Let $n_{+}$ and $n_{-}$ be the total number of instantons and anti-instantons in the volume (area)$V$, respectively. 
The action of a 1-instanton and a 1-anti-instanton is equal, and we call this $S_1$.
Then the action  is expressed as follows:
\begin{align}
S_{\rm{GS}} 
= (n_{+}+n_{-})S_1, \  
S_1=  {4\pi^2 \over g_{{}_{\rm YM}}^2} .
\end{align}
(\ref{wle}) is considered to be the average number of the total instanton charge $Q$ within the volume (area) $V$ of all instanton-antiinstanton ensembles generated by the action of the theory.

First, we classify the field configurations that contribute to the instanton tunneling amplitude according to the number of sufficiently separated instantons $n_{+}$ and antiinstantons $n_{-}$ where 
\begin{align} 
Q=nc_2=(n_{+}-n_{-}) c_2 .
\end{align}

Next, we sum over all configurations that contain sufficiently separated $n_+$ instantons and $n_-$ antiinstantons.
In the dilute gas approximation, the calculation of the tunneling amplitude is reduced to the calculation of a single instanton (or antiinstanton) contribution: 
$n \rightarrow n+1$ (or $n \rightarrow n-1$).
The terms $n_{+}=1, n_{-}=0$ (or $n_{+}=0, n_{-}=1$) are given by
\begin{align}
& \langle n=\pm 1|e^{-HT} | 0 \rangle \nonumber\\
=& \int d\mu(\lambda) \int_{V} d^2 x \exp (-S_1/\hbar) \exp (\pm i\theta c_2)
\nonumber\\
=& KV \exp (-S_1/\hbar) \exp (\pm i\theta c_2).
\end{align}
Here, the coefficients $KV$ can be in principle obtained from integrals of the collective coordinates, i.e., the size $\lambda$ and position $x$ of the instanton:
\begin{align}
\int d\mu(\lambda) \int_{V} d^2 x = K V .
\end{align}
Here, $V$ is the volume (area) of a finite but sufficiently large two-dimensional space.
$K$ is a constant. To know the exact form of $K$, it is necessary to determine the measure $\mu(\lambda)$ of the collective coordinate $\lambda$, but this is omitted.

In the dilute gas approximation, the denominator of the Wilson loop expectation value, i.e., the partition function, can be calculated as 
\begin{align}
I_1
=& \sum_{n_{+},n_{-}=0}^{\infty} {(KV)^{n_{+}+n_{-}} \over
n_{+}!n_{-}!}  \nonumber\\ &
\times \exp \left[-(n_{+}+n_{-})S_1/\hbar 
+ i \theta c_2 (n_{+}-n_{-}) \right] 
\nonumber\\ 
=& \sum_{n_{+}=0}^{\infty} 
{(KV e^{-S_1/\hbar + i\theta c_2})^{n_{+}} \over n_{+}!} 
\sum_{n_{-}=0}^{\infty}  { (KV e^{-S_1/\hbar - i\theta c_2})^{n_{-}} \over n_{-}!}  
\nonumber\\
=& \exp [ KV e^{-S_1/\hbar + i\theta c_2}
+ KV e^{-S_1/\hbar - i\theta c_2} ]
\nonumber\\
=& \exp \left[2 KV \cos (\theta c_2) e^{-S_1/\hbar} \right] .
\end{align}
Here, there are no restrictions on integers $n_+$ or $n_-$, because we sum over $Q = n_+ - n_-$.
This sum is exactly the grand partition function for a classical perfect gas, i.e., non-interacting particles.

To calculate the numerator of the Wilson loop expectation value, let $n_{+}^{in}$ and $n_{+}^{out}$ be the numbers of instantons inside and outside the loop $C$, respectively.
Similarly, let $n_{-}^{in}$ and $n_{-}^{out}$ be the numbers of anti-instantons inside and outside the loop $C$, respectively. 
Let $A(C)$ be the area on the plane enclosed by loop $C$.
The dilute gau approximation only makes sense if the loop $C$ is large enough so that the size of the instantons is negligibly small compared to the size of the loop $C$, and the overlap of instantons and anti-instantons with the loop is ignored (this is equivalent to ignoring the perimeter part of the Wilson loop):
\begin{align}
W_C[\mathscr{A}] = \exp \left[ 2\pi i J c_1 (n_{+}^{in}-n_{-}^{in})  \right] .
\end{align}
The numerator of the Wilson loop expectation value is
\begin{align}
I_2
  =& \sum_{n_{+}^{in},n_{-}^{in}=0}^{\infty}
{(K A(C))^{n_{+}^{in}+n_{-}^{in}}
\over n_{+}^{in}!n_{-}^{in}!}  
\nonumber\\
&  e^{-(n_{+}^{in}+n_{-}^{in})S_1/\hbar
   + i \theta c_2 (n_{+}^{in}-n_{-}^{in}) } e^{ 2\pi i J c_1 (n_{+}^{in}-n_{-}^{in}) }
  \nonumber\\ 
  & \times \sum_{n_{+}^{out},n_{-}^{out}=0}^{\infty}
{(K(V-A(C)))^{n_{+}^{out}+n_{-}^{out}}
\over n_{+}^{out}!n_{-}^{out}!}  
\nonumber\\
&  e^{-(n_{+}^{out}+n_{-}^{out})S_1/\hbar 
  + i \theta c_2 (n_{+}^{out}-n_{-}^{out}) }   \nonumber\\
  =& \sum_{n_{+}^{in}=0}^{\infty} \frac{1}{n_{+}^{in}!} \left( KA(C) e^{-S_1/\hbar} e^{i\theta c_2 } e^{2\pi i J c_1} \right)^{n_{+}^{in}} 
\nonumber\\
& \times \sum_{n_{-}^{in}=0}^{\infty} \frac{1}{n_{-}^{in}!} \left( KA(C) e^{-S_1/\hbar} e^{-i\theta c_2 } e^{-2\pi i J c_1} \right)^{n_{-}^{in}}
  \nonumber\\ 
  & \times \sum_{n_{+}^{out}=0}^{\infty} \frac{1}{n_{+}^{out}!} \left( K(V-A(C)) e^{-S_1/\hbar} e^{i\theta c_2 } \right)^{n_{+}^{out}} 
\nonumber\\
& \times \sum_{n_{-}^{out}=0}^{\infty} \frac{1}{n_{-}^{out}!} \left( K(V-A(C)) e^{-S_1/\hbar} e^{i\theta c_2 } \right)^{n_{-}^{out}} 
  \nonumber\\
  =& \exp \Big\{2 K e^{-S_1/\hbar} \Big[ A(C)\cos \left(\theta c_2+ 2\pi J c_1 \right) 
\nonumber\\
&  \quad\quad\quad\quad\quad\quad\quad\quad + (V-A(C))\cos \theta c_2 \Big]
 \Big\} . 
\end{align}
Therefore, the Wilson loop expectation value is given as the ratio $I_2/I_1$:
\begin{align}
& \langle \theta | W_C[\mathscr{A}] | \theta \rangle = e^{- \sigma A(C)} ,
\nonumber\\
& \sigma := 2K e^{-S_1/\hbar} \left[ \cos (\theta c_2) - \cos
\left(\theta c_2 + 2\pi J c_1\right) \right] .
\label{wler}
\end{align}
We note here that the volume dependence disappears by taking the ratio $I_2^\theta/I_1^\theta$.
Thus, the Wilson loop expectation value in a vacuum with the topological angle $\theta$ shows the area law.

\end{proof}

\begin{rem} 
It should be noted that the confining potential is a non-perturbative quantum effect caused by instantons, because the linear potential 
has a factor $e^{-S_1/\hbar}$, which is exponentially small in $\hbar$ and vanishes at $\hbar \rightarrow 0$. This is the crucial difference between the linear potential in two dimensions (\ref{pot}) and the linear Coulomb potential. The latter arises simply because one-dimensional space leaves no room for fluxes to be spread out.

\end{rem}

\noindent
(II) Quark confinement due to hyperbolic magnetic monopoles on $\mathbb{H}^3$ and holography: 

Next, we calculate the the Wilson loop average in a different setting. 
We take the Wilson loop operator on the conformal boundary $\partial\mathbb{H}^3$ of $\mathbb{H}^3$ by taking the limit $\rho \to 0$. 

\begin{figure}[htb]
\begin{center}
\includegraphics[scale=0.5]{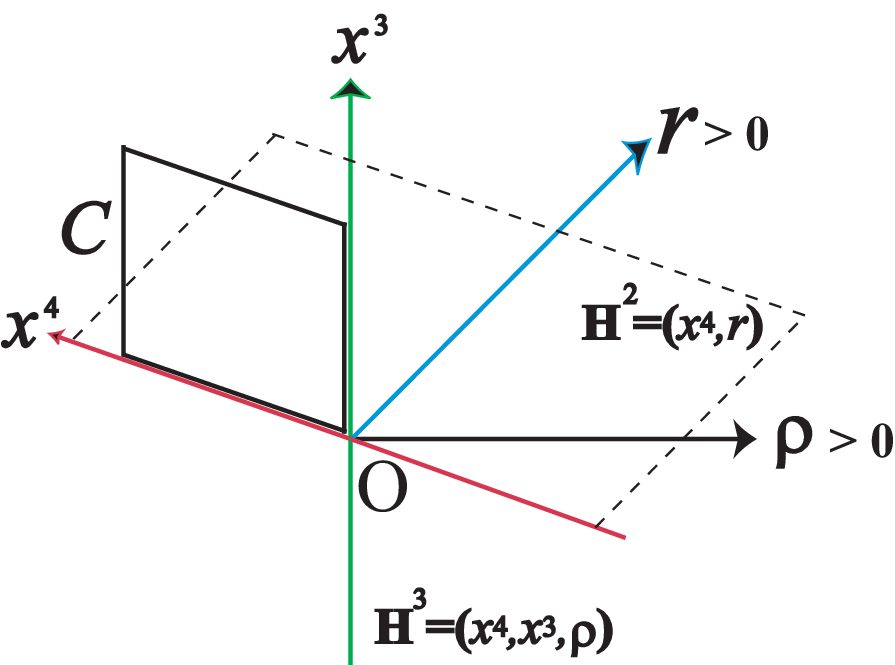}
\end{center}
\caption{
The Wilson loop $C$ located on the conformal boundary $\partial \mathbb{H}^3$, i.e., $x^3-x^4$ plane of the hyperbolic space $\mathbb{H}^3$.
}
\label{Wilson_vortex_monopole}
\end{figure}


\begin{prop}[Wilson loop operator on the conformal boundary $\partial \mathbb{H}^3$]
If the loop $C$ lies on the conformal boundary $\partial \mathbb{H}^3$, i.e., $x^3-x^4$ of $\mathbb{H}^3$, the Wilson loop operator in the fundamental representation $F$ defined for the $S^1$-invariant $SU(2)$ Yang-Mills field $\mathscr{A}_\mu^G$ takes the simple Abelian form:
\begin{align}
W_C[\mathscr{A}] 
=& \frac12  {\rm tr}_{F} \left\{ \exp \left[ i \frac{\sigma_3}{2} \oint_C dx^\mu a_\mu(t,x^3) \right] \right\} \nonumber\\
=&  \frac12  {\rm tr}_{F} \left\{ \exp \left[ i \frac{\sigma_3}{2} \int_{\Sigma: \partial \Sigma=C} dt dx^3  F_{4r}(t,x^3) \right] \right\} 
.
\end{align}
This Wilson loop operator is the same as the previous one (\ref{Wilson_loop_A}). 
Adopting the dilute instanton gas approximation, the Wilson loop average obeys the area law (\ref{wler}). 

\end{prop}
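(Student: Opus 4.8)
The plan is to reduce the non-Abelian Wilson loop in the bulk to an Abelian Wilson loop on the conformal boundary and then quote the area-law Proposition already proven in case (I). First I would use the fact that $C$ lies in the $x^3$-$x^4$ plane ($\rho=0$), so that only $\mathscr{A}_3^G$ and $\mathscr{A}_4^G$ enter the line integral $\oint_C dx^\mu\,\mathscr{A}_\mu^G$. By Proposition~\ref{prop:monopole-dominance} these converge as $\rho\to 0$ to $\tfrac{\sigma_3}{2}a_r(x^4,x^3)$ and $\tfrac{\sigma_3}{2}a_t(x^4,x^3)$ respectively, i.e.\ to elements of the Cartan subalgebra $\mathbb{R}\tfrac{\sigma_3}{2}$. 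Since all such elements commute, the path-ordering symbol $\mathscr{P}$ becomes redundant and the holonomy collapses to $\exp\!\bigl[i\tfrac{\sigma_3}{2}\oint_C dx^\mu a_\mu\bigr]$; taking $\tfrac12\operatorname{tr}_F$ gives the first displayed expression in the Proposition. The off-diagonal pieces of $\mathscr{A}_\rho^G$ that survive the limit in Proposition~\ref{prop:monopole-dominance} play no role here because $dx^\rho$ does not appear for a loop lying on $\partial\mathbb{H}^3$.

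Second, I would apply the ordinary (Abelian) Stokes theorem to $a=a_\mu dx^\mu$ on the $x^3$-$x^4$ plane, converting $\oint_C dx^\mu a_\mu$ into the surface integral $\int_{\Sigma:\partial\Sigma=C} dt\,dx^3\, F_{4r}(t,x^3)$ with $F_{4r}=\partial_4 a_r-\partial_r a_t$; this is consistent with the boundary limit $\mathscr{F}_{43}^G\to\tfrac{\sigma_3}{2}F_{4r}$ recorded in Proposition~\ref{prop:monopole-dominance}. The resulting operator is precisely \eqref{Wilson_loop_A} with the representation spin fixed to $J=\tfrac12$ (the fundamental representation of $SU(2)$), so the present Wilson loop operator is literally the one analysed in case (I), and in particular, for a large loop enclosing a hyperbolic vortex, the integral of $F_{4r}$ over $\Sigma$ equals $2\pi c_1$ as in \eqref{Nv}.

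Finally, I would invoke verbatim the area-law Proposition of case (I): in the dilute instanton gas approximation the ensemble of widely separated instantons and anti-instantons (equivalently hyperbolic vortices and antivortices, whose $F_{4r}$-flux through $\Sigma$ equals the first Chern number $c_1$) gives $\langle\theta|W_C[\mathscr{A}]|\theta\rangle=e^{-\sigma A(C)}$ with $\sigma$ as in \eqref{wler}, hence a linear static quark potential. The main obstacle I anticipate is justifying the interchange of the $\rho\to0$ limit with the functional integral and the legitimacy of reconstructing the bulk observable from boundary data; here one should lean on the holographic bulk/boundary correspondence for hyperbolic magnetic monopoles (the Propositions on bulk/boundary correspondence of $\mathbb{H}^3$ and on holography of hyperbolic magnetic monopoles), which guarantees that the $S^1$-invariant configuration is faithfully encoded by its value on $\partial\mathbb{H}^3$, so that no information relevant to $W_C$ is lost in restricting to the conformal boundary. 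A secondary point to verify is that the collective-coordinate measure and the counting $Q=n c_2$, together with the identification of $c_1$ used in case (I), carry over unchanged to this boundary description, which follows from the conformal equivalence relating the bulk $\mathbb{H}^3$ picture to the $\mathbb{H}^2$ vortex picture.
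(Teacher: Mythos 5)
Your proposal is correct and takes essentially the same route as the paper: the paper's entire proof is the one-line statement that the proposition is an immediate consequence of Proposition~\ref{prop:monopole-dominance}, i.e.\ precisely the boundary limit you use to diagonalize the connection, drop the path ordering, apply the Abelian Stokes theorem, and identify the result with the operator of case (I) so that the area-law proposition applies verbatim. Your additional remarks on interchanging the $\rho\to 0$ limit with the functional integral and on the holographic justification go beyond what the paper records, but they do not change the argument.
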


\begin{proof}
This is an immediate consequence of Prop.~\ref{prop:monopole-dominance}.

\end{proof}

Therefore, the Yang-Mills field approaches the diagonal Abelian field on the conformal boundary $x^3-x^4$.  This is regarded as the Abelian dominance and the magnetic monopole dominance.

In the ordinary flat Euclidean case, the (infrared) Abelian dominance and the the magnetic monopole dominance in quark confinement have been confirmed by numerical simulations and also supported by analytical investigations, see e.g., \cite{PR} for the details, although there are no rigorous proofs.

\begin{rem} 
This result obtained in the dilute gas approximation is consistent with the `t Hooft anomaly, which can be shown in the way similar to that given by Tanizaki and \"Unsal (2022)\cite{TU22}.

\end{rem} 

\begin{rem} 
In order to discuss deconfinement at finite temperature, we need the caloron solution. 
The hyperbolic caloron was studied by Harland(2008)\cite{Harland08} and Sibner-Sibner-Yang(2015)\cite{SSY15} where  the caloron of the Harrrington-Schepard type was found.  
However, the caloron of the KvBLL type with the non-trivial holonomy has never been found to the knowledge of the author. 

\end{rem}

\section{The viewpoint of the Cho-Duan-Ge-Faddeev-Niemi decomposition}

It is shown that the Witten transformation is a special case of the $SU(2)$ gauge-covariant Cho-Duan-Ge-Faddeev-Niemi decomposition of the gauge field (Cho(1980)\cite{Cho80}, Duan and Ge(1979)\cite{DG79}, Faddeev and Niemi(FN99)\cite{FN99,FN99a,FN99b}, Shabanov(1999)\cite{Shabanov99}, Kondo, Murakami and Shinohara(2006)\cite{KMS06}) 
Therefore, the decomposition can potentially give more general point of view for rewriting and reinterpreting the Yang-Mills theory. 
For example, it enables us to understood the reason why the dimensionally reduced gauge theory has the just Abelian $U(1)$ gauge invariance which is reduced from $SU(2)$ gauge invariance of the original non-Abelian gauge theory.

\begin{definition}[Cho-Duan-Ge-Faddeev-Niemi decomposition of Yang-Mills field]

The Cho-Duan-Ge-Faddeev-Niemi (CDGFN) decomposition of the SU(2) gauge field $\mathbf{A}_\mu(x)$ can be written in vector notation as follows:
\begin{align}
\mathbf{A}_\mu(x)=& \mathbf{V}_\mu(x) + \mathbf{X}_\mu(x), \nonumber\\
\mathbf{V}_\mu(x):=& c_\mu(x)\mathbf{n}(x) + g^{-1}\partial_\mu \mathbf{n}(x) \times\mathbf{n}(x), \nonumber\\
\mathbf{X}_\mu(x):=& \phi_1(x)\partial_\mu\mathbf{n}(x) + \phi_2(x)\partial_\mu\mathbf{n}(x)\times\mathbf{n}(x).
\end{align}
Here, $\mathbf{n}(x)$ is a unit vector field with three components:
\begin{align}
\mathbf{n}(x)\cdot\mathbf{n}(x) = n^A(x)n^A(x) = 1.
\end{align}
$c_\mu(x)$ is an Abelian vector field. $\phi_1(x)$ and $\phi_2(x)$ are real scalar fields. The field $\mathbf{n}(x)$ has two independent degrees of freedom, the field $c_\mu(x)$ has two independent (transverse polarization) degrees of freedom, and the fields $\phi_1(x)$ and $\phi_2(x)$ each have one degree of freedom. This is equivalent to the $2\times3=6$ polarization degrees of freedom of the SU(2) Yang-Mills field $\mathbf{A}_\mu(x)$ after gauge fixing.

Note that $\mathbf{n}$, $\partial_\mu\mathbf{n}$, and $\partial_\mu\mathbf{n}\times \mathbf{n}$ constitute a basis for $SU(2)$ space for any choice of $\mathbf{n}(x)$.  
See the left panel of Fig.~\ref{fig:SU2_n_basis}.
\end{definition}

\begin{figure}[htb]
\begin{center}
\includegraphics[scale=0.20]{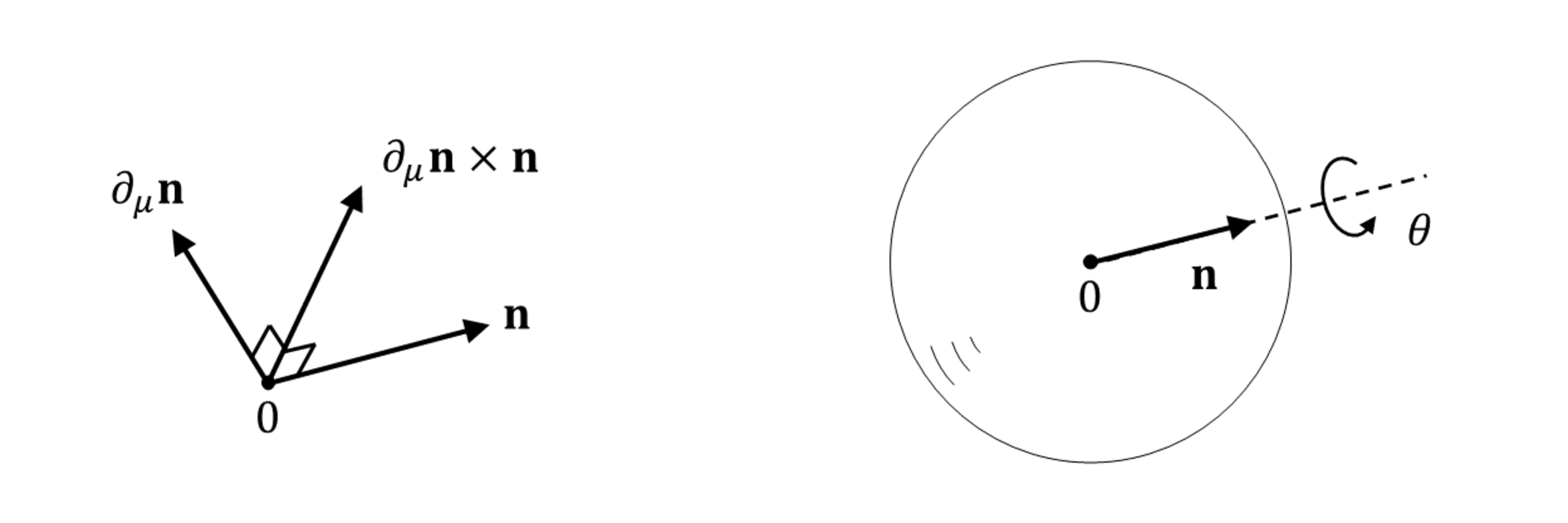}
\end{center}
\vskip -0.5cm
\caption{
(Left) $\mathbf{n}, \partial_\mu\mathbf{n}, \partial_\mu\mathbf{n}\times \mathbf{n}$ as a basis for $SU(2)$ space,
(Right) Rotation by angle $\theta$ around the $\mathbf{n}$-direction axis.
}
\label{fig:SU2_n_basis}
\end{figure}

\begin{rem} 
The independent field degrees of freedom in both sides of the Cho-Duan-Ge-Faddeev-Niemi decomposition of the $D=4$ and $SU(2)$ Yang-Mills field have the correct matching as follows. 

The decomposition is considered on-shell: the $D=4$ and $SU(2)$ Yang-Mills field $\mathscr{A}_\mu^{A}$ has $2 \times 3=6$ (2 transverse components for $\mu$, 3 components for $A$) on the left hand side, which agrees with $2+2+2=6$ (2 transverse components for $c_\mu$, 2 components for $\mathbf{n}$, 2 components for $\phi_1, \phi_2$) on the right hand side. 
See Faddeev-Niemi (1999a)\cite{FN99a}

For the decomposition to be valid for off-shell with the gauge fixing conditions (gauge-fixed off-shell), we need to introduce another pair of the scalar field $\tilde\phi_1, \tilde\phi_2$: the $D=4$ and $SU(2)$ Yang-Mills field $\mathscr{A}_\mu^{A}$ has $4 \times 3-3=9$ (4 components for $\mu$, 3  for $A$, 3 gauge fixing conditions, e.g., $\partial^\mu \mathscr{A}_\mu^{A}=0$) on the left hand side, which agrees with $4+2+2+2-1=9$ (4 components for $c_\mu$, 2 components for $\mathbf{n}$, 2 components for $\phi_1, \phi_2$, 2 components for $\tilde\phi_1, \tilde\phi_2$, 1 gauge fixing condition, e.g., $\partial^\mu c_\mu=0$) on the right hand side. 
Without gauge fixing, both sides do not agree for the independent fields. 
See Faddeev-Niemi (1999b)\cite{FN99b}
This fact is important at the level of a quantum theory which goes beyond obtaining the solution of the field equation.  

\end{rem}

\begin{example}[Witten Ansatz and BPST instanton]
Witten Ansatz
\begin{align}
\mathscr{A}_0^A(x) =& \frac{x^A}{r}a_0(r,t), \\
\mathscr{A}_j^A(x) =&\frac{x^jx^A}{r^2}a_1(r,t) + \frac{\delta^{jA}r^2 - x^jx^A}{r^3}\phi_1(r,t) \nonumber\\
&+ \frac{\epsilon^{jAk}x^k}{r^2}[1 + \phi_2(r,t)]
\end{align}
is reproduced from the CDGFN decomposition:
\begin{align}
\mathscr{A}_0^A(x) =& a_0(r,t)\mathbf{n}^A(x),\\
\mathscr{A}_j^A(x) =& \frac{x^j}{r}a_1(r,t)\mathbf{n}^A(x) + \partial_j \mathbf{n}^A \phi_1(r,t) \nonumber\\
&+ (\partial_j \mathbf{n} \times \mathbf{n})^A[1 + \phi_2(r,t)] ,
\end{align}
with
\begin{align}
n^A(x) =& \frac{x^A}{r}, \ c_0(x) = a_0(r,t) , \ c_j(x) = \frac{x^j}{r}a_1(r,t) , \nonumber\\
\phi_1(x) =& \phi_1(r,t), \ \phi_2(x) = \phi_2(r,t), \nonumber\\
& r:=|\mathbf{x}| = \sqrt{x_1^2 +x_2^2 +x_3^2 }.
\end{align}
In particular, the one-instanton solution of 
Belavin-Polyakov-Shwarts-Tyupkin (BPST) \cite{BPST} corresponds to the following choice of Ansatz function:
\begin{align}
A_0 =& F(r,t)r, \ A_1 = -F(r,t)t, \nonumber\\
\phi_1 =& -F(r,t)rt, \ \phi_2 = F(r,t)r^2 - 1, \nonumber\\
& F(r,t) = \frac{2}{r^2 + t^2 + \lambda^2}.
\end{align}
\end{example}

\begin{example}[Wu-Yang magnetic monopole]
The Wu-Yang magnetic monopole configuration (Wu and Yang(1975)\cite{WY75})
\begin{align}
\mathscr{A}_0^A(x) = 0, \ \mathscr{A}_j^A(x) = \epsilon_{Ajk}\frac{x_k}{r^2}
\end{align}
is reproduced from the CDGFN decomposition:
\begin{align}
n^A(x) =\frac{x^A}{r}, \ c_\mu(x) = 0, \ \phi_1(x) = 0, \ \phi_2(x) = 0.
\end{align}
\end{example}

\begin{example}['tHooft-Polyakov magnetic monopole]
The 'tHooft-Polyakov magnetic monopole configuration (`t Hooft, Polyakov(1974)\cite{tHP74})
\begin{align}
\mathscr{A}_0^A(x) = 0, \ \mathscr{A}_j^A(x) = \epsilon^{Ajk}\frac{x^k}{r^2}f(r) .
\end{align}
This is reproduced from the CDGFN decomposition:
\begin{align}
& n^A(x) =\frac{x^A}{r}, \ c_\mu(x) = 0, \nonumber\\
& \phi_1(x) = 0, \ \phi_2(x) = -1 + f(r).
\end{align}
\end{example}

\begin{prop}[Gauge covariance of Cho-Duan-Ge-Faddeev-Niemi decomposed Yang-Mills fields]\label{FNd:covariance}

Consider the SU(2) finite gauge transformation of the SU(2) Yang-Mills field $\mathbf{A}_\mu(x)$:
\begin{align}
\mathbf{A}_\mu(x) \rightarrow& \mathbf{A}_\mu^U(x) \nonumber\\
=& U(x)\mathbf{A}_\mu(x)U(x)^\dagger 
+ ig^{-1}U(x)\partial_\mu U(x)^\dagger, \nonumber\\
& U(x)  \in {\rm{SU}}(2) .
\label{SU2-U1_transf}
\end{align}
Under this transformation, the Cho-Duan-Ge-Faddeev-Niemi decomposition is covariant. That is, the SU(2) Yang-Mills field $\mathbf{A}_\mu^U(x)$ after gauge transformation has the same Cho-Duan-Ge-Faddeev-Niemi decomposition as the original gauge field $\mathbf{A}_\mu(x)$:
\begin{align}
\mathbf{A}_\mu^U(x) =& \mathbf{V}_\mu^U(x) + \mathbf{X}_\mu^U(x), \nonumber\\
\mathbf{V}_\mu^U(x):=& U(x)\mathbf{V}_\mu(x)U(x)^\dagger + ig^{-1}U(x)\partial_\mu U(x)^\dagger \nonumber\\
=& c_\mu^U(x)\mathbf{n}(x) + g^{-1}\partial_\mu\mathbf{n}(x)\times\mathbf{n}(x), \nonumber\\
\mathbf{X}_\mu^U(x):=& U(x)\mathbf{X}_\mu(x)U(x)^\dagger \nonumber\\
=& \phi_1^U(x)\partial_\mu\mathbf{n}(x) + \phi_2^U(x)\partial_\mu\mathbf{n}(x)\times\mathbf{n}(x), \nonumber\\
\label{SU2_transf1}
\end{align}
provided that the color direction field transforms according to the adjoint representation:
\begin{align}
\mathbf{n}^U(x) =& U(x)\mathbf{n}(x) U(x)^\dagger .
\end{align}

\end{prop}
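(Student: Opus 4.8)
The plan is to substitute the transformed field $\mathbf{A}_\mu^U$ into the decomposition formulas and show that, once $\mathbf{n}$ is carried along by the adjoint action $\mathbf{n}^U = U\mathbf{n}U^\dagger$, the pieces $\mathbf{V}_\mu$ and $\mathbf{X}_\mu$ transform homogeneously in the stated way. First I would work with the matrix ($su(2)$) form rather than vector form, writing $\mathbf{n} = n^A\sigma_A/2$ and using the identifications $\partial_\mu\mathbf{n}\times\mathbf{n} \leftrightarrow ig[\partial_\mu\mathbf{n},\mathbf{n}]$ (up to the conventional factor), since conjugation $X\mapsto UXU^\dagger$ manifestly respects commutators and the cross product. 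The key algebraic identity is that conjugation is an automorphism of $su(2)$ preserving the Lie bracket and the Killing form, so $c_\mu = 2\,\mathrm{tr}(\mathbf{n}\mathbf{A}_\mu)$, $\phi_1$, $\phi_2$ are extracted from $\mathbf{A}_\mu$ by contractions with the frame $\{\mathbf{n},\partial_\mu\mathbf{n},\partial_\mu\mathbf{n}\times\mathbf{n}\}$ that are themselves covariant.

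The steps, in order: (1) Define $\mathbf{n}^U := U\mathbf{n}U^\dagger$ and verify that $\{\mathbf{n}^U,\partial_\mu\mathbf{n}^U,\partial_\mu\mathbf{n}^U\times\mathbf{n}^U\}$ is again an orthonormal-type frame for $su(2)$, using $\partial_\mu\mathbf{n}^U = U(\partial_\mu\mathbf{n})U^\dagger + [iU\partial_\mu U^\dagger, \mathbf{n}^U]$ — note the inhomogeneous term lies in the span of $\mathbf{n}^U$ and $\partial_\mu\mathbf{n}^U\times\mathbf{n}^U$ directions only after projection, which must be tracked carefully. (2) Compute $\mathbf{V}_\mu^U := U\mathbf{V}_\mu U^\dagger + ig^{-1}U\partial_\mu U^\dagger$ and show it equals $c_\mu\mathbf{n}^U + g^{-1}\partial_\mu\mathbf{n}^U\times\mathbf{n}^U$; this is the crux, because the connection term $g^{-1}\partial_\mu\mathbf{n}\times\mathbf{n}$ transforms inhomogeneously under conjugation and the extra $ig^{-1}U\partial_\mu U^\dagger$ must exactly absorb that inhomogeneity. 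One shows $U(g^{-1}\partial_\mu\mathbf{n}\times\mathbf{n})U^\dagger + ig^{-1}U\partial_\mu U^\dagger = g^{-1}\partial_\mu\mathbf{n}^U\times\mathbf{n}^U + (\text{a term }\propto \mathbf{n}^U)$, and the longitudinal remainder is precisely the shift $c_\mu \to c_\mu^U$. (3) Compute $\mathbf{X}_\mu^U := U\mathbf{X}_\mu U^\dagger$ and read off $\phi_1^U,\phi_2^U$ by projecting onto $\partial_\mu\mathbf{n}^U$ and $\partial_\mu\mathbf{n}^U\times\mathbf{n}^U$; because $\mathbf{X}_\mu$ is already in the adjoint-covariant orthogonal complement of $\mathbf{n}$, this step is the easiest — conjugation maps the complement to the complement. (4) Conclude $\mathbf{A}_\mu^U = \mathbf{V}_\mu^U + \mathbf{X}_\mu^U$ with the asserted forms.

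The main obstacle is step (2): disentangling how the Maurer–Cartan term $ig^{-1}U\partial_\mu U^\dagger$ distributes between the $\mathbf{n}^U$-longitudinal part (which renormalizes $c_\mu$) and the transverse connection part $g^{-1}\partial_\mu\mathbf{n}^U\times\mathbf{n}^U$. The clean way to see it is to note that $\mathbf{V}_\mu$ is characterized intrinsically as the unique connection satisfying $\mathscr{D}_\mu[\mathbf{V}]\mathbf{n} = 0$ together with the condition that its component along $\mathbf{n}$ is $c_\mu$; both characterizing properties are gauge covariant ($\mathscr{D}_\mu[\mathbf{V}^U]\mathbf{n}^U = U(\mathscr{D}_\mu[\mathbf{V}]\mathbf{n})U^\dagger = 0$), which forces $\mathbf{V}_\mu^U$ to have exactly the claimed form with $c_\mu^U = 2\,\mathrm{tr}(\mathbf{n}^U\mathbf{A}_\mu^U)$. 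I would present the argument this way to avoid a brute-force expansion, relegating the explicit verification of $\mathscr{D}_\mu[\mathbf{V}]\mathbf{n}=0$ to a one-line computation. The remaining identity $c_\mu^U = c_\mu + g^{-1}\,2\,\mathrm{tr}(\mathbf{n}^U\, iU\partial_\mu U^\dagger)$ then records the $U(1)$ part of the transformation and completes the proof.
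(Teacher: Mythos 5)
Your treatment of the $\mathbf{V}_\mu$ part is correct and takes a genuinely different route from the paper. The paper's proof (Appendix A) is a brute-force frame computation carried out only for $U=\exp\left(i\theta\,\mathbf{n}\cdot\tfrac{\boldsymbol{\sigma}}{2}\right)$, i.e.\ for gauge rotations that stabilize $\mathbf{n}$: it conjugates the basis $\{\mathbf{n},\partial_\mu\mathbf{n},\partial_\mu\mathbf{n}\times\mathbf{n}\}$, computes $\Omega_\mu=ig^{-1}U\partial_\mu U^\dagger = \tfrac{\boldsymbol{\sigma}}{2}\cdot g^{-1}\left[\partial_\mu\theta\,\mathbf{n}+\sin\theta\,\partial_\mu\mathbf{n}+(1-\cos\theta)\,\partial_\mu\mathbf{n}\times\mathbf{n}\right]$, and assembles the pieces. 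Your intrinsic characterization of $\mathbf{V}_\mu$ --- the unique field with $\mathscr{D}_\mu[\mathbf{V}]\mathbf{n}=0$ and $\mathbf{n}\cdot(\mathbf{A}_\mu-\mathbf{V}_\mu)=0$, both conditions manifestly covariant under $(\mathbf{A}_\mu,\mathbf{n})\to(\mathbf{A}_\mu^U,\mathbf{n}^U)$ --- is cleaner, works for arbitrary $U$, and immediately yields $\mathbf{V}_\mu^U=c_\mu^U\mathbf{n}^U+g^{-1}\partial_\mu\mathbf{n}^U\times\mathbf{n}^U$ with $c_\mu^U=\mathbf{n}^U\cdot\mathbf{A}_\mu^U$. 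That is the standard argument in the Kondo--Murakami--Shinohara literature and is preferable to the paper's expansion.

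The gap is in your step (3), which you call ``the easiest.'' Conjugation does map the orthogonal complement of $\mathbf{n}$ onto that of $\mathbf{n}^U$, so $\mathbf{X}_\mu^U\perp\mathbf{n}^U$; but the proposition asserts more, namely that $\mathbf{X}_\mu^U$ again has the Faddeev--Niemi form $\phi_1^U\,\partial_\mu\mathbf{n}^U+\phi_2^U\,\partial_\mu\mathbf{n}^U\times\mathbf{n}^U$ with scalars $\phi_a^U(x)$ \emph{independent of} $\mu$. Because $\partial_\mu\mathbf{n}^U$ differs from $U\partial_\mu\mathbf{n}\,U^\dagger$ by the Maurer--Cartan commutator you correctly flag in step (1), the new frame is not the conjugate of the old one, and projecting $U\mathbf{X}_\mu U^\dagger$ onto it generically produces $\mu$-dependent coefficients. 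The $\mu$-independence survives precisely when the transverse part of $iU\partial_\mu U^\dagger$ itself lies in the span of $\partial_\mu\mathbf{n}$ and $\partial_\mu\mathbf{n}\times\mathbf{n}$ with $\mu$-independent coefficients --- which is what the stabilizer case delivers (coefficients $\sin\theta$ and $1-\cos\theta$), and why the paper restricts to it: there the perpendicular frame rotates rigidly by $\theta$ and $(\phi_1,\phi_2)$ rotate accordingly. For a general $U(x)$ the Faddeev--Niemi form of $\mathbf{X}_\mu$ is not preserved; only the unrestricted perpendicular field transforms covariantly, as the paper's own remark on on-shell versus off-shell degrees of freedom hints. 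You should either restrict step (3) to the stabilizer subgroup, as the paper's Appendix A effectively does, or prove covariance for the unrestricted $\mathbf{X}_\mu$ and treat the two-scalar ansatz as an on-shell reduction.
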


\begin{proof}
See Appendix A.

\end{proof}

In particular, we consider the special case. 

\begin{prop}[Maximal torus gauge covariance of Cho-Duan-Ge-Faddeev-Niemi decomposed Yang-Mills fields]\label{FNd:covariance}

Consider the SU(2) finite gauge transformation of the SU(2) Yang-Mills field $\mathbf{A}_\mu(x)$ with 
\begin{align}
U(x) =  \exp\left(i \theta(x)\mathbf{n}(x)\cdot\frac{\boldsymbol{\sigma}}{2}\right) \in {\rm{SU}}(2) .
\label{SU2-U1_transf}
\end{align}
See the right panel of Fig.~\ref{fig:SU2_n_basis}.
Under this local transformation (\ref{SU2-U1_transf}) that keeps the color direction field $\mathbf{n}(x)$ invariant:
\begin{align}
\mathbf{n}^U(x) =&\mathbf{n}(x), 
\end{align}
the other fields transform according to the maximal $U(1)$ gauge transformation:
\begin{align}
c_\mu^U(x) =& c_\mu(x) + g^{-1}\partial_\mu\theta(x), \nonumber\\
\begin{pmatrix}
\phi_1^U(x)\\
\phi_2^U(x)
\end{pmatrix}
=&\begin{pmatrix}
\cos \theta(x)&-\sin \theta(x)\\
\sin \theta(x)&\cos \theta(x)
\end{pmatrix}
\begin{pmatrix}
\phi_1(x)\\
\phi_2(x)
\end{pmatrix}.
\end{align}
\end{prop}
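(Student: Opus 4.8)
The plan is to obtain this statement as a direct corollary of the general covariance result for the Cho--Duan--Ge--Faddeev--Niemi decomposition (the preceding proposition, proved in Appendix~A), which already guarantees that for any $U(x)\in SU(2)$ the transformed field again admits the canonical decomposition $\mathbf{A}_\mu^U = \mathbf{V}_\mu^U + \mathbf{X}_\mu^U$ with $\mathbf{V}_\mu^U = c_\mu^U\mathbf{n}^U + g^{-1}\partial_\mu\mathbf{n}^U\times\mathbf{n}^U$ and $\mathbf{X}_\mu^U = \phi_1^U\,\partial_\mu\mathbf{n}^U + \phi_2^U\,\partial_\mu\mathbf{n}^U\times\mathbf{n}^U$, provided $\mathbf{n}$ is carried along in the adjoint representation, $\mathbf{n}^U = U\mathbf{n}U^\dagger$. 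The only genuinely new input is to evaluate what the abelian rotation $U(x) = \exp\!\big(i\theta(x)\,\mathbf{n}(x)\cdot\tfrac{\boldsymbol{\sigma}}{2}\big)$ does to $\mathbf{n}$, to $c_\mu$, and to the pair $(\phi_1,\phi_2)$.

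First I would show $\mathbf{n}^U(x) = \mathbf{n}(x)$. Writing $U = \cos(\theta/2)\,\mathbf{1} + i\sin(\theta/2)\,\mathbf{n}\cdot\boldsymbol{\sigma}$, one sees immediately that $U$ commutes with $\mathbf{n}\cdot\boldsymbol{\sigma}$, so $U(\mathbf{n}\cdot\boldsymbol{\sigma})U^\dagger = \mathbf{n}\cdot\boldsymbol{\sigma}$; equivalently, in the adjoint representation $U(\cdot)U^\dagger$ acts as a rotation by angle $\theta$ about the axis $\mathbf{n}$, of which $\mathbf{n}$ is the fixed point. Hence the color direction field is invariant, as claimed.

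Next, using $\mathbf{n}^U=\mathbf{n}$ in the general covariance relation, it remains only to extract the coefficients. To get $c_\mu^U$ I would project $\mathbf{V}_\mu^U = U\mathbf{V}_\mu U^\dagger + ig^{-1}U\partial_\mu U^\dagger$ onto $\mathbf{n}$, using $\mathbf{n}\cdot\mathbf{n}=1$ and $\mathbf{n}\cdot(\partial_\mu\mathbf{n}\times\mathbf{n})=0$: since $U^\dagger\mathbf{n}U=\mathbf{n}$ and the trace pairing is $SU(2)$-invariant, $\mathbf{n}\cdot(U\mathbf{V}_\mu U^\dagger) = \mathbf{n}\cdot\mathbf{V}_\mu = c_\mu$, while a short computation of the inhomogeneous term yields $\mathbf{n}\cdot(iU\partial_\mu U^\dagger) = \partial_\mu\theta$ (the $\partial_\mu\mathbf{n}$-dependent pieces in $iU\partial_\mu U^\dagger$ are orthogonal to $\mathbf{n}$ and drop out of this projection). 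This gives $c_\mu^U = c_\mu + g^{-1}\partial_\mu\theta$. For the scalars I would observe that $\partial_\mu\mathbf{n}$ and $\partial_\mu\mathbf{n}\times\mathbf{n}$ span the plane orthogonal to $\mathbf{n}$ and form an orthogonal frame there, so the adjoint rotation by $\theta$ about $\mathbf{n}$ simply rotates this frame; applying it to $\mathbf{X}_\mu = \phi_1\,\partial_\mu\mathbf{n} + \phi_2\,\partial_\mu\mathbf{n}\times\mathbf{n}$ and regrouping in the same basis reproduces the $SO(2)$ matrix displayed in the statement.

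The main obstacle is the sign-and-normalization bookkeeping in the short computation of $ig^{-1}U\partial_\mu U^\dagger$ when $\mathbf{n}(x)$ itself varies: one must check that its $\mathbf{n}$-component is exactly $g^{-1}\partial_\mu\theta$ and that its transverse components, together with the rotated $g^{-1}\partial_\mu\mathbf{n}\times\mathbf{n}$, reassemble into the canonical form of $\mathbf{V}_\mu^U$ leaving no leftover terms (this part is in fact already guaranteed by the general covariance proposition, which is why invoking it first is the cleanest route). Fixing the orientation convention for $U$ so that the rotation emerges with the signs shown, rather than with its transpose, is the only place where care is genuinely needed.
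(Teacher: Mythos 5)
Your proposal is correct and follows essentially the same route as the paper's Appendix~A proof: the key steps in both are that $U=\exp\left(i\theta\,\mathbf{n}\cdot\tfrac{\boldsymbol{\sigma}}{2}\right)$ commutes with $\mathbf{n}\cdot\boldsymbol{\sigma}$ (so $\mathbf{n}^U=\mathbf{n}$), that the adjoint action rotates the transverse frame $\{\partial_\mu\mathbf{n},\ \partial_\mu\mathbf{n}\times\mathbf{n}\}$ by the angle $\theta$ (giving the $SO(2)$ action on $(\phi_1,\phi_2)$), and that the $\mathbf{n}$-component of $ig^{-1}U\partial_\mu U^\dagger$ equals $g^{-1}\partial_\mu\theta$ (giving the shift of $c_\mu$). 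The only organizational difference is that you invoke the general covariance proposition to guarantee that the transverse pieces of $iU\partial_\mu U^\dagger$ recombine with $U\left(g^{-1}\partial_\mu\mathbf{n}\times\mathbf{n}\right)U^\dagger$ into $g^{-1}\partial_\mu\mathbf{n}\times\mathbf{n}$ with no remainder, whereas the paper verifies this cancellation by the explicit computation of $\Omega_\mu=ig^{-1}U\partial_\mu U^\dagger=\tfrac{\boldsymbol{\sigma}}{2}\cdot g^{-1}\left[\partial_\mu\theta\,\mathbf{n}+\sin\theta\,\partial_\mu\mathbf{n}+(1-\cos\theta)\,\partial_\mu\mathbf{n}\times\mathbf{n}\right]$.
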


\begin{proof}
This result is shown by explicitly calculating the gauge transformation (\ref{SU2_transf1}) using  (\ref{SU2-U1_transf}). 
See Appendix A.

\end{proof}

\begin{rem}
Note that the transformation (\ref{SU2-U1_transf}) includes the color direction field $\mathbf{n}(x)$, and in quantum theory, it is a quantum gauge transformation.
\end{rem}

\begin{rem}
The Witten Ansatz is a special case of the decomposition in which the color direction field ${\bf n}$ is fixed in the sense that it does not change under the $SU(2)$ gauge transformation. 
As shown in the above Proposition, this is equivalent to restricting the gauge transformation to the $U(1)$ gauge transformation (\ref{SU2-U1_transf}) with the variable $\theta(x)$ because $\mathbf{n}(x)$ is fixed: $\mathbf{n}^A(x)=x^A/r$. 
Therefore, the Witten Ansatz is gauge covariant under the $U(1)$ gauge transformation where $SU(2)/U(1)$ part is fixed by the specific choice of the color field ${\bf n}(x)$. 
\end{rem}


\begin{figure}[tbp]
\begin{center}
\includegraphics[scale=0.30]{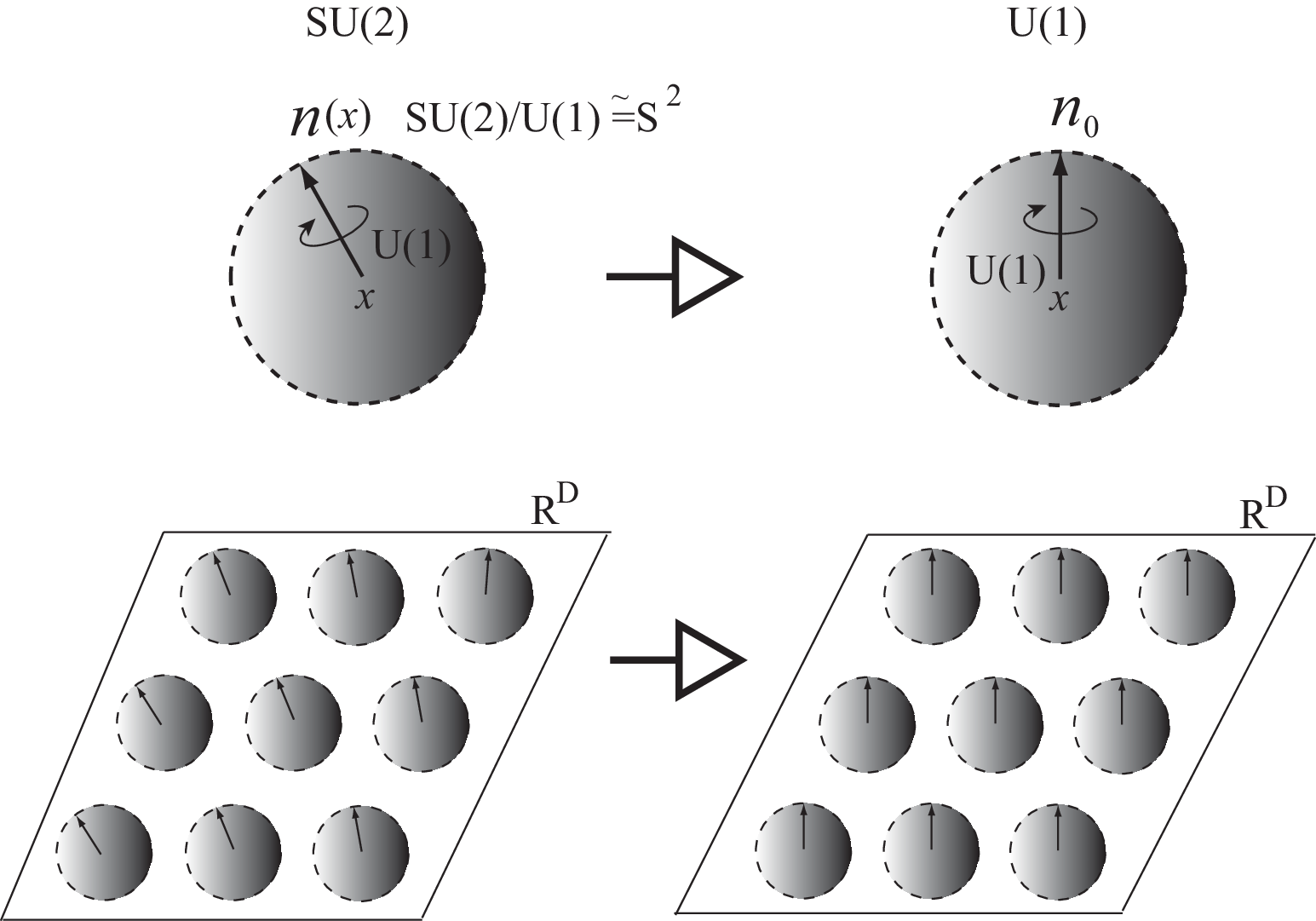}
\caption{ 
The color field and symmetry breaking.
(Left panel)
The original local $SU(2)$ gauge symmetry is preserved by a local embedding of the Abelian directions with the color direction field $\bm{n}(x)$: 
$SU(2) \simeq SU(2)/U(1) \times U(1) \simeq S^2 \times U(1)$.
(Right panel)
Partial gauge fixing $SU(2) \to U(1)$ is performed by fixing the color direction field globally, setting $\bm{n}(x) \equiv n_0$ for any $x \in \mathbb{R}^D$.
Only the local $U(1)$ symmetry corresponding to a local rotation around a fixed Abelian direction or a uniformly distributed color direction field remains.
}
\label{fig:color-field-fixing}
\end{center}
\end{figure}


\begin{prop}[Faddeev-Niemi decomposition of the Yang-Mills Lagrangian density]\label{FNd:YMLagrangian}

\noindent
(1) The Cho-Duan-Ge- decomposition for the field strength $\mathbf{F}_{\mu\nu} = \partial_\mu\mathbf{A}_{\nu} - \partial_\nu\mathbf{A}_{\mu} + g\mathbf{A}_{\mu}\times\mathbf{A}_{\nu}$ is
\begin{align}
\mathbf{F}_{\mu\nu} =& \{G_{\mu\nu} - [g^2(\phi_1^2 + \phi_2^2)-1]H_{\mu\nu}\}\mathbf{n} \nonumber\\
&+(D_\mu\phi_1\partial_\nu\mathbf{n} - D_\nu\phi_1\partial_\mu\mathbf{n}) \nonumber\\
&+ (D_\mu\phi_2\partial_\nu\mathbf{n}\times\mathbf{n} - D_\nu\phi_2\partial_\mu\mathbf{n}\times\mathbf{n}).
\end{align}
where
\begin{align}
G_{\mu\nu}:=&\partial_\mu c_\nu - \partial_\nu c_\mu ,
\ H_{\mu\nu}:=-g^{-1}\mathbf{n}\cdot(\partial_\mu\mathbf{n} \times\partial_\nu\mathbf{n}), \nonumber\\
D_\mu\phi_a:=&\partial_\mu\phi_a + \epsilon_{ab}gc_\mu\phi_b .
\end{align}
\\
\noindent
(2)
The Yang-Mills Lagrangian density $\mathscr{L}_{{\rm{YM}}}$ is
It can be calculated from $\mathscr{L}_{{\rm{YM}}} = \frac{1}{4}\mathbf{F}_{\mu\nu}\cdot\mathbf{F}^{\mu\nu}$:
\begin{align}
\mathscr{L}_{{\rm{YM}}} =& \pm\frac{1}{4}[G_{\mu\nu} - (g^2|\phi|^2-1)H_{\mu\nu}]^2 \nonumber\\
&\pm\frac{1}{4}\{[\delta_{\mu\nu}(\partial_\rho\mathbf{n})^2 - \partial_\mu\mathbf{n}\cdot\partial_\nu\mathbf{n}+igH_{\mu\nu}]\nonumber\\
& \times (D_\mu\phi)^*(D_\nu\phi) + {\rm{h.c.}} \}.
\end{align}
Here, ${\rm{h.c.}}$ represents the Hermite conjugate of the previous term. $D_\mu$ is the covariant derivative of U(1):
\begin{align}
D_\mu\phi = \partial_\mu\phi - igc_\mu\phi \, \ \phi = \phi_1 + i\phi_2 \in \mathbb{C}.
\end{align}
Note that the fundamental variables of this theory are $\mathbf{n}(x), c_\mu(x), \phi_1(x), \phi_2(x)$ without gauge fixing.
\end{prop}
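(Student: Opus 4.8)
The plan is to prove part (1) by direct substitution of $\mathbf{A}_\mu = \mathbf{V}_\mu + \mathbf{X}_\mu$ into $\mathbf{F}_{\mu\nu} = \partial_\mu\mathbf{A}_\nu - \partial_\nu\mathbf{A}_\mu + g\mathbf{A}_\mu\times\mathbf{A}_\nu$ and sorting the result along the moving frame $\{\mathbf{n},\partial_\mu\mathbf{n},\partial_\mu\mathbf{n}\times\mathbf{n}\}$ of $su(2)\cong\mathbb{R}^3$. The only inputs are the identities forced by $\mathbf{n}\cdot\mathbf{n}=1$: namely $\mathbf{n}\cdot\partial_\mu\mathbf{n}=0$, $\mathbf{n}\times(\partial_\mu\mathbf{n}\times\mathbf{n})=\partial_\mu\mathbf{n}$, $(\partial_\mu\mathbf{n}\times\mathbf{n})\times\mathbf{n}=-\partial_\mu\mathbf{n}$, and — crucially, since $\partial_\mu\mathbf{n}$ and $\partial_\nu\mathbf{n}$ are both orthogonal to $\mathbf{n}$ in three dimensions — $\partial_\mu\mathbf{n}\times\partial_\nu\mathbf{n} = [\mathbf{n}\cdot(\partial_\mu\mathbf{n}\times\partial_\nu\mathbf{n})]\mathbf{n} = -gH_{\mu\nu}\mathbf{n}$, together with the BAC--CAB rule for the remaining cross products of frame vectors.

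First I would compute the purely $\mathbf{V}$-dependent part $\partial_\mu\mathbf{V}_\nu - \partial_\nu\mathbf{V}_\mu + g\mathbf{V}_\mu\times\mathbf{V}_\nu$: the second-derivative terms $\partial_\mu\partial_\nu\mathbf{n}$ drop by antisymmetry, all $\partial_\mu\mathbf{n}$-directed terms cancel against the $g\mathbf{V}\times\mathbf{V}$ contributions, and what survives is the restricted field strength $(G_{\mu\nu}+H_{\mu\nu})\mathbf{n}$. Then I would add the mixed $\mathbf{V}$--$\mathbf{X}$ and pure $\mathbf{X}$ terms: the terms $\partial_\mu\mathbf{X}_\nu-\partial_\nu\mathbf{X}_\mu$ together with $g(\mathbf{V}_\mu\times\mathbf{X}_\nu+\mathbf{X}_\mu\times\mathbf{V}_\nu)$ produce exactly the combinations $\partial_\mu\phi_a + \epsilon_{ab}gc_\mu\phi_b = D_\mu\phi_a$ in front of $\partial_\nu\mathbf{n}$ and $\partial_\nu\mathbf{n}\times\mathbf{n}$, with the spurious $\phi_1(\partial_\mu\mathbf{n}\cdot\partial_\nu\mathbf{n})\mathbf{n}$ pieces cancelling between the two halves and the $\phi_2H_{\mu\nu}\mathbf{n}$ pieces from $\partial\mathbf{X}$ cancelling those from $\mathbf{V}\times\mathbf{X}$; while $g\mathbf{X}_\mu\times\mathbf{X}_\nu$ feeds back into the $\mathbf{n}$-direction as $-g^2(\phi_1^2+\phi_2^2)H_{\mu\nu}\mathbf{n}$ (the $\phi_1^2$ piece from $\partial_\mu\mathbf{n}\times\partial_\nu\mathbf{n}$, the $\phi_2^2$ piece from $(\partial_\mu\mathbf{n}\times\mathbf{n})\times(\partial_\nu\mathbf{n}\times\mathbf{n})$, the $\phi_1\phi_2$ cross pieces cancelling). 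Collecting the $\mathbf{n}$-components gives $G_{\mu\nu}-[g^2(\phi_1^2+\phi_2^2)-1]H_{\mu\nu}$, which is the claimed identity.

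For part (2) I would insert this into $\mathscr{L}_{\mathrm{YM}} = \tfrac14\mathbf{F}_{\mu\nu}\cdot\mathbf{F}^{\mu\nu}$ and expand the square using the frame inner products: $\mathbf{n}$ is a unit vector orthogonal to both $\partial_\mu\mathbf{n}$ and $\partial_\mu\mathbf{n}\times\mathbf{n}$, so the $\mathbf{n}$-part decouples and yields $\tfrac14[G_{\mu\nu}-(g^2|\phi|^2-1)H_{\mu\nu}]^2$; for the matter part one uses $|\partial_\mu\mathbf{n}|^2 = |\partial_\mu\mathbf{n}\times\mathbf{n}|^2$, $(\partial_\mu\mathbf{n}\times\mathbf{n})\cdot(\partial_\nu\mathbf{n}\times\mathbf{n}) = \partial_\mu\mathbf{n}\cdot\partial_\nu\mathbf{n}$, and the antisymmetric mixed product $\partial_\mu\mathbf{n}\cdot(\partial_\nu\mathbf{n}\times\mathbf{n}) = \mathbf{n}\cdot(\partial_\mu\mathbf{n}\times\partial_\nu\mathbf{n}) = -gH_{\mu\nu}$. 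Squaring $(D_\mu\phi_1\partial_\nu\mathbf{n} - D_\nu\phi_1\partial_\mu\mathbf{n}) + (D_\mu\phi_2\partial_\nu\mathbf{n}\times\mathbf{n} - D_\nu\phi_2\partial_\mu\mathbf{n}\times\mathbf{n})$ and summing over $\mu,\nu$ then reorganizes the real bilinear $D_\mu\phi_1D_\nu\phi_1 + D_\mu\phi_2D_\nu\phi_2 + i(D_\mu\phi_1D_\nu\phi_2 - D_\mu\phi_2D_\nu\phi_1)$, i.e. $(D_\mu\phi)^*(D_\nu\phi)$ with $\phi=\phi_1+i\phi_2$ and $D_\mu\phi = \partial_\mu\phi - igc_\mu\phi$, contracted against $[\delta_{\mu\nu}(\partial_\rho\mathbf{n})^2 - \partial_\mu\mathbf{n}\cdot\partial_\nu\mathbf{n} + igH_{\mu\nu}]$; adding the Hermitian conjugate symmetrizes and fixes the $\tfrac14$, producing the stated Lagrangian (the $\pm$ merely records the overall sign attached to $\tfrac14\mathbf{F}\cdot\mathbf{F}$ in the chosen metric/duality convention).

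The main obstacle is the bookkeeping in the squaring step: keeping track of every cross term and, in particular, recognizing that the antisymmetric-in-$(\mu,\nu)$ quantity $\partial_\mu\mathbf{n}\cdot(\partial_\nu\mathbf{n}\times\mathbf{n}) = -gH_{\mu\nu}$ is precisely what generates the $+igH_{\mu\nu}$ inside the matter kinetic bilinear — it is easy to drop or mis-sign. A secondary care point is that $\{\partial_\mu\mathbf{n}\}$ is not an orthonormal set, so $\partial_\mu\mathbf{n}\cdot\partial_\nu\mathbf{n}$ must be retained as an independent tensor rather than simplified away; everything else is routine $\mathbb{R}^3$ vector algebra, and part (1) is the standard Cho--Duan--Ge--Faddeev--Niemi field-strength identity with the $\mathbf{X}$-additions carried through.
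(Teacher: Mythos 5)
Your proposal is correct and follows essentially the same route as the paper's Appendix A proof: direct substitution of $\mathbf{A}_\mu=\mathbf{V}_\mu+\mathbf{X}_\mu$, isolating $\mathbf{F}_{\mu\nu}[\mathbf{V}]=(G_{\mu\nu}+H_{\mu\nu})\mathbf{n}$, the mixed terms $D_\mu[\mathbf{V}]\mathbf{X}_\nu-D_\nu[\mathbf{V}]\mathbf{X}_\mu$ giving the $D_\mu\phi_a$ combinations, and $g\mathbf{X}_\mu\times\mathbf{X}_\nu=-g^2|\phi|^2H_{\mu\nu}\mathbf{n}$, followed by squaring with the frame inner products and the identification of the antisymmetric piece $\partial_\mu\mathbf{n}\cdot(\partial_\nu\mathbf{n}\times\mathbf{n})=-gH_{\mu\nu}$ with the $igH_{\mu\nu}$ term in the complex bilinear. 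The cancellations you flag (the symmetric $\phi_1(\partial_\mu\mathbf{n}\cdot\partial_\nu\mathbf{n})\mathbf{n}$ pieces and the $\phi_2$-proportional $\mathbf{n}$-directed pieces) are exactly the ones the paper's computation exhibits.
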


\begin{proof}
This was shown in Faddeev-Niemi (1999)\cite{FN99}.
See Appendix A for the proof. 
\end{proof}

\begin{prop}[Effective theory written in new field variables]
From the Yang-Mills theory, we obtain an effective theory written in terms of  new field variables.

\noindent
(1) Integrating $c_\mu$ and $\phi$: For composite operators written in $c_\mu$ and $\phi$
\begin{align}
&\langle |\partial_\lambda\phi|^2 \delta_{\mu\nu} - \partial_\mu\phi^*\partial_\nu\phi\rangle = m^2\delta_{\mu\nu} 
\nonumber\\
&\Rightarrow \langle (D_\mu\phi)^*(D_\nu\phi)\rangle = m^2\delta_{\mu\nu} .
\end{align}
Then, we reduce to the Faddeev-Skyrme model:
\begin{align}
S = \int d^4x \ \left[m^2(\partial_\mu\mathbf{n})^2 + \frac{1}{g^2}(\mathbf{n}\cdot(\partial_\mu\mathbf{n}\times\partial_\nu\mathbf{n}))^2\right] .
\end{align}
Here we used $H_{\mu\nu}\delta_{\mu\nu}=0$, which follows from $H_{\mu\nu} = -H_{\nu\mu}$. Also, $v_{\mu\nu} = v_{\nu\mu}$, and $v_{\mu\mu}\propto(\partial_\mu\mathbf{n})^2$ was used.

\noindent
(2) When integrating $\mathbf{n}$: For the compound operator written in $\mathbf{n}$,
\begin{align}
& \langle H_{\mu\nu}\rangle = 0 , \ \langle H_{\mu\nu}^2\rangle = \lambda , \nonumber\\
& \langle \delta_{\mu\nu}(\partial_\rho\mathbf{n})^2 - \partial_\mu\mathbf{n}\cdot\partial_\nu\mathbf{n} \rangle = v^2\delta_{\mu\nu} .
\end{align}
This results in a U(1) gauge scalar model:
\begin{align}
S = \int d^4x \ \left\{ \frac{1}{4}G_{\mu\nu}^2 + \frac{1}{4}v^2|D_\mu\phi|^2 + \frac{\lambda}{4}(|\phi|^2 - 1)^2 \right\} .
\end{align}
Here, the coupling constant $g$ is absorbed into the field.

\end{prop}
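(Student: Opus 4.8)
The plan is to take as input the Cho--Duan--Ge--Faddeev--Niemi-decomposed Yang--Mills Lagrangian of Proposition~\ref{FNd:YMLagrangian},
\begin{align}
\mathscr{L}_{\rm YM}
=& \pm\tfrac14\left[G_{\mu\nu} - (g^2|\phi|^2-1)H_{\mu\nu}\right]^2 \nonumber\\
&\pm\tfrac14\left\{\left[\delta_{\mu\nu}(\partial_\rho\mathbf{n})^2 - \partial_\mu\mathbf{n}\cdot\partial_\nu\mathbf{n}+igH_{\mu\nu}\right](D_\mu\phi)^*(D_\nu\phi) + {\rm h.c.}\right\},
\end{align}
and to carry out the two stated partial integrations as Hartree-type (condensate) replacements $\langle A\,B\rangle\simeq\langle A\rangle\langle B\rangle$ of the composite operators, followed by rescalings of the surviving fields. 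The algebraic identities $H_{\mu\nu}=-g^{-1}\mathbf{n}\cdot(\partial_\mu\mathbf{n}\times\partial_\nu\mathbf{n})$, $H_{\mu\nu}=-H_{\nu\mu}$, and the symmetry of $v_{\mu\nu}:=\delta_{\mu\nu}(\partial_\rho\mathbf{n})^2-\partial_\mu\mathbf{n}\cdot\partial_\nu\mathbf{n}$ do all the structural work.

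For part (1) I would integrate over $c_\mu$ and $\phi$. Inserting $\langle(D_\mu\phi)^*(D_\nu\phi)\rangle=m^2\delta_{\mu\nu}$ into the second line, the term $igH_{\mu\nu}$ drops since $H_{\mu\nu}$ is antisymmetric while $m^2\delta_{\mu\nu}$ is symmetric, and the remaining bracket contracts with $\delta_{\mu\nu}$ to give a multiple of $(\partial_\mu\mathbf{n})^2$; in the first line $\langle G_{\mu\nu}\rangle=0$ kills the cross term, the pure $G_{\mu\nu}^2$ piece is a free decoupled Maxwell term that is discarded, and $\langle(g^2|\phi|^2-1)^2\rangle$ supplies a constant times $H_{\mu\nu}^2$. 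Using the expression for $H_{\mu\nu}$ and absorbing all numerical constants into $m^2$ and the Skyrme coupling, one lands on the Faddeev--Skyrme action $S=\int d^4x\,[m^2(\partial_\mu\mathbf{n})^2+g^{-2}(\mathbf{n}\cdot(\partial_\mu\mathbf{n}\times\partial_\nu\mathbf{n}))^2]$.

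For part (2) I would instead integrate over $\mathbf{n}$, using $\langle H_{\mu\nu}\rangle=0$, $\langle H_{\mu\nu}^2\rangle=\lambda$ and $\langle v_{\mu\nu}\rangle=v^2\delta_{\mu\nu}$. In the first line the cross term vanishes and $\langle H_{\mu\nu}^2\rangle$ turns $(g^2|\phi|^2-1)^2H_{\mu\nu}^2$ into $\lambda(g^2|\phi|^2-1)^2$, i.e. a quartic potential $\propto(|\phi|^2-1)^2$ after rescaling; in the second line the $igH_{\mu\nu}$ term averages to zero and the remaining bracket becomes $v^2\delta_{\mu\nu}$, producing $v^2|D_\mu\phi|^2$; the $G_{\mu\nu}^2$ piece is untouched since it contains no $\mathbf{n}$. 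Collecting and absorbing $g$ into the fields gives the $U(1)$ gauge--scalar (Abelian Higgs) model $S=\int d^4x\,\{\tfrac14 G_{\mu\nu}^2+\tfrac14 v^2|D_\mu\phi|^2+\tfrac{\lambda}{4}(|\phi|^2-1)^2\}$ with $D_\mu\phi=\partial_\mu\phi-igc_\mu\phi$.

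The main obstacle is conceptual rather than computational: the two ``integrations'' are not genuine functional integrals but mean-field truncations, so the real content of the argument is the assumption that the listed composite operators acquire the stated \emph{nonzero} vacuum expectation values — precisely the statement that the Yang--Mills vacuum dynamically generates the scales $m,v$ and the condensate $\lambda$ — which I would not attempt to derive here but adopt following the Faddeev--Niemi and Cho programs (\cite{FN99,FN99a,FN99b,Cho80,KMS06}). A secondary, purely bookkeeping issue is tracking the $D=4$ numerical factors from $\delta_{\mu\mu}=4$ and the Hermitian-conjugate doubling, together with the discarding (part~(1)) versus retention (part~(2)) of the decoupled $G_{\mu\nu}^2$ term; all such constants are understood to be absorbed into the redefinitions of $m^2$, $v^2$ and $\lambda$.
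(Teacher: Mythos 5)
Your proposal is correct and follows exactly the mean-field (condensate-replacement) argument of Faddeev and Niemi, which is all the paper itself offers: its proof of this proposition is the single line ``This was shown in Faddeev-Niemi (1999)~\cite{FN99}.'' The only loose ends --- the factor of $3$ from $v_{\mu\mu}=\delta_{\mu\mu}(\partial_\rho\mathbf{n})^2-(\partial_\mu\mathbf{n})^2=3(\partial_\rho\mathbf{n})^2$ in $D=4$, the factor of $2$ from the Hermitian-conjugate doubling, and the normalization of $\langle(g^2|\phi|^2-1)^2\rangle$ multiplying $H_{\mu\nu}^2$ --- are precisely the constants you correctly declare absorbed into $m^2$, $v^2$ and $\lambda$.
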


\begin{proof}
This was shown in Faddeev-Niemi (1999)\cite{FN99}.
\end{proof}

\section{Conclusions and discussions}

\noindent
Conclusion:

\noindent
$\bullet$ 
In this review, we considered the space and time symmetric instantons as solutions of the self-dual Yang-Mills equation with conformal symmetry in the $SU(2)$ Yang-Mills theory in the four-dimensional Euclidean space $\mathbb{E}^4$. 

\noindent
$\bullet$ 
The instanton with time translation symmetry (time independence or time translation invariance) is the well-known Prasad-Sommerfield (PS) magnetic monopole with a finite energy.  However, the PS solution gives an infinite four-dimensional action.  Therefore it gives no contribution in  the path integral and plays no role in the quantum Yang-Mills theory. 

\noindent
$\bullet$ 
On the other hand, instantons with spatial rotation symmetries give a finite four-dimensional action and hence can contribute in the quantum Yang-Mills theory. 
\\
For the spatial symmetry $SO(2) \simeq U(1) \simeq S^1$, the instanton is reduced to a hyperbolic magnetic monopole (of Atiyah) living in the three-dimensional hyperbolic space $\mathbb{H}^3$.
\\
For the spatial symmetry $SO(3) \simeq SU(2)$, the instanton is reduced to a hyperbolic vortex (of Witten-Manton) living in the two-dimensional hyperbolic space$\mathbb{H}^2$. 

By requiring the spatial symmetry $SO(2)$ or $SO(3)$ for instantons,  the four-dimensional Euclidean space $\mathbb{E}^4$ in which instantons live is inevitably transformed to the curved spacetime $\mathbb{H}^3 \times S^1$ or $\mathbb{H}^2 \times S^2$  by maintaining the conformally equivalence. 
The spatial symmetry $SO(2)$ or $SO(3)$ acts on the compact space $S^1$ or $S^2$.



\noindent
$\bullet$
Three-dimensional hyperbolic magnetic monopoles and two-dimensional hyperbolic vortices can be connected through conformal equivalence with the explicit relationship between the magnetic monopole field and the vortex field has been obtained.
This allows magnetic monopoles and vortices can be treated in a unified manner.

\noindent
$\bullet$
Both $\mathbb{H}^3$ and $\mathbb{H}^3$ are curved space $AdS_3$ and $AdS_2$ with a constant negative curvature. 
A hyperbolic monopole in $\mathbb{H}^3$ are completely determined by its holographic image on the conformal boundary two-sphere $S_\infty^2$. 
(This is different from the case of Euclidean monopoles.)
This fact enable us to reduce the non-Abelian Wilson loop operator to the Abelian Wilson loop defined by the Abelian gauge field of the vortex. 

\noindent
$\bullet$
Furthermore, by considering a symmetric instanton with a singularity in a compact subspace of spacetime, a symmetric instanton with a non-integral topological charge can be obtained, and then by dimensional reduction, a hyperbolic magnetic monopole and a hyperbolic vortex with a non-integral topological charge have been obtained.

\noindent
$\bullet$
Using the hyperbolic magnetic monopole and hyperbolic vortex obtained in this way, quark confinement was shown to be realized in the sense of Wilson's area law within the dilute gas approximation.
This is a semi-classical quark confinement mechanism originating from the unified hyperbolic magnetic monopole and hyperbolic vortex.

\noindent
Discussion:

\noindent
$\bullet$
Why does the space-time obtained by dimensional reduction have negative curvature? Is there no case where it has positive curvature?
cf: The 4-dimensional standard model can be obtained by dimensional reduction of 6-dimensional Yang-Mills theory to 4!
[Manton(1981)]

\noindent
$\bullet$
How does the gauge group change due to dimensional reduction?

\noindent
$\bullet$
How can it be extended to a large gauge group $SU(N)$?

\noindent
$\bullet$
What happens when a matter field is introduced? For example, can QCD be analyzed in the same way?

\noindent
$\bullet$
How do we incorporate quantum effects that do not maintain conformal invariance?

\section*{Acknowledgements}
This work was supported by Grant-in-Aid for Scientific Research, JSPS KAKENHI Grant
Number (C) No.23K03406.

\appendix

\section{Cho-Duan-Ge-Faddeev-Niemi decompsition}

\begin{proof} [Proof of Proposition~\ref{FNd:covariance}]

First, find the transformations of the three bases of SU(2):
\begin{align}
&U(x)\mathbf{n}(x)U(x)^\dagger = \mathbf{n}(x), \nonumber\\
&U(x)\partial_\mu\mathbf{n}(x)U(x)^\dagger = \cos\theta(x)\partial_\mu\mathbf{n}(x) \nonumber\\ 
&+ \sin\theta(x)\partial_\mu\mathbf{n}(x)\times\mathbf{n}(x), \nonumber\\
&U(x)(\partial_\mu\mathbf{n}(x)\times\mathbf{n}(x))U(x)^\dagger =  -\sin\theta(x)\partial_\mu\mathbf{n}(x) \nonumber\\ 
&+ \cos\theta(x)\partial_\mu\mathbf{n}(x)\times\mathbf{n}(x) .
\end{align}
These results are:
$
U = \exp(i\theta\mathbf{n}\cdot\frac{\boldsymbol{\sigma}}{2}) = \cos\theta\mathbf{1} + i\mathbf{n}\cdot\boldsymbol{\sigma}\sin\frac{\theta}{2}
$
using,
$
(\mathbf{a}\cdot\boldsymbol{\sigma})(\mathbf{b}\cdot\boldsymbol{\sigma}) = (\mathbf{a}\cdot\mathbf{b})\mathbf{1} + i(\mathbf{a}\times\mathbf{b})\cdot\boldsymbol{\sigma}$, $\mathbf{n}\cdot\partial_\mu\mathbf{n}=0
$,
$
\mathbf{n}\cdot(\partial_\mu\mathbf{n}\times\mathbf{n}) = 0
$,
$
It can be obtained by considering \mathbf{n}\times(\partial_\mu\mathbf{n}\times\mathbf{n})=\partial_\mu\mathbf{n}$,
$\mathbf{n}$, $\partial_\mu\mathbf{n}$, Considering that $\partial_\mu\mathbf{n}\times\mathbf{n}$
is an orthogonal basis and $U$ is a rotation of angle $\theta$ around $\mathbf{n}$, this is correct because $\mathbf{n}$ does not change and $\partial_\mu\mathbf{n}$ and $\partial_\mu\mathbf{n}\times\mathbf{n}$ in a plane perpendicular to $\mathbf{n}$ correspond to a rotation of $\theta$. See the figure.
\begin{align}
&\Omega_\mu := ig^{-1}U\partial_\mu U^\dagger \nonumber\\
=& ig^{-1}\left(\cos\frac{\theta}{2}\mathbf{1} + i\mathbf{n}\cdot\boldsymbol{\sigma}\sin\frac{\theta}{2}\right) \nonumber\\ 
& \times  \partial_\mu\left(\cos\frac{\theta}{2}\mathbf{1} - i\mathbf{n}\cdot\boldsymbol{\sigma}\sin\frac{\theta}{2}\right) \nonumber\\
=& \frac{\boldsymbol{\sigma}}{2}\cdot g^{-1}\left[\partial_\mu\theta\mathbf{n} + 2\sin\frac{\theta}{2}\cos\frac{\theta}{2}\partial_\mu\mathbf{n} + 2\sin^2\frac{\theta}{2}\partial_\mu\mathbf{n}\times\mathbf{n}\right] \nonumber\\
=& \frac{\boldsymbol{\sigma}}{2}\cdot g^{-1}\left[\partial_\mu\theta\mathbf{n} + \sin\theta\partial_\mu\mathbf{n} + (1-\cos\theta)\partial_\mu\mathbf{n}\times\mathbf{n}\right] .
\end{align}
Therefore, for $\mathbf{V}_\mu$,
\begin{align}
 U\mathbf{V}_\mu U^\dagger + \Omega_\mu 
 =& c_\mu\mathbf{n} - g^{-1}\sin\theta\partial_\mu\mathbf{n} \nonumber\\
& + g^{-1}\cos\theta\partial_\mu\mathbf{n}\times\mathbf{n} \nonumber\\
&+g^{-1}\partial_\mu\theta\mathbf{n} + g^{-1}\sin\theta\partial_\mu\mathbf{n} \nonumber\\ 
&+ g^{-1}(1-\cos\theta)\partial_\mu\mathbf{n}\times\mathbf{n} \nonumber\\
=&(c_\mu + g^{-1}\partial_\mu\theta)\mathbf{n} + g^{-1}\partial_\mu\mathbf{n}\times\mathbf{n}.
\end{align}
On the other hand, for $\mathbf{X}_\mu$,
\begin{align}
U\mathbf{X}_\mu U^\dagger =& \phi_1U\partial_\mu\mathbf{n}U^\dagger + \phi_2U(\partial_\mu\mathbf{n}\times\mathbf{n})U^\dagger \nonumber\\
=&\phi_1(\cos\theta\partial_\mu\mathbf{n} + \sin\theta\partial_\mu\mathbf{n}\times\mathbf{n}) \nonumber\\
&+ \phi_2(-\sin\theta\partial_\mu\mathbf{n} + \cos\theta\partial_\mu\mathbf{n}\times\mathbf{n}) \nonumber\\
=&(\phi_1\cos\theta - \phi_2\sin\theta)\partial_\mu\mathbf{n}\nonumber\\
& + (\phi_1\sin\theta + \phi_2\cos\theta)\partial_\mu\mathbf{n}\times\mathbf{n} .
\end{align}
twist, $\mathbf{X}_\mu$ is a vector in a plane perpendicular to $\mathbf{n}$, and it rotates in that plane under the action of $U$.
\end{proof}

\begin{proof}[Proof of Proposition~\ref{FNd:YMLagrangian}]

\noindent
(1) For $\mathbf{X}_\mu = \phi_1\partial_\mu\mathbf{n} + \phi_2\partial_\mu\mathbf{n}\times\mathbf{n}$, $\mathbf{X}_\mu\times\mathbf{X}_\nu$ becomes:
\begin{align}
\mathbf{X}_\mu\times\mathbf{X}_\nu =& (\phi_1\partial_\mu\mathbf{n} + \phi_2\partial_\mu\mathbf{n}\times\mathbf{n}) \nonumber\\
&\times(\phi_1\partial_\nu\mathbf{n} + \phi_2\partial_\nu\mathbf{n}\times\mathbf{n}) \nonumber\\
=& \phi_1^2\partial_\mu\mathbf{n}\times\partial_\nu\mathbf{n} + \phi_1\phi_2\partial_\mu\mathbf{n}\times(\partial_\nu\mathbf{n}\times\mathbf{n}) \nonumber\\
&+\phi_1\phi_2(\partial_\mu\mathbf{n}\times\mathbf{n})\times\partial_\nu\mathbf{n} \nonumber\\
&+ \phi_2^2(\partial_\mu\mathbf{n}\times\mathbf{n})\times(\partial_\nu\mathbf{n}\times\mathbf{n}) .
\end{align}
vector formula
\begin{align}
\mathbf{A}\times(\mathbf{B}\times\mathbf{C})=&(\mathbf{A}\cdot\mathbf{C})\mathbf{B} - (\mathbf{A}\cdot\mathbf{B})\mathbf{C} ,
\nonumber\\
(\mathbf{A}\times\mathbf{B})\times\mathbf{C}=&(\mathbf{A}\cdot\mathbf{C})\mathbf{B} - (\mathbf{B}\cdot\mathbf{C})\mathbf{A} ,
\nonumber\\
(\mathbf{A}\times\mathbf{B})\cdot(\mathbf{C}\times\mathbf{D}) =& (\mathbf{A}\cdot\mathbf{C})(\mathbf{B}\cdot\mathbf{D}) - (\mathbf{A}\cdot\mathbf{D})(\mathbf{B}\cdot\mathbf{C}) ,
\nonumber\\
(\mathbf{A}\times\mathbf{B})\times(\mathbf{C}\times\mathbf{D}) =&(\mathbf{A}\cdot(\mathbf{C}\times\mathbf{D}))\mathbf{B} - (\mathbf{B}\cdot(\mathbf{C}\times\mathbf{D}))\mathbf{A}
\nonumber\\
=&(\mathbf{A}\cdot(\mathbf{B}\times\mathbf{D}))\mathbf{C} - (\mathbf{A}\cdot(\mathbf{B}\times\mathbf{C}))\mathbf{D},
\end{align}
Using , we get the following relationship:
\begin{align}
\partial_\mu\mathbf{n}\times(\partial_\nu\mathbf{n}\times\mathbf{n}) &= (\partial_\mu\mathbf{n}\cdot\mathbf{n})\partial_\nu\mathbf{n} - (\partial_\mu\mathbf{n}\cdot\partial_\nu\mathbf{n})\mathbf{n}
\nonumber\\
&= -(\partial_\mu\mathbf{n}\cdot\partial_\nu\mathbf{n})\mathbf{n} , \nonumber\\
(\partial_\mu\mathbf{n}\times\mathbf{n})\times\partial_\nu\mathbf{n} &=(\partial_\mu\mathbf{n}\cdot\partial_\nu\mathbf{n})\mathbf{n} - (\mathbf{n}\cdot\partial_\nu\mathbf{n})\partial_\mu\mathbf{n}
\nonumber\\
&= (\partial_\mu\mathbf{n}\cdot\partial_\nu\mathbf{n})\mathbf{n} , \nonumber\\
(\partial_\mu\mathbf{n}\times\mathbf{n})\times(\partial_\nu\mathbf{n}\times\mathbf{n})
&=(\partial_\mu\mathbf{n}\cdot(\partial_\nu\mathbf{n}\times\mathbf{n}))\mathbf{n} \nonumber\\
&- (\mathbf{n}\cdot(\partial_\nu\mathbf{n}\times\mathbf{n}))\partial_\mu\mathbf{n} \nonumber\\
&=(\mathbf{n}\cdot(\partial_\mu\mathbf{n}\times\partial_\nu\mathbf{n}))\mathbf{n} .
\end{align}
Here we used $\partial_\mu\mathbf{n}\cdot\mathbf{n}=0=\mathbf{n}\cdot\partial_\nu\mathbf{n}$ following from $\mathbf{n}\cdot\mathbf{n}=1$.
Therefore we get:
\begin{align}
g\mathbf{X}_\mu\times\mathbf{X}_\nu &= g(\phi_1^2 + \phi_2^2)(\mathbf{n}\cdot(\partial_\mu\mathbf{n}\times\partial_\nu\mathbf{n}))\mathbf{n} \nonumber\\
&= -g^2(\phi_1^2 + \phi_2^2)H_{\mu\nu}\mathbf{n} .
\end{align}
Here we used $H_{\mu\nu} = -g^{-1}\mathbf{n}\cdot(\partial_\mu\mathbf{n}\times\partial_\nu\mathbf{n})$. The field strength can be calculated as follows.
\begin{align}
\mathbf{F}_{\mu\nu}[\mathbf{A}] :=& \partial_\mu\mathbf{A}_{\nu} - \partial_\nu\mathbf{A}_{\mu} + g\mathbf{A}_{\mu}\times\mathbf{A}_{\nu} \nonumber\\
=&\mathbf{F}_{\mu\nu}[\mathbf{V}] + g\mathbf{X}_\mu\times\mathbf{X}_\nu + D_\mu[\mathbf{V}]\mathbf{X}_\nu - D_\nu[\mathbf{V}]\mathbf{X}_\mu ,
\\
\mathbf{F}_{\mu\nu}[\mathbf{V}] := & \partial_\mu\mathbf{V}_{\nu} - \partial_\nu\mathbf{V}_{\mu} + g\mathbf{V}_{\mu}\times\mathbf{V}_{\nu} \nonumber\\
=&\mathbf{n}(\partial_\mu C_\nu - \partial_\nu c_\mu) - g^{-1}[\mathbf{n}\cdot(\partial_\mu\mathbf{n}\times\partial_\nu\mathbf{n})]\mathbf{n}.
\end{align}
where 
\begin{align}
&D_\mu[\mathbf{V}]\mathbf{X}_\nu - D_\nu[\mathbf{V}]\mathbf{X}_\mu \nonumber\\
=& \partial_\mu\mathbf{X}_\nu + g\mathbf{V}_\mu\times\mathbf{X}_\nu - (\mu\leftrightarrow\nu) \nonumber\\
=& \partial_\mu(\phi_1\partial_\nu\mathbf{n} + \phi_2\partial_\nu\mathbf{n}\times\mathbf{n}) 
\nonumber\\ 
&+ g(c_\mu\mathbf{n} + g^{-1}\partial_\mu\mathbf{n}\times\mathbf{n})\times(\phi_1\partial_\nu\mathbf{n} + \phi_2\partial_\nu\mathbf{n}\times\mathbf{n})\nonumber\\
&-(\mu\leftrightarrow\nu) \nonumber\\
=&\partial_\mu\phi_1\partial_\nu\mathbf{n} + \phi_1\partial_\mu\partial_\nu\mathbf{n} + \partial_\mu\phi_2\partial_\nu\mathbf{n}\times\mathbf{n} 
\nonumber\\ 
&+ \phi_2\partial_\mu\partial_\nu\mathbf{n}\times\mathbf{n} + \phi_2\partial_\nu\mathbf{n}\times\partial_\mu\mathbf{n} \nonumber\\
&+gc_\mu\phi_1\mathbf{n}\times\partial_\nu\mathbf{n} + gc_\mu\phi_2\mathbf{n}\times(\partial_\nu\mathbf{n}\times\mathbf{n}) \nonumber\\ 
&+(\partial_\mu\mathbf{n}\times\mathbf{n})\times\partial_\nu\mathbf{n}\phi_1 \nonumber\\
&+(\partial_\mu\mathbf{n}\times\mathbf{n})\times(\partial_\nu\mathbf{n}\times\mathbf{n})\phi_2-(\mu\leftrightarrow\nu) \nonumber\\
=&\partial_\mu\phi_1\partial_\nu\mathbf{n} + \partial_\mu\phi_2\partial_\nu\mathbf{n}\times\mathbf{n} - \phi_2\partial_\mu\mathbf{n}\times\partial_\nu\mathbf{n} \nonumber\\
&+gc_\mu\phi_1\mathbf{n}\times\partial_\nu\mathbf{n} + gc_\mu\phi_2\partial_\nu\mathbf{n} + \phi_1(\partial_\mu\mathbf{n}\cdot\partial_\nu\mathbf{n})\mathbf{n} \nonumber\\ 
&- \phi_1(\mathbf{n}\cdot\partial_\nu\mathbf{n})\partial_\mu\mathbf{n}\phi_1 +\phi_2(\mathbf{n}\cdot(\partial_\mu\mathbf{n}\times\partial_\nu\mathbf{n}))\mathbf{n}\nonumber\\
=&\partial_\mu\phi_1\partial_\nu\mathbf{n} + gc_\mu\phi_2\partial_\nu\mathbf{n} -\partial_\nu\phi_1\partial_\mu\mathbf{n} - gC_\nu\phi_2\partial_\mu\mathbf{n} \nonumber\\
&+\partial_\mu\phi_2\partial_\nu\mathbf{n}\times\mathbf{n} + gc_\mu\phi_1\mathbf{n}\times\partial_\nu\mathbf{n} - \partial_\nu\phi_2\partial_\mu\mathbf{n}\times\mathbf{n} \nonumber\\ 
& -gC_\nu\phi_1\mathbf{n}\times\partial_\mu\mathbf{n} \nonumber\\
=&(\partial_\mu\phi_1+gc_\mu\phi_2)\partial_\nu\mathbf{n} -(\partial_\nu\phi_1+gC_\nu\phi_2)\partial_\mu\mathbf{n} \nonumber\\
&+(\partial_\mu\phi_2-gc_\mu\phi_1)\partial_\nu\mathbf{n} \times \mathbf{n} \nonumber\\ 
&-(\partial_\nu\phi_2-gC_\nu\phi_1)\partial_\mu\mathbf{n} \times \mathbf{n} .
\end{align}
Also, if we set $\phi = \phi_1 + i\phi_2$, we get:
\begin{align}
D_\mu\phi\equiv&\partial_\mu\phi -igc_\mu\phi = \partial_\mu\phi_1+i\partial_\mu\phi_2 - gc_\mu(i\phi_1 - \phi_2) \nonumber\\
=&(\partial_\mu\phi_1+gc_\mu\phi_2) + i(\partial_\mu\phi_2 - gc_\mu\phi_1) \nonumber\\
=&(\partial_\mu\phi_1-igc_\mu\phi_1) + i(\partial_\mu\phi_2-igc_\mu\phi_2) .
\end{align}
First of all, 
\begin{align}
&(D_\mu\phi_1\partial_\nu\mathbf{n} - D_\nu\phi_1\partial_\mu\mathbf{n})\cdot(D_\mu\phi_1\partial_\nu\mathbf{n} - D_\nu\phi_1\partial_\mu\mathbf{n}) \nonumber\\
=&(D_\mu\phi_1)^2(\partial_\nu\mathbf{n})^2 - D_\mu\phi_1D_\nu\phi_1\partial_\mu\mathbf{n}\cdot\partial_\nu\mathbf{n} \nonumber\\ 
&  - D_\mu\phi_1D_\nu\phi_1\partial_\mu\mathbf{n}\cdot\partial_\nu\mathbf{n} +(D_\nu\phi_1)^2(\partial_\mu\mathbf{n})^2 \nonumber\\
=&2(D_\mu\phi_1)^2(\partial_\nu\mathbf{n})^2 - 2(D_\mu\phi_1)(D_\nu\phi_1)\partial_\mu\mathbf{n}\cdot\partial_\nu\mathbf{n}  .
\end{align}
Next, 
\begin{align}
&(D_\mu\phi_2\partial_\nu\mathbf{n}\times \mathbf{n} -D_\nu\phi_2\partial_\mu\mathbf{n}\times \mathbf{n})\nonumber\\ 
& \cdot (D_\mu\phi_2\partial_\nu\mathbf{n}\times \mathbf{n} -D_\nu\phi_2\partial_\mu\mathbf{n}\times \mathbf{n}) \nonumber\\
=&(D_\mu\phi_2)^2(\partial_\nu\mathbf{n}\times \mathbf{n})\cdot(\partial_\nu\mathbf{n}\times \mathbf{n}) \nonumber\\ 
& -D_\mu\phi_2D_\nu\phi_2(\partial_\mu\mathbf{n}\times \mathbf{n})\cdot(\partial_\nu\mathbf{n}\times \mathbf{n}) \nonumber\\
&-(D_\mu\phi_2)(D_\nu\phi_2)(\partial_\mu\mathbf{n}\times \mathbf{n})\cdot(\partial_\nu\mathbf{n}\times \mathbf{n}) \nonumber\\
&+(D_\nu\phi_2)^2(\partial_\mu\mathbf{n}\times \mathbf{n})\cdot(\partial_\mu\mathbf{n}\times \mathbf{n}) \nonumber\\
=&2(D_\mu\phi_2)^2((\partial_\nu\mathbf{n}\cdot\partial_\nu\mathbf{n})(\mathbf{n}\cdot\mathbf{n})-(\partial_\nu\mathbf{n}\cdot\mathbf{n})(\mathbf{n}\cdot\partial_\nu\mathbf{n})) \nonumber\\
&-2D_\mu\phi_2D_\nu\phi_2((\partial_\mu\mathbf{n}\cdot\partial_\nu\mathbf{n})(\mathbf{n}\times \mathbf{n})
\nonumber\\ 
&  -(\partial_\mu\mathbf{n}\cdot \mathbf{n})(\mathbf{n}\cdot\partial_\nu\mathbf{n})) \nonumber\\
=&2(D_\mu\phi_2)^2(\partial_\nu\mathbf{n})^2 -2(D_\mu\phi_2)(D_\nu\phi_2)(\partial_\mu\mathbf{n}\cdot\partial_\nu\mathbf{n}) .
\end{align}
Putting the two equations together leads to 
\begin{align}
&2((D_\mu\phi_1)^2 + (D_\mu\phi_2)^2)(\partial_\rho\mathbf{n})^2 \nonumber\\
&-2((D_\mu\phi_1)(D_\nu\phi_1) + (D_\mu\phi_2)(D_\nu\phi_2))
(\partial_\mu\mathbf{n}\cdot\partial_\nu\mathbf{n}) \nonumber\\
=&2[\delta_{\mu\nu}(\partial_\rho\mathbf{n})^2 - \partial_\mu\mathbf{n}\cdot\partial_\nu\mathbf{n}](D_\mu\phi_1D_\nu\phi_1 + D_\mu\phi_2D_\nu\phi_2) \nonumber\\
=&[\delta_{\mu\nu}(\partial_\rho\mathbf{n})^2 - \partial_\mu\mathbf{n}\cdot\partial_\nu\mathbf{n}][(D_\mu\phi)^*(D_\nu\phi) + {\rm{h.c.}}] .
\end{align}
where we have used
\begin{align}
&(D_\mu\phi)^*(D_\nu\phi) + {\rm{h.c.}} 
\nonumber\\ 
&= 2[(D_\mu\phi_1)(D_\nu\phi_1) + (D_\mu\phi_2)(D_\nu\phi_2)] .
\end{align}
Moreover, 
\begin{align}
&2(D_\mu\phi_1\partial_\nu\mathbf{n} - D_\nu\phi_1\partial_\mu\mathbf{n})\nonumber\\ 
& \cdot (D_\mu\phi_2\partial_\nu\mathbf{n}\times\mathbf{n} - D_\nu\phi_2\partial_\mu\mathbf{n}\times\mathbf{n} ) \nonumber\\
=&2(D_\mu\phi_1)(D_\mu\phi_2)\partial_\nu\mathbf{n}\cdot(\partial_\nu\mathbf{n}\times\mathbf{n}) 
\nonumber\\ 
&-2(D_\mu\phi_1)(D_\nu\phi_2)\partial_\nu\mathbf{n}\cdot(\partial_\mu\mathbf{n}\times\mathbf{n}) \nonumber\\
&-2(D_\nu\phi_1)(D_\mu\phi_2)\partial_\mu\mathbf{n}\cdot(\partial_\nu\mathbf{n}\times\mathbf{n})\nonumber\\ 
& +2D_\nu\phi_1D_\nu\phi_2\partial_\mu\mathbf{n}\cdot(\mathbf{n}\times\mathbf{n}) \nonumber\\
=&4(D_\mu\phi_1)(D_\nu\phi_2)\mathbf{n}\cdot(\partial_\mu\mathbf{n}\times\partial_\nu) \nonumber\\
=&-4gH_{\mu\nu}(D_\mu\phi_1)(D_\nu\phi_2) .
\end{align}
where by taking into account 
\begin{align}
&(D_\mu\phi)^*(D_\nu\phi) \nonumber\\
=&(\partial_\mu\phi - igc_\mu\phi)^*(\partial_\nu\phi - igC_\nu\phi) \nonumber\\
=&(\partial_\mu\phi_1 + i\partial_\mu\phi_2 - igc_\mu(\phi_1+i\phi_2))^*\nonumber\\ 
& \cdot (\partial_\nu\phi_1 + i\partial_\nu\phi_2 - igC_\nu(\phi_1+i\phi_2)) \nonumber\\
=&((\partial_\mu\phi_1 + gc_\mu\phi_2)+i(\partial_\mu\phi_2 - gc_\mu\phi_1))^* \nonumber\\ 
& \cdot (\partial_\nu\phi_1 + gC_\nu\phi_2+i(\partial_\nu\phi_2 - gC_\nu\phi_1)) \nonumber\\
=&(D_\mu\phi_1 -iD_\mu\phi_2)(D_\nu\phi_1 +iD_\nu\phi_2)\nonumber\\
=&(D_\mu\phi_1)(D_\nu\phi_1)+(D_\mu\phi_2)(D_\nu\phi_2)
\nonumber\\ 
&  +i(D_\mu\phi_1)(D_\nu\phi_2) -i(D_\nu\phi_1)(D_\mu\phi_2)
\end{align}
yields
\begin{align}
-i(D_\mu\phi)^*(D_\nu\phi)
=&(D_\mu\phi_1)(D_\nu\phi_2)  -(D_\nu\phi_1)(D_\mu\phi_2) \nonumber\\ 
&-i(D_\mu\phi_a)(D_\nu\phi_a) ,
\end{align}
and
\begin{align}
-iH_{\mu\nu}(D_\mu\phi)^*(D_\nu\phi) = 2H_{\mu\nu}(D_\mu\phi_1)(D_\nu\phi_2) ,
\end{align}
where it should be noted that 
\begin{align}
&-4gH_{\mu\nu}(D_\mu\phi_1)(D_\nu\phi_2) 
= 2igH_{\mu\nu}(D_\mu\phi)^*(D_\nu\phi)\nonumber\\
=&igH_{\mu\nu}(D_\mu\phi)^*(D_\nu\phi) - igH_{\mu\nu}(D_\mu\phi)(D_\nu\phi)^*\nonumber\\
=&igH_{\mu\nu}(D_\mu\phi)^*(D_\nu\phi) + {\rm{h.c.}} .
\end{align}

\end{proof}

\section{ADHM construction for instantons}

Since the self-dual (and anti-self-dual) Yang-Mills equations are nonlinear partial differential equations, they are very difficult to solve.
However, fortunately, a method is known that gives all instanton solutions using a matrix written in quaternions that satisfies algebraic constraints.


First, we define quaternions.

\begin{definition}[Quaternions]
A \textbf{quaternion}\index{quaternion} can be written as 
\begin{align}
q=q_\mu e_\mu \ (\mu = 1, 2, 3, 4) ,  \ q_\mu\in\mathbb{R} .
\end{align}
Here, $e_4$ represents the unit component: $e_4 = 1$ and $e_1, e_2, e_3$ satisfy the relations:
\begin{align}
e_1^2 = e_2^2 =e_3^2 = 1 , \ e_1e_2 =e_3=-e_2e_1 , \ etc.
\end{align}
We use the $2\times2$ Pauli matrix representation of the quaternion:
\begin{align}
e_j = -i\sigma_j (j = 1, 2, 3), \ e_4 = 1.
\end{align}
Hence, the quaternion $q = q_je_j$ with $q_4 = 0$ (called a pure quaternion) can be identified with an element of the Lie algebra $su(2)$.
The quaternion $q_j=0$ $(j=1,2,3)$ (called the real quaternion), $q = q_4e_4=q_4$ can be identified with the real number $q_4$.
For a matrix $M$, $M^\dagger$ represents the complex conjugate transpose, meaning that $q^\dagger = -q_je_j + q_4e_4$ for the quaternion $q = q_\mu e_\mu = q_je_j + q_4e_4$.

\end{definition}

We consider finding the second Chern class $c_2 = n$ (number of instantons) in the self-dual $SU(2)$ connection.

\begin{definition}[ADHM data]

First, we introduce the matrix $\Delta$, which is a $(n+1)\times n$-dimensional matrix whose elements take values in the quaternion $\mathbb{H}$:
\begin{align}
\Delta(x) = A + Bx , \ A, B\in M(n + 1, n; \mathbb{H})
\end{align}
These matrices $A, B$ are called \textbf{ADHM data}\index{ADHM data}. ADHM is an abbreviation for Atiyah, Drinfeld, Hitchin and Manin (Atiyah-Drinfeld-Hitchin-Manin(1978)\cite{ADHM78}).
Here, $A_{ab}$, $B_{ab}$ and $x$ take values in $SU(2)$ and 
$x$ on the right hand side of $\Delta(x)$ is the quaternion corresponding to the point $(x^\mu)$ in $\mathbb{R}^4$: ($\sigma_A$ are the Pauli matrices.)
\begin{align}
x = x^\mu e_\mu = x^0 - i\sigma_Ax^A .
\end{align}
Here we have defined 
\begin{align}
e_\mu := (1, -i\sigma_A)(A = 1, 2, 3).
\end{align}

Next, the row vector (or column vector) $V(x)$ with $(n + 1)$ components is determined so that it satisfies the following property.

(i) [Orthogonality of $\Delta$ and $V$]
\begin{align}
V^\dagger(x)\Delta(x) = 0 \Leftrightarrow \Delta^\dagger(x)V(x) = 0.
\end{align}

(ii) [Normalization and existence of the projection operator]
\begin{align}
V^\dagger(x)V(x) =& \bm{1} , \\
V(x)V^\dagger(x) =& \bm{1} - \Delta(x)\frac{1}{\Delta^\dagger(x)\Delta(x)}\Delta^\dagger(x):=P(x).
\end{align}

(iii) [Invertibility condition] or [Real non-singular condition]
The columns of $\Delta(x)$ span an $N$-dimensional quaternion space for all $x$. In other words, there exists an $n\times n$-matrix $R(x)$ of real quaternions that has an inverse for all $x$:
\begin{align}
\Delta(x)^\dagger \Delta(x) = R(x).
\end{align}

\end{definition}

\begin{prop}[ADHM construction of instanton solutions]
\noindent
(1)  
For $\Delta(x)$ with given ADHM data, if properties (i), (ii), and (iii) are satisfied, then the $su(2)$ gauge field as a pure quaternion
\begin{align}
\mathscr{A}_\mu(x) = V(x)^\dagger\partial_\mu V(x) \in su(2)
\end{align}
becomes a self-dual $SU(2)$ connection and gives the Chern class $c_2 = n$.
That is, the field strength $\mathscr{F}_{\mu\nu}(x)$ constructed from $\mathscr{A}_\mu(x)$ becomes self-dual $\mathscr{F}_{\mu\nu}^* = \mathscr{F}_{\mu\nu}$ and has instanton number $n$.

\noindent
(2)  
The number of real parameters in the ADHM data is equal to $8n$ (which is required to generate a general $n$-instanton solution).
\end{prop}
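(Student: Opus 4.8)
The plan is to treat part (1) in two stages — first that $\mathscr{A}_\mu=V^\dagger\partial_\mu V$ defines a self-dual $su(2)$ connection, then that its second Chern number is $n$ — and to handle part (2) by a parameter count. For the first stage I would begin by noting that $\mathscr{A}_\mu$ is genuinely $su(2)$-valued: differentiating the normalization $V^\dagger V=\mathbf 1$ gives $(\partial_\mu V^\dagger)V=-V^\dagger\partial_\mu V$, so $\mathscr{A}_\mu^\dagger=-\mathscr{A}_\mu$, and in the quaternionic picture $\mathscr{A}_\mu$ is automatically a pure quaternion, i.e.\ traceless anti-Hermitian. Next I would insert the completeness relation $\mathbf 1=VV^\dagger+\Delta\,f\,\Delta^\dagger$ with $f:=(\Delta^\dagger\Delta)^{-1}$ into $\mathscr{F}_{\mu\nu}=\partial_\mu\mathscr{A}_\nu-\partial_\nu\mathscr{A}_\mu+\mathscr{A}_\mu\mathscr{A}_\nu-\mathscr{A}_\nu\mathscr{A}_\mu$; using $V^\dagger\Delta=0$ together with its derivative $(\partial_\mu V^\dagger)\Delta=-V^\dagger\partial_\mu\Delta$, the pieces symmetric in $\mu\nu$ cancel and one is left with the compact expression $\mathscr{F}_{\mu\nu}=V^\dagger(\partial_\mu\Delta)\,f\,(\partial_\nu\Delta)^\dagger V-(\mu\leftrightarrow\nu)$. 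This is a short manipulation I would carry out in full.

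The structural input then is linearity, $\Delta(x)=A+Bx$, so that $\partial_\mu\Delta=Be_\mu$, together with the reality condition (iii): $\Delta^\dagger\Delta$ is a matrix of real quaternions, hence so is $f$, whose entries are ordinary real scalars commuting with every quaternion unit. Consequently $\mathscr{F}_{\mu\nu}=\sum_{b,c}f_{bc}\,L_b\,(e_\mu\bar e_\nu-e_\nu\bar e_\mu)\,\bar L_c$ with $L_b:=(V^\dagger B)_b$, and the entire dependence on $\mu,\nu$ sits in the factor $e_\mu\bar e_\nu-e_\nu\bar e_\mu$. That factor is a pure imaginary quaternion proportional to the 't~Hooft symbol $\bar\eta^A_{\mu\nu}$, and it satisfies $\tfrac12\varepsilon_{\mu\nu\rho\sigma}(e_\rho\bar e_\sigma-e_\sigma\bar e_\rho)=\pm(e_\mu\bar e_\nu-e_\nu\bar e_\mu)$; since the left and right factors $L_b,\bar L_c$ carry no $\mu,\nu$ indices this at once gives $\tfrac12\varepsilon_{\mu\nu\rho\sigma}\mathscr{F}_{\rho\sigma}=\pm\mathscr{F}_{\mu\nu}$, the (anti-)self-duality. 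Anti-Hermiticity of $\mathscr{F}_{\mu\nu}$ falls out of the same formula from $(e_\mu\bar e_\nu-e_\nu\bar e_\mu)^\dagger=-(e_\mu\bar e_\nu-e_\nu\bar e_\mu)$ and the symmetry $f^{T}=f$ (as $\Delta^\dagger\Delta$ is Hermitian and real).

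For the instanton number I would compute $c_2=\frac{1}{16\pi^2}\int d^4x\,\varepsilon^{\mu\nu\rho\sigma}\operatorname{tr}(\mathscr{F}_{\mu\nu}\mathscr{F}_{\rho\sigma})$, which for a self-dual field equals a positive multiple of $\int d^4x\operatorname{tr}(\mathscr{F}_{\mu\nu}\mathscr{F}_{\mu\nu})$. The cleanest route is the classical identity that, for an ADHM connection, $\operatorname{tr}(\mathscr{F}_{\mu\nu}\mathscr{F}_{\mu\nu})$ is a total derivative — up to normalization $\Box^2\log\det(\Delta^\dagger\Delta)$ with the quaternionic determinant — so the integral collapses to boundary and singularity contributions which sum to $n$, the number of columns of $\Delta$; alternatively one may invoke the index-theoretic fact that $c_2$ equals the quaternionic dimension ($=n$ by (iii)) on which $\Delta(x)^\dagger$ acts. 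I expect \emph{this} to be the main obstacle: the self-duality is essentially an identity about quaternion units, but pinning down the numerical value of the charge requires either the total-derivative identity, with some care about the quaternionic $\det$, or an appeal to the Atiyah--Drinfeld--Hitchin--Manin index theorem, neither of which is a one-line computation.

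Finally, for part (2) I would count directly. Since $A,B\in M(n+1,n;\mathbb H)$ and each quaternion entry is four real numbers, the raw data carries $2\cdot 4n(n+1)=8n(n+1)$ real parameters. One then quotients by the ADHM equivalence $A\mapsto QAP$, $B\mapsto QBP$, where $Q$ is a constant element of the compact group preserving $V^\dagger V=\mathbf 1$ and $P\in GL(n,\mathbb R)$ preserves the reality condition (this is precisely the redundancy that leaves $\mathscr{A}_\mu$ unchanged beyond ordinary gauge transformations), and imposes condition (iii), which for linear $\Delta$ amounts to a finite system of quadratic ``ADHM equations''. The bookkeeping is most transparent after fixing the canonical form $B=\binom{0}{\mathbf 1_n}$: the residual symmetry then reduces to $O(n)$ (besides the global $SU(2)$), and the reality condition becomes $\tfrac32 n(n-1)$ constraints on a symmetric $n\times n$ quaternionic matrix supplemented by a free row of $n$ quaternions; tallying these leaves $8n-3$ parameters, i.e.\ $8n$ once the three global gauge rotations are reinstated, in agreement with the known dimension $4Nk$ of the $SU(N)$ $k$-instanton moduli space at $N=2$, $k=n$. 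The only delicate point here is verifying that the symmetry group and the constraint count are exactly as stated; the arithmetic itself is routine.
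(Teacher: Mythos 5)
Your argument for self-duality is essentially the paper's own: both insert the projector identity $\mathbf 1-VV^\dagger=\Delta(\Delta^\dagger\Delta)^{-1}\Delta^\dagger$ into $\mathscr{F}_{\mu\nu}$, use $V^\dagger\Delta=0$ and $\partial_\mu\Delta=Be_\mu$ together with the reality of $R=\Delta^\dagger\Delta$ to isolate the self-dual quaternionic factor $e_\mu e_\nu^\dagger-e_\nu e_\mu^\dagger$, and your parameter count for part (2) reproduces the paper's tally $4n+2n(n+1)-\tfrac32 n(n-1)-\tfrac12 n(n-1)=8n$ with the same treatment of the residual $O(n)\times Sp(1)$ redundancy. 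The one place you go beyond the paper is the instanton number: the paper's proof in fact never verifies $c_2=n$, whereas you correctly identify that this requires either the total-derivative identity $\operatorname{tr}(\mathscr{F}_{\mu\nu}\mathscr{F}_{\mu\nu})\propto\Box^2\log\det R$ or an index-theorem argument — you leave it as a sketch, but that is still more than the paper supplies.
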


\begin{proof}
\noindent
(1) 
Using $V^\dagger V = 1$ and $\mathscr{A}_\mu = iV^\dagger \partial_\mu V$, we have
\begin{align}
\mathscr{F}_{\mu\nu} =& \partial_\mu \mathscr{A}_\nu - \partial_\nu \mathscr{A}_\mu - i\mathscr{A}_\mu\mathscr{A}_\nu + i\mathscr{A}_\nu\mathscr{A}_\mu \nonumber\\
=&i\partial_\mu(V^\dagger \partial_\nu V) - i\partial_\nu(V^\dagger \partial_\mu V) \nonumber\\
&+iV^\dagger\partial_\mu V V^\dagger\partial_\nu V - iV^\dagger\partial_\nu V V^\dagger\partial_\mu V. \nonumber\\
=&i\partial_\mu V^\dagger \partial_\nu V - i\partial_\nu V^\dagger\partial_\mu V - i\partial_\mu V^\dagger VV^\dagger \partial_\nu V \nonumber\\ &+ i\partial_\nu V^\dagger VV^\dagger \partial_\mu V \nonumber\\
=&i\partial_\mu V^\dagger(1 - VV^\dagger)\partial_\nu V - i\partial_\nu V^\dagger(1 - VV^\dagger)\partial_\mu V.
\end{align}
Here the operator $1 - VV^\dagger$ is a projection operator onto the subspace perpendicular to $V$. Using the definition of $R$: $\Delta^\dagger\Delta = R$ and $V^\dagger\Delta = 0$, we can write
\begin{align}
1 - P = 1 - VV^\dagger = \Delta R^{-1}\Delta^\dagger ,
\end{align}
and therefore obtain:
\begin{align}
\mathscr{F}_{\mu\nu} = i\partial_\mu V^\dagger \Delta R^{-1} \Delta^\dagger \partial_\nu V - i\partial_\nu V^\dagger \Delta R^{-1} \Delta^\dagger \partial_\mu V .
\end{align}
Using $\partial_\mu V^\dagger \Delta = -V^\dagger \partial_\mu\Delta$ following from the differentiation of $V^\dagger\Delta = 0$ and $\Delta^\dagger\partial_\mu V = -\partial_\mu\Delta^\dagger V$ following from its conjugate, we have
\begin{align}
\mathscr{F}_{\mu\nu} = iV^\dagger\partial_\mu\Delta R^{-1}\partial_\nu \Delta^\dagger V - iV^\dagger\partial_\nu\Delta R^{-1}\partial_\mu \Delta^\dagger V.
\end{align}
The relation holds
\begin{align}
\partial_\mu\Delta(x) = -\tilde{1}_ne_\mu .
\end{align}
Here $\tilde{1}_n$ is a constant real $(n + 1)\times n$ matrix whose first row is zero and whose remaining $n \times n$ block is  the identity matrix.  
For
\begin{align}
\Delta(x) =
\begin{pmatrix} 
L\\ 
M - x {1}_n
\end{pmatrix}
=
\begin{pmatrix}
L\\
M
\end{pmatrix}
-x^\rho
\begin{pmatrix}
0\\
e_\rho {1}_n
\end{pmatrix} ,
\end{align}
we have
\begin{align}
\partial_\mu \Delta(x) =-
\begin{pmatrix}
0\\
e_\mu {1}_n
\end{pmatrix}.
\end{align}
Therefore, we find
\begin{align}
\mathscr{F}_{\mu\nu} = iV^\dagger\tilde{1}_nR^{-1}\eta_{\mu\nu}\tilde{1}_n^\dagger V , \ \eta_{\mu\nu}:=e_\mu e_\nu^\dagger - e_\nu e_\mu^\dagger .
\end{align}
Here we can see that $\eta_{\mu\nu}$ is self-dual:
\begin{align}
{}^*\eta_{\mu\nu} = \eta_{\mu\nu}
\end{align}
Hence, we have 
\begin{align}
{}^*\mathscr{F}_{\mu\nu} = \mathscr{F}_{\mu\nu}
\end{align}
If we adopt $\Delta(x) = A + Bx = A + Be_\rho x_\rho$, we have 
\begin{align}
\partial_\mu\Delta(x) = Be_\mu , \ \partial_\mu\Delta^\dagger(x) = e_\mu B^\dagger ,
\end{align}
which yields
\begin{align}
\mathscr{F}_{\mu\nu} = iV^\dagger BR^{-1}\eta_{\mu\nu}B^\dagger V.
\end{align}

\noindent
(2)  
In fact, the ADHM matrix
$\mathscr{M}:=\begin{pmatrix}
L\\
M
\end{pmatrix}$ has $4n$ real parameters in the column vector (or row vector) $L$ and $2n(n+1)$ real parameters in the symmetric matrix $M$. The constraint $M^\dagger M = R_0$ removes $\frac{3}{2}n(n-1)$ degrees of freedom. This is because the pure quaternion part is set to 0, resulting in 3 for each of the upper triangular elements of the matrix $R$.

Additionally, the ADHM data has extra degrees of freedom corresponding to the transformation:
\begin{align}
\Delta(x) \to
\begin{pmatrix}
q&0\\
0&\mathscr{O}
\end{pmatrix}
\Delta(x)\mathscr{O}^{-1} ,
\end{align}
where $\mathscr{O}$ is a constant $n \times n$ real orthogonal matrix, $q$ is a constant unit quaternion, and the decomposition into blocks is $\Delta(x)=\begin{pmatrix}
L\\
M - x {1}_N
\end{pmatrix}$. This transformation rotates the components of the vector $V$ as defined by $V(x)^\dagger\Delta(x) = 0$, but does not change $\mathscr{A}$, which follows from $\mathscr{A} = V^\dagger dV$. The matrix $\mathscr{O}\in O(n)$ has $\frac{1}{2}n(n-1)$ parameters, and although the unit quaternion $q$ has three parameters, we do not subtract the last three, since they balance the three coming from the $SU(2)$ rotation. Therefore, we obtain the expected result:
\begin{align}
4n + 2n(n + 1) -\frac{3}{2}n(n-1) - \frac{1}{2}n(n-1) = 8n.
\end{align}

\end{proof}


\begin{rem}
The construction of $V$ differs slightly depending on the choice of gauge group $G$. The above is for $G = Sp(N)$: $Sp(1) = SU(2)$. In the case of $Sp(N)$, the $(n+1)\times n$ dimension is changed to $(n+N)\times n$. The case of $G = U(N)$ is different. 
\end{rem}

\begin{rem}
At first glance, $\mathscr{A}_\mu = V + \partial_\mu V$ looks like a \textbf{pure gauge}\index{pure gauge} form, but it is not the case. This is because $V$ is not a square matrix. If $\mathscr{A}_\mu$ were a pure gauge, the field strength would be zero and an instanton could not be constructed.
\end{rem}

By calculating $\mathscr{F} = d\mathscr{A} + \mathscr{A} \wedge \mathscr{A}$, we obtain $\mathscr{A} = V^\dagger dV$:
\begin{align}
d\mathscr{A} &= d(V^\dagger dV) = dV^\dagger \wedge dV + Vd^2V \nonumber\\ &= dV^\dagger \wedge dV , \nonumber\\ 
\mathscr{A} \wedge \mathscr{A} &= V^\dagger dV \wedge V^\dagger dV = -dV^\dagger V \wedge V^\dagger dV \nonumber\\ &
= -dV^\dagger \wedge PdV .
\end{align}
Therefore, we have
\begin{align}
\mathscr{F} = dV^\dagger \wedge dV - dV^\dagger PdV = dV^\dagger(1 - P)dV \neq 0.
\end{align}

\begin{rem}
The gauge transformation $\mathscr{A}(x) \to \mathscr{A}'(x) = U^\dagger(x)\mathscr{A}(x)U(x) + U^\dagger(x)dU(x)$ is realized by multiplying $V(x)$ by the unit quaternion $q(x)$ from the right:
\begin{align}
V(x) \to V'(x) = V(x)q(x) .
\end{align}
In fact, using $q^\dagger(x)q(x) = 1$ and $V^\dagger(x)V(x) = 1$, we find 
\begin{align}
\mathscr{A}' &= {V'}^\dagger dV' = (Vq)^\dagger d(Vq) = q^\dagger V^\dagger dVq + q^\dagger V^\dagger Vdq \nonumber\\
&=q^\dagger\mathscr{A}q + q^\dagger dq ,
\end{align}
where we have used the fact that the unit quaternion can be identified with $SU(2)$.
\end{rem}

\begin{rem}
The projection operator $P(=VV^\dagger)$ is related to the covariant differential operator $D = d + \mathscr{A}(\mathscr{A} = V^\dagger dV)$ as 
\begin{align}
Pd\bar{\Psi} = VD\Psi , \ \bar{\Psi} := V\Psi .
\end{align}
In fact, since $VV^\dagger = 1$, we have:
\begin{align}
Pd\bar{\Psi} &= Pd(V\Psi) = PdV\Psi + PVd\Psi \nonumber\\
&=VV^\dagger dV\Psi + VV^\dagger Vd\Psi
=V(V^\dagger dV + d)\Psi .
\end{align}
\end{rem}

In general, if we adopt
\begin{align}
\Delta =
\begin{pmatrix}
\lambda_1&\lambda_2&\dots&\lambda_n\\
\alpha_1& & &\\
&\alpha_2& &\text{\huge{0}}\\
& &\ddots&\\
\text{\huge{0}}& & &\alpha_n
\end{pmatrix}
+
\begin{pmatrix}
0& &\dots&0\\
1& & &\\
&1& &\text{\huge{0}}\\
& &\ddots&\\
\text{\huge{0}}& & &1
\end{pmatrix} x ,
\end{align}
we can obtain 't~Hooft type $n$-instantons.

\begin{example}[$G =Sp(1) \simeq SU(2)$ $n = 1$(1-instanton)]

Consider the $1$-instanton solution.
For the ADHM operator
\begin{align}
\Delta =& A + Bx =
\begin{pmatrix}
\lambda \\
\alpha
\end{pmatrix}
+
\begin{pmatrix}
0 \\
x
\end{pmatrix}
=
\begin{pmatrix}
\lambda \\
x +\alpha
\end{pmatrix}
\nonumber\\ =&
\begin{pmatrix}
\lambda \\
x_1
\end{pmatrix}
\ (x_1:= x+ \alpha) ,
\end{align}
we have
\begin{align}
\Delta^\dagger\Delta =&
\begin{pmatrix}
\lambda^\dagger & x_1^\dagger
\end{pmatrix}
\begin{pmatrix}
\lambda \\
x_1
\end{pmatrix}
=\lambda^\dagger\lambda + x_1^\dagger x_1\in \mathbb{R}.
\end{align}
For the vector
\begin{align}
V = \Xi^{-1/2}
\begin{pmatrix}
1 \\
v
\end{pmatrix} ,
\end{align}
we find $v$ as
\begin{align}
\Delta^\dagger V =& 0 \Leftrightarrow
\begin{pmatrix}
\lambda^\dagger & x_1^\dagger
\end{pmatrix}
\begin{pmatrix}
1\\
v
\end{pmatrix}
=
0
\Leftrightarrow \lambda^\dagger + x_1^\dagger v = 0
\nonumber\\ 
 \Leftrightarrow & |x_1|^2v = -\lambda^\dagger x_1 \Leftrightarrow v = -\lambda^\dagger\frac{x_1}{|x_1|^2} .
\end{align}
$\Xi$ is found from normalization:
\begin{align}
&1 = V^\dagger V = \Xi^{-1}
(1 \ v^\dagger)
\begin{pmatrix}
1 \\
v
\end{pmatrix}
=\Xi^{-1}(1 + |v|^2) 
\nonumber\\ & \Rightarrow \Xi = 1 + |v|^2 .
\end{align}
By taking $\alpha = a_\mu e_\mu$, we have
\begin{align}
x_1 = x + \alpha = (x_\mu + a_\mu)e_\mu \Rightarrow |x_1|^2 = (x_\mu + a_\mu)^2 .
\end{align}
Therefore, we find
\begin{align}
&V = \Xi^{-1/2}
\begin{pmatrix}
1 \\
v
\end{pmatrix}
=\frac{1}{\sqrt{1 + |v|^2}}
\begin{pmatrix}
1 \\
v
\end{pmatrix}, 
\ v = -\lambda\frac{x_\mu + a_\mu}{(x + a)^2}e_\mu ,
\nonumber\\
&|v|^2 = \frac{|\lambda|^2}{(x + a)^2} , \quad
\Xi = 1 + |v|^2 = 1 + \frac{\lambda^2}{(x + a)^2} .
\end{align}

\end{example}

\begin{example}[$G =Sp(1) \simeq SU(2)$, $n = 2$(2-instanton)]

Consider the $2$-instanton solution.
For the ADHM operator
\begin{align}
\Delta &= A + Bx =
\begin{pmatrix}
\lambda_1& \lambda_2 \\
\alpha&0 \\
0&\beta
\end{pmatrix}
+
\begin{pmatrix}
0&0 \\
x&0 \\
0&x
\end{pmatrix}
=
\begin{pmatrix}
\lambda_1&\lambda_2 \\
x_1&0 \\
0&x_2
\end{pmatrix},
\nonumber\\
& x_1:= x+ \alpha \, \ x_2:= x+ \beta ,
\end{align}
we have
\begin{align}
\Delta^\dagger\Delta &=
\begin{pmatrix}
\lambda_1^\dagger&x_1^\dagger&0 \\
\lambda_2^\dagger&0&x_2^\dagger
\end{pmatrix}
\begin{pmatrix}
\lambda_1&\lambda_2 \\
x_1&0 \\
0&x_2
\end{pmatrix}
\nonumber\\ &=
\begin{pmatrix}
\lambda_1^\dagger\lambda_1 + x_1^\dagger x_1&\lambda_1^\dagger\lambda_2 + x_1^\dagger x_2 \\
\lambda_2^\dagger\lambda_1&\lambda_2^\dagger\lambda_2 + x_2^\dagger x_2
\end{pmatrix} .
\end{align}
For the vector
\begin{align}
V = \Xi^{-1/2}
\begin{pmatrix}
1\\
v_1\\
v_2
\end{pmatrix} ,
\end{align}
we find $v_1,v_2$ as
\begin{align}
& \Delta^\dagger V= 0 \Leftrightarrow
\begin{pmatrix}
\lambda_1^\dagger&x_1^\dagger&0 \\
\lambda_2^\dagger&0&x_2^\dagger
\end{pmatrix}
\begin{pmatrix}
1\\
v_1\\
v_2
\end{pmatrix}
=
\begin{pmatrix}
0\\
0
\end{pmatrix}
\\
\Leftrightarrow&
\left\{\,
\begin{aligned}
&\lambda_1^\dagger + x_1^\dagger v_1 = 0\\
&\lambda_2^\dagger + x_2^\dagger v_2 = 0
\end{aligned}
\right.
\Leftrightarrow
\left\{\,
\begin{aligned}
&|x_1|^2 v_1 = -\lambda_1^\dagger x_1\\
&|x_2|^2 v_2 = -\lambda_2^\dagger x_2
\end{aligned}
\right.
\nonumber\\ 
 \Leftrightarrow&
\left\{\,
\begin{aligned}
&v_1 = -\lambda_1^\dagger \frac{x_1}{|x_1|^2}\\
&v_2 = -\lambda_2^\dagger \frac{x_2}{|x_2|^2}
\end{aligned}
\right. .
\end{align}
$\Xi$ is found from the normalized $V^\dagger V = 1$:
\begin{align}
&1 = \Xi^{-1/2}
\begin{pmatrix}
1&v_1^\dagger&v_2^\dagger
\end{pmatrix}
\begin{pmatrix}
1\\
v_1\\
v_2
\end{pmatrix}
\Xi^{-1/2} 
\nonumber\\ &= \Xi^{-1}(1 + |v_1|^2 + |v_2|^2) 
\nonumber\\ &
\Rightarrow \Xi = 1 + |v_1|^2 + |v_2|^2 , 
\nonumber\\ &
 |v_a|^2 = |\lambda_a|^2\frac{|x_a|^2}{|x_a|^4} = \frac{\lambda_a^2}{|x_a|^2} \ (a = 1, 2) .
\end{align}
By adopting $\alpha = a{1}_2, \beta = b{1}_2$, we have
\begin{align}
x_1 &= x + \alpha = (x_0 + a){1}_2 + x_je_j \Rightarrow |x_1|^2 
\nonumber\\ &= (x_0 + a)^2 + r^2 , \\
x_2 &= x + \beta = (x_0 + b){1}_2 + x_je_j \Rightarrow |x_2|^2 
\nonumber\\ &= (x_0 + b)^2 + r^2 .
\end{align}
Hence, we obtain the superpotential:
\begin{align}
\Xi &= 1 + \frac{\lambda_1^2}{|x_1|^2} + \frac{\lambda_2^2}{|x_2|^2} 
\nonumber\\ &= 1 + \frac{\lambda_1^2}{(x_0 + a)^2 + r^2} + \frac{\lambda_2^2}{(x_0 + b)^2 + r^2} .
\end{align}
This represents a 2-instanton consisting of two 1-instantons located on the $x_0$ axis.
\end{example}

\section{$S^1$-equivariant ADHM construction}

Using the quaternion representation $x = x_\mu e_\mu$ of the coordinates of a point $x\in\mathbb{R}^4$, the conformal group of $\mathbb{R}^4$ acts as the \textbf{M\"{o}biustransform}\index{M\"{o}biustransform}:
\begin{align}
x\mapsto x' = (Ax + B)(Cx + D)^{-1}.
\end{align}
The commuting circle $(S^1)$ action is given by the rotation group:
\begin{align}
\begin{pmatrix}
A&C\\
B&D
\end{pmatrix}
=
\begin{pmatrix}
\cos\frac{\alpha}{2}&\pm\sin\frac{\alpha}{2}\\
\mp\sin\frac{\alpha}{2}&\cos\frac{\alpha}{2}
\end{pmatrix}. \label{eq:6}
\end{align}
This circle action fixes the 2-sphere $S^2$ described by the unit pure quaternion $x$. That is, for $x = x_j e_j$, $x^2 = 1 \Leftrightarrow x_j^2 = 1$. In fact, we can confirm this:
\begin{align}
x^2 &= x_je_jx_ke_k = x_jx_ke_je_k \nonumber\\
&=x_jx_k(\delta_{jk}1 + \epsilon_{jk\ell}e_\ell) = x_jx_j .
\end{align}
This two-sphere $S^2$ is the two-sphere boundary $S_\infty^2$ of $\mathbb{H}^3$ in the ball model.

For the ball model coordinates $X_1, X_2, X_3$, we define a pure quaternion $X = X_je_j(R^2 = |X|^2)$. Using $\varphi$ as the coordinate along the circle, we introduce the \textbf{toroidal coordinates}\index{toroidal coordinates} of $\mathbb{R}^4$. Then $x$ can be written as
\begin{align}
x = \frac{2X + (1 - R^2)\sin\varphi}{1 + R^2 + (1 - R^2)\cos \varphi} . \label{eq:7}
\end{align}
The circular action \eqref{eq:6} corresponds to the rotation $\varphi\mapsto\varphi + \alpha$.

The standard form of ADHM data $\mathscr{M}$ corresponding to instanton number $n$ is a pair of quaternion matrices $L$ and $M$, where $L$ is a row of $n$ quaternions and $M$ is a $n \times n$ symmetric quaternion matrix:
\begin{align}
\mathscr{M} =
\begin{pmatrix}
L\\
M
\end{pmatrix}. 
\label{eq:8b}
\end{align}
This $\mathscr{M}$ is required to satisfy the quadratic constraint
\begin{align}
\mathscr{M}^\dagger \mathscr{M} = R_n ,
\end{align}
where $R_n$ is a real $n \times n$ matrix with an inverse, and the pure quaternion part is required to be zero. Thus, the ADHM operator $\Delta(x)$ is constructed from such $\mathscr{M}$:
\begin{align}
\Delta(x) = \mathscr{M} - \mathscr{U}x =
\begin{pmatrix}
L\\
M
\end{pmatrix}
-
\begin{pmatrix}
0\\
\bm{1}_n
\end{pmatrix}
x.
\end{align}
The \textbf{equivariant}\index{equivariant} ADHM data are obtained by applying the transformation $Q$:
\begin{align}
\mathscr{M} \to Q\mathscr{M} , \ \mathscr{U} \to Q\mathscr{U} , \ Q^\dagger Q = 1_{n+1}.
\end{align}
As a result, the ADHM data are no longer in the standard form. 

Now, following Manton-Sutcliffe(2014)\cite{MS14},  we introduce stronger conditions than \eqref{eq:8} and show that they are enough to show that the ADHM data are invariant under circular actions.

\noindent
(i) The quaternion matrix $M$ with $n \times n$ is pure quaternion and symmetric $(M^\dagger = -M, \ {}^t\! M = M)$.

\noindent
(ii) $\hat{M}^\dagger\hat{M} = {1}_n \Leftrightarrow L^\dagger L + M^\dagger M = {1}_n$.

\noindent
(iii) $LM = \mu L$, $\mu$ is a pure quaternion, $L \neq 0$.

Here $\mu$ is called the left eigenvalue of $M$: $(ML^\dagger = L^\dagger\mu)$.

Using (i), (ii) can be rewritten as

\noindent
(ii)' $L^\dagger L = {1}_n + M^2$.

On the other hand, the relation holds:

\noindent
(iv) $LL^\dagger = 1- |\mu|^2 = 1 + \mu^2$

This is true because by applying $M$ to the right-hand side of (iii) and using (iii) again, we obtain $LM^2 = \mu LM = \mu^2 L = -|\mu|^2L$. By eliminating $M^2$ using (ii)', we obtain $L(L^\dagger L - {1}_n) = -|\mu|^2 L \Leftrightarrow (LL^\dagger - {1}_n + |\mu|^2)L = 0$.

Under the general conformal transformation \eqref{eq:7}, the ADHM data (apart from the total factor on the right-hand side) transform as follows ($\mathscr{M}$ is no longer in the standard form):
\begin{align}
\mathscr{M} \to \mathscr{M}^\prime = \mathscr{M}D - \mathscr{U}B , \ \mathscr{U} \to \mathscr{U}^\prime = \mathscr{U}A - \mathscr{M}C .
\end{align}
Indeed, this can be verified as 
\begin{align}
& \mathscr{M} - \mathscr{U}x' =  \mathscr{M} - \mathscr{U}(Ax + B)(Cx + D)^{-1}  
\nonumber\\
=& [\mathscr{M}(Cx + D) - \mathscr{U}(Ax + B)](Cx + D)^{-1} \nonumber\\
=&[\mathscr{M}D - \mathscr{U}B - (\mathscr{U}A - \mathscr{M}C)x](Cx + D)^{-1} \nonumber\\
=&(\mathscr{M}^\prime-\mathscr{U}^\prime x)(Cx + D)^{-1}.
\end{align}
In the case of circular action \eqref{eq:6}, this transformation reads
\begin{align}
\mathscr{M} = \mathscr{M}\cos\frac{\alpha}{2} + \mathscr{U}\sin\frac{\alpha}{2}, \ \mathscr{U} = \mathscr{U}\cos\frac{\alpha}{2} - \mathscr{M}\sin\frac{\alpha}{2} .
\end{align}
In order to show that \textbf{the constrained ADHM data is circularly invariant}, we need the matrix $Q$ to convert the matrix $Q$ data to a standard form. Using the conditions (i), (iii) and the relation (ii)', we find that such $Q$ is given by 
\begin{align}
Q=
\begin{pmatrix}
\cos\frac{\alpha}{2} + \mu\sin\frac{\alpha}{2}&-L\sin\frac{\alpha}{2}\\
L^\dagger\sin\frac{\alpha}{2}&{1}_n\cos\frac{\alpha}{2}-M\sin\frac{\alpha}{2}
\end{pmatrix}. \label{eq:9}
\end{align}
Indeed, we can check that $Q^\dagger Q = {1}_{n+1}$, and show by straightforward calculations that $QU = U'$ and $Q\hat{M} = \hat{M'}$. As a result, the ADHM data have the required circular invariance and gives a hyperbolic magnetic monopole:
{\footnotesize
\begin{align}
 Q^\dagger Q &=  
\begin{pmatrix}
\cos\frac{\alpha}{2} -\mu\sin\frac{\alpha}{2}&L\sin\frac{\alpha}{2} \\
-L^\dagger\sin\frac{\alpha}{2}&{1}_n\cos\frac{\alpha}{2} + M\sin\frac{\alpha}{2}
\end{pmatrix}
\nonumber\\
\times &
\begin{pmatrix}
\cos\frac{\alpha}{2} +\mu\sin\frac{\alpha}{2}&-L\sin\frac{\alpha}{2} \\
L^\dagger\sin\frac{\alpha}{2}&{1}_n\cos\frac{\alpha}{2} - M\sin\frac{\alpha}{2}
\end{pmatrix}
\nonumber\\
&\stackrel{\rm{(ii), (iii), (iv)}}{=} 
\begin{pmatrix}
1&0\\
0&\bm{1}_n
\end{pmatrix}
=\bm{1}_{n+1}.
\end{align}}
Here, $Q \to Q^\dagger$ corresponds to $\mu \to -\mu, L \to -L, M \to -M$. Since the conditions (i), (ii), (iii), and (iv) are all unchanged under this transformation,  it turns out that $QQ^\dagger = 1_{n+1}$ also holds. This can also be confirmed by direct calculations.
We can also check $\mathscr{M} \to Q\mathscr{M} , \ \mathscr{U} \to Q\mathscr{U}$ by direct calculations:
\begin{align}
Q\mathscr{M}=&
\begin{pmatrix}
\cos\frac{\alpha}{2} + \mu\sin\frac{\alpha}{2}&-L\sin\frac{\alpha}{2} \\
L^\dagger\sin\frac{\alpha}{2}&{1}_n\cos\frac{\alpha}{2} - M\sin\frac{\alpha}{2}
\end{pmatrix}
\begin{pmatrix}
L\\
M
\end{pmatrix} \nonumber\\
=&
\begin{pmatrix}
L\cos\frac{\alpha}{2} + (\mu L - LM)\sin\frac{\alpha}{2}\\
(L^\dagger L - M^2)\sin\frac{\alpha}{2} + M\cos\frac{\alpha}{2}
\end{pmatrix}
\nonumber\\
\stackrel{\rm{(iii), (ii)'}}{=}&
\begin{pmatrix}
L\cos\frac{\alpha}{2}\\
{1}_n\sin\frac{\alpha}{2} + M\cos\frac{\alpha}{2}
\end{pmatrix}\nonumber\\
=&
\begin{pmatrix}
L\\
M
\end{pmatrix}
\cos\frac{\alpha}{2}+
\begin{pmatrix}
0\\
{1}_n
\end{pmatrix}
\sin\frac{\alpha}{2}
\nonumber\\
=& \hat{M}\cos\frac{\alpha}{2} + U\sin\frac{\alpha}{2} = \mathscr{M}^\prime ,
\\
Q\mathscr{U}=&Q
\begin{pmatrix}
0\\
1
\end{pmatrix}
=
\begin{pmatrix}
-L\sin\frac{\alpha}{2}\\
{1}_n\cos\frac{\alpha}{2} - M\sin\frac{\alpha}{2}
\end{pmatrix}
\nonumber\\
=&-
\begin{pmatrix}
L\\
M
\end{pmatrix}
\sin\frac{\alpha}{2}+
\begin{pmatrix}
0\\
{1}_n
\end{pmatrix}
\cos\frac{\alpha}{2} \nonumber\\
=&U\cos\frac{\alpha}{2} - \hat{M}\sin\frac{\alpha}{2} = \mathscr{U}^\prime .
\end{align}
From the transformation equation \eqref{eq:7} between $X$ and $x$, we find  $x = X$ when $\varphi = 0$. Therefore, for circularly invariant data, by setting $\alpha = \varphi$, the following relationship is satisfied at the point $x$ with toroidal coordinates $X$ and $\varphi$:
\begin{align}
\Delta(x) = Q^\dagger\Delta(x).
\end{align}
Here, $Q$ is \eqref{eq:9} with $\alpha = \varphi$. Therefore, the required vector $V$ can be written as $V = Q^\dagger V_0$ using the row vector $V_0 \ (V_0^\dagger V_0 = 1)$ of unit length. 
Note that $V_0$ depends only on the pure quaternion $X$:
\begin{align}
V_0(X)^\dagger \Delta(X) = 0.
\end{align}
As a result, the gauge field $\mathscr{A}_\mu$ does not depend on $\varphi$, which is what was needed to interpret instantons as hyperbolic magnetic monopoles. 
Thus we have
\begin{align}
&\mathscr{A}_\mu(x) = V(x)^\dagger\partial_\mu V(x) = V_0(X)^\dagger Q\partial_\mu(Q^\dagger V_0(X)) \nonumber\\
=&V_0(X)^\dagger Q\partial_\mu Q^\dagger V_0(X) + V_0(X)^\dagger QQ^\dagger\partial_\mu V_0(X) \nonumber\\
=&V_0(X)^\dagger Q\partial_\mu Q^\dagger V_0(X) + V_0(X)^\dagger \partial_\mu V_0(X) .
\end{align}
In particular, the monopole scalar field $\Phi$ is obtained using $\partial_\varphi V_0(X)=0$:
\begin{align}
\Phi = \mathscr{A}_\varphi = V_0^\dagger(Q\partial_\varphi Q^\dagger)V_0 , \ Q\partial_\varphi Q^\dagger = \frac{1}{2}
\begin{pmatrix}
-\mu&L\\
-L^\dagger&M
\end{pmatrix}.
\end{align}
Note that $Q\partial_\varphi Q^\dagger$ does not depend on $\varphi$.
In fact, this can be confirmed as  
{\scriptsize
\begin{align}
\partial_\varphi Q^\dagger =& \partial_\varphi
\begin{pmatrix}
\cos\frac{\varphi}{2}-\mu\sin\frac{\varphi}{2}&L\sin\frac{\varphi}{2}\\
-L^\dagger\sin\frac{\varphi}{2}&{1}_n\cos\frac{\varphi}{2} + M\sin\frac{\varphi}{2}
\end{pmatrix}\nonumber\\
=&
\begin{pmatrix}
-\frac{1}{2}\sin\frac{\varphi}{2}-\frac{1}{2}\mu\cos\frac{\varphi}{2}&\frac{1}{2}L\cos\frac{\varphi}{2}\\
-\frac{1}{2}L^\dagger\cos\frac{\varphi}{2}&-\frac{1}{2}{1}_n\sin\frac{\varphi}{2} + \frac{1}{2}M\cos\frac{\varphi}{2}
\end{pmatrix}, 
\nonumber\\
Q\partial_\varphi Q^\dagger =& \frac{1}{2}
\begin{pmatrix}
\cos\frac{\varphi}{2}+\mu\sin\frac{\varphi}{2}&-L\sin\frac{\varphi}{2}\\
L^\dagger\sin\frac{\varphi}{2}&{1}_n\cos\frac{\varphi}{2} - M\sin\frac{\varphi}{2}
\end{pmatrix}
\nonumber\\
\times &
\begin{pmatrix}
-\sin\frac{\varphi}{2}-\mu\cos\frac{\varphi}{2}&L\cos\frac{\varphi}{2}\\
-L^\dagger\cos\frac{\varphi}{2}&-{1}_n\sin\frac{\varphi}{2} + M\cos\frac{\varphi}{2}
\end{pmatrix}  \nonumber\\
=&\frac{1}{2}
\begin{pmatrix}
-\mu&L \\
-L^\dagger&M
\end{pmatrix}.
\end{align}}
Here, we have used $LL^\dagger = 1 + \mu^2, ML^\dagger = \mu L^\dagger, LM = \mu L$ and $L^\dagger L = 1 + M^2$.

The left eigenvalue $\mu$ has a physical meaning because it is related to the value at the origin of the scalar field. In particular, for $X = 0$, $V_0$ that satisfies $V_0^\dagger \Delta(X) = 0$ is
$V_0 = \begin{pmatrix}
\mu \\
L^\dagger
\end{pmatrix}.
$
for  
$\Delta(X = 0) = 
\begin{pmatrix}
L\\
M
\end{pmatrix}
$. 
In fact, we confirm
$V_0^\dagger \Delta(X = 0) = \begin{pmatrix}
-\mu&L
\end{pmatrix}
\begin{pmatrix}
L \\
M
\end{pmatrix}
=-\mu L + LM = 0$. 
Therefore, we obtain the scalar field $\Phi(0)$ at $X = 0$:
\begin{align}
\Phi(0) =& \mathscr{A}_\varphi(0) = V_0^\dagger\frac{1}{2}
\begin{pmatrix}
-\mu & L\\
-L^\dagger&M
\end{pmatrix}
V_0 
\nonumber\\
=&
\begin{pmatrix}
-\mu & L
\end{pmatrix}
\frac{1}{2}
\begin{pmatrix}
-\mu & L\\
-L^\dagger&M
\end{pmatrix}
\begin{pmatrix}
\mu \\
L^\dagger
\end{pmatrix}
\nonumber\\
=&\frac{1}{2}
\begin{pmatrix}
-\mu & L
\end{pmatrix}
\begin{pmatrix}
-\mu^2 + LL^\dagger \\
-L^\dagger\mu + ML^\dagger
\end{pmatrix}
\nonumber\\
=&\frac{1}{2}
\begin{pmatrix}
-\mu & L
\end{pmatrix}
\begin{pmatrix}
1\\
0
\end{pmatrix}
=-\frac{1}{2}\mu .
\end{align}

\begin{example}[1-magnetic monopole]
For $n = 1$, $\hat{M}$ that satisfies the constraints is given e.g., by
\begin{align}
\hat{M}=
\begin{pmatrix}
\sqrt{1 - a^2}\\
ai
\end{pmatrix}
\ (|a|<1) . 
\label{eq:10} 
\end{align}
In fact, (i) $M = ai$ is a pure quaternion and a pure imaginary number.  
(ii) $\hat{M}^\dagger\hat{M} =
\begin{pmatrix}
\sqrt{1 - a^2}&-ai
\end{pmatrix}
\begin{pmatrix}
\sqrt{1 - a^2}\\
ai
\end{pmatrix}
=1 - a^2 + a^2 =1
$. 
(iii) $L = \sqrt{1-a^2} \neq 0 \ (a^2 \neq 1)$ yields $LM=\mu L$, and therefore $M = \mu = ai$, implying that $\mu$ is a pure imaginary number.

Next, we compute the ADHM operator:
\begin{align}
&\Delta(x) =
\begin{pmatrix}
\sqrt{1 - a^2}\\
ai
\end{pmatrix}
-
\begin{pmatrix}
0\\
1
\end{pmatrix}
x 
\nonumber\\
& (x = x_\mu e_\mu , \ e_j = -i\sigma_j , \ e_4 =1).
\end{align}
Using
$\sigma_1 =
\begin{pmatrix}
0&1\\
1&0
\end{pmatrix}
, \ \sigma_2 =
\begin{pmatrix}
0&-i\\
i&0
\end{pmatrix}
, \ \sigma_3 =
\begin{pmatrix}
1&0\\
0&-1
\end{pmatrix}
$, 
the second term reads
\begin{align}
& x_\mu e_\mu
\begin{pmatrix}
0\\
1
\end{pmatrix}
\nonumber\\
=& -ix_1\sigma_1
\begin{pmatrix}
0\\
1
\end{pmatrix}
-ix_2\sigma_2
\begin{pmatrix}
0\\
1
\end{pmatrix}
-ix_3\sigma_3
\begin{pmatrix}
0\\
1
\end{pmatrix}
+x_4
\begin{pmatrix}
0\\
1
\end{pmatrix} \nonumber\\
=&-ix_1
\begin{pmatrix}
1\\
0
\end{pmatrix}
-ix_2
\begin{pmatrix}
-i\\
0
\end{pmatrix}
-ix_3
\begin{pmatrix}
0\\
-1
\end{pmatrix}
+x_4
\begin{pmatrix}
0\\
1
\end{pmatrix}
\nonumber\\
=&
\begin{pmatrix}
-ix_1 -x_2\\
ix_3 + x_4
\end{pmatrix} .
\end{align}
Thus, we obtain the ADHM operator:
\begin{align}
\Delta(x) =
\begin{pmatrix}
\sqrt{1-a^2}-ix_1 -x_2\\
ai + ix_3 +x_4
\end{pmatrix}
:=
\begin{pmatrix}
\Delta_1\\
\Delta_2
\end{pmatrix}.
\end{align}
Furthermore, we find the vector $V$:
\begin{align}
& V =
\begin{pmatrix}
v_1\\
v_2
\end{pmatrix}
\Rightarrow V^\dagger\Delta(x) =
\begin{pmatrix}
v_1^*&v_2^*
\end{pmatrix}
\begin{pmatrix}
\Delta_1\\
\Delta_2
\end{pmatrix}
=0 \nonumber\\
&
\Leftrightarrow v_1^*\Delta_1 + v_2^*\Delta_2 = 0 \Rightarrow \frac{v_1^*}{v_2^*} = -\frac{\Delta_2}{\Delta_1} \Rightarrow \frac{v_1}{v_2} = -\frac{\Delta_2^*}{\Delta_1^*} .
\end{align}
By taking into account the normalization $V^\dagger V = 1 \Leftrightarrow |v_1|^2 +|v_2|^2 = 1$, we obtain the vector $V$:
\begin{align}
V =&\frac{1}{N}
\begin{pmatrix}
ai + ix_3 - x_4 ,
\\
\sqrt{1 - a^2} + ix_1 - x_2
\end{pmatrix} , \nonumber\\
N^2 := &x_1^2 + (x_2 - \sqrt{1 - a^2})^2 +(x_3 +a)^2 + x_4^2 .
\end{align}
$\hat{M}$ in \eqref{eq:10} gives a hyperbolic 1-monopole centered at $X_1 = (1 - \sqrt{1 - a^2})/a$ along the $X_1$ axis.

The scalar field $\Phi(0)$ at the origin $X = 0$ is obtained:
\begin{align}
|\Phi(0)| = \frac{1}{2}|\mu| = \frac{1}{2}|a| .
\end{align}
The scalar field $\Phi(X)$ at a general point on the unit ball has the absolute value $|\Phi(X)|$:
\begin{align}
|\Phi|^2 =\frac{4R^2 - 4aX_1(1 + R^2) + a^2[4X_1^2 + (1 - R^2)^2]}{4(1 + R^2 -2aX_1)^2}.
\end{align}
Note that the absolute value $|\Phi(X)|$ of the scalar field $\Phi(X)$ is a gauge invariant. The simplest example is the case where $a = 0$
\begin{align}
\hat{M} =
\begin{pmatrix}
1\\
0
\end{pmatrix} ,
\end{align}
which corresponds to a magnetic monopole centered at the origin. In this case, we can easily re-derive the magnitude of the scalar field
\begin{align}
|\Phi(X)|^2 = \frac{R^2}{(1 + R^2)^2} ,
\end{align}
and the energy density
\begin{align}
\epsilon = \frac{3}{2}\left(\frac{1-R^2}{1 + R^2}\right)^4 .
\end{align}
Note that all these monopole examples are spherically symmetric around their centers, but only the $a = 0$ case has the standard $SO(3)$ as its symmetry group.

See Chan(2017)\cite{Chan17} for more details on the hyperbolic magnetic monopole as an $S^1$-invariant instanton. 

\end{example}




\begin{thebibliography}{99}
\bibitem{YM54}
C.N. Yang and R.L. Mills,
Conservation of isotopic spin and isotopic gauge invariance,
Phys. Rev. {\bf 96}, 191--195
 (1954). 


\bibitem{quark_model}
M. Gell-Mann,
A Schematic Model of Baryons and Mesons,
Phys. Lett.  \textbf{8}, 214--215 (1964).

G. Zweig,
An SU(3) Model for Strong Interaction Symmetry and its Breaking,
CERN Report No.8182/TH.401  (1964).


\bibitem{dualsuper}
Y. Nambu,
Strings, monopoles, and gauge fields,
Phys. Rev. D\textbf{10}, 4262--4268
 (1974).
\\
G. 't Hooft,
  in: High Energy Physics, edited by A. Zichichi 
(Editorice Compositori, Bologna, 1975).
\\
S. Mandelstam,
Vortices and quark confinement in non-abelian gauge theories, 
Phys. Report \textbf{23}, 245--249
 (1976).
 
\bibitem{Polyakov75}
A.M. Polyakov,
Compact gauge fields and the infrared catastrophe,
Phys. Lett. B\textbf{59}, 82--84 (1975).


\bibitem{Polyakov77}
A.M. Polyakov,
Quark confinement and topology of gauge theories,
Nucl. Phys. B\textbf{120}, 429--458 (1977).


\bibitem{SW94}
N. Seiberg and E. Witten, 
Electric-magnetic duality, monopole condensation, and confinement in N=2 supersymmetric Yang-Mills theory,
Nucl. Phys. B\textbf{426}, 19--52 (1994), Erratum-ibid. B{\bf 430}, 485--486 (1994).
[hep-th/9407087]  

 
\bibitem{BCS}
J. Bardeen, L. Cooper and J. R. Schrieffer, 
Theory of superconductivity,
Phys. Rev. \textbf{108}, 1175 (1957).

\bibitem{GL}
V.L. Ginzburg and L.D. Landau, 
Zh. Eksp. Teor. Fiz. \textbf{20}, 1064 (1950). 
English translation in: L. D. Landau, Collected papers (Oxford: Pergamon Press, 1965) p. 546.

\bibitem{NO}
 H. B. Nielsen   and P. Olesen,
A Quantum Liquid Model for the QCD Vacuum: Gauge and Rotational Invariance of Domained and Quantized Homogeneous Color Fields,
Nucl. Phys. B\textbf{160},  380--396 (1979) 
\\
 J. Ambjorn  and P. Olesen,
On the Formation of a Random Color Magnetic Quantum Liquid in QCD,
Nucl. Phys. B\textbf{170},  60--76 (1980). 


\bibitem{Abrikosov}
A.A. Abrikosov, 
The magnetic properties of superconducting alloys. 
Journal of Physics and Chemistry of Solids, 2(3), 199--208 (1957).

\bibitem{Wilson74}
K. Wilson, 
Confinement of quarks,
Phys. Rev. D{\bf 10}, 2445--2459
 (1974).


\bibitem{PR}
K.-I. Kondo, S. Kato, A. Shibata and T. Shinohara, 
Phys. Rept. \textbf{579}, 1 (2015). 
[arXiv:1409.1599 [hep-th]].



\bibitem{BPST}
A.A. Belavin, A. M. Polyakov, A.S. Shwarts and  Yu.S. Tyupkin,
Pseudoparticle Solutions of the Yang-Mills Equations,
Phys.Lett. B{\bf 59}, 85--87 (1975). 

\bibitem{CFtHW}
E. Corrigan and D.B. Fairlie,
Scalar Field Theory and Exact Solutions to a Classical SU(2) Gauge Theory,
Phys. Lett. B{\bf 67}, 69--71 (1977).

G. 't Hooft,
Computation of the Quantum Effects Due to a Four-Dimensional Pseudoparticle,
Phys. Rev. D{\bf 14}, 3432--3450 (1976); Erratum-ibid. D{\bf 18}, 2199 (1978). 

F. Wilczek,
in ``Quark Confinement and Field Theory'' eds. by D. Stump and D. Weingarten, (Wiley, New York, 1977).

 
\bibitem{Witten}
E. Witten,
Some Exact Multi - Instanton Solutions of Classical Yang-Mills Theory, 
Phys. Rev. Lett. \textbf{38}, 121--124  (1977).

\bibitem{Manton78}
N.S. Manton,
Instantons on a line, 
Phys. Lett.B \textbf{76}, 111--112  (1978).

\bibitem{Manton78b}
N.S. Manton,
Complex Structure of Monopoles, 
Nucl. Phys. B\textbf{135}, 319--332  (1978).

\bibitem{FM80}
P. Forgacs and N.S. Manton,
Space-Time Symmetries in Gauge Theories, 
Commun. Math. Phys. \textbf{72},  15--35 (1980).

\bibitem{MS04}
N. Manton and P. Sutcliffe, 
Topological Solitons (Cambridge Monographs on Mathematical Physics), 
(Cambridge Univ. Press, 2004).





\bibitem{AW77}
M. F. Atiyah and R. S. Ward, 
Instantons and algebraic geometry, 
Comm. Math. Phys. \textbf{55}, 117--124 (1977).

\bibitem{Ward81}
R.S. Ward,
A Yang-Mills Higgs Monopole of Charge 2,
Commun.Math.Phys. \textbf{79}, 317--325  (1981).

\bibitem{Hitchin82}
N. J. Hitchin, 
Monopoles and geodesics, 
Commun. Math. Phys. \textbf{83}, 579--602 (1982).

\bibitem{Murry83}
M.K. Murry,
Monopoles and spectral curves for arbitrary Lie groups, 
Commun. Math. Phys. \textbf{90}, 263--271 (1983).

\bibitem{HM88}
N.J. Hitchin and M.K. Murry,
Spectral curves and the ADHM method,
Commun. Math. Phys. \textbf{114}, 463--474 (1988).


\bibitem{ADHM78}
M.F. Atiyah, N.J. Hitchin, V.G. Drinfeld and Y.I. Manin, Construction of instantons, 
Phys. Lett. A{\bf 65}, 185--187 (1978).


\bibitem{Nahm82}
W. Nahm, 
The algebraic geometry of multimonopoles, 
Bonn University preprint, BONN-HE-82-30


\bibitem{NS92}
C. Nash and S. Sen
Topology and Geometry for Physicists,
(Academic Press, 1992).
and (Dover Publications, 2011).


\bibitem{Donaldson84}
Donaldson, S.K., 
Nahm’s equations and the classification of monopoles, 
Commun. Math. Phys. \textbf{96}, 387--407 (1984).

\bibitem{Hitchin83}
Hitchin, N.J.,
On the construction of monopoles,
Comm. Math. Phys. \textbf{89}, no. 2, 145--190 (1983).

\bibitem{Ward98}
Ward R.S.,
Two Integrable Systems Related to Hyperbolic Monopoles,
To appear in Asian J. Math., Special 70th Birthday Issue for Sir Michael Atiyah. 
solv-int/9811012

\bibitem{JT80}
Jaffe, A., Taubes, C.,
Vortices and monopoles,
(Birkhauser, Boston, 1980)

\bibitem{Chan17}
Joseph Yew Choe Chan,
On hyperbolic monopoles, 
Ph.D. Thesis, The University of Melbourne (2017)





\bibitem{Bolognesi06}
S. Bolognesi, 
Multi-monopoles and magnetic bags, 
Nucl. Phys. B\textbf{752}, 93--123 (2006).
e-Print: 0512133 [hep-th]


\bibitem{BCS14}
Stefano Bolognesi, Alex Cockburn, and Paul Sutcliffe,
Hyperbolic monopoles, JNR data and spectral curves,
Nonlinearity \textbf{28}, 211  (2014).
e-Print: 1404.1846 [hep-th]


\bibitem{BHS15}
Stefano Bolognesi, Derek Harland, and Paul Sutcliffe,
Magnetic bags in hyperbolic space,
Phys.Rev.D \textbf{92}, 025052 (2015).
e-Print: 1504.01477 [hep-th]





\bibitem{Strachan92}
I.A.B. Strachan,
Low velocity scattering of vortices in a modified Abelian Higgs model, 
J. Math. Phys. \textbf{33}, 102--110  (1992).

\bibitem{Garcia92}
O. Garcia-Prada,
A direct existence proof for the vortex equations over a compact Riemann surface, 
Bull. London Maht. Soc. \textbf{26}, 88--96 (1994). 

\bibitem{Samols92}
T.M. Samols, 
Vortex scattering
Commun. Math. Phys. \textbf{145}, 149--180  (1992).

\bibitem{Mantn99}
N.S. Manton and S.M. Nasir,
Volume of vortex moduli spaces,
Commun. Math. Phys. \textbf{199}, 591--604 (1999).
e-Print: hep-th/9807017 [hep-th]

\bibitem{Mantn93}
N.S. Manton,
Statistical mechanics of vortices,
Published in: Nucl.Phys.B \textbf{400}, 624--632  (1993).
e-Print: 2204.01389 [hep-th]

\bibitem{Mantn10}
Nicholas S. Manton and Norman A. Rink,
Vortices on Hyperbolic Surfaces,
J. Phys.  A \textbf{43}, 434024  (2010).
e-Print: 0912.2058 [hep-th]

\bibitem{Mantn15}
R. Maldonado and N.S. Manton, 
Analytic vortex solutions on compact hyperbolic surfaces,
J.Phys.A \textbf{48}, 245403 (2015).
e-Print: 1502.01990 [hep-th]

\bibitem{Mantn16}
Nicholas S. Manton ,
Five Vortex Equations
J. Phys. A \textbf{50}, 125403  (2017).
e-Print: 1612.06710 [hep-th]


\bibitem{Hooft93}
G. `t Hooft,
Dimensional reduction in quantum gravity, 
arXiv: gr-qc/9310026

\bibitem{Susskind95}
L. Susskind,
The world as a hologram,
J. Math. Phys. {\bf 36}, 6377-6396 (1995).
arXiv: hep-th/9409089

\bibitem{Maldacena98}
J. Maldacena, 
The large N limit of superconformal field theories and super gravity,
Adv. Theor. Math. Phys. {\bf 2}, 231 (1998). 
arXiv: hep-th/971120.

\bibitem{Witten98}
E. Witten,
Anti De Sitter space and holography,
Adv. Theor. Math. Phys. {\bf 2}, 253 (1998).
arXiv: hep-th/9802150

\bibitem{Aharony00}
O. Aharony, S.S. Gubser, J. Maldacena, H. Ooguri, and Y. Oz,
Large N field theories, string theory and gravity,
Phys. Rep. {\bf 323}, 183--386 (2000).



\noindent


\bibitem{Atiyah84}
M.F. Atiyah,
Instantons in two and four dimensions, 
Commun. Math. Phys. \textbf{93}, 437--451  (1984).

\bibitem{Atiyah84b}
M.F. Atiyah,
Magnetic monopoles in hyperbolic spaces, 
in Proc. of Bombay Colloquium 1984 on Vector Bundles on Algebraic Varieties, Oxford University Press (1987), 1--34.

\bibitem{Chakrabarti86}
A. Chakrabarti,
Construction of Hyperbolic Monopoles,
J.  Math.Phys. \textbf{27},  340--348 (1986).

\bibitem{Nash86}
C. Nash, 
Geometry of Hyperbolic Monopoles
J.  Math.  Phys. \textbf{27}, 2160--2164 (1986).

\bibitem{Chakrabarti82}
A. Chakrabarti,
Harrison-neugebauer Type Transformations for Instantons: Multicharged Monopoles as Limits,
Phys.Rev.D \textbf{25}, 3282--3298 (1982).

\bibitem{BA90}
P.J. Braam and D.M. Austin,
Boundary values of hyperbolic monopoles,
Nonlinearity \textbf{3}, 809--823 (1990).
%

\bibitem{MS96}
M. Murray and M. Singer,
Spectral curves of non-integral hyperbolic monopoles,
Nonlinearity \textbf{9}, 973--997 (1996).

\bibitem{JN97}
Stuart Jarvis and Paul Norbury, 
ZERO AND INFINITE CURVATURE LIMITS OF HYPERBOLIC MONOPOLES,
Bulletin of the London Mathematical Society \textbf{29}, 737--744 (1997).

\bibitem{Norbury99}
P. Norbury,
Asymptotic values of hyperbolic monopoles,
e-Print: math/9911146 [math.DG]

\bibitem{MNS01}
Michael K. Murray, Paul Norbury and Michael A. Singer,
Hyperbolic monopoles and holomorphic spheres,
e-Print: math/0111202 [math.DG]

\bibitem{Norbury01}
Paul Norbury,
Boundary algebras of hyperbolic monopoles,
e-Print: math-ph/0106013 [math-ph]


\bibitem{EH03}
Gabor Etesi and Tamas Hausel,
On Yang–Mills Instantons over Multi-Centered Gravitational Instantons,
Commun.Math.Phys. \textbf{235}, 275--288  (2003).
e-Print: hep-th/0207196


\bibitem{Etesi03}
Gabor Etesi,
Classification of 't Hooft instantons over multicentered gravitational instantons,
Nucl.Phys.B \textbf{662}, 511--530 (2003).
e-Print: hep-th/0303146

\bibitem{NR07}
Paul Norbury and Nuno M. Romao, 
Spectral curves and the mass of hyperbolic monopoles, 
Commun.Math.Phys. \textbf{270}, 295--333 (2007).
e-Print: math-ph/0512083 [math-ph]

\bibitem{Nash08}
Oliver Nash,
Singular hyperbolic monopoles,
Commun.Math.Phys. \textbf{277}, 161--187  (2008).

\bibitem{MS14}
Nicholas S. Manton and Paul M. Sutcliffe,
Platonic hyperbolic monopoles, 
Commun. Math. Phys. \textbf{325}, 821--845 (2014).
e-Print: 1207.2636 [hep-th]

\bibitem{FM16}
Guido Franchetti and Rafael Maldonado,
Monopoles, instantons and the Helmholtz equation, 
J. Math. Phys. \textbf{57}, 073502 (2016).
e-Print: 1603.09575 [hep-th]

\bibitem{Lang24}
C.J. Lang,
Hyperbolic monopoles with continuous symmetries,
J. Geom. Phys. \textbf{203}, 105258 (2024).
e-Print: 2310.10626 [math-ph]

\bibitem{Lang25}
C.J. Lang,
Instantons with continuous conformal symmetries: Hyperbolic and singular monopoles and more, oh my!,
e-Print: 2501.07406 [math-ph]



\bibitem{Maldonado17}
R. Maldonado, 
Hyperbolic monopoles from hyperbolic vortices, 
Nonlinearity \textbf{30}, 2443 (2017). 
e-Print: 1508.07304 [hep-th]

\bibitem{Harland08}
D. Harland,
Hyperbolic calorons, monopoles, and instantons, 
Commun. Math. Phys. \textbf{280}, 727--735 (2008). 
e-Print: hep-th/0703277 [hep-th]

\bibitem{SSY15}
Lesley Sibner, Robert Sibner and  Yisong Yang,
Existence of Hyperbolic Calorons, 
Proc.Roy.Soc.Lond.A \textbf{471} (2015) 0970.
e-Print: 1503.01198 [math-ph]

\bibitem{Cockburn15}
Alexander Cockburn,
Aspects of vortices and hyperbolic monopoles, 
Ph.D. thesis (2015). 



\bibitem{Landweber05}
G.D. Landweber, 
Singular instantons with SO(3) symmetry, 
e-Print: math/0503611 [math.DG]

\bibitem{FHP81}
P. Forgacs, Z. Horvath and L. Palla,
Exact, Fractionally Charged Self-Dual Solution,
Phys. Rev. Lett. \textbf{46}, 392  (1981).
%
P. Forgacs, Z. Horvath and L. Palla,
One can have noninteger topological charge, 
Z. Phys.C \textbf{12},  359  (1982).


\bibitem{MW96}
L.J. Mason and N. M. J. Woodhouse, 
Integrability, Self-Duality, and Twistor Theory (London Mathematical Society Monographs) - 
(Clarendon Press, 1996).




\bibitem{Manton80}
G. Chapline and N.S. Manton,
The Geometrical Significance of Certain Higgs Potentials: An Approach to Grand Unification, 
Nucl. Phys.B \textbf{184}, 391--405  (1981).

\bibitem{Manton81}
N.S. Manton,
Fermions and Parity Violation in Dimensional Reduction Schemes, 
Nucl.Phys.B \textbf{193}, 502--516 (1981).




\bibitem{Cho80}
Y.M. Cho,
Restricted gauge theory,
Phys. Rev. D{\bf 21}, 1080--1088
 (1980).
\\
Y.M. Cho,
Extended gauge theory and its mass spectrum, 
Phys. Rev. D{\bf 23}, 2415--2426 (1981). 


\bibitem{DG79}
Y.S. Duan and M.L. Ge, 
Sinica Sci., {\bf 11}, 1072--1081 (1979). 


\bibitem{FN99} 
L. Faddeev and A.J. Niemi,
Partially dual variables in SU(2) Yang-Mills theory,
Phys. Rev. Lett. {\bf 82}, 1624--1627 
(1999).
[hep-th/9807069] 
\\
L.D. Faddeev and A.J. Niemi,
Spin-Charge Separation, Conformal Covariance and the SU(2) Yang-Mills Theory,
Nucl. Phys. B{\bf 776}, 38-65 (2007). 
[hep-th/0608111] 


\bibitem{FN99a} 
L. Faddeev and A.J. Niemi,
Partial duality in SU(N) Yang-Mills theory,
Phys. Lett. B{\bf 449}, 214--218 (1999).
[hep-th/9812090] 


\bibitem{FN99b} 
L. Faddeev and A.J. Niemi,
Decomposing the Yang-Mills field,
Phys. Lett. B{\bf 464}, 90--93 (1999).
[hep-th/9907180] 


\bibitem{Shabanov99}
S.V. Shabanov,
An effective action for monopoles and knot solitons in Yang-Mills theory,
Phys. Lett. B{\bf 458}, 322--330
 (1999).
[hep-th/0608111] 
\\
S.V. Shabanov, 
Yang-Mills theory as an Abelian theory without gauge fixing,
Phys. Lett. B{\bf 463}, 263--272
 (1999).
[hep-th/9907182] 


\bibitem{KMS06}
K.-I. Kondo, T. Murakami and T. Shinohara,
Yang-Mills theory constructed from Cho--Faddeev--Niemi decomposition, 
Prog. Theor. Phys. {\bf 115}, 201--216 (2006). 
[hep-th/0504107]  


K.-I. Kondo, T. Murakami and T. Shinohara,
BRST quantization of the Yang-Mills theory in the Cho--Faddeev--Niemi decomposition, 
Eur. Phys. J. C{\bf 42}, 475--481 (2005).
[hep-th/0504198]  

\bibitem{TU22}
Yuya Tanizaki and Mithat \"Unsal,
Center vortex and confinement in Yang-Mills theory and QCD with anomaly-preserving compactifications,
Prog. Theor. Exp. Phys. {\bf 2022}, 04A108 (2022).
e-Print: 2201.06166 [hep-th]

\bibitem{HT24}
Yui Hayashi and Yuya Tanizaki, 
Unifying Monopole and Center Vortex as the Semiclassical Confinement Mechanism
Phys. Rev. Lett. \textbf{133}, 171902 (2024).
e-Print: 2405.12402 [hep-th]

\bibitem{HMT24}
Yui Hayashi, Tatsuhiro Misumi and Yuya Tanizaki, 
Semiclassics for the QCD vacuum structure through T-compactification with the baryon-’t Hooft flux, 
JHEP \textbf{08}, 001 (2024).
e-Print: 2402.04320 [hep-th]

Yui Hayashi, Tatsuhiro Misumi and Yuya Tanizaki, 
Monopole-vortex continuity of N=1 super Yang-Mills theory on $\mathbb{R}^2 \times S^1 \times S^1$  with 't Hooft twist, 
e-Print: 2410.21392 [hep-th].


\bibitem{GSU24}
Canberk G\"uvendik, Thomas Schaefer and Mithat \"Unsal,
The metamorphosis of semi-classical mechanisms of confinement: from monopoles on $\mathbb{R}\times S^1$ to center-vortices on $\mathbb{R}^3 \times T^2$,
JHEP \textbf{11}, 163 (2024).
e-Print: 2405.13696 [hep-th]




\bibitem{DHN74}
Roger F. Dashen, Brosl Hasslacher, and Andre Neveu, 
Nonperturbative Methods and Extended Hadron Models in Field Theory. 3. Four-Dimensional Nonabelian Models,
Phys. Rev. D\textbf{10}, 4138--4142  (1974).


\bibitem{DP89}
D.I. Diakonov and V.Yu. Petrov, 
A formula for the Wilson loop,
Phys. Lett. B {\bf 224}, 131--135 (1989).


\bibitem{DP96}
D. Diakonov and V. Petrov,
Non-Abelian Stokes theorem and quark-monopole interaction,
[hep-th/9606104] 
\\
D. Diakonov and V. Petrov, 
On the non-Abelian Stokes theorem, 
[hep-lat/0008004] 
\\
D. Diakonov and V. Petrov,
Non-Abelian Stokes Theorems in Yang-Mills and Gravity Theories,
[hep-th/0008035]  




\bibitem{KondoIV}
K.-I. Kondo,
Abelian magnetic monopole dominance in quark confinement,
Phys. Rev. D {\bf 58}, 105016 (1998).
[hep-th/9805153] 


\bibitem{KT00}
K.-I. Kondo and Y. Taira,
Non-Abelian Stokes Theorem and   Quark Confinement in SU(3) Yang-Mills gauge theory, 
Mod. Phys. Lett. A {\bf 15}, 367--377 (2000); 
[hep-th/9906129] 

K.-I. Kondo and Y. Taira,
Non-Abelian Stokes Theorem and  Quark Confinement in SU(N) Yang-Mills gauge theory, 
Prog. Theor. Phys. {\bf 104}, 1189--1265 (2000).
[hep-th/9911242] 




\bibitem{Kondo08}
K.-I. Kondo,
Wilson loop and magnetic monopole through a non-Abelian Stokes theorem, 
Phys. Rev. D {\bf 77}, 085029 (2008).
arXiv:0801.1274 [hep-th] 


\bibitem{Kondo08b}
K.-I. Kondo, 
Magnetic monopoles and center vortices as gauge-invariant topological defects simultaneously responsible for confinement,
J. Phys. G: Nucl. Part. Phys. {\bf 35}, 085001  (2008).
 arXiv:0802.3829 [hep-th] 




\bibitem{Dirac}
P.A.M. Dirac, 
Quantized Singularities in the Electromagnetic Field,
Proc. Roy. Soc. London, A{\bf 133}, 60--72 (1931).


\bibitem{WY75}
Tai Tsun Wu, Chen Ning Yang, 
Concept of Nonintegrable Phase Factors and Global Formulation of Gauge Fields,
Phys. Rev. D{\bf 12}, 3845--3857 (1975). 
\\
T.T. Wu and C.N. Yang,
Dirac Monopole Without Strings: Monopole Harmonics, 
Nucl. Phys. B{\bf 107}, 365--380  (1976).
Dirac's Monopole Without Strings: Classical Lagrangian Theor,  
Phys.Rev. D{\bf 14}, 437--445  (1976).


\bibitem{tHP74}
G. 't Hooft,
Magnetic Monopoles in Unified Gauge Theories,
Nucl. Phys. B{\bf 79}, 276--284 (1974).
\\
A. M. Polyakov,
Particle Spectrum in the Quantum Field Theory,
JETP Lett. {\bf 20}, 194--195 (1974).
Pisma Zh. Eksp. Teor. Fiz. {\bf 20}, 430--433  (1974). 


\bibitem{GNO77}
P. Goddard, J. Nuyts, David I. Olive,
Gauge Theories and Magnetic Charge, 
Nucl. Phys. B{\bf 125}, 1--28 (1977). 


\bibitem{GO78}
P. Goddard and D. Olive, 
New Developments in the Theory of Magnetic Monopoles,
Rep. Prog. Phys. {\bf 41}, 1357--1437 (1978)


\bibitem{Bogomolnyi}
E. B. Bogomolny, 
Stability of Classical Solutions,
Sov. J. Nucl. Phys. {\bf 24} (1976), 449; 
Yad. Fiz. {\bf 24} (1976), 861.


\bibitem{PS75}
M.K. Prasad and C.M. Sommerfield,
Exact classical solution for the `t Hooft monopole and the Julia-Zee dyon, 
Phys. Rev. Lett. {\bf 35}, 760-762. 


\bibitem{JNR77}
R. Jackiw, C. Nohl and C. Rebbi,
Conformal Properties of Pseudoparticle Configurations,
Phys. Rev. D{\bf 15}, 1642--1646 (1977).


\bibitem{Rajaraman89}
R. Rajaraman,
Solitons and Instantons: An Introduction to Solitons and Instantons in Quantum Field Theory
(North-Holland, Amsterdam, 1987).


\bibitem{Coleman85}
S. Coleman, The uses of instantons, 
Aspects of symmetry: Selected Erice lectures of Sidney Coleman 
(Cambridge Univ. Press, 1985).














\end{thebibliography}
\end{document}